\newcommand*\updatechaptername{%
   \addtocontents{toc}{\protect\renewcommand*\protect\cftchappresnum{\@chapapp\ }}
}
\newif\if@restonecol
\def\printindex#1#2{%
        \ifthenelse{\equal{symbols}{#1}}
        {
        \@restonecoltrue\if@twocolumn\@restonecolfalse\fi
  \columnseprule \z@ \columnsep 35pt
  \chapter*{#2}\addcontentsline{toc}{chapter}{#2}
  \begin{multicols}{2}
  \@input{#1.ind}
  \end{multicols}
  }
  {\@restonecoltrue\if@twocolumn\@restonecolfalse\fi
  \columnseprule \z@ \columnsep 35pt
  \chapter*{#2}
  \addcontentsline{toc}{chapter}{#2}
  \@input{#1.ind}}}
\def\cleardoublepage{\clearpage\if@twoside \ifodd\c@page\else
	\hbox{}
	\vspace*{\fill}
	\thispagestyle{empty}
	\newpage
	\if@twocolumn\hbox{}\newpage\fi\fi\fi}
\def\tagform@#1{\maketag@@@{\cornersize{0.8}\ovalbox{
\ignorespaces\sffamily{#1}\unskip\@@italiccorr}}}
\titleformat{\section}[hang]{\sffamily\bfseries}
 {\Large\thesection}{12pt}{\Large}[{\titlerule[0.5pt]}]
\numberwithin{equation}{section}
\newtheorem{axioms}{Axiom}
\newtheorem{definitions}{Definition}[chapter]
\newtheorem{thms}{Theorem}
\newtheorem{propos}{Proposition}
\newtheorem{lemm}{Lemma}
\newtheorem{result}{Result}[section]
\definecolor{mygray}{gray}{.9}
\newenvironment{axiom}[1][]{
  \begin{axioms}[#1]\label{ax:#1}\index{thms}{Axiom!\theaxioms\, (#1)} %
}{\end{axioms}}
\newenvironment{definition}[1][]{
  \begin{definitions}[#1]\label{def:#1}\index{thms}{Definition!\thedefinitions\, (#1)}%
}{\end{definitions}}
\newenvironment{thm}[1][]{
  \begin{thms}[#1]\label{thm:#1}\index{thms}{Theorem!\thethms\, (#1)}%
}{\end{thms}}
\newenvironment{results}[1][]{
  \begin{result}[#1]\index{thms}{Result!\theresult\, (#1)}%
}{\end{result}}
\newenvironment{proposition}[1][]{
  \begin{propos}\index{thms}{Proposition!\thepropos}%
}{\end{propos}}
\newenvironment{lemma}[1][]{
  \begin{lemm}\index{thms}{Lemma!\thelemm}%
}{\end{lemm}}
\newcommand{\di}{\mathrm{d}}
\newcommand{\tdd}[2]{\frac{\di #1}{\di #2}}
\newcommand{\pdd}[2]{\frac{\partial #1}{\partial #2}}
\newcommand{\del}{\partial}
\newcommand{\C}[2]{C^{#1}_{#2}}
\newcommand{\bra}[1]{\langle #1|}
\newcommand{\ket}[1]{|#1\rangle}
\newcommand{\braket}[2]{\langle #1|#2\rangle}
\newcommand{\Co}[3]{(C_{#1})^{#2}_{#3}}
\newcommand{\Lo}[3]{\mathcal{Y}_{#1}(#2,#3)}
\newcommand{\pFq}[5]{\,_{#1}F_{#2}\left[\atop{#3}{#4};#5\right]}
\renewcommand{\atop}[2]{\genfrac{}{}{0pt}{}{#1}{#2}}
\renewcommand{\eqref}[1]{ \ref{#1}}
\newcommand{\D}[3]{\Delta_{#1}[\mathfrak{#2},r]_{#3}}
\newcommand{\Do}[4]{\Delta_{#1}^{#4}[\mathfrak{#2}]_{#3}}
\newcommand{\Ro}[5]{\Lambda_{#1}[\varphi^{#5},\mathfrak{#2},r]_{#3}}
\newcommand{\Rop}[5]{\Lambda_{#1}^{#4}[\varphi^{#5},\mathfrak{#2}]_{#3}}
\newcommand{\betrag}{|}
\newcommand{\End}{\operatorname{End}}
\renewcommand{\dotfill}{\leaders\hbox to 5pt{\hss.\hss}\hfill}
\begin{document}

\author{Jan Holland}
\title{Construction of operator product expansion coefficients via consistency conditions}
\date{Februar 2009}
\maketitle

\pagenumbering{roman}

\chapter*{Abstract}
\thispagestyle{headings}
\addcontentsline{toc}{chapter}{Abstract}
In this diploma thesis an iterative scheme for the construction of operator product expansion (OPE) coefficients is applied to determine low order coefficients in perturbation theory for a specific toy model. We use the approach to quantum field theory proposed by S. Hollands \cite{Hollands2008}, which is centered around the OPE and a number of axioms on the corresponding OPE coefficients. This framework is reviewed in the first part of the thesis.

In the second part we apply an algorithm, also proposed in \cite{Hollands2008}, for the perturbative construction of OPE coefficients to a toy model: Euclidean $\varphi^6$-theory in $3$-dimensions. Using a recently found formulation in terms of vertex operators and a diagrammatic notation in terms of trees (see also \cite{Hollands2008} and \cite{Hollandsa}), coefficients up to second order are constructed, some general features of coefficients at arbitrary order are presented and an exemplary comparison to the corresponding customary method of computation is given.


\chapter*{Acknowledgments}
\addcontentsline{toc}{chapter}{Acknowledgments}
\thispagestyle{headings}
I would like to thank Prof. S. Hollands\footnote{University of Cardiff\label{foot:UCardiff}} for the opportunity to work on the first steps in a very young approach to quantum field theory. I am thankful for his support, advice and guidance through this instructive time. I further appreciate his hospitality during my stay at Cardiff. I also want to thank Prof. Rehren for writing the additional report and for help with the official paperwork. Furthermore, I am grateful to H. Olbermann\footref{foot:UCardiff} for helpful discussions and suggestions.
My dear colleagues D.Böning, S. Grübel, P. Löptien, J. Müller and M. Theves I want to thank for sharing my way through half a decade of physics studies and also for lots of fun in our scarce free time. For a pleasant working atmosphere in office A2.113b I want to thank T. Dabrowski and A. Mann. I am further grateful to all members of the Göttingen QFT group for interesting discussions and seminars.

Last but not least I would like to thank my parents for moral and financial support, for steady encouragement and for giving me the freedom and opportunity to follow my interests.

Financial support by the Alexander von Humboldt-Stiftung is also gratefully acknowledged.

\tableofcontents

\cleardoublepage

\pagenumbering{arabic}
\updatechaptername
\chapter{Introduction}

\section{Motivation}

Various formulations of quantum field theory (QFT) have been proposed and established in the last century. The most popular ones can be split into two conceptually different categories: The path-integral and the operator approach.

The former uses the moments of some measure on the space of classical field configurations in order to construct correlation functions. As this measure is (formally) given in terms of the classical action, this formulation has the advantage of being closely related to classical field theory. In the operator approach, on the other hand, quantum fields are viewed as linear operators represented on some Hilbert space of states. Consequently, no corresponding classical theory, i.e. no Lagrangian formalism, is needed in this case. In this formalism special emphasis is put on the algebraic relations between the quantum fields. In fact, these relations may be viewed as determining the whole theory, as originally proposed by Haag and Kastler \cite{Haag200801} in the framework of algebraic quantum field theory (AQFT). Algebraic approaches have also been useful in conformal quantum field theories (cQFT), see e.g. \cite{Borcherds1986,Kac1997}, and in view of the lack of a preferred Hilbert space representation turned out to be essential in the construction of quantum field theories on curved spacetimes \cite{HollandsWald2002,HollandsWald2001,BrunettiFredenhagen2000,BrunettiFredenhagenVerch2003,Wald199411}.

In \cite{Hollands2008} a new approach to quantum field theory was proposed, where the algebraic relations between the fields (at short distances) are encoded in the Wilson operator product expansion (OPE)\cite{Wilson:1969zs}, which is elevated to the status of a defining element of the theory instead of an identity derived from it (see section \ref{subsec:historical background}). Furthermore the OPE coefficients have to obey certain constraints, in particular a \emph{factorization} relation that was observed in the construction of the OPE on curved spacetimes \cite{Hollands2007}. An axiomatic formulation of this framework is given in \cite{Hollands2008}. The key observation is that consistency conditions arising from the mentioned factorization property can be used, in combination with field equations, as a constructive tool. In addition, graphical rules for the computation of OPE coefficients within this approach have been obtained \cite{Hollandsa}. The resulting algorithm for the construction of the OPE is different from standard ones relying on divergence properties of Feynman integrals \cite{Collins1984}, but one expects the results to be equivalent (see section \ref{subsec:comparison to alternative}). A remarkable feature of the new approach is the fact that it is inherently finite (see chapter \ref{subsec:perturbation via field equation}), i.e. no renormalization procedure is needed.

This novel viewpoint has several advantageous features: 

As mentioned above, it is viable also on curved spacetimes, where it might even be necessary to elevate the existence of an OPE to axiomatic status \cite{HollandsWald}. In \cite{Hollands2008} the framework was also generalized to gauge theories, which play a central role in the description of particle physics. Another interesting feature concerns the regularity of the OPE coefficients as opposed to quantum states. In the case of simple examples it is easy to show that OPE coefficients may depend analytically on certain parameters of the theory (like e.g. the mass of a particle) where the quantum states show non-analytic behavior. As a result of such considerations, it was recently conjectured in \cite{Hollands} that even the perturbation series for the OPE coefficients may converge, i.e. it might be possible to perturbatively construct interacting quantum field theories within this approach.

This thesis will be concerned with a low order perturbative construction of this kind for the case of a simple toy model theory. It thus constitutes the first specific application of the very young framework outlined above and gives first impressions of the calculational effort involved in the iterative construction of perturbation theory. The above mentioned cancellation of divergences and also the formation of certain patterns in the mathematical structure of the coefficients can be observed explicitly. This also gives insights into the expected structure of higher order coefficients.

\section{Historical background}\label{subsec:historical background}

The singular behavior of products of quantum fields at coinciding spacetime points has been a major obstacle in the construction and understanding of quantum field theory since its discovery. Thus, the analysis of these short distance divergences was, and still is, of great interest.

The idea of a short distance expansion for products of quantum field operators was first considered by Wilson in 1964 when, according to himself inspired by axiomatic QFT, he translated some work he ``\emph{had done on Feynman diagrams with some very large momenta (...) into position space}'' \cite{Wilson1983}. This work, however, never got published and it took five more years until Wilson revisited his theory of short distance expansions adding new ideas by Kastrup and Mack concerning scale invariance in QFT.
Ironically, although in his 1969 paper Wilson introduces the OPE as an alternative framework to Lagrangian models, in the following years it became an important tool in the understanding of these theories.

In 1970 Zimmermann proved that an OPE holds in perturbative quantum field theory, thus giving its first validation also in usual Lagrangian theories \cite{Zimmerman1970}. He used this new tool in order to define \emph{normal products} of interacting quantum fields as a generalization of the normal ordered products in the free theory. Now a sensible notion of composite fields, i.e. for example higher powers of fields, could be defined in terms of normal products \cite{Zimmermann1973}.

In the following years the OPE was established in the most important branches of quantum field theory: It became a standard tool in the analysis of quantum chromodynamics (QCD), played a crucial role in the development of conformal quantum field theories, has been proven within various axiomatic settings \cite{Bostelmann2005, Fredenhagen:1994ib} and has been shown to hold order by order in perturbation theory on curved spacetime \cite{Hollands2007}. The OPE has also been used in order to prove a curved spacetime version of the spin-statistics theorem and the PCT theorem \cite{Hollands2004} and played a crucial role in the formulation of an axiomatic quantum field theory on curved spacetime, where the OPE was elevated to a fundamental status and replaces the requirement for the existence of a unique Poincare invariant state \cite{HollandsWald}. This shift of emphasis onto the OPE as defining property of the theory is in the spirit of the new approach considered in this thesis. Here the OPE is no longer viewed as simply a calculational tool that has been derived from the theory, but as a central feature around which the theory is built.

\section{Organization}

This thesis is organized as follows:

In the next chapter we introduce the new framework of quantum field theory in terms of consistency conditions as proposed in \cite{Hollands2008}. After a short motivation of the ideas leading to this approach, an axiomatic setting for quantum field theory is presented, followed by an analysis of perturbation theory in this framework. Finally, a recently discovered convenient formulation of the theory in terms of vertex operators - the \emph{fundamental left representation} - is studied.

Chapter \ref{sec:the model} presents the results of this thesis, namely the low order perturbative construction of OPE coefficients for a specific Lagrangian model theory. Our considerations naturally start at the non-interacting theory (i.e. zeroth perturbation order), where the model and some notation are introduced. We then explain a general method of perturbations via non-linear field equations, obtaining an iterative scheme for the construction of OPE coefficients. The application of this algorithm to a specific model theory gives the main results of this thesis, which are presented in section \ref{subsec:low orders} and \ref{subsec:higher order}. In the end we briefly compare our framework to ordinary methods for the calculation of OPE coefficients in terms of an example.

The thesis is closed by chapter \ref{sec:conclusions}, where our results are reviewed and interpreted. Conclusions as well as an outlook on possible future developments are given.

We refer to the appendix for the introduction to the main mathematical objects appearing in the calculations.

\chapter{QFT in terms of consistency conditions}\label{sec:OFT in terms of consistency}

In this chapter a new formulation of quantum field theory, first proposed in \cite{Hollands2008}, is described. The central object of this framework is the OPE subject to certain constraints.

\section{Motivation}\label{subsec: framework motivation}

The operator product expansion states that the product of two operators may be written as

\begin{equation}
 \big\langle\phi_a(x_1)\phi_b(x_2)\big\rangle_{\omega} \approx \sum_c \C{c}{ab}(x_1,x_2;y)\big\langle\phi_c(y)\big\rangle_{\omega}
\label{eq:framework motivation ope lang}
\end{equation}
in terms of local quantum fields $\phi_c$ and $\mathbb{C}$-number distributions $\C{c}{ab}$. Here $a,b,c$ label the composite fields of the underlying theory , $\langle\, \cdot\, \rangle_{\omega}$ is the expectation value in the state $\omega$ and ``$\approx$'' means that this identity holds as an asymptotic relation in the limit $x_1, x_2\to y$ in a suitably strong sense\footnote{See the discussion below equation \eqref{eq:framework axioms states} for more detail}. In the following we will shorten the notation by formally rewriting equation \eqref{eq:framework motivation ope lang} as

\begin{equation}
 \phi_a(x_1)\phi_b(x_2)=\sum_c \C{c}{ab}(x_1,x_2) \phi_c(x_2) \: ,
\label{eq:framework motivation ope kurz}
\end{equation}
where we have implicitly made the choice  $y=x_2$. We furthermore assume the fields to live on a real Euclidean spacetime, which can always be achieved by analytic continuation provided the spectrum condition holds in the quantum field theory. In order to motivate the condition that lies at the heart of the new framework we should study the OPE of a product of three operators.

\begin{equation}
 \phi_a(x_1)\phi_b(x_2)\phi_c(x_3)=\sum_d \C{d}{abc}(x_1,x_2,x_3) \phi_d(x_3)
\label{eq:framework motivation ope drei}
\end{equation}

Now consider a situation as depicted in figure \ref{fig:consistency regions}.
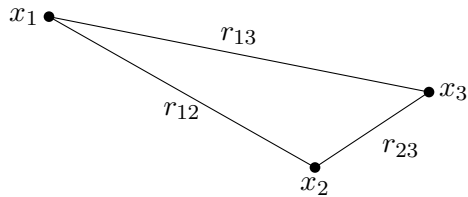
\begin{figure}[h!]
\begin{center}
\begin{tikzpicture}
 \fill (0,0) circle (2pt) node[left] {$x_1$};
\fill (5,-1) circle (2pt) node[right] {$x_3$};
\fill (3.5,-2) circle (2pt) node[below] {$x_2$};
\draw (0,0) -- (5,-1) node[above, midway] {$r_{13}$};
\draw (0,0) -- (3.5,-2) node[below, midway] {$r_{12}$};
\draw (3.5,-2) -- (5,-1) node[below right, midway] {$r_{23}$};
\end{tikzpicture}
\end{center}
\caption{A configuration of the spacetime points $x_1,x_2,x_3\in \mathbb{R}^d$}
\label{fig:consistency regions}
\end{figure}
 Defining the Euclidean distance between two points $x_i$ and $x_j$ as

\begin{equation}
 r_{ij}:=|x_i-x_j|=\sqrt{(x_i-x_j)^2}
\label{eq:framework motivation distance}
\end{equation}\index{symbols}{rij@$r_{ij}$}

figure \ref{fig:consistency regions} tells us that $r_{23}<r_{13}$. In this case we expect it to be possible to perform the OPE successively, i.e. we first expand the product of $\phi_b(x_2)\phi_c(x_3)$ in equation \ref{eq:framework motivation ope drei} around $x_3$ , multiply the result by $\phi_a(x_1)$ and perform yet another OPE around $x_3$. Writing this idea as an equation, we obtain 

\begin{equation}
 \C{d}{abc}(x_1,x_2,x_3)=\sum_e \C{e}{bc}(x_2,x_3)\C{d}{ae}(x_1,x_3) \: .
\label{eq:framework motivation split1}
\end{equation}

Similarly, as figure \ref{fig:consistency regions} also implies $r_{12}<r_{23}$, we expect that we can start by expanding the product $\phi_a(x_1)\phi_b(x_2)$ around $x_2$, multiply the result by $\phi_c(x_3)$ and finally expand around $x_3$ again. In other words:

\begin{equation}
 \C{d}{abc}(x_1,x_2,x_3)=\sum_e \C{e}{ab}(x_1,x_2)\C{d}{ec}(x_2,x_3)
\label{eq:framework motivation split2}
\end{equation}

So for constellations as in figure \ref{fig:consistency regions}, i.e. on the open domain $r_{12}<r_{23}<r_{13}$, we obtain the consistency relation that both expansion \ref{eq:framework motivation split1} and \ref{eq:framework motivation split2} must be valid and should coincide. Thus we require

\begin{equation}
 \sum_e \C{e}{bc}(x_2,x_3)\C{d}{ae}(x_1,x_3)=\sum_e \C{e}{ab}(x_1,x_2)\C{d}{ec}(x_2,x_3)
\label{eq:framework motivation consistency}
\end{equation}

when $r_{12}<r_{23}<r_{13}$. We will adopt the labeling of this constraint as "consistency-" or "associativity" condition from \cite{Hollands2008}. The basic idea of the new framework presented in this chapter is that these conditions on the 2-point OPE coefficients are stringent enough to incorporate the full information about the structure of the quantum field theory or, stated conversely, that finding a solution to these conditions effectively means that one has constructed a quantum field theory.

But if the full information on the quantum field theory is to be encoded in the constraints on the 2-point OPE coefficients, then no further constraints should appear from \emph{higher order associativity conditions}, i.e. from conditions on products of more than three fields. If, for example, we consider the OPE of four fields $\phi_a(x_1)\phi_b(x_2)\phi_c(x_3)\phi_d(x_4)$ and successively expand this product in a similar manner as above, we will obtain new relations for the 2-point OPE coefficients similar to eq. \ref{eq:framework motivation consistency}. The question is now whether these constraints are genuinely new or can be deduced from eq. \ref{eq:framework motivation consistency}. As it turns out, this problem is analogous to the analysis of the associativity condition in ordinary algebra and as in this case we will show that no further conditions arise (see chapter \ref{subsec:coherence theorem}). These considerations will also yield a unique expression of the \emph{higher order} coefficients such as $\C{e}{abcd}(x_1,x_2,x_3,x_4)$ in terms of the 2-point OPE coefficients. This result is called the \emph{coherence theorem}, because it states that the entire set of consistency conditions is coherently encoded in the associativity condition \ref{eq:framework motivation consistency}.

Now that we have identified the 2-point OPE coefficients as fundamental entities of our approach, it would be of interest to formulate perturbation theory in terms of these coefficients. With this aim in mind, let us assume the following setting: We are given a 1-parameter family of 2-point coefficients with parameter $\lambda$. These coefficients shall satisfy the associativity condition \ref{eq:framework motivation consistency} and we want to perturb around the quantum field theory described by the coefficients with $\lambda=0$. In order to avoid messy equations we introduce an index free notation getting rid of the indices $a,b,c\ldots$ above. We view the 2-point OPE coefficients collectively as a linear map $\mathcal{C}(x_1,x_2):V\otimes V\to V$, where $V$ is the \emph{space of fields}, whose basis components are given by $\C{c}{ab}(x_1,x_2)$. Then the Taylor expansion in the parameter $\lambda$ around $\lambda=0$ is

\begin{equation}
 \mathcal{C}(x_1,x_2;\lambda)=\sum_{i=0}^{\infty}\mathcal{C}_i(x_1,x_2) \lambda^i\, .
\label{eq:framework motivation perturbation}
\end{equation}

If we expand the associativity condition, eq. \ref{eq:framework motivation consistency}, in this way and assume that the condition holds at zeroth order, then we obtain the following constraint on the first order perturbation of the 2-point OPE coefficients

\begin{equation}
\begin{split}
 &\mathcal{C}_0(x_2,x_3)\Big(\mathcal{C}_1(x_1,x_2)\otimes id\Big)-\mathcal{C}_0(x_1,x_3)\Big(id \otimes \mathcal{C}_1(x_2,x_3)\Big)+\\
 &\mathcal{C}_1(x_2,x_3)\Big(\mathcal{C}_0(x_1,x_2)\otimes id\Big)-\mathcal{C}_1(x_1,x_3)\Big(id \otimes \mathcal{C}_0(x_2,x_3)\Big)=0
\end{split}
\label{eq:framework motivation perturbation consistency}
\end{equation}
which is linear in the first order coefficients and holds on the domain $r_{12}<r_{23}<r_{13}$. It was shown in \cite{Hollands2008} that this condition is of a cohomological nature and that one can identify the set of all possible first order perturbations satisfying this condition (modulo trivial field redefinitions) with the elements of a certain cohomology ring, which bears close resemblance to Hochschild cohomology \cite{MacLane199408,happel1989hcf}. This notion can be generalized to higher order perturbations, i.e. at each order the associativity condition is a potential obstacle for the continuation of the perturbation series. This obstruction is then again an element of our cohomology ring.

The above definition of perturbation theory is very general in the sense that the physical meaning of the parameter $\lambda$ is completely open. $\lambda$ might for instance measure the strength of some coupling in a Lagrangian theory, like e.g. the self interaction in the theory described by the classical Lagrangian $L=(\del\phi)^2+\lambda \phi^4$. The perturbation would in this case be around the free theory where the OPE coefficients are known. It is also possible to perturb more general, not necessarily Lagrangian, theories, e.g. more general conformal field theories. One could also take $SU(N)$ Yang-Mills theory as yet another example. Here one could chose $\lambda=1/N$, where $N$ is the number of \emph{colors} of the theory, and perturb around the large-$N$-limit of the theory. This general theory of perturbations is described in more detail in chapter \ref{subsec:general perturbations}.

\section{Axiomatic framework}\label{subsec:axioms}

The aim of the present section is to give a precise formulation of the ideas informally presented above. In this approach quantum field theory is defined by an axiomatic setup that was first proposed in \cite{Hollands2008} (in \cite{HollandsWald} a basically similar framework was introduced for quantum field theory on curved spacetime). 

First we define the playground of our quantum field theory to be an infinite dimensional vector space $V$, whose elements can be thought of as the components of the various composite scalar, spinor and tensor fields. For example, in a theory containing only one scalar field $\varphi$, the elements of $V$ would be in one-to-one correspondence with the monomials in $\varphi$ and its derivatives. One would naturally assume $V$ to be graded in various ways, as it should be possible to classify the different quantum fields in the theory by characteristic properties, such as spin, dimension, Bose/Fermi character, etc. As we are considering Euclidean quantum field theories, we expect $V$ to carry a representation of the $D$-dimensional rotation group $SO(D)$, or of its covering group Spin$(D)$ respectively if spinor fields are present. This representation can be decomposed into unitary, finite-dimensional irreducible representations $V_S$ characterized by the eigenvalues $S=(\lambda_1,\ldots,\lambda_r)$ of the $r$ Casimir operators associated with $SO(D)$. Thus, we introduce a grading by these irreducible representations (irrep's):\index{symbols}{V@$V$}

\begin{equation}
 V=\bigoplus_{\Delta\in\mathbb{R}_+}\bigoplus_{S\in \text{irrep}} \mathbb{C}^{N(\Delta, S)}\otimes V_S
\label{eq:framework axioms V}
\end{equation}

An additional grading, which will later be related to the \emph{dimension} of the quantum fields, is provided by the numbers $\Delta\in\mathbb{R}_+$. The numbers $N(\Delta, S)\in\mathbb{N}$, here supposed to be finite, express the multiplicity of the fields with a given dimension $\Delta$ and spin $S$. Here we should remark that the infinite sums in this decomposition are understood without any closure taken, meaning that the elements of $V$ are in one-to-one correspondence with sequences of the form $(\ket{v_1},\ket{v_2},\ldots,\ket{v_n},0,0,\ldots)$, where only finitely many zeros appear and $\ket{v_i}$ is a vector in the $i$-th summand of the decomposition. As further structure on $V$ we demand the existence of an anti-linear, involutive operation $\star : V\to V$ which should be thought of as hermitian conjugation of the quantum fields. Additionally we would like to have a linear grading map $\gamma : V\to V$ satisfying $\gamma^2=id$ which is to be thought of as a grading with respect to \emph{bosonic} (eigenvalue $+1$) and \emph{fermionic} (eigenvalue $-1$) vectors. Finally, we demand the existence of $D$ derivations on $V$, i.e. linear maps $\del_{\mu}:V\to V$ with $\mu\in\{1,\ldots,D\}$ satisfying the Leibnitz rule and $\del_{\mu}\circ\gamma=\gamma\circ\del_{\mu}$. These derivations increase the dimension $\Delta$ of the vectors in $V$ by 1.

The linear space defined in this way is nothing more than a list of objects that we think of as labeling the composite fields of the theory, but so far the dynamics and the quantum nature of the theory, i.e. the information that is of most interest, have not been addressed. This information is now encoded in the OPE coefficients associated with the quantum fields. This is a hierarchy

\begin{equation}
 \mathcal{C}=\Big(\mathcal{C}(x_1,x_2),\mathcal{C}(x_1,x_2,x_3),\mathcal{C}(x_1,x_2,x_3,x_4),\ldots\Big)\: ,
\label{eq:framework axioms hierarchy}
\end{equation}
where the $\mathcal{C}(x_1,\ldots,x_n)$\index{symbols}{C@$\mathcal{C}(x_1,\ldots,x_n)$} are analytic functions on the \emph{configuration space}

\begin{equation}
 M_n:=\{(x_1,\ldots,x_n)\in(\mathbb{R}^D)^n\: |\: x_i\neq x_j,\: \forall\: 1\leq i<j\leq n \}
\label{eq:framework axioms configuration space}
\end{equation}
taking values in the linear maps

\begin{equation}
 \mathcal{C}(x_1,\ldots,x_n): \underbrace{V\otimes \cdots \otimes V}_{n-\text{factors}}\to V \: .
\label{eq:framework axioms map C}
\end{equation}
For the one-point coefficient we set $\mathcal{C}(x_1)= id: V\to V$. By taking the components of these maps in a basis of $V$ the OPE coefficients from the previous section can be retrieved. So, if $\{\ket{v_a}\}$ denotes a basis of $V$ in adapted to the grading with the corresponding basis $\{\bra{v^a}\}$ of the dual space

\begin{equation}
 V^* =\bigoplus_{\Delta\in\mathbb{R}_+}\bigoplus_{S\in \text{irrep}} \mathbb{C}^{N(\Delta, S)}\otimes V_{\overline{S}}\: ,
\label{eq:framework axioms Vdual}
\end{equation}
with $V_{\overline{S}}$ denoting the conjugate representation, $\braket{v^b}{v_a}=\delta^b_a$, then we have\index{symbols}{Cab@$\C{b}{a_1\ldots a_n}(x_1,\ldots,x_n)$}

\begin{equation}
 \C{b}{a_1\ldots a_n}(x_1,\ldots,x_n)=\braket{v^b}{\mathcal{C}(x_1,\ldots,x_n)|v_{a_1}\otimes\cdots\otimes v_{a_n}}\: ,
 \label{eq:framework axioms OPE basis}
\end{equation}
using the customary bra-ket notation $\ket{v_{a_1}\otimes\cdots\otimes v_{a_n}}:=\ket{v_{a_1}}\otimes\cdots\otimes\ket{v_{a_n}}$. In the following we express the basic properties of quantum field theory as axioms on the structure of the OPE coefficients:

\begin{axiom}[Hermitian conjugation]
{\ \\} Denoting by $\iota: V\to V$ the anti-linear map given by the star operation $\star$, we have

\begin{equation}
\overline{\mathcal{C}(x_1,\ldots,x_n)}=\iota \mathcal{C}(x_1,\ldots,x_n)\iota^n\: ,
\label{eq:framework axioms hermitian conjugation}
\end{equation}
where $\iota^n:=\iota\otimes\cdots\otimes\iota$ denotes the n-fold tensor product of the map $\iota$.

\end{axiom}

\begin{axiom}[Euclidean invariance]
{\ \\} Let $R$ be the representation of Spin$(D)$ on $V$, let $a\in\mathbb{R}^D$ and $g\in$Spin$(D)$. Then the equation

\begin{equation}
\mathcal{C}(gx_1+a,\ldots, gx_n+a)=R^*(g)\mathcal{C}(x_1,\ldots,x_n)R(g)^n
\label{eq:framework axioms euclidean invariance}
\end{equation}
holds, where $R(g)^n:=R(g)\otimes\cdots\otimes R(g)$ again is the $n$-fold tensor product of $R(g)$.
\end{axiom}

\begin{axiom}[Bosonic nature]
{\ \\} The OPE coefficients themselves should be "bosonic" in the sense that

\begin{equation}
\mathcal{C}(x_1,\ldots,x_n)=\gamma \mathcal{C}(x_1,\ldots,x_n)\gamma^n\: ,
\label{eq:framework axioms bosonic nature}
\end{equation}
where the shorthand notation for the $n$-fold tensor product was used again.
\end{axiom}

\begin{axiom}[Identity element]
{\ \\} There exists a unique element $\mathds{1}\in V$ of dimension $\Delta=0$ satisfying the properties $\mathds{1}^*=\mathds{1}$, $\gamma(\mathds{1})=\mathds{1}$ and

\begin{equation}
\mathcal{C}(x_1,\ldots,x_n)\ket{v_1\otimes\cdots\mathds{1}\otimes\cdots v_{n-1}}=\mathcal{C}(x_1,\ldots,\hat{x}_i,\ldots,x_n)\ket{v_1\otimes\cdots\otimes v_{n-1}}
\label{eq:framework axioms identity element}
\end{equation}
where $\mathds{1}$ is in the $i$-th tensor position, with $i<n$. If $\mathds{1}$ is in the $n$-th position, the relation

\begin{equation}
\mathcal{C}(x_1,\ldots,x_n)\ket{v_1\otimes\cdots v_{n-1}\otimes\mathds{1}}=t(x_{n-1},x_n)\mathcal{C}(x_1,\ldots,x_{n-1})\ket{v_1\otimes\cdots\otimes v_{n-1}}
\label{eq:framework axioms identity element2}
\end{equation}
has to hold, where $t$ is a "Taylor expansion map" characterized below.

\end{axiom}

The reason for the slightly more complicated form of eq. \ref{eq:framework axioms identity element2} is that $x_n$ is the point we expand around and thus the corresponding $n$-th tensor entry stands on a different footing than the other entries. In order to heuristically motivate the form of equation \ref{eq:framework axioms identity element2} and to specify the map $t$ more precisely, we consider the following situation:

Let $\phi_a$ be a quantum (or classical) field. Then we can formally perform a Taylor expansion

\begin{equation}
\phi_a(x_1)=\sum_{i=0}^{\infty} \frac{1}{i!} y^{\mu_1}\cdots y^{\mu_i}\del_{\mu_1}\cdots\del_{\mu_i}\phi_a(x_2)
\label{eq:framework axioms identity element taylor}
\end{equation}
with $y=x_1-x_2$. As each field $\del_{\mu_1}\cdots\del_{\mu_i}\phi_a$ is just another composite field of the theory denoted for example by $\phi_b$, we might rewrite equation \ref{eq:framework axioms identity element taylor} in the form $\phi_a(x_1)=\sum t^b_a(x_1,x_2)\phi_b(x_2)$. The $t^b_a$ are defined using the above Taylor expansion, at least up to potential trivial changes which take into account the fact that a derivative of the field $\phi_a$ might correspond to a linear combination of other fields in the particular labeling we have chosen for the fields. Application of these ideas formally yields

\begin{equation}
\begin{split}
\sum_b\C{b}{a_1\ldots a_{n-1}\mathds{1}}(x_1,\ldots,x_n)\phi_b(x_n)&=\phi_{a_1}(x_1)\cdots\phi_{a_{n-1}}(x_{n-1})\, \mathds{1}\\
	&=\sum_b\C{b}{a_1\ldots a_{n-1}}(x_1,\ldots,x_{n-1})\,\phi_b(x_{n-1})\\
	&=\sum_{b,c} \C{c}{a_1\ldots a_{n-1}}(x_1,\ldots,x_{n-1})\, t^b_c(x_{n-1},x_n)\phi_b(x_n)\: ,
\end{split}
\label{eq:framework axioms identity element heuristic}
\end{equation}
which suggests that

\begin{equation}
\C{b}{a_1\ldots a_{n-1}\mathds{1}}(x_1,\ldots,x_n)=\sum_c t^b_c(x_{n-1},x_n)\, \C{c}{a_1\ldots a_{n-1}}(x_1,\ldots,x_{n-1})\: .
\label{eq:framework axioms identity element heuristic2}
\end{equation}
We now argue that this sum if finite, so one does not have to bother with convergence issues. Note that in eq. \ref{eq:framework axioms identity element taylor} the dimension of the operators on the right hand side should not be smaller than that of the operator on the left hand side, because the former contain additional derivatives. We thus conclude that $t^b_a(x_1,x_2)$ is only nonzero as long as the dimension of the operator $\phi_a$ is greater or equal to the dimension of $\phi_b$. Since there are only finitely many operators up to a given dimension, our proposition is confirmed.

Abstracting the features derived from these heuristic considerations, we are now ready to give a definition of the map $t$ appearing in eq. \ref{eq:framework axioms identity element2}. We postulate the existence of a linear map $t(x_1,x_2):V\to V$ for all $x_1,x_2\in\mathbb{R}^D$ satisfying the following conditions. First, the map should have the same transformation properties as the OPE with respect to rotations (see axiom \ref{ax:Euclidean invariance}). Defining $V^{\Delta}$ as the subspace of $V$ in the decomposition \ref{eq:framework axioms V} spanned by vectors of dimension $\Delta$, we furthermore require

\begin{equation}
 t(x_1,x_2)V^{\Delta}\subset \bigoplus_{\hat{\Delta}\geq \Delta}V^{\hat{\Delta}}\: .
\label{eq:framework axioms identity element t dimension} 
\end{equation} 
Next, we have the cocycle relation

\begin{equation}
t(x_1,x_2)t(x_2,x_3)=t(x_1,x_3)
\label{eq:framework axioms identity element t cocycle}\: .
\end{equation}
Finally, the restriction of any vector of $t(x_1,x_2)V^{\Delta}$ to any subspace of $V^{\hat{\Delta}}$ is supposed to depend polynomially on $x_1-x_2$. This finishes the characterization of the "Taylor expansion map" $t$ and thus completes the formulation of axiom \ref{ax:Identity element}. Note that this axiom implies in particular the relation

\begin{equation}
t(x_1,x_2)\ket{v}=\mathcal{C}\ket{v\otimes\mathds{1}}\: ,
\label{eq:framework axioms identity element remark}
\end{equation}
i.e. $t(x_1,x_2)$ uniquely determines the 2-point OPE coefficients with an identity operator and vice versa. As a special case, we have $t(x_1,x_2)\mathds{1}=\mathds{1}$ using eq. \ref{eq:framework axioms identity element} and $\mathcal{C}(x_1)=id$, which implies that the identity operator does not depend on a \emph{reference point}.

Before we come to the next axiom, some notation is introduced. Let $I_1,\ldots,I_r$ denote a partition of the set $\{1,\ldots,n\}$ into disjoint ordered subsets, where all elements in $I_i$ are greater than all elements in $I_{i-1}$ for all $i$. An example for such a partition in the case $n=5$ would be $I_1\{1\}$, $I_2\{2,3,4\}$, $I_3=\{5,6\}$. For each ordered subset $I\subset\{1,\ldots,n\}$ we define $X_I$ to be the ordered tuple $(x_i)_{i\in I}\in (\mathbb{R}^D)^{|I|}$ and we set $m_k:=\max(I_k)$ and $\mathcal{C}(X_I):=id$ if $I$ consists of only one element. Furthermore, let $d(X_I)$ be the set of relative Euclidean distances between the points in a collection $X_I=(x_i)_{i\in I}$, defined as the set of positive real numbers

\begin{equation}
d(X_I):=\{r_{ij}\: |\: i.j\in I, i\neq j \}
\label{eq:framework axioms factorization distances}
\end{equation}
We are now ready for the\index{symbols}{D@$\mathcal{D}[\{I_1,\ldots,I_r\}]$}

\begin{axiom}[Factorization]
{\ \\} The identity

\begin{equation}
\mathcal{C}(X_{\{1,\ldots,n\}})=\mathcal{C}(X_{\{m_1,\ldots,m_r\}})\Big(\mathcal{C}(X_{I_1})\otimes\cdots\otimes\mathcal{C}(X_{I_r})\Big)
\label{eq:framework axioms factorization}
\end{equation}
holds on the open domain

\begin{equation}
\mathcal{D}[\{I_1,\ldots,I_r\}]:=\Big\{(x_1,\ldots,x_n)\in M_n | \min d(X_{\{m_1,\ldots,m_r\}})>\max(d(X_{I_1}),\ldots,d(X_{I_r}))\Big\}
\label{eq:framework axioms factorization domain}
\end{equation}

\end{axiom}

At this point some remarks are in order. First one should note that the factorization identity \ref{eq:framework axioms factorization} expressed in a basis of $V\otimes\cdots\otimes V$ involves an $r$-fold infinite sum on the right side of the equation. In fact, it is the statement of the factorization property that these infinite sums converge on the indicated domain. Outside the domain the sums are not restricted at all and one would expect them to diverge. The axiom can be generalized to arbitrary partitions of $\{1,\ldots,n\}$ making use of the (anti-)symmetry axiom \ref{ax:(Anti-)symmetry} below. If fermionic fields are included, then $\pm$ signs will appear. It is also important to remark that the factorization relation can be iterated on suitable domains, i.e. if for example the subset $I_j$ is itself partitioned into subsets, then the coefficient $\mathcal{C}(X_{I_j})$ will itself factorize on a suitable subdomain. Such subsequent partitions may naturally be identified with trees. A version of the factorization property in terms of trees was given in \cite{Hollands2007} and will also be given below \ref{eq:framework coherence factorization basis}.

\begin{axiom}[Scaling]
{\ \\} Let $v_{a_1},\ldots,v_{a_n}\in V$ be vectors with dimension $\Delta_1,\ldots,\Delta_n$ (remember the decomposition of $V$ in eq. \ref{eq:framework axioms V}) respectively and let $v^b\in V^*$ be an element in the dual space of $V$ with dimension $\Delta_{n+1}$. Then the scaling degree of the $\mathbb{C}$-valued distribution \ref{eq:framework axioms OPE basis} should be estimated by

\begin{equation}
sd\, \C{b}{a_1\ldots a_n}\leq \Delta_1+\ldots +\Delta_n-\Delta_{n+1}\: .
\label{eq:framework axioms scaling}
\end{equation}\index{symbols}{sd@$sd\, \C{b}{a_1\ldots a_n}$}

\end{axiom}
By \emph{scaling degree} we mean
\begin{equation}
 sd\, \C{b}{a_1\ldots a_n}=\inf_{p\in\mathbb{R}}\Big(\lim_{\epsilon\to 0}\epsilon^{p}\C{b}{a_1\ldots a_n}(\epsilon x_1,\ldots,\epsilon x_n)=0 \text{ for all } (x_1,\ldots,x_n)\in M_n\Big)\, .
\label{eq:framework axioms scaling degree}
\end{equation}
Further, if $v^b$ is an element of the one-dimensional subspace of dimension-0 fields spanned by the identity operator $\mathds{1}\in V$, and if $n=2$ and $v_{a_1}=v_{a_2}^*\neq 0$, then the inequality in the eq.\eqref{eq:framework axioms scaling} is required to hold.

\begin{axiom}[(Anti-)symmetry]
 {\ \\} Let $\tau_{i-1,i}$ be the permutation exchanging the $(i-1)$-th and the $i$-th tensor factor in an element of $V\otimes \cdots\otimes V$. Then we have (for $1<i<n$)

\begin{equation}
\mathcal{C}(x_1,\ldots,x_{i-1},x_i,\ldots,x_n)\tau_{i-1,i}= \mathcal{C}(x_1,\ldots,x_i,x_{i-1},\ldots,x_n)(-1)^{F_{i-1}F_i}
\label{eq:framework axioms symmetry}
\end{equation}
with

\begin{equation}
 F_i:=\frac{1}{2}\, id^{i-1}\otimes (id-\gamma)\otimes id^{n-i}\: .
\label{eq:framework axioms symmetry F}
\end{equation}
In the case $i=n$ we require

\begin{equation}
\mathcal{C}(x_1,\ldots,x_{n-1},x_n)\tau_{n-1,n}= t(x_{n-1},x_n) \mathcal{C}(x_1,\ldots,x_n,x_{n-1})(-1)^{F_{n-1}F_n}
\label{eq:framework axioms symmetry2}
\end{equation}

\end{axiom}
The last factor in eqs. \ref{eq:framework axioms symmetry} and \ref{eq:framework axioms symmetry2} makes the OPE coefficients of bosonic fields symmetric and the OPE coefficients of fermionic fields anti-symmetric. Also notice that again we had to treat the case involving the $n$-th point and the $n$-th tensor factor separately. The reasons for this are obviously similar to those leading to eq. \ref{eq:framework axioms identity element2}.

\begin{axiom}[Derivations]
The derivations $\del_{\mu}$ on $V$ are compatible with partial derivatives of the OPE coefficients with respect to the spacetime arguments $x_i\in\mathbb{R}^D$ in the sense that

\begin{equation}
 \mathcal{C}(x_1,\ldots,x_n)\, \ket{v_1\otimes\cdots \del_{\mu}v_i\otimes\cdots v_n}=\del_{(x_i)_{\mu}}\mathcal{C}(x_1,\ldots,x_n)\, \ket{v_1\otimes\cdots v_i\otimes\cdots v_n}
\label{eq:framework axioms derivations1}
\end{equation}
for $i< n$. If there is a derivative in the last tensor position, then the identity

\begin{equation}
 \mathcal{C}(x_1,\ldots,x_n)\, \ket{v_1\otimes\cdots \otimes \del_{\mu}v_n}=\left[\del_{(x_n)_{\mu}}\mathcal{C}(x_1,\ldots,x_n)+\del_{\mu}\circ\mathcal{C}(x_1,\ldots,x_n)\right]\, \ket{v_1\otimes\cdots \otimes v_n}
\label{eq:framework axioms derivations2}
\end{equation}
holds.
\end{axiom}
This axiom, which has not been required in \cite{Hollands2008}, will later allow us to transform field equations, i.e. partial differential equations on $V$, into similar relations involving the OPE coefficients (see section \ref{subsec:perturbation via field equation}). \\

This finishes our presentation of the axioms for the new framework. As already mentioned in section \ref{subsec: framework motivation}, the factorization property, axiom \ref{ax:Factorization}, lies at the heart of the theory. The stringent constraints it imposes on the possible consistent hierarchies $(\mathcal{C}(x_1,x_2),\mathcal{C}(x_1,x_2,x_3),\ldots)$ carry the main information in the approach. The Euclidean invariance condition, axiom \ref{ax:Euclidean invariance}, implies translation invariance of the OPE coefficients and links the decomposition \ref{eq:framework axioms V} of the field space $V$ into sectors of different spin to the transformation properties of the OPE coefficients under rotations. Likewise, the scaling property links the grading of $V$ with respect to the dimension to the scaling properties of the OPE coefficients. Furthermore, the (anti-)symmetry requirement, axiom \ref{ax:(Anti-)symmetry}, is a replacement for local (anti-)commutativity (Einstein causality) in the Euclidean setting. Note also that we have not required a spin-statistics or $PCT$-theorem to hold, as these can in fact be derived from the axiomatic framework just introduced (see \cite{Hollands2004}).

So in summary, a quantum field theory is defined as a pair $(V,\mathcal{C})$ consisting of a vector space $V$ with the above properties and a hierarchy of OPE coefficients $\mathcal{C}:=(\mathcal{C}(x_1,x_2),\mathcal{C}(x_1,x_2,x_3),\ldots)$ satisfying axioms \ref{ax:Hermitian conjugation} to \ref{ax:Derivations}. Now the issue of equivalence of different theories of this kind arises. One would naturally identify quantum field theories that only differ by a redefinition of their fields, where, informally, a field redefinition means that the definition of the quantum fields of the theory is changed by a transformation of the form $\phi_a(x)$ to $\hat{\phi}_a(x)=z^b_a\phi_b(x)$, where $z_a^b$ is some matrix on field space. The following definition carries over these ideas to our framework.

\begin{definition}[Equivalence]
 {\ \\} Let $(V,\mathcal{C})$ and $(\hat{V},\hat{\mathcal{C}})$ be two quantum field theories. If there exists an invertible map $z:V\to \hat{V}$ satisfying

\begin{equation}
 zR(g)=\hat{R}(g)z\:, \hspace{1cm} z\gamma=\hat{\gamma}z\:, \hspace{1cm} z\,\star=\hat{\star}\,z
\label{eq:framework axioms redifinition z}
\end{equation}
as well as

\begin{equation}
\mathcal{C}(x_1,\ldots,x_n)=z^{-1}\hat{\mathcal{C}}(x_1,\ldots,x_n)z^n
\label{eq:framework axioms redifinition z2}
\end{equation}
for all $n$, where $z^n=z\otimes\cdots\otimes z$, then the two quantum field theories are said to be equivalent and $z$ is called a field redefinition.

\end{definition}

Before concluding this chapter one further condition is imposed, namely that the quantum field theory $(V,\mathcal{C})$ exhibits a vacuum state. The appropriate notion of quantum state in our Euclidean setting is a collection of correlation functions, which we will write as $\langle \phi_{a_1}(x_1)\cdots\phi_{a_n}(x_n) \rangle_{\Omega}$ with arbitrary $n$ and $a_1,\ldots,a_n$. These functions should be analytic on $M_n$ and satisfy the Osterwalder-Schrader (OS) axioms for the vacuum state $\Omega$ (see \cite{Osterwalder:1974tc,Osterwalder:1973dx}) and also the OPE in the sense that

\begin{equation}
\langle \phi_{a_1}(x_1)\cdots\phi_{a_n}(x_n) \rangle_{\Omega} \sim \sum_b \C{b}{a_1\ldots a_n}(x_1,\ldots,x_n)\langle \phi_b(x_n)\rangle_{\Omega}\: .
\label{eq:framework axioms states}
\end{equation}
The symbol `` $\sim$ '' here indicates that the difference between the left and right side of this expression is a distribution on $M_n$ with smaller scaling degree than any given number $\delta$ provided the above sum goes over all of the finitely many fields $\phi_b$ whose dimension is smaller than some number $\Delta=\Delta(\delta)$. Then by the OS-reconstruction theorem the theory can be continued back to Minkowski spacetime and the fields can be represented as linear operators on a Hilbert space $\mathcal{H}$ of states. It may also be of interest in some settings (e.g. theories with unbounded potentials) to drop the notion of unique vacuum state by leaving out those OS-axioms which involve statements about invariance under Euclidean transformations.

Obviously, if we require a quantum state to satisfy the OS-axioms, new constraints on the OPE coefficients are expected to appear. These constraints will not be discussed here, as our focus is on the algebraic conditions satisfied by the OPE coefficients. The additional restrictions imposed by the OS-axioms are genuinely new, so in some contexts one might even want to drop them (e.g. the condition of OS-positivity does not hold in some systems in statistical mechanics and in gauge theories before taking the quotient by the BRST-differential).

\section{The coherence theorem}\label{subsec:coherence theorem}

It has already been mentioned a few times that we think of the factorization property, axiom \ref{ax:Factorization}, as the key condition on the OPE coefficients. These restrictions on the OPE coefficients $\mathcal{C}(x_1,\ldots,x_n)$ (with $n\geq 2$) are expected to be very stringent and encode most of the non-trivial information of the theory. It is the purpose of this section to analyze the interdependence of these conditions for different $n$, i.e. for OPE coefficients coming from the expansion of a product of $n$ fields. The result will be that all the \emph{higher} constraints (i.e. larger $n$) are already encoded in the first non-trivial constraint arising for $n=3$. As this implies that all the factorization constraints can be coherently described by a single condition, this result was named the \emph{coherence theorem} in \cite{Hollands2008}.

It is instructive to consider an analog from ordinary algebra before treating our case in more detail. Let us consider a finite dimensional associative algebra $\mathbf{A}$. Thus 

\begin{equation}
 (AB)C=A(BC)\hspace{2cm}\forall\: A,B,C\in\mathbf{A}
\label{eq:framework coherence standard associativity}
\end{equation}
holds, or rewritten in terms of the linear product map $m:\mathbf{A}\otimes\mathbf{A}\to\mathbf{A}$ with $m(A,B)=AB$:

\begin{equation}
 m(id\otimes m)=m(m\otimes id)
\label{eq:framework coherence standard associativity m}
\end{equation}
Both sides of this equation are maps $\mathbf{A}\otimes\mathbf{A}\otimes\mathbf{A}\to\mathbf{A}$. By successively applying eq. \ref{eq:framework coherence standard associativity} it is easy to prove that also

\begin{equation}
 (AB)(CD)=(A(BC))D\hspace{2cm}\forall\: A,B,C,D\in\mathbf{A}
\label{eq:framework coherence standard higher associativity}
\end{equation}
holds without further restrictions on the map $m$. In fact, it is an elementary result in algebra that no such \emph{higher associativity conditions} arise. It is not difficult to see that this result is analogous to our coherence theorem, where eq. \ref{eq:framework coherence standard associativity} plays the role of the three point factorization constraint and eq. \ref{eq:framework coherence standard higher associativity} together with similar higher order conditions is identified with the $n>3$ factorization conditions.

Now let us come back to our original setting and consider axiom \ref{ax:Factorization} for $n=3$. We encounter three different partitions of the set $\{1,2,3\}$ corresponding to non-trivial factorization conditions \ref{eq:framework axioms factorization}, namely $\mathbf{T}_3:=\{\{1,2\}\{3\}\}$, $\mathbf{T}_2:=\{\{1,3\}\{2\}\}$ and $\mathbf{T}_1:=\{\{2,3\}\{1\}\}$. The respective domains on which the factorization identities are supposed to be valid are, according to eq. \ref{eq:framework axioms factorization domain}, given by

\begin{eqnarray}
 \mathcal{D}[\mathbf{T}_1]\,&=\,\{(x_1,x_2,x_3)\,|\, r_{23}<r_{13}\}\: , \\
 \mathcal{D}[\mathbf{T}_2]\,&=\,\{(x_1,x_2,x_3)\,|\, r_{13}<r_{23}\}\: , \\
 \mathcal{D}[\mathbf{T}_3]\,&=\,\{(x_1,x_2,x_3)\,|\, r_{12}<r_{23}\}\: .
\label{eq:framework coherence 3 point domains}
\end{eqnarray}
The first two domains in the above equations are clearly disjoint, but both have a non-empty intersection with the third domain. Thus, according to axiom \ref{ax:Factorization}, on these intersections both factorizations should give $\mathcal{C}(x_1,x_2,x_3)$ and hence should also be equal to each other. We can write this as

\begin{equation}
 \mathcal{C}(x_2,x_3)\Big(\mathcal{C}(x_1,x_2)\otimes id\Big)=\mathcal{C}(x_1,x_3)\Big(id\otimes\mathcal{C}(x_2,x_3)\Big)
\label{eq:framework coherence 3 point constraint}
\end{equation}
where the spacetime arguments lie in the intersection $\mathcal{D}[\mathbf{T}_1]\cap\mathcal{D}[\mathbf{T}_3]$, i.e. $r_{12}<r_{23}<r_{13}$. A similar relation holds on the domain $\mathcal{D}[\mathbf{T}_2]\cap\mathcal{D}[\mathbf{T}_3]$, but it turns out that this relation can be derived from \ref{eq:framework coherence 3 point constraint} with the help of the symmetry condition, axiom \ref{ax:(Anti-)symmetry},

\begin{equation}
 \mathcal{C}(x_1,x_2)=t(x_1,x_2)\mathcal{C}(x_2,x_1)\tau_{1,2}
\label{eq:framework coherence symmetry axiom}
\end{equation}

and the relation

\begin{equation}
 \mathcal{C}(x_1,x_3)=\mathcal{C}(x_2,x_3)\Big(t(x_1,x_2)\otimes id\Big)
\label{eq:framework coherence relation}
\end{equation}
for $r_{12}<r_{23}$. So in the case $n=3$ (i.e. three spacetime points) there exists only one independent consistency condition, eq. \ref{eq:framework coherence 3 point constraint}, which has already been given in component form in \ref{eq:framework motivation consistency}.

Remember, the aim of this chapter is to analyze higher factorization conditions, i.e. axiom \ref{ax:Factorization} for $n>3$. We have seen in the analogous problem in ordinary algebra that all higher associativity conditions can be derived from eq. \ref{eq:framework coherence standard associativity m}, which is the analogue to our eq. \ref{eq:framework coherence 3 point constraint}. In the following it will be shown that, as in the example of associative algebra, we will not encounter any higher factorization conditions and that the coefficients $\mathcal{C}(x_1,\ldots,x_n)$, the analogue of a product of $n$ elements of our algebra $\mathbf{A}$, are completely determined by the coefficients $\mathcal{C}(x_1,x_2)$, which can be identified with a product of two elements of $\mathbf{A}$. 

Thus, our first task is to write all the factorization equations only in terms of the $\mathcal{C}(x_1,x_2)$. In this context the language of rooted trees is natural and useful (see also \cite{Hollands2007}). By a rooted tree on $n$ elements $\{1,\ldots,n\}$ we will mean a set $\{S_1,\ldots,S_k\}$ of nested subsets $S_i\subset\{1,\ldots,n\}$, such that each $S_i$ is either contained in another set of the family of subsets, or disjoint from it. The set $\{1,\ldots,n\}$, called the root, is by definition not in the tree. One can then visually think of the sets $S_i$ as the nodes of the tree, which are connected by branches to all those nodes that are subsets of $S_i$, but not proper subsets of any element of the tree other than $S_i$. The leaves of the tree are the nodes that do not contain any other sets of the tree as subsets, so they are of the form $S_i=\{i\}$. Let us introduce some further notation: If $\mathbf{T}$ is a tree on $n$ elements of a set, then we denote by $|\mathbf{T}|$ the elements of this set. Furthermore, let $\mathbf{T}$ be of the form $\mathbf{T}=\{\mathbf{T}_1,\ldots,\mathbf{T}_r\}$, where each $\mathbf{T}_i$ is itself a tree on a proper subset of $\{1,\ldots,n\}$, so that $|\mathbf{T}_1|\cup\cdots\cup|\mathbf{T}_r|=\{1,\ldots,n\}$ is a partition into disjoint subsets. For such trees we recursively define an open, non-empty domain of $M_n$, by\index{symbols}{DT@$\mathcal{D}[\mathbf{T}]$}

\begin{equation}
\begin{split}
 \mathcal{D}[\mathbf{T}]\:=\:&\Big\{(x_1,\ldots,x_n)\in M_n\: |\: X_{|\mathbf{T}_1|}\in \mathcal{D}[\mathbf{T}_1],\ldots, X_{|\mathbf{T}_r|}\in \mathcal{D}[\mathbf{T}_r];\\
&\min d(X_{\{m_1,\ldots,m_r\}})> \max(d(X_{|\mathbf{T}_1|}),\ldots,d(X_{|\mathbf{T}_r|}))\Big\}\: .
\end{split}
\label{eq:framework coherence tree domain}
\end{equation}

Here $m_i$ is the maximum element element upon which the tree $\mathbf{T}_i$ is built. Otherwise the notation is as introduced above axiom \ref{ax:Factorization}. The domain presented in eq. \ref{eq:framework axioms factorization domain} is recovered from this definition, if the $\mathbf{T}_i$ are the trees with only a single node apart from the leaves. Then the $I_i$ of eq. \ref{eq:framework axioms factorization domain} is given by the elements of the $i$-th subtree $\mathbf{T}_i$. Thus, in this case, the factorization property \ref{eq:framework axioms factorization} holds on the given domain. But if we consider more complex trees, the corresponding domains according to eq. \ref{eq:framework coherence tree domain} will just be proper open subsets of the domain we just considered, so the factorization condition is still satisfied on these domains. Hence, the factorization identity, eq, \ref{eq:framework axioms factorization} holds on $\mathcal{D}[\mathbf{T}]$ in any case. In addition, we may now iterate this factorization, as the factors $\mathcal{C}(X_{|\mathbf{T}_i|})$ themselves factorize on $\mathcal{D}[\mathbf{T}]$, because that domain is defined in such a way that $X_{|\mathbf{T}_i|}\in\mathcal{D}[\mathbf{T}_i]$. So, successively making use of such factorization identities, we obtain a nested factorization property on each of the domains $\mathcal{D}[\mathbf{T}]$.

In order to explicitly write down these identities, some further notation is useful. Let $S\in\mathbf{T}$. Then we write $l(1),\ldots,l(j)\subset_{\mathbf{T}}S$, if $l(1),\ldots,l(j)$ are the branches descending from $S$. By $m_i$ we denote the largest element in $l(i)$ and assume an ordering of the branches, such that $m_1<\ldots<m_j$. Also, as above in eq. \ref{eq:framework axioms OPE basis}, we will work in a basis of the linear maps $\mathcal{C}(x_1,\ldots,x_n):V^{\otimes n}\to V$ with components $\C{b}{a_1\ldots a_n}(x_1,\ldots,x_n)$. Then for each tree $\mathbf{T}$ an iteration of axiom \ref{ax:Factorization} as described above will lead to the following factorization identity on the domain $\mathcal{D}[\mathbf{T}]$:

\begin{equation}
 \C{b}{a_1\ldots a_n}(x_1,\ldots,x_n)=\sum_{a_S:\,S\in\mathbf{T}}\left(\prod_{S:\, l(1),\ldots,l(j)\subset_{\mathbf{T}}S}\C{a_S}{a_{l(1)}\ldots a_{l(j)}}(x_{m_1},\ldots,x_{m_j})\right)\: ,
\label{eq:framework coherence factorization basis}
\end{equation}
where the sums are over all $a_S$ with $S$ a subset in the tree excluding the root $\{1,\ldots,n\}$ and the leaves ${1},\ldots,\{n\}$. For the latter we set $a_{\{1\}}:=a_1,\ldots,a_{\{n\}}=a_n$ and $a_{\{1,\ldots,n\}}:=b$. The hierarchical order by which the nested infinite sums are carried out is determined by the tree, with the sums corresponding to the nodes closest to the leaves coming first. Now, if we assume $\mathcal{T}$ to be a binary tree, i.e. one with exactly two branches descending from every node, then by the above formula we have expressed the $n$-point OPE coefficient $\mathcal{C}(x_1,\ldots,x_n)$ in terms of products of 2-point coefficients on the open domain $\mathcal{D}[\mathbf{T}]\subset M_n$. Remembering that by definition $\mathcal{C}(x_1,\ldots,x_n)$ is an analytic function on the open, connected domain $M_n$ and using the fact that an analytic function on a connected domain is uniquely determined by its restriction to an open set, we propose:

\begin{proposition}\label{prop:factorization}
The $n$-point OPE coefficients $\mathcal{C}(x_1,\ldots,x_n)$ are uniquely determined by the 2-point coefficients $\mathcal{C}(x_1,x_2)$. In particular, if two quantum field theories have equivalent 2-point OPE coefficients (see definition \ref{def:Equivalence}), then they are equivalent.
\end{proposition}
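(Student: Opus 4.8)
The plan is to leverage the tree factorization identity \eqref{eq:framework coherence factorization basis} together with the analyticity of the OPE coefficients on the connected configuration space $M_n$. First I would fix $n\geq 2$ and pick, for concreteness, any \emph{binary} rooted tree $\mathbf{T}$ on $\{1,\ldots,n\}$ (for instance the ``left comb'' in which the leaf $\{i\}$ branches off at successive nodes). Applying the iterated factorization property, which the excerpt establishes holds on the open, non-empty domain $\mathcal{D}[\mathbf{T}]\subset M_n$, expresses $\C{b}{a_1\ldots a_n}(x_1,\ldots,x_n)$ as a convergent sum of products of \emph{two-point} coefficients $\C{a_S}{a_{l(1)}a_{l(2)}}$ on that domain. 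Hence on $\mathcal{D}[\mathbf{T}]$ the $n$-point coefficient is completely determined by the data $\{\mathcal{C}(x_1,x_2)\}$.

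The second step is to promote this from the subdomain $\mathcal{D}[\mathbf{T}]$ to all of $M_n$. By the Scaling axiom (and the assumed analyticity of each $\mathcal{C}(x_1,\ldots,x_n)$ on $M_n$, stated below \eqref{eq:framework axioms map C}), $\mathcal{C}(x_1,\ldots,x_n)$ is a matrix of analytic functions on the connected open set $M_n$. Two analytic functions on a connected domain that agree on a non-empty open subset agree everywhere, so if two hierarchies $\mathcal{C}$ and $\tilde{\mathcal{C}}$ share the same two-point coefficients, then by the previous paragraph they agree on $\mathcal{D}[\mathbf{T}]$ and therefore on all of $M_n$. This proves the first assertion: the $n$-point coefficients are functionals of the two-point coefficients. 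One should also remark that $\mathcal{D}[\mathbf{T}]$ is genuinely non-empty and open --- this follows from the recursive definition \eqref{eq:framework coherence tree domain}, since one can always place the points in a strongly hierarchical ``cluster-within-cluster'' configuration realizing the strict distance inequalities.

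For the equivalence statement, suppose $(V,\mathcal{C})$ and $(\hat{V},\hat{\mathcal{C}})$ have equivalent two-point coefficients, i.e. there is an invertible $z:V\to\hat{V}$ compatible with $R,\gamma,\star$ and satisfying $\mathcal{C}(x_1,x_2)=z^{-1}\hat{\mathcal{C}}(x_1,x_2)z^{2}$. Define $\mathcal{C}'(x_1,\ldots,x_n):=z^{-1}\hat{\mathcal{C}}(x_1,\ldots,x_n)z^{n}$. One checks that $\mathcal{C}'$ is again a consistent hierarchy --- factorization, (anti-)symmetry, Euclidean invariance, etc., are all preserved under conjugation by $z^{\otimes k}$ because $z$ intertwines the relevant structure maps --- and that its two-point part coincides with $\mathcal{C}(x_1,x_2)$. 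By the uniqueness just proved, $\mathcal{C}'(x_1,\ldots,x_n)=\mathcal{C}(x_1,\ldots,x_n)$ for all $n$, which is exactly the relation \eqref{eq:framework axioms redifinition z2} defining a field redefinition. Hence the two theories are equivalent.

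The main obstacle I anticipate is the rigorous handling of the infinite sums in \eqref{eq:framework coherence factorization basis}: one must ensure the nested sums converge on $\mathcal{D}[\mathbf{T}]$ (this is asserted by the Factorization axiom, so it may be invoked) and, more subtly, that the object reconstructed from the two-point data on $\mathcal{D}[\mathbf{T}]$ really does extend analytically to all of $M_n$ and coincides there with the given analytic $\mathcal{C}(x_1,\ldots,x_n)$ --- the point being that analytic continuation from an open subset is unique, so no independent verification of convergence of the tree formula off $\mathcal{D}[\mathbf{T}]$ is needed. A secondary technical point is checking that conjugation by $z$ genuinely preserves \emph{all} the axioms (in particular Factorization on the correct domains and the Taylor-map $t$ data), which is routine but must be stated.
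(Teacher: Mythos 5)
Your argument is correct and is essentially the paper's own: iterate the factorization axiom along a binary tree to express $\mathcal{C}(x_1,\ldots,x_n)$ through 2-point coefficients on the non-empty open domain $\mathcal{D}[\mathbf{T}]$, then invoke the analyticity of $\mathcal{C}(x_1,\ldots,x_n)$ on the connected domain $M_n$ and uniqueness of analytic continuation from an open subset. Your explicit treatment of the equivalence statement via the conjugated hierarchy $z^{-1}\hat{\mathcal{C}}z^{n}$ is the natural completion of the argument the paper leaves implicit (note only that analyticity comes from the definition of the hierarchy itself, not from the Scaling axiom).
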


Thus, our first task is achieved. However, the question of higher factorization conditions is still not solved. It remains to show that the factorization condition \ref{eq:framework coherence factorization basis} for binary trees does not impose any further restrictions on $\mathcal{C}(x_1,x_2)$ apart from eq. \ref{eq:framework coherence 3 point constraint}. Therefore, let us study the following expression

\begin{equation}
 (f_{\mathbf{T}})^b_{a_1\ldots a_n}(x_1,\ldots,x_n):=\sum_{a_S:\,S\in\mathbf{T}}\left(\prod_{S:\, l(1),l(2)\subset_{\mathbf{T}}S}\C{a_S}{a_{l(1)}a_{l(2)}}(x_{m_1},x_{m_2})\right)\: ,
\label{eq:framework coherence f}
\end{equation}
on the domain $\mathcal{D}[\mathbf{T}]$ for any binary tree $\mathbf{T}$. In other words, $f_{\mathbf{T}}(x_1,\ldots,x_n)$ is just the expression for $\mathcal{C}(x_1,\ldots,x_n)$ in the factorization identity \ref{eq:framework coherence factorization basis} for the binary tree $\mathbf{T}$. So, as we have just argued, $f_{\mathbf{T}}$ can be analytically continued to an analytic function on $M_n$, which we will denote by $f_{\mathbf{T}}$ as well and which does not depend on the choice of binary tree $\mathbf{T}$. As we want to analyze the constraints imposed on the 2-point coefficients $\mathcal{C}(x_1,x_2)$ by the properties we just stated, we now drop these assumptions and only assume that the sums in eq. \ref{eq:framework coherence f} converge and define an analytic function $f_{\mathbf{T}}$ on $\mathcal{D}[\mathbf{T}]$, which can be analytically continued to $M_n$ for all $n$ and all binary trees $\mathbf{T}$ on $n$ elements. For the sake of the argument, we in particular do \emph{not} assume the $f_{\mathbf{T}}$ to coincide for different binary trees, except in the case $n=3$, where the assumption that the $f_{\mathbf{T}}$ coincide for the three possible binary trees on the respective domains is equivalent to the factorization condition for three points, eq. \ref{eq:framework coherence 3 point constraint}, plus the symmetry and normalization conditions, eqs. \ref{eq:framework coherence symmetry axiom} and \ref{eq:framework coherence relation}. These three conditions will be assumed to hold.

Now we want to show that these assumptions suffice to deduce that all $f_{\mathbf{T}}$ coincide for all binary trees $\mathbf{T}$, thus implying the absence of further consistency conditions on $\mathcal{C}(x_1,x_2)$ beyond those for $n=3$. We graphically present the corresponding proof, which is not very difficult and quite similar to the proof of the analogous statement in our example of ordinary algebra. We start with the case $n=3$. Here, the assumption that all $f_{\mathbf{T}}$ agree for the three trees is graphically expressed by fig. \ref{fig:associativity 3}.

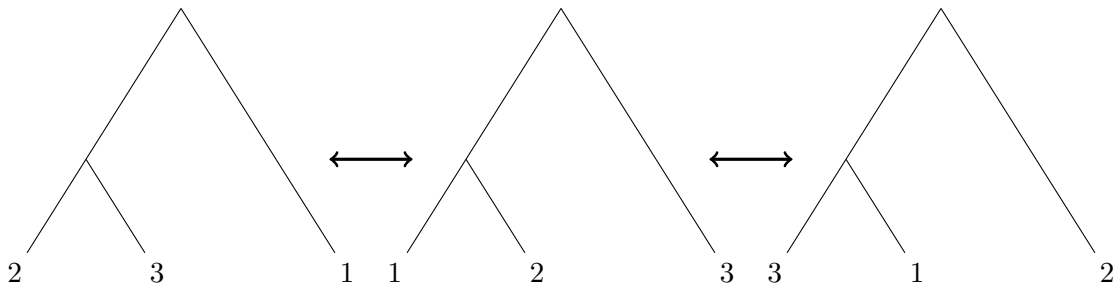
\begin{figure}[h!]
\begin{center}
\begin{tikzpicture}
\tikzstyle{level 1}=[sibling distance = 2.5cm, level distance = 2cm]
\tikzstyle{level 2}=[sibling distance = 1.875cm, level distance = 1.5cm]

 \node at (0,0) [shape=coordinate] {}
	child {child {node {2}} child {node {3}}} 
	child {node [shape=coordinate] (a) {} child[fill=none]{edge from parent [draw=none]} child{node{1}}};
\node at (5,0) [shape=coordinate] {}
	child {node [shape=coordinate] (b) {} child {node {1}} child {node {2}}} 
	child {node [shape=coordinate] (c){} child [fill=none]{edge from parent [draw=none]} child{node{3}}};
\node at (10,0) [shape=coordinate] {}
	child {node [shape=coordinate] (d) {} child {node {3}} child {node {1}}} 
	child {child [fill=none]{edge from parent [draw=none]} child{node{2}}};
	\draw [<->, shorten >= 20pt, shorten <= 20pt,very thick] (a.north east) -- (b.north west);
	\draw [<->, shorten >= 20pt, shorten <= 20pt,very thick] (c.north east) -- (d.north west);
\end{tikzpicture}
\end{center}
\caption{The associativity condition in graphical notation. Double arrows indicate that the OPE's represented by the respective trees coincide on the (non-empty) intersection of the associated domains $\mathcal{D}[\mathbf{T}_i]$.	Note that these arrows are not a transitive relation: The domains associated with the left- and rightmost tree are disjoint. }
\label{fig:associativity 3}
\end{figure}
Each tree in this graph symbolizes the corresponding expression $f_{\mathbf{T}}$ and the arrows denote the following relations: (i) the corresponding domains (see eq. \ref{eq:framework coherence 3 point domains}) are not disjunct and (ii) the expressions coincide on the intersection. Analyticity of the $f_{\mathbf{T}}$ then implies that the $f_{\mathbf{T}}$'s agree on the whole of $M_n$. Let us now move on to the $n>3$ case and let $\mathbf{T}$ be an arbitrary tree on $n$ elements. The idea of the proof is the following: Find a sequence $\mathbf{T}_0,\mathbf{T}_1,\ldots,\mathbf{T}_r$ of trees such that $\mathbf{T}_0=\mathbf{T}$ and $\mathbf{T}_r=\mathbf{S}$, where $\mathbf{S}$ is the \emph{reference tree}

\begin{equation}
 \mathbf{S}:=\Big\{\{n\},\{n-1,n\},\{n-2,n-1,n\},\ldots,\{1,2,\ldots,n\}\Big\}
\label{eq:framework coherence reference tree}
\end{equation}
as drawn in fig. \ref{fig:reference tree S}.

\begin{figure}[h!]
\begin{center}
\begin{tikzpicture}[level distance = 1cm]
 \coordinate
	child {child{ child{ child[dashed]{child[solid]{node{1}} child[solid]{node{2}}} child[fill=none]{edge from parent [draw=none]}{} } child{ child[fill=none]{edge from parent [draw=none]} child{ child[fill=none]{edge from parent [draw=none]} child{node{n-2} }} } } child{ child[fill=none]{edge from parent [draw=none]} child{ child[fill=none]{edge from parent [draw=none]} child{ child[fill=none]{edge from parent [draw=none]} child{node{n-1}} } } }}
	child {child[fill=none]{edge from parent [draw=none]} child{child[fill=none]{edge from parent [draw=none]} child{child[fill=none]{edge from parent [draw=none]} child{child[fill=none]{edge from parent [draw=none]} child{node{n}}}}}};
\end{tikzpicture}
\end{center}
\caption{The reference tree $S$ defined in eq.\ref{eq:framework coherence reference tree}}
\label{fig:reference tree S}
\end{figure}
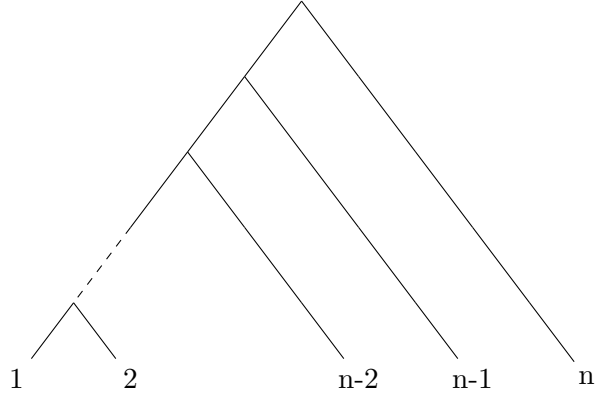
Furthermore, we want a relation as above, i.e. non-disjointness and equivalence on the intersection of the domains, to hold between elements $\mathbf{T}_i$ and $\mathbf{T}_{i-1}$ in our sequence of trees. Making use of the analyticity properties of the $f_{\mathbf{T}}$ as in the $n=3$ case, this would yield $f_{\mathbf{T}}=f_{\mathbf{S}}$ on $M_n$, and hence all $f_{\mathbf{T}}$ would be equal.

The construction of the desired sequence of trees is presented in the following by inductive methods. Our starting point is the binary tree $\mathbf{T}=\mathbf{T}_0$ as depicted on the left of fig. \ref{fig:elementary manipulation 1}, where shaded regions represent subtrees whose particular form is irrelevant at this stage.

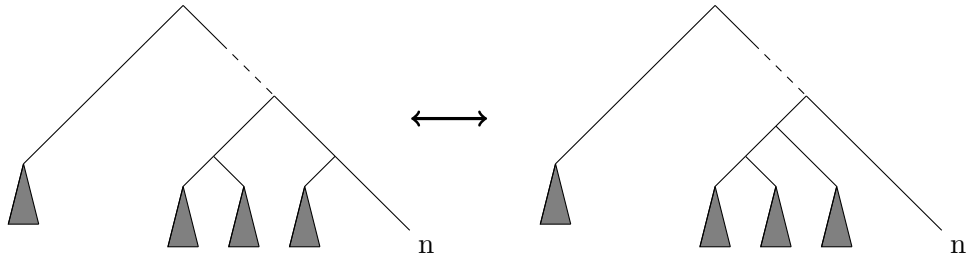
\begin{figure}[h!]
\begin{center}
\begin{tikzpicture}
\tikzstyle{level 1}=[sibling distance = 1cm, level distance = .5cm]
\tikzstyle{level 2}=[sibling distance = 1cm, level distance = .5cm]
\tikzstyle{level 3}=[sibling distance = 1cm, level distance = .5cm]
\tikzstyle{level 4}=[sibling distance = .8cm, level distance = .4cm]
\tikzstyle{level 5}=[sibling distance = .8cm, level distance = .4cm]
\tikzstyle{level 6}=[sibling distance = .4cm, level distance = .8cm]

\node at (0,0) [shape=coordinate] {}
	child{child[level distance=1.6cm, sibling distance=3.2cm]{node(triang a1)[shape=coordinate]{} child[level distance=.8cm, sibling distance=.4cm]{node(triang a2)[shape=coordinate]{} } child[level distance=.8cm, sibling distance=.4cm]{node(triang a3)[shape=coordinate]{} } } child[fill=none]{edge from parent [draw=none]}} child[level distance=.5cm]{child[fill=none]{edge from parent [draw=none]} child[dashed, level distance= .7cm, sibling distance=1.4cm ]{child[solid, level distance=.8cm, sibling distance=1.6cm]{  child{ node(triang b1)[shape=coordinate]{} child[solid,level distance=.8cm, sibling distance=.4cm]{ node(triang b2)[shape=coordinate]{} } child[solid,level distance=.8cm, sibling distance=.4cm]{ node(triang b3)[shape=coordinate]{} } } child[solid]{node(triang c1)[shape=coordinate]{} child[solid,level distance=.8cm, sibling distance=.4cm]{ node(triang c2)[shape=coordinate]{} } child[solid,level distance=.8cm, sibling distance=.4cm] { node(triang c3)[shape=coordinate]{} } }} child[solid, level distance=.8cm, sibling distance=1.6cm]{child[solid]{ node(triang d1)[shape=coordinate]{} child[solid,level distance=.8cm, sibling distance=.4cm]{ node(triang d2)[shape=coordinate]{} } child[solid,level distance=.8cm, sibling distance=.4cm]{ node(triang d3)[shape=coordinate]{} } } child[solid,level distance=1.2cm, sibling distance=2.4cm]{node{n}}   } }  }
	;
\filldraw[fill opacity=.5] (triang a1) -- (triang a2) -- (triang a3);
\filldraw[fill opacity=.5] (triang b1) -- (triang b2) -- (triang b3);
\filldraw[fill opacity=.5] (triang c1) -- (triang c2) -- (triang c3);
\filldraw[fill opacity=.5] (triang d1) -- (triang d2) -- (triang d3);

\node at (7,0) [shape=coordinate] {}
	child{child[level distance=1.6cm, sibling distance=3.2cm]{node(2triang a1)[shape=coordinate]{} child[level distance=.8cm, sibling distance=.4cm]{node(2triang a2)[shape=coordinate]{} } child[level distance=.8cm, sibling distance=.4cm]{node(2triang a3)[shape=coordinate]{} } } child[fill=none]{edge from parent [draw=none]}} child[level distance=.5cm]{child[fill=none]{edge from parent [draw=none]} child[dashed, level distance= .7cm, sibling distance=1.4cm ]{ child[solid, level distance=.4cm, sibling distance=.8cm]{child[level distance=.4cm, sibling distance=.8cm]{  
	child{ node(2triang b1)[shape=coordinate]{} child[level distance=.8cm, sibling distance=.4cm]{ node(2triang b2)[shape=coordinate]{} } child[level distance=.8cm, sibling distance=.4cm]{ node(2triang b3)[shape=coordinate]{} } } child{node(2triang c1)[shape=coordinate]{} child[level distance=.8cm, sibling distance=.4cm]{ node(2triang c2)[shape=coordinate]{} } child[level distance=.8cm, sibling distance=.4cm] { node(2triang c3)[shape=coordinate]{} } }} child[level distance=.8cm, sibling distance=1.6cm]{  
	node(2triang d1)[shape=coordinate]{} child[level distance=.8cm, sibling distance=.4cm]{ node(2triang d2)[shape=coordinate]{} } child[level distance=.8cm, sibling distance=.4cm]{ node(2triang d3)[shape=coordinate]{} }
	  }  } child[solid,level distance=2cm, sibling distance=4cm]{node{n}}   } } 
	;
\filldraw[fill opacity=.5] (2triang a1) -- (2triang a2) -- (2triang a3);
\filldraw[fill opacity=.5] (2triang b1) -- (2triang b2) -- (2triang b3);
\filldraw[fill opacity=.5] (2triang c1) -- (2triang c2) -- (2triang c3);
\filldraw[fill opacity=.5] (2triang d1) -- (2triang d2) -- (2triang d3);
\draw[<->, very thick] (3,-1.5) -- (4,-1.5);

\end{tikzpicture}
\end{center}
\caption{The first elementary manipulation of the tree $\mathbf{T}=\mathbf{T}_0$ on the left into the tree $\mathbf{T}_1$ on the right. Shaded triangles represent subtrees whose specific form is irrelevant.}
\label{fig:elementary manipulation 1}
\end{figure}
The next tree in the sequence, $\mathbf{T}_1$, is drawn on the right of fig. \ref{fig:elementary manipulation 1}. As stated above, we now need a relation between those trees. Simply using the definitions one easily finds $\mathcal{D}[\mathbf{T}_0]\cap\mathcal{D}[\mathbf{T}_1]\neq\emptyset$, i.e. the corresponding domains have a non-empty intersection. Furthermore, these trees only differ by an elementary manipulation of the kind already considered in fig. \ref{fig:associativity 3}. Therefore we can use the result from the three point consistency condition and conclude that the corresponding expressions $f_{\mathbf{T}_0}$ and $f_{\mathbf{T}_1}$ coincide on the open domain $\mathcal{D}[\mathbf{T}_0]\cap\mathcal{D}[\mathbf{T}_1]$ and hence everywhere on $M_n$ because of analyticity. We repeat this procedure until we arrive at the tree $\mathbf{T}_{r_1}$ on the left of fig. \ref{fig:elementary manipulation 2}, which has the property that the $n$-th leaf is directly connected to the root.

\begin{figure}[h!]
\begin{center}
\begin{tikzpicture}
\tikzstyle{level 1}=[sibling distance = 1cm, level distance = .5cm]
\tikzstyle{level 2}=[sibling distance = 1cm, level distance = .5cm]
\tikzstyle{level 3}=[sibling distance = 1cm, level distance = .5cm]
\tikzstyle{level 4}=[sibling distance = .8cm, level distance = .4cm]
\tikzstyle{level 5}=[sibling distance = .8cm, level distance = .4cm]
\tikzstyle{level 6}=[sibling distance = .8cm, level distance = .4cm]

\node at (0,0) [shape=coordinate] {} child{
	child{child[dashed,level distance=.5cm, sibling distance=1cm]{ child[level distance=.9cm,sibling distance=1.8cm, solid]{ node(triang a1)[shape=coordinate]{} child[level distance=.8cm, sibling distance=.4cm]{node(triang a2)[shape=coordinate]{} } child[level distance=.8cm, sibling distance=.4cm]{node(triang a3)[shape=coordinate]{} } } child[fill=none]{edge from parent [draw=none]} } child[fill=none]{edge from parent [draw=none]}} child[level distance=.5cm]{child[fill=none]{edge from parent [draw=none]} child[dashed, level distance= .7cm, sibling distance=1.4cm ]{child[solid, level distance=.8cm, sibling distance=1.6cm]{  child{ node(triang b1)[shape=coordinate]{} child[solid,level distance=.8cm, sibling distance=.4cm]{ node(triang b2)[shape=coordinate]{} } child[solid,level distance=.8cm, sibling distance=.4cm]{ node(triang b3)[shape=coordinate]{} } } child[solid]{node(triang c1)[shape=coordinate]{} child[solid,level distance=.8cm, sibling distance=.4cm]{ node(triang c2)[shape=coordinate]{} } child[solid,level distance=.8cm, sibling distance=.4cm] { node(triang c3)[shape=coordinate]{} } }} child[solid, level distance=.8cm, sibling distance=1.6cm]{child[solid]{ node(triang d1)[shape=coordinate]{} child[solid,level distance=.8cm, sibling distance=.4cm]{ node(triang d2)[shape=coordinate]{} } child[solid,level distance=.8cm, sibling distance=.4cm]{ node(triang d3)[shape=coordinate]{} } } child[solid,level distance=1.2cm, sibling distance=2.4cm]{node{n-1}}   } }  }}
	child[level distance=3.7cm, sibling distance=7.4cm]{node{n}}
	;
\filldraw[fill opacity=.5] (triang a1) -- (triang a2) -- (triang a3);
\filldraw[fill opacity=.5] (triang b1) -- (triang b2) -- (triang b3);
\filldraw[fill opacity=.5] (triang c1) -- (triang c2) -- (triang c3);
\filldraw[fill opacity=.5] (triang d1) -- (triang d2) -- (triang d3);

\node at (8,0) [shape=coordinate] {} child{
	child{child[dashed,level distance=.5cm, sibling distance=1cm]{ child[level distance=.9cm,sibling distance=1.8cm, solid]{ node(2triang a1)[shape=coordinate]{} child[level distance=.8cm, sibling distance=.4cm]{node(2triang a2)[shape=coordinate]{} } child[level distance=.8cm, sibling distance=.4cm]{node(2triang a3)[shape=coordinate]{} } } child[fill=none]{edge from parent [draw=none]} } child[fill=none]{edge from parent [draw=none]}}child[level distance=.5cm]{child[fill=none]{edge from parent [draw=none]} child[dashed, level distance= .7cm, sibling distance=1.4cm ]{ child[solid, level distance=.4cm, sibling distance=.8cm]{child[level distance=.4cm, sibling distance=.8cm]{  
	child{ node(2triang b1)[shape=coordinate]{} child[level distance=.8cm, sibling distance=.4cm]{ node(2triang b2)[shape=coordinate]{} } child[level distance=.8cm, sibling distance=.4cm]{ node(2triang b3)[shape=coordinate]{} } } child{node(2triang c1)[shape=coordinate]{} child[level distance=.8cm, sibling distance=.4cm]{ node(2triang c2)[shape=coordinate]{} } child[level distance=.8cm, sibling distance=.4cm] { node(2triang c3)[shape=coordinate]{} } }} child[level distance=.8cm, sibling distance=1.6cm]{  
	node(2triang d1)[shape=coordinate]{} child[level distance=.8cm, sibling distance=.4cm]{ node(2triang d2)[shape=coordinate]{} } child[level distance=.8cm, sibling distance=.4cm]{ node(2triang d3)[shape=coordinate]{} }
	  }  } child[solid,level distance=2cm, sibling distance=4cm]{node{n-1}}   } } }
	  child[level distance=3.7cm, sibling distance=7.4cm]{node{n}}
	;
\filldraw[fill opacity=.5] (2triang a1) -- (2triang a2) -- (2triang a3);
\filldraw[fill opacity=.5] (2triang b1) -- (2triang b2) -- (2triang b3);
\filldraw[fill opacity=.5] (2triang c1) -- (2triang c2) -- (2triang c3);
\filldraw[fill opacity=.5] (2triang d1) -- (2triang d2) -- (2triang d3);
\draw[<->, very thick] (3.5,-1.5) -- (4.5,-1.5);

\end{tikzpicture}
\end{center}
\caption{Another elementary manipulation.}
\label{fig:elementary manipulation 2}
\end{figure}
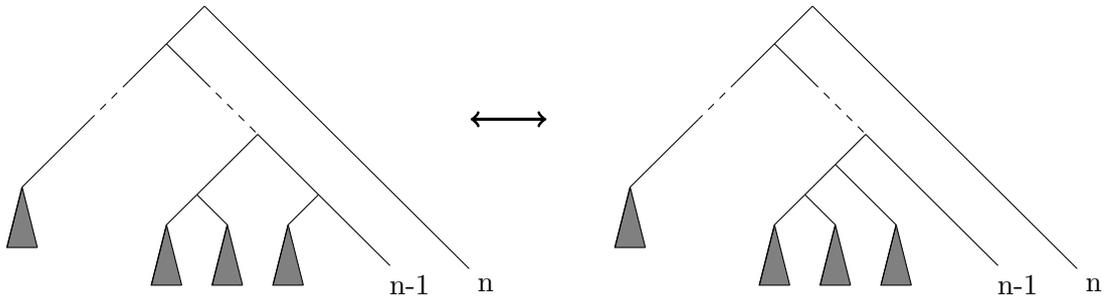
The transformation of trees we used so far is no longer applicable on this tree, so we now perform a manipulation as depicted in fig. \ref{fig:elementary manipulation 2}. Again it is easy to convince oneself that the desired relation holds between these trees. This process can be repeated until we reach the tree $\mathbf{T}_{r_2}$ drawn in fig. \ref{fig:Tr2}. By now it is clear how to proceed the iteration until we reach the desired tree $\mathbf{S}$ in fig. \ref{fig:reference tree S}, thus finishing the proof.

\begin{figure}[h!]
\begin{center}
\begin{tikzpicture}
 \coordinate
	child[level distance=.5cm, sibling distance=1cm]{child[level distance=1cm,sibling distance=2cm]{node(triang1)[shape=coordinate]{}  child[level distance=2cm,sibling distance=4cm]{node(triang2)[shape=coordinate]{}}  child[level distance=2cm,sibling distance=4cm]{node(triang3)[shape=coordinate]{}} }  child[level distance=3cm, sibling distance=6cm]{node{n-1}}   }
	child[level distance=3.5cm, sibling distance=7cm]{node{n}}
	;
	\filldraw[fill opacity=.5] (triang1) -- (triang2) -- (triang3);
\end{tikzpicture}
\end{center}
\caption{The tree $\mathbf{T}_{r_2}$}
\label{fig:Tr2}
\end{figure}
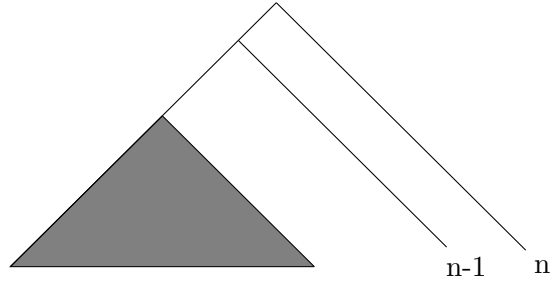
As a result, we obtain the following theorem:

\begin{thm}[Coherence Theorem]\label{thm:coherence theorem}
 {\ \\} Let $f_{\mathbf{T}}$ be defined as in eq. \ref{eq:framework coherence f} on the domain $\mathcal{D}[\mathbf{T}]$ for any binary tree $\mathbf{T}$ as a convergent power series expansion and assume that $f_{\mathbf{T}}$ has an analytic extension to all of $M_n$. Furthermore, assume that the associativity condition \ref{eq:framework coherence 3 point constraint} and the symmetry and normalization conditions, eqs. \ref{eq:framework coherence symmetry axiom} and \ref{eq:framework coherence relation}, hold, i.e. suppose that all $f_{\mathbf{T}}$ coincide for trees with three leaves. Then $f_{\mathbf{T}}=f_{\mathbf{S}}$ for any pair of binary trees $\mathbf{T},\mathbf{S}$.
\end{thm}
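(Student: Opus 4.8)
The plan is to follow the strategy sketched just above the statement: show that any two binary trees on $n$ elements are joined by a finite chain of \emph{elementary manipulations}, each being the tree rotation of figure~\ref{fig:associativity 3}, and that $f_{\mathbf{T}}$ is unchanged by every such manipulation. Granting this, the theorem is immediate: if $\mathbf{T}=\mathbf{T}_0,\mathbf{T}_1,\dots,\mathbf{T}_r=\mathbf{S}$ is such a chain, then $f_{\mathbf{T}_{i-1}}=f_{\mathbf{T}_i}$ on $M_n$ for each $i$, so $f_{\mathbf{T}}=f_{\mathbf{S}}$, and since $\mathbf{S}$ is a fixed reference tree this gives $f_{\mathbf{T}}=f_{\mathbf{T}'}$ for every pair of binary trees.

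First I would prove invariance under a single elementary manipulation. Two binary trees $\mathbf{T}_{i-1}$ and $\mathbf{T}_i$ related by such a move differ only in the nesting at one internal node together with its two descending branches; in the product formula~\ref{eq:framework coherence f}, every factor attached to a node lying outside this local neighbourhood is literally the same symbol in both trees, so the identity $f_{\mathbf{T}_{i-1}}=f_{\mathbf{T}_i}$ reduces to an identity between the few factors meeting at the rotated node, which is precisely the three-point associativity relation~\ref{eq:framework coherence 3 point constraint} applied with the three ``leaves'' of the local subconfiguration (each itself possibly a subtree) playing the roles of $x_1,x_2,x_3$ — the symmetry and normalization relations~\ref{eq:framework coherence symmetry axiom} and~\ref{eq:framework coherence relation} being needed exactly when the rotated node involves the maximal element of the local subtree, as in the reduction already carried out for $n=3$. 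One then checks directly from the recursive definition~\ref{eq:framework coherence tree domain} that $\mathcal{D}[\mathbf{T}_{i-1}]\cap\mathcal{D}[\mathbf{T}_i]\neq\emptyset$; on this open set the assumed three-point identity forces $f_{\mathbf{T}_{i-1}}=f_{\mathbf{T}_i}$, and since both functions are analytic on the connected domain $M_n$ the equality extends to all of $M_n$.

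Next I would build the chain from an arbitrary binary tree $\mathbf{T}$ to the reference tree $\mathbf{S}$ of eq.~\ref{eq:framework coherence reference tree} by induction on $n$, following figures~\ref{fig:elementary manipulation 1}--\ref{fig:Tr2}. The base case $n=3$ is the hypothesis itself. For the inductive step I would first bring the leaf $\{n\}$ up to become a direct child of the root: repeatedly apply the rotation of figure~\ref{fig:elementary manipulation 1} to move the subtree containing $n$ one level closer to the root, switching to the variant of figure~\ref{fig:elementary manipulation 2} when that move no longer applies; after finitely many steps one reaches a tree $\mathbf{T}_{r_1}$ whose $n$-th leaf descends directly from the root, and the further manipulations of figure~\ref{fig:Tr2} leave the subtree on $\{1,\dots,n-1\}$ as an arbitrary binary tree on $n-1$ elements with $\{n\}$ a sibling of its root. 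Applying the induction hypothesis to that $(n-1)$-element subtree converts it into the reference tree on $\{1,\dots,n-1\}$, and the resulting tree is exactly $\mathbf{S}$. Every step in this procedure is an elementary manipulation of the type handled in the previous paragraph, so $f$ is preserved throughout.

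The step I expect to be the main obstacle is the local-and-disjointness bookkeeping underlying the second paragraph. One must verify that each rotation used really changes the tree only inside a three-leaf subconfiguration, so that no constraint beyond the already-assumed $n=3$ ones (together with symmetry and normalization) is ever invoked; and one must check that the two tree domains at each step actually overlap. The latter is the delicate point: the distance inequalities cutting out $\mathcal{D}[\mathbf{T}_{i-1}]$ and $\mathcal{D}[\mathbf{T}_i]$ coincide except for the pair of parent/sibling distances governed by the rotated node, and one needs to see that those remaining distances can be chosen so as to satisfy both trees' requirements simultaneously within the window left open by the shared constraints. Once this is in place the analytic-continuation argument does the rest.
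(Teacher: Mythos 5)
Your proposal follows essentially the same route as the paper's own proof: reduce to the invariance of $f_{\mathbf{T}}$ under a single elementary tree rotation (justified by the three-point condition together with the symmetry and normalization relations on a non-empty domain overlap, then extended by analyticity), and connect an arbitrary binary tree to the reference tree $\mathbf{S}$ by a finite chain of such rotations constructed inductively by first bringing the leaf $\{n\}$ up to the root. The correctness concerns you flag at the end (locality of each rotation and the domain-overlap bookkeeping) are exactly the points the paper also leaves to direct verification, so the argument is sound and matches the paper's.
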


\section{General theory of perturbations}\label{subsec:general perturbations}

The concept of perturbations of a quantum field theory is essential in the extraction of explicit measurable predictions from the theory. Thus, we would like to implement this notion in our framework as well. This section will be concerned with the description of perturbation theory in the new framework. According to our definition, a perturbed quantum field theory should correspond to a perturbation series in some parameter $\lambda$ for the OPE coefficients. As these coefficients are required to satisfy the constraints given in section \ref{subsec:axioms}, the perturbations of the coefficients will also have to satisfy corresponding constraints. It will be shown in this section that these constraints are of a cohomological nature.

We have seen in the previous section that, up to technicalities related to the convergence of various series, the constraints on the OPE coefficients imposed by the factorization condition, axiom \ref{ax:Factorization}, can be formulated as a single ``associativity condition'' on the 2-point coefficients only, see eq.\ref{eq:framework coherence 3 point constraint}. The perturbed 2-point OPE coefficients will have to satisfy a perturbed version of this constraint, which turns out to be essentially the only constraint. We now want to study this perturbed version of the associativity condition.

The analogy between our framework (in particular the factorization condition) and ordinary algebra has already been emphasized in the previous section. We now carry this analogy a bit further, as the following discussion is closely analogous to the well-known characterization of perturbations, or in this context \emph{deformations}, of an ordinary finite dimensional algebra. Let us therefore recall the basic theory of deformations of finite dimensional algebras (see \cite{happel1989hcf, Gerstenhaber:1963zz}). Let $\mathbf{A}$ be a finite dimensional algebra over $\mathbb{C}$, whose product is denoted as usual by $\mathbf{A}\otimes\mathbf{A}\to\mathbf{A}$, $A\otimes B\mapsto AB$ for all $A,B\in \mathbf{A}$. Then a 1-parameter family of products $A\otimes B\mapsto A\bullet_{\lambda}B$, where $\lambda\in\mathbb{R}$\index{symbols}{lambda@$\lambda$} is a smooth deformation parameter, is called a deformation. We define the product $A\bullet_0 B$ to be the original product $AB$, but for non-zero $\lambda$ we obtain a new product on $\mathbf{A}$, or alternatively on the ring of formal power series $\mathbb{C}(\lambda)\otimes \mathbf{A}$ if we merely consider perturbations in the sense of formal power series. As argued above, this new product has to satisfy the strong constraints imposed by the associativity condition. Denoting the $i$-th order perturbation of the product by

\begin{equation}
 m_i(A,B)=\frac{1}{i!}\tdd{^i}{\lambda^i} A\bullet_{\lambda}B\Big|_{\lambda=0}
\label{eq:framework perturbation mi}
\end{equation}
the associativity law yields to first order

\begin{equation}
 m_0(id\otimes m_1)-m_0(m_1\otimes id)+m_1(id\otimes m_0)-m_1(m_0\otimes id)=0
\label{eq:framework perturbation associativity 1st}
\end{equation}
as a map $\mathbf{A}\otimes\mathbf{A}\otimes\mathbf{A}\to\mathbf{A}$, in an obvious tensor product notation. Similarly, one obtains conditions for higher derivatives $m_i$ of the new product, which for $i\geq 2$ are of the form

\begin{equation}
 m_0(id\otimes m_i)-m_0(m_i\otimes id)+m_i(id\otimes m_0)-m_i(m_0\otimes id)=-\sum_{j=1}^{i-1}m_{i-j}(id\otimes m_j)-m_{i-j}(m_j\otimes id) \: .
\label{eq:framework perturbation associativity ith}
\end{equation}
In this discussion we want to exclude the trivial case, i.e. a simple $\lambda$ dependent redefinition of the generators of $\mathbf{A}$. Such a redefinition may be expressed in terms of a 1-parameter family of invertible maps $\alpha_{\lambda}:\mathbf{A}\to\mathbf{A}$, such that the corresponding trivially deformed product can be written as

\begin{equation}
 A\bullet_{\lambda}B=\alpha_{\lambda}^{-1}\left[\alpha_{\lambda}(A)\alpha_{\lambda}(B)\right]\: .
\label{eq:framework perturbation trivial deformation}
\end{equation}
So $\alpha_{\lambda}$ can be viewed as an isomorphism between $(\mathbf{A},\bullet_0)$ and $(\mathbf{A},\bullet_{\lambda})$, which suggests that the latter should not be regarded as a new algebra. To first order, the trivially deformed product is given by

\begin{equation}
 m_1=m_0(id\otimes \alpha_1)-m_0(\alpha_1\otimes id)-\alpha_1m_0
\label{eq:framework perturbation trivial deformation 1st}
\end{equation}
where $\alpha_i=\frac{1}{i!}\tdd{^i}{\lambda^i}\alpha_{\lambda}\Big|_{\lambda=0}$. Similar formulas hold for higher orders.

We now want to give a more elegant formulation and interpretation of our conditions for the $i$-th order deformations of the associative product, eq. \ref{eq:framework perturbation associativity ith}, using the language of cohomology theory \cite{Gerstenhaber:1963zz}. For this purpose, we introduce the linear space $\Omega^n(\mathbf{A})$ of all linear maps $\psi_n:\mathbf{A}\otimes\ldots\otimes\mathbf{A}\to\mathbf{A}$ and the linear operator $d:\Omega^n\to\Omega^{n+1}$ defined by

\begin{equation}
\begin{split}
 (d\psi_n)(A_1,\ldots,A_{n+1})=&A_1\psi_n(A_2,\ldots,A_{n+1})-(-1)^n\psi_n(A_1,\ldots,A_n)A_{n+1}\\
 +&\sum_{j=1}^n(-1)^j\psi_n(A_1,\ldots,A_jA_{j+1},\ldots,A_{n+1})\: .
\label{eq:framework perturbation differential d}
\end{split}
\end{equation}
Using this definition and the associativity law for the original product on the algebra $\mathbf{A}$ one can show that $d^2=0$, i.e. $d$ is a differential with a corresponding cohomology complex, the so called \emph{Hochschild complex} (see e.g. \cite{connes1985ncd}). More precisely, let $Z^n(\mathbf{A})$ be the space of all closed maps $\psi_n$, i.e. those satisfying $d\psi_n=0$, and $B^n(\mathbf{A})$ the space of all exact $\psi_n$, that means those for which $\psi_n=d\psi_{n-1}$ for some $\psi_{n-1}$. Then the $n$-th Hochschild cohomology $HH^n(\mathbf{A})$ is defined to be the quotient $Z^n(\mathbf{A})/B^n(\mathbf{A})$. In this language, we may now identify the first order associativity condition, eq. \ref{eq:framework perturbation associativity 1st}, with the statement $dm_1=0$, or equivalently $m_1\in Z^2(\mathbf{A})$. In addition, if the new product arises from just a trivial redefinition, as in eq. \ref{eq:framework perturbation trivial deformation}, then it follows that $m_1=d\alpha_1$, which means $m_1\in B^2(\mathbf{A})$. So indeed one finds that the non-trivial first order perturbations $m_1$ of the algebra product correspond to the non-trivial classes $[m_1]\in HH^2(\mathbf{A})$. Hence, $HH^2(\mathbf{A})\neq 0$ is necessary for the existence of non-trivial deformations. We now want to continue our analysis at higher orders. For this purpose, let us assume a non-trivial first order deformation to exist and let us study the second order deformations. Thus, we consider eq.\ref{eq:framework perturbation associativity ith} for $i=2$ and start with the right side of this equation, which can be viewed as a map $\omega_2\in\Omega^3(\mathbf{A})$. Computation shows that $d\omega_2=0$, so $\omega_2\in Z^3(\mathbf{A})$. The left side of equation \ref{eq:framework perturbation associativity ith} for $i=2$ turns out to be just $dm_2\in B^3(\mathbf{A})$. Thus, if the second order associativity condition is to hold, we must have $\omega_2=dm_2\in B^3(\mathbf{A})$, or in other words, the class $[\omega_2]\in HH^3(\mathbf{A})$ must vanish. It follows that there is an obstruction to lift the perturbation to second order in the case of non-trivial deformations, i.e. for $HH^3(\mathbf{A})\neq 0$. We can analogously continue to third order, obtaining the corresponding potential obstruction that $[\omega_3]\in HH^3(\mathbf{A})$ vanishes, and so on. In summary, the space of non-trivial perturbations corresponds to elements of $HH^2(\mathbf{A})$, while the obstructions lie in $HH^3(\mathbf{A})$.

Let us now conclude this example and try to carry over the concepts we just used to the case of perturbations of a quantum field theory in our framework. A short reminder: A quantum field theory as defined in section \ref{subsec:axioms} is given by the pair $(V,\mathcal{C})$ of a vector space $V$, as defined in eq. \ref{eq:framework axioms V}, and a hierarchy of OPE coefficients $\mathcal{C}$ with certain properties. In section \ref{subsec:coherence theorem} we argued that all higher $n$-point coefficients are uniquely determined by the 2-point coefficients $\mathcal{C}(x_1,x_2)$. Furthermore, we were able to show that, up to technical assumptions concerning the convergence of the series in eq. \ref{eq:framework coherence f}, the key constraints on the $n$-point OPE coefficients are encoded in the associativity condition, eq. \ref{eq:framework coherence 3 point constraint} for the 2-point coefficient, which we repeat for convenience:

\begin{equation}
 \mathcal{C}(x_2,x_3)\Big(\mathcal{C}(x_1,x_2)\otimes id\Big)-\mathcal{C}(x_1,x_3)\Big(id\otimes\mathcal{C}(x_2,x_3)\Big)=0\hspace{1cm} \text{for }r_{12}<r_{23}<r_{13}
\label{eq:framework perturbation 3 point constraint}
\end{equation}
We want to study the following problem: When is it possible to find a 1-parameter deformation $\mathcal{C}(x_1,x_2;\lambda)$ of the OPE coefficients which again satisfies the associativity condition, at least in the sense of formal power series in the deformation parameter $\lambda$? In fact, the symmetry condition \ref{eq:framework coherence symmetry axiom}, the normalization condition \ref{eq:framework coherence relation} and the axioms from section \ref{subsec:axioms}, except for axiom \ref{ax:Factorization}, should hold for the perturbation as well. However, as these conditions are linear in $\mathcal{C}(x_1,x_2)$, they are much more trivial in nature than eq. \ref{eq:framework perturbation 3 point constraint}. Therefore, for the rest of this section, we will not include these conditions in our discussion, but instead continue with the main point, i.e. the implications of the associativity condition \ref{eq:framework perturbation 3 point constraint} for the perturbed OPE coefficients.

In analogy to the example from ordinary algebra, we will again find a characterization of perturbations in a cohomological framework. We now want to define a linear operator $b$, which defines the cohomology in question and therefore plays the role of the $d$ in our example. However, because the definition of this operator will involve infinite sums (just as eq.\ref{eq:framework perturbation 3 point constraint}) and as such sums are typically only convergent on certain domains, we have to specify a set of domains that will be stable under the action of $b$ and is suitable for our application. The choice of a set of this kind is by far not unique, and different choices will yield different rings. For simplicity and definiteness, we chose the non-empty, open domains of $(\mathbb{R}^D)^n$ defined by

\begin{equation}
 \mathcal{F}=\{(x_1,\ldots,x_n)\in M_n;\: r_{1\, (i-1)}<r_{(i-1)\, i}<r_{(i-2)\, i}<\ldots <r_{1\, i},\: 1<i\leq n\}\subset M_n\: .
\label{eq:framework perturbation domains F}
\end{equation}
It is possible to express these domains in terms of the $\mathcal{D}[\mathbf{T}]$ defined in eq.\eqref{eq:framework coherence tree domain}, but this will not be necessary. Note also that the associativity condition \ref{eq:framework perturbation 3 point constraint} holds on the domain $\mathcal{F}_3\{r_{12}<r_{23}<r_{13}\}$.

We also need some objects for $b$ to act upon. Therefore we define $\Omega^n(V)$ to be the set of all holomorphic functions $f_n$ on the domain $\mathcal{F}_n$ that are valued in the linear maps

\begin{equation}
 f_n(x_1,\ldots,x_n):V\otimes\cdots\otimes V\to V, \hspace{2cm} (x_1,\ldots,x_n)\in\mathcal{F}_n\: .
\label{eq:framework perturbation maps f}
\end{equation}
Now we are ready to introduce the boundary operator $b:\Omega^n(V)\to\Omega^{n+1}(V)$ by the formula

\begin{equation}
\begin{split}
 (bf_n)&(x_1,\ldots,x_{n+1}):=\mathcal{C}(x_1,x_{n+1})(id\otimes f_n(x_2,\ldots,x_{n+1}))\\
  &+\sum_{i=1}^n (-1)^i f_n(x_1,\ldots,\widehat{x}_i,\ldots,x_{n+1})(id^{i-1}\otimes\mathcal{C}(x_i,x_{i+1})\otimes id^{n-i})\\
  &+(-1)^{n+1}\mathcal{C}(x_n,x_{n+1})(f_n(x_1,\ldots,x_n)\otimes id)\: ,
\end{split}
\label{eq:framework perturbation operator b}
\end{equation}
where $\mathcal{C}(x_1,x_2)$ is the undeformed OPE coefficient and a hat $\,\widehat{\,}\,$ means omission. Note that this definition involves a composition of $\mathcal{C}$ with $f_n$, which, when expressed in a basis of $V$, implicitly involves an infinite summation over these basis elements. It is therefore necessary to assume here (and also in the following for similar formulas) that these sums converge on the set of points $(x_1,\ldots,x_{n+1})\in \mathcal{F}_{n+1}$. Thus, whenever we write $bf_n$, it is understood that $f_n\in\Omega^n(V)$ is in the domain of $b$. We now need the following lemma:

\begin{lemma}\label{lemma:differential b}
The map $b$ is a differential, i.e. $b^2f_n=0$ for $f_n$ in the domain of $b$, such that $bf_n$ is also in the domain of $b$.
\end{lemma}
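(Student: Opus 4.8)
The plan is to compute $b^2 f_n$ directly from the definition \eqref{eq:framework perturbation operator b} and check that all terms cancel, exactly as in the classical proof that the Hochschild coboundary squares to zero. The key point is that the only nontrivial algebraic input needed is the three-point associativity condition \eqref{eq:framework perturbation 3 point constraint} for the undeformed coefficient $\mathcal{C}(x_1,x_2)$; everything else is bookkeeping of signs and insertion positions. So first I would write out $b(bf_n)(x_1,\ldots,x_{n+2})$ by applying the formula \eqref{eq:framework perturbation operator b} once with $bf_n$ in place of $f_n$, obtaining a sum of three groups of terms: one coming from the left $\mathcal{C}(x_1,x_{n+2})(id\otimes\,\cdot\,)$ prefactor, one from the alternating sum $\sum_i(-1)^i(\cdots)(id^{i-1}\otimes\mathcal{C}(x_i,x_{i+1})\otimes id^{n-i+1})$, and one from the right $\mathcal{C}(x_{n+1},x_{n+2})(\,\cdot\,\otimes id)$ prefactor. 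Then into each of these I substitute the definition of $bf_n$ again, producing roughly $3\times 3$ families of terms indexed by pairs of insertion positions.

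Next I would organize the resulting double sum into matching pairs. The terms fall into the usual types: (i) those where the two $\mathcal{C}$-insertions act on disjoint pairs of tensor slots — these come in pairs that cancel because the insertions commute and the sign is $(-1)^{i}(-1)^{j}$ from one ordering and $(-1)^{j}(-1)^{i-1}$ or $(-1)^{i}(-1)^{j-1}$ from the other, giving opposite signs; (ii) those where the two insertions act on adjacent or overlapping slots, producing a nested composition $\mathcal{C}(\,\cdot\,,\,\cdot\,)(\mathcal{C}(\,\cdot\,,\,\cdot\,)\otimes id)$ or $\mathcal{C}(\,\cdot\,,\,\cdot\,)(id\otimes\mathcal{C}(\,\cdot\,,\,\cdot\,))$ — these are precisely the places where one invokes \eqref{eq:framework perturbation 3 point constraint} to replace one composition by the other and thereby cancel against a partner term; and (iii) the boundary terms involving the outer prefactors $\mathcal{C}(x_1,x_{n+2})$ and $\mathcal{C}(x_{n+1},x_{n+2})$ composed with an inner $\mathcal{C}$, which likewise pair up via associativity. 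I would carry out the cancellation type by type, being careful that throughout the argument every composition of infinite sums is evaluated on $\mathcal{F}_{n+2}$, where by hypothesis $bf_n$ and $b(bf_n)$ lie in the domain of $b$ so that all rearrangements of the (absolutely convergent) sums are legitimate.

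The main obstacle I expect is not the combinatorics of signs — that is a mechanical copy of the Hochschild computation — but rather the bookkeeping of the asymmetric last slot and the domain issue. Unlike the purely algebraic Hochschild differential, here $\mathcal{C}(x_i,x_{i+1})$ genuinely depends on spacetime points, and when the associativity relation \eqref{eq:framework perturbation 3 point constraint} is applied it is only valid on the sub-domain $r_{12}<r_{23}<r_{13}$; one must check that the specific configurations of points arising in the nested terms on $\mathcal{F}_{n+2}$ actually satisfy the required chain of inequalities. This is where the particular choice of the domains $\mathcal{F}_n$ in \eqref{eq:framework perturbation domains F} is used: its defining inequalities $r_{1(i-1)}<r_{(i-1)i}<\cdots<r_{1i}$ are tailored so that every pair of consecutive arguments appearing inside a nested $\mathcal{C}\circ\mathcal{C}$ sits in the associativity domain. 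So the heart of the proof is to verify this compatibility, and then the sign cancellation goes through verbatim, yielding $b^2 f_n=0$.
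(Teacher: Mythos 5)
Your proposal is correct and follows exactly the route the paper intends: the paper itself does not write out the proof, stating only that it is ``essentially straightforward computation'' deferred to \cite{Hollands2008}, and that computation is precisely the Hochschild-style expansion of $b^2f_n$ with pairwise sign cancellations for disjoint insertions, the associativity condition \eqref{eq:framework perturbation 3 point constraint} for overlapping ones, and the verification that the inequalities defining $\mathcal{F}_{n+2}$ in \eqref{eq:framework perturbation domains F} place every nested pair of consecutive arguments in the associativity domain. You have correctly identified both the mechanical part and the genuinely nontrivial part (the domain compatibility and the convergence hypotheses packaged into ``$f_n$ and $bf_n$ in the domain of $b$''), so nothing essential is missing relative to what the paper asserts.
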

The corresponding proof is essentially straightforward computation and was given in \cite{Hollands2008}, so it will not be repeated here. With the help of this lemma, we can define a cohomology ring associated to the differential $b$ as

\begin{equation}
 H^n(V;\mathcal{C}):=\frac{Z^n(V;\mathcal{C})}{B^n(V;\mathcal{C})}=\frac{\{\ker b:\Omega^n(V)\to\Omega^{n+1}(V)\}}{\{\operatorname{ran} b:\Omega^{n-1}(V)\to\Omega^n(V)\}}\quad .
\label{eq:framework perturbation cohomology ring}
\end{equation}
Now that we have introduced the necessary concepts from cohomology theory into our framework, we will, as in the case of our example from ordinary algebra, be able to find an elegant and compact formulation of the problem to find a 1-parameter family of perturbations $\mathcal{C}(x_1,x_2;\lambda)$ such that our associativity condition, eq.\eqref{eq:framework perturbation 3 point constraint}, continues to hold to all orders in $\lambda$. Introducing the grading of the 2-point OPE coefficients with respect to the perturbation order by\index{symbols}{Ci@$\mathcal{C}_i(x_1,x_2)$}

\begin{equation}
 \mathcal{C}_i(x_1,x_2)=\frac{1}{i!}\tdd{^i}{\lambda^i}\mathcal{C}(x_1,x_2;\lambda)\Big|_{\lambda=0}\quad ,
\label{eq:framework perturbation grading}
\end{equation}
we note that the first order associativity condition

\begin{equation}
 \begin{split}
  &\mathcal{C}_0(x_2,x_3)\Big(\mathcal{C}_1(x_1,x_2)\otimes id\Big)-\mathcal{C}_0(x_1,x_3)\Big(id \otimes \mathcal{C}_1(x_2,x_3)\Big)\\
 +\,&\mathcal{C}_1(x_2,x_3)\Big(\mathcal{C}_0(x_1,x_2)\otimes id\Big)-\mathcal{C}_1(x_1,x_3)\Big(id \otimes \mathcal{C}_0(x_2,x_3)\Big)=0\, ,
 \end{split}
\label{eq:framework perturbation associativity C 1st}
\end{equation}
which is valid for $(x_1,x_2,x_3)\in\mathcal{F}_3$, can equivalently be stated as

\begin{equation}
 b\, \mathcal{C}_1=0\quad .
\label{eq:framework perturbation associativity C 1st b}
\end{equation}
Here and in the following $b$ is defined in terms of the unperturbed OPE coefficient $\mathcal{C}_0$. We conclude that $\mathcal{C}_1$ has to be an element of $Z^2(V;\mathcal{C}_0)$. Let us consider coefficients $\mathcal{C}(x_1,x_2)$ and $\mathcal{C}(x_1,x_2;\lambda)$ connected by a $\lambda$-dependent field redefinition $z(\lambda):V\to V$ in the sense of defn.\ref{def:Equivalence}. To first order, this implies

\begin{equation}
 \mathcal{C}_1(x_1,x_2)=-z_1\mathcal{C}_0(x_1,x_2)+\mathcal{C}_0(x_1,x_2)(z_1\otimes id+id\otimes z_1)\, ,
\label{eq:framework perturbation redefinition}
\end{equation}
which is equivalent to $bz_1=\mathcal{C}_1$, where again $z_i=\frac{1}{i!}\tdd{^i}{\lambda^i}z(\lambda)|_{\lambda=0}$. Thus, again in analogy to our example from the beginning of this section, the first order deformations of $\mathcal{C}_0$ modulo the trivial ones are given by the classes in $H^2(V;\mathcal{C}_0)$. In order to generalize this result to arbitrary order in $\lambda$, we assume all perturbations up to order $i-1$ to exist and state the associativity condition for the $i$-th order perturbation as the following condition for $(x_1,x_2,x_3)\in\mathcal{F}_3$:

\begin{equation}
 \begin{split}
  &\mathcal{C}_0(x_2,x_3)\Big(\mathcal{C}_i(x_1,x_2)\otimes id\Big)-\mathcal{C}_0(x_1,x_3)\Big(id \otimes \mathcal{C}_i(x_2,x_3)\Big)\\
 +\,&\mathcal{C}_i(x_2,x_3)\Big(\mathcal{C}_0(x_1,x_2)\otimes id\Big)-\mathcal{C}_i(x_1,x_3)\Big(id \otimes \mathcal{C}_0(x_2,x_3)\Big)=\omega_i(x_1,x_2,x_3)\, ,
 \end{split}
\label{eq:framework perturbation associativity C ith}
\end{equation}
where $\omega_i\in\Omega^3(V)$ is defined by

\begin{equation}
 \omega_i(x_1,x_2,x_3):=-\sum_{j=1}^{i-1}\mathcal{C}_{i-j}(x_1,x_3)\Big(id\otimes\mathcal{C}_j(x_2,x_3)\Big)-\mathcal{C}_{i-j}(x_2,x_3)\Big(\mathcal{C}_j(x_1,x_2)\otimes id\Big)\: .
\label{eq:framework perturbation omega}
\end{equation}
At this stage we again encounter infinite sums when a basis of $V$ is introduced into the above equation. We assume these sums to converge on $\mathcal{F}_3$ as well. Then eq.\eqref{eq:framework perturbation associativity C ith} can also be put into the elegant form

\begin{equation}
 b\mathcal{C}_i=\omega_i\: .
\label{eq:framework perturbation associativity C ith b}
\end{equation}
A solution to this equation defines the $i$-th order perturbation. It is obvious that a necessary condition on such a solution is $b\omega_i=0$, or in other words $\omega\in Z^3(V;\mathcal{C}_0)$. In \cite{Hollands2008} it has been shown by the following lemma that this is indeed the case.

\begin{lemma}\label{lemma:w in Z}
 If $\omega_i$ is in the domain of $b$, and if $b\mathcal{C}_j=\omega_j$ for all $j<i$, then $b\omega_i=0$.
\end{lemma}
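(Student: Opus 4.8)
The plan is to carry the obstruction argument of classical deformation theory, reviewed above around eq.\eqref{eq:framework perturbation associativity ith}, over to the present setting: the assertion $b\omega_i=0$ is the precise analogue of the classical fact that the higher--order obstruction cochains are automatically $d$--closed. The first step is to repackage the order--by--order associativity relations into a single object. For $(x_1,x_2,x_3)\in\mathcal{F}_3$ put
\begin{equation}
 a(x_1,x_2,x_3;\lambda):=\mathcal{C}(x_2,x_3;\lambda)\Big(\mathcal{C}(x_1,x_2;\lambda)\otimes id\Big)-\mathcal{C}(x_1,x_3;\lambda)\Big(id\otimes\mathcal{C}(x_2,x_3;\lambda)\Big)
\label{eq:wproof associator}
\end{equation}
and let $a_i$ be the coefficient of $\lambda^i$ in this formal power series. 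Collecting in $a_i$ the two terms in which one factor is $\mathcal{C}_0$ and the other $\mathcal{C}_i$, and comparing with the definition \eqref{eq:framework perturbation operator b} of $b$ and the definition \eqref{eq:framework perturbation omega} of $\omega_i$, one reads off that $a_i=\omega_i-b\,\mathcal{C}_i$ for every $i\geq 1$. Hence the hypothesis that $b\mathcal{C}_j=\omega_j$ for all $j<i$ is nothing but the statement $a_j=0$ for all $j<i$, and since $b^2=0$ by Lemma~\ref{lemma:differential b} we obtain $b\omega_i=b\,a_i+b^2\mathcal{C}_i=b\,a_i$. It therefore suffices to prove $b\,a_i=0$.

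The key input, and essentially the only computation, is a ``Bianchi identity'' for the associator \eqref{eq:wproof associator}: on $\mathcal{F}_4$ one has
\begin{equation}
 b\,a_i=\sum_{k=1}^{i-1}\Phi\big(\mathcal{C}_k,\,a_{i-k}\big)\: ,
\label{eq:wproof bianchi}
\end{equation}
where $\Phi$ is a fixed bilinear expression in its two arguments---a sum, with suitable signs, of their compositions with identities inserted, of the same shape as the terms on the right of \eqref{eq:framework perturbation operator b}---in which $\mathcal{C}_0$ enters only through the definition of $b$, while the free indices $k$ and $i-k$ both run over $\{1,\ldots,i-1\}$. One establishes \eqref{eq:wproof bianchi} by a direct index computation starting from \eqref{eq:framework perturbation operator b}, of exactly the same nature as---and no harder than---the computation that proves $b^2=0$ in Lemma~\ref{lemma:differential b}. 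Conceptually it is the deformation--theoretic fact that $[\,m,[\,m,m\,]\,]=0$ for every odd--degree element $m$ of a graded Lie algebra: taking $m=\mathcal{C}(\,\cdot\,;\lambda)$ for the (shifted--graded) bracket on cochains whose circle product governs the composition in \eqref{eq:framework perturbation operator b}, extracting the coefficient of $\lambda^i$, and isolating the summand $[\mathcal{C}_0,a_i]=\pm\,b\,a_i$ yields \eqref{eq:wproof bianchi}; the would--be summand $[\mathcal{C}_i,a_0]$ drops out because $a_0=0$ by the unperturbed associativity condition \eqref{eq:framework perturbation 3 point constraint}.

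The conclusion is then immediate: in the sum \eqref{eq:wproof bianchi} the index $i-k$ lies in $\{1,\ldots,i-1\}$ for every $k$, so $a_{i-k}=0$ by the inductive hypothesis $b\mathcal{C}_{i-k}=\omega_{i-k}$; hence $b\,a_i=0$ and therefore $b\omega_i=0$. The main obstacle is thus concentrated entirely in establishing the Bianchi identity \eqref{eq:wproof bianchi}. Unlike the textbook Hochschild situation, the composition in \eqref{eq:framework perturbation operator b} is the non--symmetric one adapted to the distinguished reference point carried by the last tensor slot---compare the separate treatment of the $n$-th entry in eqs.\eqref{eq:framework axioms identity element2} and \eqref{eq:framework axioms symmetry2}---so the graded Jacobi identity underlying \eqref{eq:wproof bianchi} cannot simply be quoted but has to be re-derived here, which amounts to a careful bookkeeping of signs and of the omitted and summed spacetime arguments in \eqref{eq:framework perturbation operator b}. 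Once this is organised the cancellations are purely mechanical; this is the route taken in \cite{Hollands2008}.
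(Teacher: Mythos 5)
The paper does not actually prove this lemma---it explicitly defers to \cite{Hollands2008}---so there is no in-text argument to compare yours against, and I can only assess the proposal on its own terms. Your strategy is the standard obstruction-cocycle argument from deformation theory, and its key ingredient, the Bianchi identity \eqref{eq:wproof bianchi}, is in fact true in this spacetime-decorated setting: writing $a_\lambda$ for the associator of a formal family $m_\lambda$ and $\delta_\lambda$ for the correspondingly twisted version of the differential \eqref{eq:framework perturbation operator b}, one checks that $\delta_\lambda a_\lambda=0$ holds identically, because each of the five bracketings of four points occurs exactly twice with opposite sign and---this is the only genuinely new point compared with the undecorated Hochschild computation---with \emph{identical} spacetime arguments, since the insertion of $\mathcal{C}(x_j,x_{j+1})$ always deletes $x_j$ and retains $x_{j+1}$, and all four sub-triples of a configuration in $\mathcal{F}_4$ lie in $\mathcal{F}_3$. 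Extracting the coefficient of $\lambda^i$ and using $a_0=0$ together with $a_j=0$ for $1\leq j<i$ then gives the claim, so the route is sound.

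Two things need repair. First, your reduction $b\omega_i=b\,a_i+b^2\mathcal{C}_i$ invokes $\mathcal{C}_i$, which is not among the data of the lemma: the whole point of the statement is to show that the obstruction to \emph{constructing} $\mathcal{C}_i$ is a cocycle, and nothing in the hypotheses provides a $\mathcal{C}_i$, let alone one with $\mathcal{C}_i$ and $b\mathcal{C}_i$ in the domain of $b$ as Lemma \ref{lemma:differential b} would require. The fix is to truncate the family at order $i-1$ (equivalently, set $\mathcal{C}_i:=0$); then $a_i=\omega_i$ on the nose and the appeal to $b^2=0$ is unnecessary. Second, the identity \eqref{eq:wproof bianchi} is where the entire content of the lemma lives, and you assert it rather than establish it; in particular you never track the convergence of the nested infinite sums that appear once the compositions are written in a basis of $V$, which is the only reason the lemma carries the hypothesis that $\omega_i$ lies in the domain of $b$ in the first place. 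As submitted, the proposal is a correct reduction of the lemma to an unproved (though true) identity rather than a complete proof.
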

Again, we do not repeat the proof of this lemma here, but refer the reader to \cite{Hollands2008}. Now if a solution to eq.\eqref{eq:framework perturbation associativity C ith b} exists, i.e. if $\omega_i\in B^3(V;\mathcal{C}_0)$, then any other solution will differ from this one by a solution to the corresponding \emph{homogeneous} equation. A trivial solution to the homogeneous equation of the form $bz_i$ again corresponds to an $i$-th order field redefinition and is not counted as a genuine perturbation.

We conclude with a summary of our findings in this section: We have found that the perturbation series can be continued at $i$-th order, if $[\omega_i]$ is the trivial class in $H^3(V;\mathcal{C}_0)$, which is a potential obstruction. In the case where this imposes no obstruction, the space of non-trivial $i$-th order perturbations is given by $H^2(V;\mathcal{C}_0)$. In particular, if we knew $H^2(V;\mathcal{C}_0)\neq 0$ and $H^3(V;\mathcal{C}_0)=0$, then perturbations could be defined to arbitrary order in $\lambda$.

\section{The fundamental left representation}\label{subsec:FLR}

In the previous sections we have often used examples from ordinary algebra in order to motivate concepts in our framework. In the following we will introduce another such parallel, namely a construction in our framework which has some features in common with a \emph{representation} of an algebra.

In order to motivate certain aspects of our approach, we sometimes wrote formal relations like

\begin{equation}
 "\phi_a(x_1)\phi_b(x_2)=\C{c}{ab}(x_1,x_2)\phi_c(x_2)"\quad ,
\label{eq:framework FLR formal}
\end{equation}
where summation over repeated indices is understood. It is, however, important to note that these relations were only heuristic in the sense that none of our required properties of the OPE coefficients relied on the existence or properties of the hypothetical operators $\phi_a$, which merely served as "dummy variables". Our viewpoint here is again similar to the standard viewpoint taken in algebra, where an abstract algebra $\mathbf{A}$ is entirely defined in terms of its product, i.e. a linear map $m:\mathbf{A}\otimes\mathbf{A}\to\mathbf{A}$. But, as in our case, the algebra elements need not be represented a priori by linear operators on a vector space. Instead one is free to chose any representation, i.e. a linear map $\pi:\mathbf{A}\to\operatorname{End} (H)$, which preserves the product structure, $\pi[m(A,B)]=\pi(A)\pi(B)$ for all $A,B\in\mathbf{A}$. By this line of thought it seems natural to look for a construction similar to a representation in our context. As we will argue in the following, it is indeed possible to find a "canonical" construction of this kind, which will be called the \emph{fundamental left representation}. This leads us to the following definition:\index{symbols}{Y@$\mathcal{Y}(\ket{v_a}, x)$}

\begin{definition}[The fundamental left representation]
We define ``vertex operators'' $\mathcal{Y}(\ket{v}, x):V\to V$, also referred to as ``left representatives'' in subsequent chapters, by the formula 

\begin{equation}
\mathcal{Y}(\ket{v}, x)\ket{w}=\mathcal{C}(x,0)\, \Big(\ket{v}\otimes \ket{w}\Big)\quad. 
\end{equation}
for any two vectors $\ket{v},\ket{w}\in V$ and $x\neq 0$. Choosing a basis $\{\ket{v_a}\}$ of our vector space $V$, the matrix components of this map are then given by

\begin{equation}
 [\mathcal{Y}(\ket{v_a}, x)]^c_b:=\bra{v_c}\mathcal{Y}(\ket{v_a}, x)\ket{v_b}=\C{c}{ab}(x,0)\quad .
\label{eq:framework FLR matrix element}
\end{equation}
This notion will be referred to as fundamental left (or vertex algebra) representation.

\end{definition}
Note that axiom \ref{ax:Derivations} on the OPE coefficients implies

\begin{equation}
 \mathcal{Y}(\ket{\del_{\mu}v},x)=\del_{\mu}\mathcal{Y}(\ket{v},x)\quad,
\label{eq:framework FLR derivatives}
\end{equation}
where on the right side $\del_{\mu}=\del_{x_\mu}$ denotes the usual partial derivative with respect to $x_{\mu}$. Further, by the associativity condition on the OPE coefficients, eq. \ref{eq:framework perturbation 3 point constraint}, one can deduce

\begin{equation}
 \mathcal{Y}(\ket{v_a}, x)\mathcal{Y}(\ket{v_b}, y)=\sum_c\C{c}{ab}(x,y)\mathcal{Y}(\ket{v_c}, y)\quad , \quad \text{for }0<|x-y|<|y|<|x|\, ,
\label{eq:framework FLR consistency}
\end{equation}
which implies that the linear operators $\mathcal{Y}(\ket{v_a},x)$ satisfy the operator product expansion. Thus, we may formally view them as forming a "representation" of the heuristic field operators, i.e. formally "$\pi[\phi_a(x)]=\mathcal{Y}(\ket{v_a}, x)$" is a "representation" of the algebra defined by the OPE coefficients. Note also that eq.\eqref{eq:framework FLR consistency} may equivalently be written as

\begin{equation}
 \mathcal{Y}(\ket{v_a}, x)\mathcal{Y}(\ket{v_b}, y)=\mathcal{Y}(\mathcal{Y}(\ket{v_a}, x-y)\, \ket{v_b}, y)
\label{eq:framework FLR borcherds identity}
\end{equation}
on the same domain as above. This is a standard identity in the theory of vertex operator algebras \cite{Borcherds1986,Kac1997, FrenkelLepowskyMeurman198903, Nikolov2004}. The relation between the approach to quantum field theory as outlined in this chapter and the mathematical theory of vertex operator algebras will be pursued further in \cite{Hollandsa}.

\chapter{The model}\label{sec:the model}

Whereas chapter \ref{sec:OFT in terms of consistency} was concerned with the definition and general features of our new approach, we will in this chapter give an explicit construction of a model theory within this framework. Our model is a massless, scalar Lagrangian theory on 3-dimensional Euclidean space and the construction will be up to low orders in perturbation theory.

We will proceed as follows: Our starting point will be the construction of a scalar, massless, non-interacting quantum field theory in arbitrary dimension $D\geq 3$, which is presented in section \ref{subsec:free field}. In this context, we will make use of the fundamental left representation (see section \ref{subsec:FLR}) in order to introduce a formulation of the theory in terms of the familiar concept of \emph{creation} and \emph{annihilation} operators on a Fock space. This convenient formulation was developed in joint work by Hollands and Olbermann \cite{Hollands2008}. Then, in section \ref{subsec:perturbation via field equation}, an algorithm for the iterative construction of perturbations of an interacting Lagrangian quantum field theory is developed (also first presented in \cite{Hollands2008}). Following this scheme, low order calculations for a specific 3-dimensional toy model theory are carried out in section \ref{subsec:low orders}. Finally, in section \ref{subsec:higher order} we give some results on OPE coefficients at arbitrary perturbation order, which follow from patterns emerging in the mentioned iterative scheme. These last two sections constitiute the main results of this thesis.

\section{The free massless field}\label{subsec:free field}

The aim of this section is to construct a quantum field theory in the sense outlined in chapter \ref{sec:OFT in terms of consistency} for the ``simplest possible example'', namely for a free, massless scalar field on $D$-dimensional Euclidean space, which is classically described by a Lagrangian

\begin{equation}
 \mathcal{L}_{free}(\varphi,\del_{\mu}\varphi)=-\frac{1}{\sigma_D}\int d^Dy\, \del_{\mu}\varphi(y)\del^{\mu}\varphi(y)\quad,
\label{eq:model free field lagrangian}
\end{equation}\index{symbols}{Lagrangianfree@$\mathcal{L}_{free}$}%
which leads to the field equation

\begin{equation}
 -\frac{1}{\sigma_D}\square\varphi=\tilde{\square}\varphi =0\: ,
\label{eq:model free field field equation}
\end{equation}\index{symbols}{LaplaceTilde@$\tilde{\square}$}
with $\square =\delta^{\mu\nu}\del_{\mu}\del_{\nu}$\index{symbols}{Laplace@$\square$} and where $\sigma_D=\frac{2\pi^{D/2}}{\Gamma(D/2)}$\index{symbols}{sigmaD@$\sigma_D$} is the surface area of the $D$-dimensional unit sphere. Here the prefactor $-\frac{1}{\sigma_D}$ is chosen for later convenience, since it leads to the particularly simple form

\begin{equation}
 G_F(x)=r^{2-D}
\label{eq:model free field propagator}
\end{equation}\index{symbols}{GF@$G_F(x)$}
for the Green's function of the operator $\tilde{\square}$, where $r=|x|$. In our framework, construction of a corresponding quantum field theory means that we are to find the OPE coefficients $\mathcal{C}(x_1,x_2)$ satisfying the axioms of section \ref{subsec:axioms} for this model. Additionally, also according to section \ref{subsec:axioms}, we need to define a vector space $V$ as characterized in that section. We chose this vector space, assuming $D\geq 3$ for convenience, according to the following definition:

\begin{definition}[Vector space $V$]\index{symbols}{V@$V$}
 {\ \\} Let $V$ be the unital, commutative $\mathbb{C}$-module generated as a module (i.e. under addition, multiplication and scalar multiplication) by formal expressions of the form $\del_{\{\mu_1}\ldots\del_{\mu_N\}}\varphi$ and unit $\mathds{1}\index{symbols}{1@$\mathds{1}$}$, where $\mu_i\in\{1,\ldots,D\}$ and curly brackets denote the totally symmetric, trace-free part, i.e. by definition

\begin{equation}
 \delta^{\mu_i\mu_j}\del_{\{\mu_1}\ldots\del_{\mu_N\}}\varphi=0\: .
\label{eq:model free field symmetric traceless}
\end{equation}

\end{definition}
The trace free condition has been imposed because any trace would give rise to an expression containing $\square\varphi$, which should vanish in order to satisfy the field equation on the level of $V$. The next step is to find a basis of $V$ which is most convenient for our purpose. As an intermediate step, let us first consider a basis of $\mathbb{R}^D$ in terms of totally symmetric, trace free, rank-$l$ tensors. For given $l\geq 0$, the dimension of this space is $N(l,D)$, where\index{symbols}{NlD@$N(l,D)$}

\begin{equation}
 N(l,D)=\left\{\begin{array}{clcl} &1  &\text{for }l=0 \\ 
 &\frac{(2l+D-2)(l+D-3)!}{(D-2)!l!}\: &\text{for }l>0 \end{array}\right.\: ,
\label{eq:model free field dimension basis}
\end{equation}
so for example $N(l,3)=2l+1$ and $N(l,4)=(l+1)^2$. We denote the corresponding basis elements by $(t_{lm})^{\mu_1\ldots\mu_l}$, $m\in\{1,\ldots,N(l,D)\}$, and for convenience require these to be orthonormal with respect to the natural hermitian inner product on $(\mathbb{R}^D)^{\otimes l}$ induced by the Euclidean metric on $\mathbb{R}^D$. Let us first define\index{symbols}{philm@$\varphi_{lm}$}

\begin{equation}
 \varphi_{lm}=F(l)\, t_{lm}\, \del^l\varphi:=F(l)(t_{lm})^{\mu_1\ldots\mu_l}\del_{\mu_1}\ldots\del_{\mu_l}\varphi\: ,
\label{eq:model free field basis philm}
\end{equation}
where summation over repeated spacetime indices $\mu_i$ is understood  and where\index{symbols}{Fl@$F(l)$}

\begin{equation}
 F(l):=\sqrt{\frac{\Gamma(D/2-1)}{2^ll!\Gamma(l+D/2-1)}}\: ,
\label{eq:model free field basis F}
\end{equation}
is a normalization coefficient chosen in a way to later obtain a simple form for the OPE coefficients. Then a basis of $V$ as a $\mathbb{C}$-vector space is given by $\mathds{1}$, together with the elements\index{symbols}{0va@$\ket{v_a}$}

\begin{equation}
 \ket{v_a}:=\prod_{l,m}\frac{1}{(a_{lm}!)^{1/2}}(\varphi_{lm})^{a_{lm}}\: ,
\label{eq:model free field basis elements}
\end{equation}
where $a=\{a_{lm}|\, l\geq 0\:;\: 0<m\leq N(l,D)\}$ is a multi-index of non-negative integers $a_{lm}$, only finitely many of which are non-zero. The canonical dimension of such an element shall be defined as

\begin{equation}
 |a|:=\sum_{l,m}a_{lm}\left(\frac{D-2}{2}+l\right)\: .
\label{eq:model free field basis elements dimension}
\end{equation}\index{symbols}{0a@$\betrag a\betrag$}

One may formally view $V$ as a ``Fock-space``, where $a_{lm}$ is the ''occupation number`` of the ''mode`` labeled by the quantum numbers $l,m$.
That means, we may decompose $V$ into subspaces of different "particle number" (sum of all ''occupation numbers``) , i.e.

\begin{equation}
 V=\bigoplus_{n=0}^{\infty} V_n=\bigoplus_{n=0}^{\infty}\otimes^n V_1 \, 
\label{eq:model diagrams Fock space}
\end{equation}
where $V_n=\otimes^n V_1$ is the ''$n$-particle" subspace. As usual, it is also possible to define creation and annihilation operators, $\mathbf{b}_{lm}^{\dagger}$ and $\mathbf{b}_{lm}$, on the Fock-space, as linear maps $\mathbf{b}_{lm}^{\dagger}:V_n\to V_{n+1}$ and $\mathbf{b}_{lm}:V_n\to V_{n-1}$, whose action on the basis elements in our case is given by\index{symbols}{blm@$\mathbf{b}_{lm}^{\dagger}, \mathbf{b}_{lm}$}

\begin{eqnarray}
 \mathbf{b}_{lm}^{\dagger}\ket{v_a}& := &(a_{lm}+1)^{1/2}\:\ket{v_{a+e_{lm}}} \label{eq:model free field ladder operators1} \\
 \mathbf{b}_{lm}\ket{v_a}& := &(a_{lm})^{1/2}\:\ket{v_{a-e_{lm}}}\: ,
\label{eq:model free field ladder operators2}
\end{eqnarray}
where $e_{lm}$\index{symbols}{elm@$e_{lm}$} is the multi-index with unit entry at position $l,m$ and zeros elsewhere. Thus, $V_n$ is generated from the unit element $\mathds{1}$, i.e. the ``vacuum'', by $\operatorname{span} (\mathbf{b}_{l_1 m_1}\ldots \mathbf{b}_{l_n m_n})$. These operators satisfy canonical commutation relations

\begin{eqnarray}
 \left[\mathbf{b}_{lm},\mathbf{b}_{l'm'}^{\dagger}\right]&=&\delta_{ll'}\delta_{mm'}\, id \\ \Big[\mathbf{b}_{lm},\mathbf{b}_{l'm'}\Big]&=&0=\left[\mathbf{b}_{lm}^{\dagger},\mathbf{b}_{l'm'}^{\dagger}\right]
\label{eq:model free field ladder operators commutation relations}
\end{eqnarray}
where $id$ is the identity operator on $V$.

We now want to present the explicit form of the OPE coefficients of the model in terms of the above operators. In order to further simplify the form of the coefficients, we introduce spherical harmonics in $D$-dimensions (see \cite{Muller1966} and appendix \ref{app:spherical symmetries}) and establish an isomorphism between the totally symmetric, trace-free tensors $t_{lm}$ and the mentioned spherical harmonics $Y_{lm}$ by\index{symbols}{tlm@$t_{lm}$}

\begin{equation}
 (t_{lm})^{\mu_1\ldots\mu_l}=c_l\int_{S^{D-1}} d\Omega \: x^{\{\mu_1}\cdots x^{\mu_l\}} Y_{lm}(x)\: ,
\label{eq:model free field isomorphism tensors-harmonics}
\end{equation}
where we integrate over the $D-1$-dimensional unit sphere $S^{D-1}$.
%
The constant $c_l$ can be determined by our requirement of orthonormality of the tensors $t_{lm}$, with the result (see eq.\eqref{eq:app spherical symm arbitrary definition cl}).\index{symbols}{cl@$c_l$}

\begin{equation}
 c_l=\left(\frac{2^l\, \Gamma(l+D/2)}{l!\,\Gamma(D/2)\sigma_D}\right)^{1/2}\: .
\label{eq:model free field normalization cl} 
\end{equation}
With this notation in place, we want to proceed to the actual construction of the OPE coefficients $\mathcal{C}(x_1,x_2)$. For this purpose, it is sufficient to consider the left-representatives $\mathcal{Y}(\ket{v_a},x):V\to V$ for all $\ket{v_a}\in V$, since the matrix elements $[\mathcal{Y}(\ket{v_a},x)]^c_b=\C{c}{ab}(x,0)$ are exactly the OPE coefficients, see section \ref{subsec:FLR}. We will start our investigation with the simplest non-trivial left-representative, $\mathcal{Y}(\varphi,x)$, corresponding to the basic field, which is defined as

\begin{equation}
\begin{split}
 \mathcal{Y}(\varphi,x)=\sqrt{\sigma_D}\; r^{-(D-2)/2}\sum_{l=0}^{\infty}&\sum_{m=1}^{N(l,D)}\left(\frac{D-2}{2l+D-2}\right)^{1/2}\times \\
 &\left[r^{l+(D-2)/2}\, Y^{lm}(\hat{x})\, \mathbf{b}_{lm}^{\dagger}+r^{-l-(D-2)/2}\, Y_{lm}(\hat{x})\, \mathbf{b}_{lm}\right]\: ,
\end{split}
\label{eq:model free field Lvarphi} 
\end{equation}
where $r=|x|$, $\hat{x}=x/|r|$ and $Y^{lm}(\hat{x})=\overline{Y_{lm}(\hat{x})}$.
Notice that this equation has the familiar form of a free field operator with an ''emissive`` and an ''absorptive`` part. However, this is less surprising if one remembers that $\mathcal{Y}(\varphi,x)$ is in a sense the ''representative`` of the (formal) field operator $\varphi(x)$ on $V$. We will now ''derive`` eq.\eqref{eq:model free field Lvarphi} from the standard quantum field theory formalism, i.e. the vectors $\ket{v_a}\in V$ are now really viewed as quantum fields in the usual sense. In order to determine $\mathcal{Y}(\varphi,x)$ let us consider the product of $\varphi$ with an arbitrary element $\ket{v_a}\in V$. Using \emph{Wick's theorem}, this can be written as:

\begin{equation}
\begin{split}
 \varphi(x) \,\ket{v_a(0)}=&\varphi(x)\left(\prod_{l,m}\frac{1}{(a_{lm}!)^{1/2}}(\varphi_{lm})^{a_{lm}}(0)\right)\\
 =&\prod_{l,m}\frac{1}{(a_{lm}!)^{1/2}}\Big( :\varphi(x)(\varphi_{lm})^{a_{lm}}(0):+\text{ all possible contractions } \Big)\\
 =&\prod_{l,m}\frac{1}{(a_{lm}!)^{1/2}}\, :\varphi(x)(\varphi_{lm})^{a_{lm}}(0):\\
 + \sum_{l',m'} a_{l'm'}&F(l')\, t_{l'm'}\, (-1)^{l'}\del^{l'} r^{2-D}(\varphi_{l'm'})^{a_{l'm'}-1}(0)\prod_{\atop{l,m}{(l,m)\neq (l',m')}}\frac{1}{(a_{lm}!)^{1/2}}(\varphi_{lm})^{a_{lm}}(0)
\end{split}
\label{eq:model free field wicks theorem} 
\end{equation}
Here we have simply inserted the explicit form of $\ket{v_a}$ as defined in eq.\eqref{eq:model free field basis elements}, applied Wick's theorem and in the last step used the fact that the propagator in our theory is just $G_F(x)=r^{2-D}$ as stated in eq.\eqref{eq:model free field propagator}. Double dots $:\,\cdot\,:$ here denote the \emph{normal ordered product} of the standard free quantum field theory formalism. Due to the analyticity properties of this normal ordered product, we can perform a Taylor expansion of the corresponding term in eq.\eqref{eq:model free field wicks theorem} in $x$ around $0$, which yields

\begin{equation}
\begin{split}
 \varphi(x) \,\ket{v_a(0)}=&\prod_{l,m}\frac{1}{(a_{lm}!)^{1/2}}\left[\sum_{l'}\frac{1}{l'!}x^{\mu_1}\cdots x^{\mu_{l'}} \,(\del_{\mu_1}\cdots\del_{\mu_{l'}}\varphi)(0)\right](\varphi_{lm})^{a_{lm}}(0)\\
+ \sum_{l',m'} a_{l'm'}&F(l')\, t_{l'm'}\, (-1)^{l'}\del^{l'} r^{2-D}(\varphi_{l'm'})^{a_{l'm'}-1}(0)\prod_{\atop{l,m}{(l,m)\neq (l',m')}}\frac{1}{(a_{lm}!)^{1/2}}(\varphi_{lm})^{a_{lm}}(0)
\end{split}
\label{eq:model free field Taylor expansion} 
\end{equation}
This rather lengthy expression can be simplified by the observation that the term $x^{\mu_1}\cdots x^{\mu_{l'}}=x^{\otimes l}$ may be replaced by its trace-free, anti-symmetric part $x^{\{\mu_1}\cdots x^{\mu_{l'}\}}$ due to the field equation \eqref{eq:model free field field equation}. We may thus use the identity

\begin{equation}
 x^{\{\mu_1}\cdots x^{\mu_{l}\}}=c_l^{-1}\, (t_{lm})^{\mu_1\ldots\mu_l}\, r^l Y^{lm}(\hat{x})\quad ,
\label{eq:model free field simplification 1} 
\end{equation}
which holds as a result of eq.\eqref{eq:model free field isomorphism tensors-harmonics}. Furthermore, we will need the relation

\begin{equation}
 (-1)^{l}\del^l r^{2-D}=c_l^{-1}2^l\,\frac{\Gamma(l+(D-2)/2)}{\Gamma(D-2)/2}\cdot t^{lm}\, Y_{lm}(\hat{x})r^{2-D-l}\: ,
\label{eq:model free field simplification 2} 
\end{equation}
with $t^{lm}=\overline{t_{lm}}$, which is derived in appendix \ref{app:spherical symmetries}. By substitution of eqs.\eqref{eq:model free field simplification 1} and \eqref{eq:model free field simplification 2} into eq.\eqref{eq:model free field Taylor expansion} we obtain

\begin{equation}
 \begin{split}
  \varphi(x) \,\ket{v_a(0)}=\prod_{l,m}\frac{1}{(a_{lm}!)^{1/2}}\left[\sum_{l'}\frac{1}{l'!}c_{l'}^{-1}r^{l'} Y^{l'm'}(\hat{x})t_{l'm'} \,\del^{l'}\varphi(0)\right](\varphi_{lm})^{a_{lm}}(0&)\\
+ \sum_{l',m'} a_{l'm'}F(l')t_{l'm'}\left[c_{l'}^{-1}2^{l'}\,\frac{\Gamma(l'+(D-2)/2)}{\Gamma(D-2)/2}t^{l'm'} Y_{l'm'}(\hat{x})r^{2-D-l'}\right](\varphi_{l'm'})^{a_{l'm'}-1}(0&)\\
 \times\prod_{\atop{l,m}{(l,m)\neq (l',m')}}\frac{1}{(a_{lm}!)^{1/2}}(\varphi_{lm})^{a_{lm}}(0&)
 \end{split}
\label{eq:model free field substitution}
\end{equation}
A closer inspection of this equation yields the following simplifications: Remembering the definition of $\varphi_{lm}(x)$ in eq.\eqref{eq:model free field basis philm}, we can use

\begin{equation}
 t_{l'm'} \,\del^{l'}\varphi(0)=F(l')^{-1}\varphi_{l'm'}(0)
\label{eq:model free field simplification 3} 
\end{equation}
in the first line of equation \eqref{eq:model free field substitution}. Secondly, we can just drop the contraction over the tensors $t_{l'm'}$ in the second line, because in our above construction we chose these tensors to be orthonormal. Finally, if we insert the explicit form of the constants $c_l$ and $F(l)$ into the equation and perform simple algebraic manipulations, we obtain the convenient expression

\begin{equation}
 \begin{split}
  \varphi(x) \,\ket{v_a(0)}=&\sqrt{\sigma_D} r^{-(D-2)/2}\sum_{lm}\left(\frac{D-2}{2l+D-2}\right)^{1/2}\\
 \times &\left[(a_{lm}+1)^{1/2}r^{l+(D-2)/2}Y^{lm}(\hat{x})\ket{v_{a+e_{lm}}}+(a_{lm})^{1/2}r^{-l-(D-2)/2}Y_{lm}(\hat{x})\ket{v_{a-e_{lm}}}\right]\, .
 \end{split}
\label{eq:model free field wick final form}
\end{equation}
From the above equation we can simply read off the desired OPE coefficients.

\begin{equation}
 \begin{split}
  \C{b}{\varphi a}(x,0)=&\sqrt{\sigma_D} r^{-(D-2)/2}\sum_{lm}\left(\frac{D-2}{2l+D-2}\right)^{1/2}\\
 \times &\left[(a_{lm}+1)^{1/2}r^{l+(D-2)/2}Y^{lm}(\hat{x})\delta_{a+e_{lm},b}+(a_{lm})^{1/2}r^{-l-(D-2)/2}Y_{lm}(\hat{x})\delta_{a-e_{lm},b}\right]\, 
 \end{split}
\label{eq:model free field OPE coeffs}
\end{equation}
Now, by definition \ref{def:The fundamental left representation} we have $[\mathcal{Y}(\varphi,x)]^b_a=\C{b}{\varphi a}(x,0)$, so using the ladder operators introduced above we finally obtain

\begin{equation}
\begin{split}
 \mathcal{Y}(\varphi,x)=\sqrt{\sigma_D}\; r^{-(D-2)/2}\sum_{l=0}^{\infty}&\sum_{m=1}^{N(l,D)}\left(\frac{D-2}{2l+D-2}\right)^{1/2}\times \\
 &\left[r^{l+(D-2)/2}\, Y^{lm}(\hat{x})\, \mathbf{b}_{lm}^{\dagger}+r^{-l-(D-2)/2}\, Y_{lm}(\hat{x})\, \mathbf{b}_{lm}\right]\: ,
\end{split}
\label{eq:model free field Lvarphi 2}
\end{equation}
which is just eq.\eqref{eq:model free field Lvarphi}. As stated above, this is only the simplest non-trivial left-representative. The corresponding formula for a general $\ket{v_a}\in V$, as defined in eq.\eqref{eq:model free field basis elements}, is

\begin{equation}
 \mathcal{Y}(\ket{v_a},x)=\: :\prod_{l,m}\frac{1}{(a_{lm}!)^{1/2}}[F(l)t_{lm}\, \del^l\mathcal{Y}(\varphi,x)]^{a_{lm}} :
\label{eq:model free field La}
\end{equation}
Again double dots denote normal ordering, which in our Fock space formulation simply means that all creation operators are to the left of the annihilation operators. This formula can be derived in two ways: One can either proceed in analogy to the simple $\mathcal{Y}(\varphi,x)$ case presented above, or use the factorization condition, axiom \ref{ax:Factorization}, in order to iteratively construct $\mathcal{Y}(\ket{v_a},x)$ out of $\mathcal{Y}(\varphi,x)$. If we use the former approach, we again import information from the standard formulation of quantum field theory. We thus have to check whether the OPE coefficients found this way are compatible with our framework, i.e. we have to check whether the axioms of section \ref{subsec:axioms} hold. This can indeed be done, where most effort again goes into the proof of the consistency condition, eq.\eqref{eq:framework perturbation 3 point constraint}. Here we will neither give this proof, nor the derivation of eq.\eqref{eq:model free field La} by this method. Instead we take the above mentioned alternative road to this equation. In this derivation we use the form of $\mathcal{Y}(\varphi,x)$ in eq.\eqref{eq:model free field Lvarphi 2}, and proceed in our framework, as defined in chapter \ref{sec:OFT in terms of consistency}. So at this stage we impose all our axioms, in particular the factorization axiom and therefore also the associativity condition \eqref{eq:framework perturbation 3 point constraint}, on the further OPE coefficients. In other words, our axioms will not have to be checked afterwards, but are required to hold initially. This iterative construction of $\mathcal{Y}(\ket{v_a},x)$ out of $\mathcal{Y}(\varphi,x)$ will be described in the next section, where it will serve as an instructive example of the procedure presented there. One should mention here that both approaches to the derivation of eq.\eqref{eq:model free field La} work equally well and give the same results.

Let us conclude this section with a brief summary of our results. We have constructed the full quantum field theory, i.e. is the pair $(V,\mathcal{C})$, for the non-interacting model described by the field equation \eqref{eq:model free field field equation} in $D$-dimensions ($D> 2$). The corresponding vector space $V$ is defined in eq.\eqref{eq:model free field basis elements} and the OPE coefficients $\mathcal{C}(x_1,x_2)$ can be obtained from the fundamental left representation, given in eqs.\eqref{eq:model free field Lvarphi 2} and \eqref{eq:model free field La}, as described in definition \ref{def:The fundamental left representation}. Furthermore, by the coherence theorem, or more specifically by proposition \ref{prop:factorization}, the n-point OPE coefficients $\mathcal{C}(x_1,\ldots,x_n)$ are uniquely determined by the 2-point coefficients $\mathcal{C}(x_1,x_2)$.

\section{Perturbations via non-linear field equations}\label{subsec:perturbation via field equation}

Having constructed free quantum field theory (for a massless scalar field), we now want to focus on the more interesting case of a theory with interaction. In section \ref{subsec:general perturbations} we have discussed perturbations in our framework as an analog of deformations of an algebra. This setting was very general, in the sense that it also holds for non-Lagrangian models. In the following, we are going to deal with the special case of theories which have a classical counterpart with a Lagrangian

\begin{equation}
 \mathcal{L}=\mathcal{L}_{free}+\mathcal{L}_{int}
\label{eq:model perturbations lagrangian}
\end{equation}\index{symbols}{Lagrangian@$\mathcal{L}$}\index{symbols}{Lagrangianfree@$\mathcal{L}_{free}$}\index{symbols}{Lagrangianint@$\mathcal{L}_{int}$}%
where $\mathcal{L}_{free}$ is given by eq.\eqref{eq:model free field lagrangian} and with the interaction part

\begin{equation}
 \mathcal{L}_{int}=\frac{-\lambda}{(k+1)\sigma_D}\int d^Dy\, \varphi^{k+1}(y)
\label{eq:model perturbations interaction lagrangian}
\end{equation}\index{symbols}{lambda@$\lambda$}
with\footnote{In fact, we only want to consider renormalizable theories, so $k$ should be chosen appropriately, see also section \ref{subsec:ambiguities}} $2\leq k\in \mathbb{N}$ and $\lambda\in\mathbb{C}$. This choice leads to an equation of motion of the form

\begin{equation}
\tilde{\square}\varphi=-\frac{\lambda}{\sigma_D}\varphi^k\quad\Rightarrow\quad \square\varphi =\lambda \varphi^k
 \label{eq:model perturbations equation of motion}\, .
\end{equation}
In other words, we consider massless, scalar $\varphi^{k+1}$-theory with interaction parameter $\lambda$ on $D$-dimensional Euclidean space. In this setting, the following theorem holds:

\begin{thm}[Perturbations via field equations]
An interacting quantum field theory $(\mathcal{C},V)$ obeying the field equation \eqref{eq:model perturbations equation of motion} can be constructed perturbatively up to arbitrary orders in the coupling constant $\lambda$ from the underlying free theory by an algorithm which relies on the successive application of the associativity condition \eqref{eq:framework perturbation 3 point constraint} and the differential equation \eqref{eq:model perturbations equation of motion}.
\end{thm}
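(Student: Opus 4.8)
The plan is to prove the theorem by induction on the perturbation order $i$, the base case $i=0$ being the free theory constructed in section \ref{subsec:free field} (whose OPE coefficients $\mathcal{C}_0$ were shown to satisfy all the axioms, including the zeroth-order associativity condition). Assume therefore that coefficients $\mathcal{C}_0,\ldots,\mathcal{C}_{i-1}$ have been found which satisfy the linear axioms of section \ref{subsec:axioms}, the symmetry and normalization conditions \eqref{eq:framework coherence symmetry axiom} and \eqref{eq:framework coherence relation}, the associativity condition to order $i-1$ (i.e.\ $b\mathcal{C}_j=\omega_j$ for all $j<i$), and the field equation \eqref{eq:model perturbations equation of motion}. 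I want to produce $\mathcal{C}_i$ with the same properties; spelling out this inductive step \emph{is} the promised algorithm.

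The first step is to turn the field equation into a condition on OPE coefficients. Since the model is renormalizable, the elements $\square\varphi$ and $\lambda\varphi^k$ of $V$ carry equal dimension, so \eqref{eq:model perturbations equation of motion} can be imposed at the level of $V$ while keeping $V$ itself fixed; applying $\square_{x_1}$ to a two-point coefficient with $\varphi$ in the first slot and using the Derivations axiom \ref{ax:Derivations} twice then gives, at order $\lambda^i$,
\begin{equation}
 \square_{x_1}\big(\mathcal{C}_i(x_1,x_2)\ket{\varphi\otimes v_a}\big)=\mathcal{C}_{i-1}(x_1,x_2)\ket{\varphi^k\otimes v_a}\, ,
\end{equation}
where the right-hand side is, by the inductive hypothesis, a known analytic $V$-valued function on $M_2$ (here $\varphi^k$ denotes the corresponding element of $V$). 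This is a Poisson equation which I would solve — e.g.\ by convolution with $G_F=r^{2-D}$ of \eqref{eq:model free field propagator}, or, since all coefficients are organised as formal series in powers of $r=|x_1-x_2|$ times spherical harmonics, mode by mode. A solution is unique only up to an $\operatorname{End}(V)$-valued function harmonic in $x_1-x_2$; I would exhaust this freedom by imposing the Scaling axiom \ref{ax:Scaling} together with the remaining linear axioms, the leftover ambiguity being precisely an $i$-th order field redefinition in the sense of definition \ref{def:Equivalence}, which I fix by an arbitrary definite choice. The symmetry axiom \eqref{eq:framework coherence symmetry axiom} then also fixes $\C{b}{a\varphi}$.

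Next I would propagate from the coefficients involving the elementary field $\varphi$ to all two-point coefficients $\C{b}{ac}$. Every basis vector of $V$ is a normal-ordered polynomial in the fields $\varphi_{lm}$, which are derivatives of $\varphi$, so by the Derivations axiom and repeated use of the associativity/factorization identity \eqref{eq:framework perturbation 3 point constraint} — this is exactly the iterative reconstruction of the left representatives $\mathcal{Y}(\ket{v_a},x)$ from $\mathcal{Y}(\varphi,x)$ announced below \eqref{eq:model free field La} — one rewrites any coefficient with a composite field in the first slot through coefficients with fewer factors, bottoming out at the $\C{\,\cdot\,}{\varphi\,\cdot\,}$ data of the previous step. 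Carrying this out order by order in $\lambda$ yields a candidate $\mathcal{C}_i$, and it remains to verify that it satisfies all the axioms to order $i$. The linear axioms are straightforward; the substantive point is the associativity condition, i.e.\ $b\mathcal{C}_i=\omega_i$, cf.\ \eqref{eq:framework perturbation associativity C ith b}.

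I expect this last verification to be the main obstacle. A priori $\omega_i$ is only guaranteed to be $b$-closed — $b\omega_i=0$ by Lemma \ref{lemma:w in Z} — so that its class $[\omega_i]\in H^3(V;\mathcal{C}_0)$ is a potential obstruction, and one must show it is exact. The mechanism is that consistency of the classical field equation \eqref{eq:model perturbations equation of motion} — $\square\varphi$ and $\lambda\varphi^k$ carrying the same dimension and the same $\mathrm{Spin}(D)$ transformation law, and $\square_{x_1}$ commuting with the factorization relations — forces the three-point associativity constraint at order $i$ to hold on the pieces pinned down by the Poisson equation; the coherence theorem \ref{thm:coherence theorem} then promotes this to the whole hierarchy and rules out new constraints from $n>3$, so that the obstruction in fact never occurs for a Lagrangian model. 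The real labour is the bookkeeping: matching the inhomogeneity $\omega_i$ of the associativity condition against the terms generated by differentiating the field equation, while controlling scaling degrees throughout. This is carried out in \cite{Hollands2008}; the explicit low-order computations in section \ref{subsec:low orders} serve as an independent check of the procedure.
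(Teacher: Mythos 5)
Your constructive core coincides with the paper's algorithm: induction on the order in $\lambda$, with the field equation converted via axiom \ref{ax:Derivations} into the Poisson equation \ref{eq:model perturbations relations}, whose right inverse $\square^{-1}$ produces $\Co{i}{b}{\varphi\, a}$ from $\Co{i-1}{b}{\varphi^k a}$; the residual freedom is an $\End(V)$-valued harmonic polynomial cut down by the scaling and Euclidean-invariance axioms exactly as in section \ref{subsec:ambiguities}; and the composite-field coefficients $\Co{i}{b}{\varphi^j a}$, $j=2,\ldots,k$, are recovered one power of $\varphi$ at a time from coincidence limits of the three-point associativity condition with counterterm subtraction, eq.\ \ref{eq:model perturbations limit k-1}. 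Up to that point you are reproducing the paper's proof.

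The divergence --- and the gap --- is in your final paragraph. The paper never \emph{verifies} the order-$i$ associativity condition for the coefficients it constructs: factorization is an axiom imposed on the sought interacting theory, \emph{assumed} to hold and used purely as a constructive input (this is stated explicitly in the discussion below eq.\ \ref{eq:model free field La} and in the remark following proposition \ref{prop:our solution}). You instead promise to prove $b\,\mathcal{C}_i=\omega_i$, i.e.\ that the class $[\omega_i]\in H^3(V;\mathcal{C}_0)$ vanishes, and the mechanism you sketch does not deliver this. The coherence theorem \ref{thm:coherence theorem} only reduces the $n>3$ factorization conditions to the $n=3$ condition; it presupposes the three-point constraint and cannot be used to establish it at order $i$. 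Likewise, the dimensional and rotational consistency of $\square\varphi=\lambda\varphi^k$ constrains the \emph{form} of the solution of the Poisson equation but says nothing about exactness of $\omega_i$. Neither this thesis nor the algorithm it describes contains a proof that the obstruction vanishes for Lagrangian models, so to close your induction you must either downgrade that step to the paper's standing assumption, or supply an actual argument that $\omega_i\in B^3(V;\mathcal{C}_0)$ --- which is not bookkeeping but the genuinely open part of the problem.
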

This algorithm will be outlined in the following.

Recall from the previous section that we defined the vector space $V$ to be spanned by trace-free expressions of the form $\del_{\{\mu_1}\ldots\del_{\mu_N\}}\varphi$. This was motivated by the fact that all expressions containing a trace would vanish in the free theory due to the field equation \ref{eq:model free field field equation}. In the present context, i.e. with a non-linear field equation of the form given above, this argument clearly does not hold anymore. Therefore we consider from now on the vector space $\hat{V}$ which is spanned by the unit element and all expressions of the form $\del_{\mu_1}\ldots\del_{\mu_N}\varphi$. Then the vertex operators of the interacting theory are maps from $\hat{V}$ to $\End(\hat{V})$. In the remainder of this thesis we will drop the caret over $\hat{V}$ again to lighten the notation, but it is always understood that from now on also expressions containing traces are allowed.

Let us chose a basis of the vector space $V$ with elements $\ket{v_a}$ as in the previous section, see \eqref{eq:model free field basis elements}. Then we may transfer the field equation to the level of vertex operators by the identity \eqref{eq:framework FLR derivatives}, which yields

\begin{equation}
 \square \mathcal{Y}(\varphi,x)=\lambda \mathcal{Y}(\varphi^k,x)\quad .
\label{eq:model perturbations equation of motion LR}
\end{equation}
On the level of OPE coefficients\footnote{It follows from the Euclidean invariance axiom that $\C{c}{ab}(x,y)=\C{c}{ab}(x-y)$, so $\C{c}{ab}(x,0)=\C{c}{ab}(x)$}, this implies

\begin{equation}
\square  \C{b}{\varphi a}(x) = \square \bra{v_b}\mathcal{Y}(\varphi,x)\ket{v_a}=\lambda\,\bra{v_b} \mathcal{Y}(\varphi^k,x)\, \ket{v_a} =\lambda \C{b}{\varphi^k a}(x)
\label{eq:model perturbations OPE field equation}
\end{equation}
As described in section \ref{subsec:general perturbations}, perturbations in our framework imply a grading of the OPE coefficients $\mathcal{C}$ as in eq.\eqref{eq:framework perturbation grading}. In terms of the basis elements of the OPE coefficients, this grading takes the form\index{symbols}{Ci@$\Co{i}{c}{ab}(x)$}

\begin{equation}
 \Co{i}{c}{ab}(x):=\frac{1}{i!}\tdd{^i}{\lambda^i}\C{c}{ab}(x)\Big|_{\lambda=0}\: ,
\label{eq:model perturbations grading}
\end{equation}
or equivalently for the left representatives\index{symbols}{Yi@$\Lo{i}{\ket{v_a}}{x}$}

\begin{equation}
 \Lo{i}{\ket{v_a}}{x}:= \frac{1}{i!}\tdd{^i}{\lambda^i}\mathcal{Y}(\ket{v_a},x)\Big|_{\lambda=0}\quad.
\label{eq:model perturbations grading LR}
\end{equation}
Substitution of this grading into eq.\eqref{eq:model perturbations OPE field equation} and comparison of the outermost left and right expressions in this equation yields an infinite number of relations

\begin{equation}
\square \Co{i}{b}{\varphi a}(x)=\Co{i-1}{b}{\varphi^k a}(x)\; ,
\label{eq:model perturbations relations}
\end{equation}
at any order $i>0$ in $\lambda$. The perturbation orders of the coefficients in this equation differ by one, as a result of the appearance of the parameter $\lambda$ on the right of eq.\eqref{eq:model perturbations OPE field equation}. It is this fact, which makes eq.\eqref{eq:model perturbations relations} so powerful. At a first glance, it almost seems as if these relations were already enough in order to establish an iterative pattern, which allows for the construction of the perturbed OPE coefficients up to arbitrary order, starting from the free theory. However, it is easy to see that we quickly run into problems. Let us briefly go through the procedure until these obstacles appear: The zeroth-order coefficients are known from section \ref{subsec:free field}. Then we solve the differential equation \eqref{eq:model perturbations relations} obtaining the coefficients $\Co{1}{b}{\varphi a}$.
Obviously, we would now like to apply this equation again and proceed to second order. For this purpose, however, we would need the coefficients $\Co{1}{b}{\varphi^k a}$, which a priori we know nothing about. So already at first order our iteration seems to break down. At this stage we introduce the second ingredient into the construction, namely we assume that the associativity condition, eq.\eqref{eq:framework perturbation 3 point constraint}, holds. Suppose for the moment that all perturbations are known to order $i-1$. Then, according to eq.\eqref{eq:framework perturbation associativity C ith}, the associativity condition on the $i$-th order perturbation can be written as (see also eq.\eqref{eq:framework motivation consistency})

\begin{equation}
 \sum_{j=0}^{i}\sum_e \Co{j}{e}{ab}(x_1,x_2)\Co{i-j}{d}{ec}(x_2,x_3)=\sum_{j=0}^{i}\sum_e \Co{j}{e}{bc}(x_2,x_3)\Co{i-j}{d}{ae}(x_1,x_3)\; ,
\label{eq:model perturbations consistency}
\end{equation}
on the domain $\mathcal{F}_3=\{r_{12}<r_{23}<r_{13}\}$, see eq.\eqref{eq:framework perturbation domains F}. Let us consider the special case $\ket{v_a}=\varphi=\ket{v_b}$.

\begin{equation}
 \sum_{j=0}^{i}\sum_e \Co{j}{e}{\varphi \varphi}(x_1,x_2)\Co{i-j}{d}{ec}(x_2,x_3)=\sum_{j=0}^{i}\sum_e \Co{j}{e}{\varphi c}(x_2,x_3)\Co{i-j}{d}{\varphi e}(x_1,x_3)\; .
\label{eq:model perturbations consistency special}
\end{equation}
Next, we are interested in the limit $x_1\to x_2$. Clearly, the coefficient $\Co{j}{e}{\varphi \varphi}(x_1,x_2)$ on the left side of the above equation will be most dramatically affected by this procedure. By the scaling dimension condition, axiom \ref{ax:Scaling}, we have for this coefficient

\begin{equation}
 sd\, \Co{j}{e}{\varphi \varphi}\leq D-2-|e|\, , 
\label{eq:model perturbations scaling}
\end{equation}
where we also used the fact that $|\varphi|=(D-2)/2$ from eq.\eqref{eq:model free field basis elements dimension}. As all coefficients $\Co{j}{e}{\varphi \varphi}(x_1,x_2)$ with negative scaling degree will vanish in the limit, only few terms, namely those with $|e|\leq |\varphi^2|$, will contribute to the sum on the left side of eq.\eqref{eq:model perturbations consistency special}. 

Let us consider the case $j=0$ in the sum on the left side of eq.\eqref{eq:model perturbations consistency special}. With the results of the previous section it can easily be derived that only one term in the sum over $e$ will give a contribution in this case. This can be seen as follows: The condition $|e|\leq |\varphi^2|$ restricts $e$ to be either $\mathds{1}$, $\varphi_{lm}$ with $l\leq (D-2)/2$ or $\varphi^2$. However, orthonormality of our basis and the form of the free theory left representative $\Lo{0}{\varphi}{x}$ imply that the OPE coefficients $\Co{0}{\varphi_{lm}}{\varphi\,\varphi}=\bra{\varphi_{lm}}\Lo{0}{\varphi}{x}\ket{\varphi}$ vanish for any value of $l$, since a single ladder operator does not suffice to transform the vector $\ket{\varphi}$ into $\ket{\varphi_{lm}}$. Now let us discuss the choice $e=\mathds{1}$, i.e. we consider the product $\Co{0}{\mathds{1}}{\varphi \varphi}(x_1,x_2)\Co{i}{d}{\mathds{1} c}(x_2,x_3)$. Here the second coefficient vanishes for $i\neq 0$, because the requirement for an identity element, axiom \ref{ax:Identity element}, necessarily requires

\begin{equation}
 \C{b}{\mathds{1} a}(x_1,x_2;\lambda)=\delta^b_a
\label{eq:model perturbations identity}
\end{equation}
which obviously implies $\Co{i}{b}{\mathds{1} a}(x_1,x_2)=0$ for $i\neq 0$. Thus, for $j=0$ and $i>0$ the only contribution to the sum over $e$ on the left side of eq.\eqref{eq:model perturbations consistency special} is the product $\Co{i}{d}{\varphi^2 c}(x_2,x_3)\Co{0}{\varphi^2}{\varphi\, \varphi}(x_1,x_2)$. Applying the results of the previous section one finds $\Co{0}{\varphi^2}{\varphi\, \varphi}(x_1,x_2)=1$.
Thus, if we shovel all terms except $\Co{i}{d}{\varphi^2 c}(x_2,x_3)$ to the right side of the equality sign in eq.\eqref{eq:model perturbations consistency special}, we have the equation\footnote{Here and in the following the limit is understood as $\lim\limits_{x_1\nearrow x_2}$, i.e. $|x_1|$ approaches $|x_2|$ from below} (for $i>0$)

\begin{equation}
\begin{split}
 \Co{i}{d}{\varphi^2 c}(x_2,x_3)=\lim_{x_1\to x_2}\Bigg(&\sum_e\sum_{j=0}^{i} \Co{j}{e}{\varphi c}(x_2,x_3)\Co{i-j}{d}{\varphi e}(x_1,x_3)\\ -&\sum_{e}^{|e|\leq |\varphi^2|}\sum_{j=1}^{i} \Co{j}{e}{\varphi \varphi}(x_1,x_2)\Co{i-j}{d}{ec}(x_2,x_3)\Bigg)
\end{split}
\label{eq:model perturbations limit}
\end{equation}
Now suppose we already know $\Co{i}{b}{\varphi a}(x_1,x_2)$ for all $\ket{v_a},\ket{v_b}\in V$ in addition to all the lower order coefficients. Then all expressions appearing on the right side of the above equation are known. That means, if we have all coefficients up to $\Co{i}{b}{\varphi a}(x_1,x_2)$, then we can uniquely determine $\Co{i}{b}{\varphi^2 a}(x_1,x_2)$ by this equation, which is just the kind of identity we were looking for. Before we go on with our iterative procedure, let us first view the above equation from a different perspective. Remember, e.g. from eq.\eqref{eq:framework motivation consistency}, that the first sum just gives the 3-point coefficient $\Co{i}{d}{\varphi \varphi c}(x_1,x_2,x_3)$ on the domain $\mathcal{F}_3$. Thus, the relation may equivalently be written as

\begin{equation}
\begin{split}
 \Co{i}{d}{\varphi^2 c}(x_2,x_3)=\lim_{x_1\to x_2}\left(\Co{i}{d}{\varphi \varphi c}(x_1,x_2,x_3)-\sum_e^{|e|\leq |\varphi^2|}\sum_{j=1}^{i} \Co{j}{e}{\varphi\,  \varphi}(x_1,x_2)\Co{i-j}{d}{ec}(x_2,x_3)\right)
\end{split}
\label{eq:model perturbations limit 3point}
\end{equation}
for $r_{23}<r_{13}$. This suggests the following interpretation: Naively, one might try to obtain the desired coefficient $\Co{i}{d}{\varphi^2 c}(x_2,x_3)$ by just letting two points of the above three point coefficient approach each other. Similarly, one might try to obtain $\varphi^2(x_2)$ as the coincidence limit of the product $\varphi(x_1) \varphi(x_2)$. It is a well known feature of quantum field theory that this naive limit does not exist, as one is dealing with distributional objects (operator valued distributions in the standard formulation, distributional OPE coefficients in our framework). So in order to make sense of expressions like $\varphi^2(x_2)$ or $\Co{i}{d}{\varphi^2 c}(x_2,x_3)$, one has to subtract counterterms before performing the limit in order to obtain well defined objects. That is precisely the meaning of the sum on the right side of equation \eqref{eq:model perturbations limit 3point}. So in a sense, this equation may be viewed as our analogue of \emph{renormalization}, where the counterterms are represented by the finite sum that is subtracted. Note, however, that this identity is an intrinsic feature of our theory resulting from the associativity condition. Therefore we do not have to apply any kind of external renormalization as the framework is inherently finite at any order.

Let us conclude this interpretational interlude and come back to our iteration procedure. Remember that we have just found a way to express the coefficients of the form $\Co{i}{d}{\varphi^2 c}(x_1,x_2)$ in terms of the $\Co{i}{d}{\varphi c}(x_1,x_2)$ and lower order coefficients. This procedure can be repeated. That means we start at eq.\eqref{eq:model perturbations consistency} again, but this time we chose $a=\varphi^2$ and $b=\varphi$. Following the steps that lead to eq.\eqref{eq:model perturbations limit}, we again take the limit $x_1\to x_2$. We will encounter one summand of the form $\Co{0}{\varphi^3}{\varphi^2 \varphi}(x_1,x_2)\Co{i}{d}{\varphi^3 c}(x_2,x_3)$. The first factor here is again just $1$, so we isolate $\Co{i}{d}{\varphi^3 c}(x_2,x_3)$ on the left side of the equation, obtaining
\begin{equation}
\begin{split}
 &\Co{i}{d}{\varphi^3 c}(x_2,x_3)=\lim_{x_1\to x_2} \Bigg[\sum_{e}\sum_{j=0}^{i} \Co{j}{e}{\varphi c}(x_2,x_3)\Co{i-j}{d}{\varphi^2 e}(x_1,x_3)\\&-\sum_{e}^{|e|\leq |\varphi^3|}\sum_{j=1}^{i} \Co{j}{e}{\varphi^2 \varphi}(x_1,x_2)\Co{i-j}{d}{ec}(x_2,x_3)-\sum_{e\neq \varphi^3}^{|e|\leq |\varphi^3|}\Co{0}{e}{\varphi^2 \varphi}(x_1,x_2)\Co{i}{d}{e c}(x_2,x_3)\Bigg]
\end{split}
\label{eq:model perturbations limit 2}
\end{equation}
Again, all coefficients on the right side are known. 
Clearly, this scheme can be continued iteratively. Assume we know all coefficients up to those of the type $\Co{i}{d}{\varphi^{k-1} c}(x_2,x_3)$, then we can uniquely construct the coefficients $\Co{i}{d}{\varphi^{k} c}(x_2,x_3)$ by the formula

\begin{equation}
 \begin{split}
 &\Co{i}{d}{\varphi^k c}(x_2,x_3)=\lim_{x_1\to x_2} \Bigg[\sum_{e}\sum_{j=0}^{i} \Co{j}{e}{\varphi c}(x_2,x_3)\Co{i-j}{d}{\varphi^{k-1} e}(x_1,x_3)\\&-\sum_{e}^{|e|\leq |\varphi^k|}\sum_{j=1}^{i} \Co{j}{e}{\varphi^{k-1} \varphi}(x_1,x_2)\Co{i-j}{d}{ec}(x_2,x_3)
 -\sum_{e\neq \varphi^k}^{|e|\leq |\varphi^k|}\Co{0}{e}{\varphi^{k-1} \varphi}(x_1,x_2)\Co{i}{d}{e c}(x_2,x_3)\Bigg]
\end{split}
\label{eq:model perturbations limit k-1}
\end{equation}
This procedure solves the problem we encountered in the discussion below eq.\eqref{eq:model perturbations relations} in our approach towards a construction of the perturbed OPE coefficients just using the field equation. There we got stuck already at first perturbation order, as we could not relate the coefficients $\Co{1}{b}{\varphi a}$ and $\Co{1}{b}{\varphi^k a}$. This relation can now be achieved by $k-1$ iterations of the above equation.

Let us briefly sum up the algorithm we just found: The starting point of the iteration is the free quantum field theory $(V,\mathcal{C}_0)$ as described in section \ref{subsec:free field}. Next we use the non-linear field equation, more precisely we solve the differential equation \eqref{eq:model perturbations relations}, in order to construct the first order coefficients of the form $\Co{1}{b}{\varphi a}$. Then we repeatedly apply eq.\eqref{eq:model perturbations limit k-1}, which follows from the associativity condition, obtaining the coefficients $\Co{1}{b}{\varphi^k a}$, which allow for the construction of the second order coefficients $\Co{2}{b}{\varphi a}$ via the differential equation \eqref{eq:model perturbations relations} again. Exploiting the associativity condition and the field equation once more yields the third order coefficients, and so on. The procedure is summarized schematically in the following diagram:

\begin{equation}
\begin{tikzpicture}[rounded corners,level distance=4cm]
\node[rectangle, fill=mygray] {$\Co{0}{c}{ab}$}[grow'=right]
child[-latex,very thick] {node[rectangle, fill=mygray] {$\Co{1}{b}{\varphi a}$} child[-latex, very thick] {node[rectangle, fill=mygray] {$\Co{1}{c}{ab}$} child[-latex,very thick] {node[rectangle, fill=mygray] {$\Co{2}{b}{\varphi a}$} child[very thick,level distance=1cm,-] {child[dotted, very thick,level distance=1cm,-]{}} edge from parent node[above, sloped, text=black] {'' $\square^{-1}$ ``} } edge from parent node[above, sloped, text=black] {\begin{small}associativity\end{small}
} node[below, sloped, text=black] {\begin{small}condition\end{small}
} } edge from parent node[above, sloped, text=black] {'' $\square^{-1}$ ``}
};
\end{tikzpicture}
\end{equation}

From the standard formulation of quantum field theory we have learned the lesson, that the construction of higher order perturbations should be expected to run into serious calculational difficulties already at rather low orders. These difficulties appear in the calculation of Feynman integrals containing loops, which has become an independent branch of theoretical physics over the last decades. Therefore, we should also expect serious calculational effort to go into the explicit construction of perturbed OPE coefficients in our framework. Where do we encounter these difficulties? Reviewing our algorithm, we find that it essentially consists of two calculational steps: Solving the differential equation \eqref{eq:model perturbations relations} and performing the sums and the limit in eq.\eqref{eq:model perturbations limit k-1}. The former does not cause any trouble, as all 2-point coefficients can be expressed as a linear combination of terms of the form $c\cdot Y_{lm}(\hat{x})r^a(\log r)^b $ for some constant $c\in\mathbb{C}$ and parameters $a,b,l,m\in\mathbb{N}$. As we will see below, it is not very difficult to invert the Laplace operator on such an expression (see appendix \ref{app:characteristic diff eq}). It turns out that most calculational work has to be put into eq.\eqref{eq:model perturbations limit k-1}, especially into the first sum over $e$. This is an infinite sum over all basis elements of our vector space $V$, i.e. a sum, or rather a multiple sum, over the multi-index $e$. Before performing the limit in eq.\eqref{eq:model perturbations limit k-1}, one has to put these sums into a form that makes it possible to control the cancellation of infinities with the counterterms. So we have in a sense traded the problematic higher loop Feynman integrals for multiple infinite sums. It is not clear a priori which way is more convenient for explicit calculations and it is one aim of this thesis to give some first impressions of the effort that goes into the explicit construction of perturbations of a quantum field theory in the iterative scheme outlined above. 

\subsection{Ambiguities}\label{subsec:ambiguities}
In the usual approaches to renormalization, certain ambiguities are typically present. It is therefore natural to ask ''how unique`` the results obtained from our method are. As mentioned above, our algorithm consists of two computational steps: Solving the differential equation \eqref{eq:model perturbations relations} and applying the consistency condition (eq.\eqref{eq:model perturbations limit k-1}). The latter uniquely determines the coefficients $\Co{i}{b}{\varphi^k a}$ from the coefficients $\Co{i}{b}{\varphi^{k-1} a}$, so possible ambiguities can only arise in the differential equation. In appendix \ref{app:characteristic diff eq} we present a special solution to this equation, which is used throughout this thesis, and the freedom in the choice of this solution is also briefly discussed. To sum up, we may add to any solution $\Lo{i}{\varphi}{x}$ of the field equation an $\End(V)$-valued harmonic polynomial in $x$, i.e.

\begin{equation}
 \mathcal{Y}_i'(\varphi,x)=\Lo{i}{\varphi}{x}+K(x)
\end{equation}
with

\begin{equation}
 K(x)=\sum_{J=0}^{\infty} K_{J}r^JY_{JM}(\hat{x};D)+\sum_{J=0}^{\infty} K_{2-J-D}r^{2-J-D}Y_{JM}(\hat{x};D)
\end{equation}
where $K_{J}\in\End(V)$. Our choice of left representatives $\Lo{i}{\varphi}{x}$ is further restricted by the axioms of our framework (see section \ref{subsec:axioms}). Most importantly, the scaling dimension
constraint, axiom \ref{ax:Scaling}, restricts $\Lo{i}{\varphi}{x}$ to be of the general form

\begin{equation}
 \Lo{i}{\varphi}{x}=\sum A_{i,d,q,J,M}(\ket{v})r^d(\log r)^q Y_{JM}(\hat{x},D)
\end{equation}
with

\begin{equation}
 A_{i,d,q,J,M}\in\left\{A\in\End(V)\,:\, |A(v)|-|v|-|\varphi|\geq d \quad\forall v\in V \right\}\quad,
\end{equation}
where $|\cdot|$ denotes the canonical dimension of the vectors in $V$ as defined in eq.\eqref{eq:model free field basis elements dimension}. The fact that in the above condition only an inequality is required to hold means that we may add contributions of lower scaling dimension to any OPE coefficient. Thus, the grading of the OPE coefficients by dimension, which held in our construction of the free theory in the previous chapter, is replaced by a filtration at higher perturbation orders. 

The convention used for the solution of the differential equation within this thesis is justified by the following proposition:

\begin{proposition}\label{prop:our solution}%
Given left representatives $\Lo{i-1}{\varphi^k}{x}$ satisfying the axioms of section \ref{subsec:axioms}, the solution $\Lo{i}{\varphi}{x}=\square^{-1}\Lo{i-1}{\varphi^k}{x}$ satisfies these axioms as well (except possibly the factorization axiom). Further, if the OPE coefficients at order $i-1$ are graded by dimension (i.e. if the equality holds in the scaling dimension axiom), then this grading is preserved by the solution obtained with the help of $\square^{-1}$.
\end{proposition}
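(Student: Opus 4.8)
The plan is to run through the axioms of Section~\ref{subsec:axioms} one at a time for the order-$i$ coefficients $\Co{i}{b}{\varphi a}(x)=\bra{v_b}\Lo{i}{\varphi}{x}\ket{v_a}$, $\Lo{i}{\varphi}{x}=\square^{-1}\Lo{i-1}{\varphi^k}{x}$, using as the only inputs (i) the inductive hypothesis that the lower-order coefficients, and $\Lo{i-1}{\varphi^k}{x}$ in particular, satisfy the axioms, and (ii) the explicit form of the right inverse $\square^{-1}$ fixed in Appendix~\ref{app:characteristic diff eq}. The one structural property of that $\square^{-1}$ that I would isolate at the outset is its mode-wise action: on an $\End(V)$-valued expansion $\sum A\,r^{d}(\log r)^{q}Y_{JM}(\hat x,D)$ it replaces each summand by a scalar multiple of $A\,r^{d+2}(\log r)^{q}Y_{JM}(\hat x,D)$, except in the resonant case where the characteristic number $(d+2)(d+D)-J(J+D-2)$ vanishes, where it produces a multiple of $A\,r^{d+2}(\log r)^{q+1}Y_{JM}(\hat x,D)$; so it leaves the $\End(V)$-factor $A$ and the $SO(D)$-sector $(J,M)$ untouched and raises the homogeneity degree in $x$ by exactly $2$.

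Given this, the ``algebraic'' axioms — Hermitian conjugation, Euclidean invariance, Bosonic nature, Derivations — are all of the same elementary type: each asserts a commutation relation, and one checks that $\square^{-1}$ commutes with the relevant operation. Hermitian conjugation follows because $\square$ has real coefficients and the antilinear map $\iota$ acts only on the $A$-factor, on which $\square^{-1}$ is the identity. For Euclidean invariance, translation covariance is built into the axiom together with the convention $\C{c}{ab}(x,y)=\C{c}{ab}(x-y)$, while rotation covariance holds because the $Y_{JM}$-expansion is exactly the decomposition into $SO(D)$-isotypic pieces, on each of which $\square^{-1}$ acts as a scalar; hence $\square^{-1}$ intertwines the $SO(D)$-action on $x$ and acts trivially on the $\End(V)$-part. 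Bosonic nature is automatic since $\square$ acts on the spacetime variable only and commutes with $\gamma$. The Derivations axiom follows from $[\square,\partial_\mu]=0$ (both on $V$ and on functions of $x$) together with the order-$(i-1)$ identity $\Lo{i-1}{\partial_\mu v}{x}=\partial_\mu\Lo{i-1}{v}{x}$ of eq.~\ref{eq:framework FLR derivatives}.

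The real content lies in the Scaling axiom — hence in the grading/filtration statement — and, through the Taylor map $t$, in the Identity and (Anti-)symmetry axioms. For Scaling: by hypothesis every term $A\,r^{d}(\log r)^{q}Y_{JM}$ of $\Lo{i-1}{\varphi^k}{x}$ obeys the bound tying $d$ to the dimensions shifted by $A$ and to $|\varphi^k|=k\tfrac{D-2}{2}$; applying $\square^{-1}$ raises $d$ by $2$ and keeps $A$, and since the interaction is renormalizable, $|\varphi^k|-|\varphi|=(k-1)\tfrac{D-2}{2}$ is $\le 2$, so the $+2$ shift is absorbed and the bound demanded of $\Lo{i}{\varphi}{x}$ (with $|\varphi|$ in place of $|\varphi^k|$) comes out. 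When the order-$(i-1)$ bound holds with equality — the dimension-grading hypothesis — this same computation yields equality at order $i$ precisely in the marginal case $(k-1)(D-2)=4$, which is the grading-preservation claim; a strictly super-renormalizable choice of $k$ gives a strict inequality, i.e. only the filtration described in the text. The point to be careful about is that $\square^{-1}$ must not create spurious logarithms in the regular (Taylor-type) part of the expansion; this is fine because $(d+2)(d+D)-J(J+D-2)>0$ whenever $d\ge J$ with $d\equiv J\pmod 2$. For the Identity axiom one checks that the components of $\square^{-1}\big[\Lo{i-1}{\varphi^k}{x}\mathds{1}\big]$, i.e. the coefficients $\Co{i}{b}{\varphi\mathds{1}}(x)$, constitute an admissible order-$i$ piece $t_i$ of the Taylor-expansion map: polynomiality in $x$ survives by the positivity just noted, the dimension-raising property is inherited since $\square^{-1}$ keeps the $A$-factor, and the cocycle relation $\sum_j t_j(x_1,x_2)\,t_{i-j}(x_2,x_3)=t_i(x_1,x_3)$ follows at order $i$ from the recursion, the lower-order cocycle relations and the Derivations axiom. (Anti-)symmetry is then treated in the same spirit: the only non-trivial ingredient is the twist by $t$ in eq.~\ref{eq:framework coherence symmetry axiom}, which is controlled by the freshly-established properties of $t_i$ together with the symmetry holding at order $i-1$.

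I expect the Scaling step to be the main obstacle. The delicate part is the bookkeeping that converts the order-$(i-1)$ scaling bound into the order-$i$ one while keeping track of logarithmic powers and resonances — it is exactly there that the marginal identity $(k-1)(D-2)=4$ enters and decides whether the grading, and not merely a filtration, survives. A secondary difficulty, tied to the Identity and (Anti-)symmetry axioms, is verifying that the order-$i$ correction to the Taylor map $t$ read off from $\Lo{i}{\varphi}{x}\mathds{1}$ still satisfies the cocycle, polynomiality and dimension-raising conditions that Section~\ref{subsec:axioms} demands of $t$. Everything else reduces to the elementary commutation properties of $\square^{-1}$ recorded in the first paragraph.
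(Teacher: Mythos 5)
Your proposal is correct and follows essentially the same route as the paper's own proof: the algebraic axioms are disposed of by noting that $\square^{-1}$ commutes with (or acts trivially on) the relevant structures, the Derivations axiom is the field equation itself, and the whole weight falls on the Scaling axiom, where the $+2$ shift in the power of $r$ is absorbed by the renormalizability identity $|\varphi^{k}|-|\varphi|=|\varphi^{k-1}|=2$, which also gives grading preservation. The only difference is one of thoroughness: you spell out checks (absence of spurious logarithms in the Taylor part, the order-$i$ cocycle/polynomiality conditions on $t$) that the paper simply declares trivial, and you note correctly that a super-renormalizable coupling would yield only the filtration rather than the grading.
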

Remark: The factorization axiom can not be checked with the knowledge of $\Lo{i}{\varphi}{x}$ alone, see eq.\eqref{eq:framework FLR borcherds identity}. Recall from the previous discussion that we \emph{assume} this property to hold, which allows us to determine the other $i$-th order left representatives by the algorithm outlined above.
\begin{proof}
Axioms \ref{ax:Hermitian conjugation}, \ref{ax:Bosonic nature}, \ref{ax:Identity element} and \ref{ax:(Anti-)symmetry} are satisfied trivially, simply because they were also satisfied by the zeroth order coefficients and since the operator $\square^{-1}$ does not change any relevant properties for these axioms. Similarly, the zeroth order coefficients satisfy the Euclidean invariance property. Since $\square^{-1}$ acts rotationally invariant (i.e. it does not change the spherical harmonics $Y_{JM}$) our choice is also consistent with this axiom. Further, the constraint implied by axiom \ref{ax:Derivations} is precisely the field equation as a differential equation on $\mathbb{R}^{D}$. Since $\square^{-1}$ is a solution to that equation, it also satisfies this axiom. It remains to discuss the scaling axiom and the claim that the grading by dimension is conserved. Assume the axiom and the grading to hold at order $i-1$. Then $sd\, \Co{i-1}{b}{\varphi^ka}= |a|+|\varphi^k|-|b|$ holds. Using our solution to the field equation, which increases the scaling degree by $2$, it then follows that

\begin{equation}
 sd\, \Co{i}{b}{\varphi a}=sd\, \Co{i-1}{b}{\varphi^ka}-2=|a|+|\varphi^k|-|b|-2
\label{eq:ambiguities scaling}
\end{equation}
Now we claim that in a renormalizable theory $|\varphi^k|-|\varphi|= 2$ always holds. This can be deduced from the field equation as follows: Comparing the dimensions of both sides of the equation using eq.\eqref{eq:model free field basis elements dimension} we can solve for the dimension\footnote{Recall that the derivations $\del_\mu$ on $V$ increase the dimension by $1$} of the coupling constant $\lambda$.

\begin{equation}
 |\square\varphi|-|\varphi^k|=2-|\varphi^{k-1}|=|\lambda|
\end{equation}
By definition, a theory is renormalizable if $|\lambda|= 0$, super-renormalizable if $|\lambda|< 0$ and non-renormalizable if $|\lambda|>0$. Since we are interested only in the first case, the relation

\begin{equation}
 |\varphi^k|-|\varphi|=|\varphi^{k-1}|= 2
\end{equation}
does indeed hold. Using this relation in eq.\eqref{eq:ambiguities scaling} we find the desired equation

\begin{equation}
 sd\, \Co{i}{b}{\varphi a}= |a|+|\varphi|-|b|
\end{equation}
confirming axiom \ref{ax:Scaling} and preserving the grading. Since both the axiom and the grading hold in the free theory, they hold to all perturbation orders using our solution.
\end{proof}
Since the OPE is a short distance expansion, the following theorem should be of interest:

\begin{thm}[Ambiguities at short distances]%
Let $\ket{v_a}\in V$, $\bra{v_b}\in V^{*}$ be arbitrary with $\Co{i-1}{b}{\varphi^k a}(x)\neq 0$. Then the most rapidly divergent part of the OPE coefficients $\Co{i}{b}{\varphi a}(x)$ in the limit $x\to 0$ is uniquely determined by the solution $\square^{-1}\Co{i-1}{b}{\varphi^k a}(x)$.
\end{thm}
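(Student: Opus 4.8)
The plan is to compare, term by term in a short-distance expansion, the two pieces making up $\Co{i}{b}{\varphi a}$. By the discussion of section~\ref{subsec:ambiguities} one has $\Lo{i}{\varphi}{x}=\square^{-1}\Lo{i-1}{\varphi^k}{x}+K(x)$ with $K$ an $\End(V)$-valued harmonic polynomial (together with the $r^{2-J-D}Y_{JM}$ pieces still permitted by the scaling axiom), so taking the $(b,a)$ matrix element gives
\begin{equation*}
 \Co{i}{b}{\varphi a}(x)=\square^{-1}\Co{i-1}{b}{\varphi^k a}(x)+[K(x)]^b_a\,.
\end{equation*}
It therefore suffices to show that $[K]^b_a$ can neither supply nor alter the most singular monomials of $\square^{-1}\Co{i-1}{b}{\varphi^k a}$.

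First I would record the elementary fact that $\square\big(r^{d}(\log r)^qY_{JM}(\hat x)\big)=\big(d(d+D-2)-J(J+D-2)\big)\,r^{d-2}(\log r)^qY_{JM}(\hat x)+(\text{lower powers of }\log r)$, the bracket vanishing exactly for the two ``harmonic exponents'' $d\in\{J,\,2-J-D\}$. Hence $\square^{-1}$ carries a monomial $r^{d}(\log r)^qY_{JM}$ to a nonzero multiple of $r^{d+2}(\log r)^qY_{JM}$ when $d+2$ is \emph{not} a harmonic exponent for $Y_{JM}$, and to a nonzero multiple of $r^{d+2}(\log r)^{q+1}Y_{JM}$ when it is. Since $\Co{i-1}{b}{\varphi^k a}$ is by hypothesis a \emph{nonzero} finite combination of such monomials, so is $\square^{-1}\Co{i-1}{b}{\varphi^k a}$; no cancellation occurs among its leading radial contributions (they are linearly independent), and these are of two kinds: terms $r^{d+2}Y_{JM}$ with $d+2$ not a harmonic exponent for $Y_{JM}$, and terms carrying a strictly positive power of $\log r$.

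Next I would invoke the explicit shape of $K$: every monomial in $[K]^b_a$ is either $r^{J}Y_{JM}$ or $r^{2-J-D}Y_{JM}$, i.e. its radial power and angular degree stand in the harmonic relation, and it carries no logarithm. Hence no monomial of $[K]^b_a$ can coincide with either kind of leading contribution of $\square^{-1}\Co{i-1}{b}{\varphi^k a}$. It remains to bound the divergence of $[K]^b_a$: the scaling estimate $sd\,\Co{i}{b}{\varphi a}\le|a|+|\varphi|-|b|$ (Axiom~\ref{ax:Scaling}), combined with the grading-preservation of the $\square^{-1}$-solution (Proposition~\ref{prop:our solution}, which gives $sd\,\square^{-1}\Co{i-1}{b}{\varphi^k a}=|a|+|\varphi|-|b|$ via $|\varphi^k|-|\varphi|=2$), shows that $[K]^b_a$ is no more singular than $\square^{-1}\Co{i-1}{b}{\varphi^k a}$, and moreover that the unique angular degree at which $[K]^b_a$ can attain the borderline radial power is precisely the one for which that power is a harmonic exponent of $2-J-D$ type --- so that any matching contribution of $\square^{-1}\Co{i-1}{b}{\varphi^k a}$ at that power necessarily carries a $\log r$ and stays linearly independent of the $K$-term. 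Combining these observations identifies the most rapidly divergent part of $\Co{i}{b}{\varphi a}$ as $x\to0$ with that of $\square^{-1}\Co{i-1}{b}{\varphi^k a}$.

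The main obstacle I anticipate is precisely this borderline scaling degree: one must pin down which monomials $\square^{-1}$ actually produces at the leading radial power and verify that none of them is of the harmonic, logarithm-free type that a valid $K$ could contribute, so that the leading short-distance data are genuinely rigid. It is here that the hypothesis $\Co{i-1}{b}{\varphi^k a}(x)\neq0$ is indispensable: without it the particular solution vanishes and the leading behaviour of $\Co{i}{b}{\varphi a}$ would be carried entirely by the undetermined harmonic polynomial $K$.
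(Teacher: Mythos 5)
Your proof is correct and takes essentially the same route as the paper's own argument: both decompose $\Co{i}{b}{\varphi a}$ into $\square^{-1}\Co{i-1}{b}{\varphi^k a}$ plus a harmonic ambiguity, restrict the ambiguity to maximal scaling degree via the scaling axiom and Euclidean invariance, and conclude from the fact that $\square^{-1}$ produces at least one power of $\log r$ whenever its output lands on a harmonic exponent, while the ambiguities are logarithm-free. Your explicit separation of the leading terms into non-harmonic-exponent monomials and logarithm-carrying harmonic-exponent monomials is a slightly more detailed rendering of the paper's two-case discussion, but the substance is identical.
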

\begin{proof}
 In order to find the part of an OPE coefficient with the strongest divergence, we first extract the contribution of maximal scaling degree. From the scaling axiom we know that this is the contribution proportional to $r^{|c|-|a|-|b|}$ for a coefficient $\Co{i}{c}{ab}$. Let us denote the projection on the contribution of scaling degree $d$ by $Sc_d$. Thus

\begin{equation}
 \Co{i}{c}{ab}(x)=Sc_{|a|+|b|-|c|}\,\Co{i}{c}{ab}(x)+O(r^{d})
\end{equation}
with $d>|c|-|a|-|b|$. As the OPE coefficients may also contain powers of logarithms, it is not guaranteed that all terms in $Sc_{|a|+|b|-|c|}\,\Co{i}{c}{ab}(x)$ have the same divergent behavior for $x\to 0$. In order to find the dominating contribution to $Sc_{|a|+|b|-|c|}\,\Co{i}{c}{ab}(x)$ at short distances, we have to pick out the terms including the highest powers of $\log r$.

Now the claim is that this contribution containing the highest power of $\log r$ among the terms of maximal scaling degree is uniquely determined by $\square^{-1}\Co{i-1}{b}{\varphi^k a}(x)$ (provided $\Co{i-1}{b}{\varphi^k a}(x)\neq 0$). Recall from proposition \ref{prop:our solution} that $\square^{-1}\Co{i-1}{b}{\varphi^k a}(x)$ exclusively contains terms of maximal scaling degree, i.e.

\begin{equation}
 Sc_{|a|+|\varphi|-|b|}\, \Co{i}{b}{\varphi a}(x)=\square^{-1}\Co{i-1}{b}{\varphi^k a}(x)+\text{ ''ambiguities``}.
\end{equation}
To finish the proof, we show in the following that $\square^{-1}\Co{i-1}{b}{\varphi^k a}(x)$ always contains higher powers of $\log r$ than any possible ambiguity. Recall from the discussion at the beginning of this section that the ambiguities have to be in the kernel of $\square$ and in addition have to be compatible with the axioms of section \ref{subsec:axioms}. The elements in the kernel of the Laplacian are the harmonic polynomials, hence the ambiguities have to be of the form

\begin{equation}
K_J(x)=\begin{cases}
 r^JY_{JM}(\hat{x})\quad&\text{for }J\geq 0\\
r^JY_{-(J+D-2)M}(\hat{x})\quad&\text{for }J< 0
\end{cases}
\end{equation}
Since we only consider ambiguities of maximal scaling degree, we are only interested in $J=|b|-|\varphi|-|a|$. We distinguish two cases:

\begin{enumerate}
 \item $K_{(|b|-|\varphi|-|a|)}(x)$ is incompatible with the Euclidean invariance axiom. Thus, no terms of maximal scaling degree may be added to $\square^{-1}\Co{i-1}{b}{\varphi^k a}(x)$ and we have 

\begin{equation}
 Sc_{|a|+|\varphi|-|b|}\, \Co{i}{b}{\varphi a}(x)=\square^{-1}\Co{i-1}{b}{\varphi^k a}(x)
\end{equation}
in accordance with the theorem
\item $K_{(|b|-|\varphi|-|a|)}(x)$ is compatible with the Euclidean invariance axiom. Then $\square^{-1}\Co{i-1}{b}{\varphi^k a}(x)$ contains a contribution proportional to the same spherical harmonic as $K_{(|a|+|\varphi|-|b|)}(x)$, since $\square^{-1}\Co{i-1}{b}{\varphi^k a}(x)$ itself also satisfies the Euclidean invariance axiom. It follows from the definition of $\square^{-1}$ in eq.\eqref{eq:app diff eq inverse} that any contribution to $\square^{-1}\Co{i-1}{b}{\varphi^k a}(x)$ proportional to a harmonic polynomial includes at least one power of $\log r$. Since the ambiguities contain no logarithms, they diverge less rapidly.

\end{enumerate}

\end{proof}
Remark: The theorem does not imply that $\square^{-1}$ determines the short distance behavior of the complete OPE, since we excluded coefficients $\Co{i}{b}{\varphi a}$ with $\Co{i-1}{b}{\varphi^k a}=0$.

It has been mentioned a few times that our construction does not rely on any renormalization prescription. Nevertheless the ambiguities in our approach share some intriguing similarities with the well known remormalization ambiguities of standard quantum field theory.

\begin{itemize}
 \item Renormalized composite quantum fields are filtered objects by scaling dimension (as opposed to graded). Similarly, our OPE coefficients are filtered by scaling dimension in the interacting case.

\item One may change the definition of $\square^{-1}$ by introducing a complex parameter $\mu$ in every logarithmic contribution, i.e. $\log r\to \log(\mu r)$. This solution also satisfies proposition \ref{prop:our solution}. The free parameter $\mu$ is reminiscent of the choice of \emph{renormalization scale} in the standard formulation of quantum field theory (see also section \ref{subsec:comparison to alternative}).
\end{itemize}
A deeper understanding of the ambiguities and their relation to renormalization theory may be a topic for future research (see chapter \ref{sec:conclusions}).

\subsection{Construction of \texorpdfstring{$\Lo{0}{\ket{v_a}}{x}$}{L(0)a}}\label{subsec:Construction of L(0)a}

As already promised in the previous section, we conclude with the construction of the general left-representative $\mathcal{Y}(\ket{v_a},x)$ in the free theory, i.e. we give a proof of eq.\eqref{eq:model free field La} (in this section we should actually write $\mathcal{Y}_0(\ket{v_a},x)$ indicating zeroth perturbation order, but for convenience we will not do this in the following discussion). Our starting point is the left representative $\mathcal{Y}(\varphi,x)$,see eq.\eqref{eq:model free field Lvarphi 2} , which has been ''derived`` from standard quantum field theory. Remembering the identity $[\mathcal{Y}(\ket{v_a},x)]^c_b=\C{c}{ab}$, it is clear that the iterative scheme of this section, in particular eq.\eqref{eq:model perturbations limit k-1}, is exactly the needed tool for our purpose, as it allows for the construction of $\mathcal{Y}(\varphi^k,x)$ starting from $\mathcal{Y}(\varphi,x)$. It is then possible, by taking the appropriate derivatives and multiplying the right constants, to determine $\mathcal{Y}(\ket{v_a},x)$ from $\mathcal{Y}(\varphi^k,x)$. Thus, the following calculation is also a good practice for later applications of the consistency condition, eq.\eqref{eq:model perturbations limit k-1}. Let us begin with the first application of this equation:

\begin{equation}
\begin{split}
 \C{b}{\varphi^2 a}(x_2,x_3)=\lim_{x_1\to x_2}\left(\sum_c\C{c}{\varphi a}(x_2,x_3)\C{b}{\varphi c}(x_1,x_3)- \C{\mathds{1}}{\varphi \varphi}(x_1,x_2)\,\delta^b_a\right)
\end{split}
\label{eq:model perturbations Lphi^2}
\end{equation}
This is just eq.\eqref{eq:model perturbations limit} for $i=0$ with an additional counterterm due to eq.\eqref{eq:model perturbations identity}. We can rewrite this equation as

\begin{equation}
\begin{split}
 \mathcal{Y}(\varphi^2,x)=\lim_{y\to x}\left(\mathcal{Y}(\varphi,x)\mathcal{Y}(\varphi,y)- [\mathcal{Y}(\varphi,x-y)]^{\mathds{1}}_{\varphi}\, \mathds{1}\right)\, .
\end{split}
\label{eq:model perturbations Lphi^2 matrix}
\end{equation}
By the definition of $\mathcal{Y}(\varphi,x)$ in eq.\ref{eq:model free field Lvarphi 2}, we can write this product as

\begin{equation}
 \begin{split}
 \mathcal{Y}(\varphi,x)\mathcal{Y}(\varphi,y)=\sigma_D\; r_y^{-(D-2)/2}&r_x^{-(D-2)/2}\sum_{l,l'}\sum_{m,m'}\left(\frac{D-2}{2l+D-2}\right)^{1/2}\left(\frac{D-2}{2l+D-2}\right)^{1/2}\times \\
  &\left[r_x^{l'+(D-2)/2}\, Y^{l'm'}(\hat{x})\, \mathbf{b}_{l'm'}^{\dagger}+r_x^{-l'-(D-2)/2}\, Y_{l'm'}(\hat{x})\, \mathbf{b}_{l'm'}\right] \times \\ &\left[r_y^{l+(D-2)/2}\, Y^{lm}(\hat{y})\, \mathbf{b}_{lm}^{\dagger}+r_y^{-l-(D-2)/2}\, Y_{lm}(\hat{y})\, \mathbf{b}_{lm}\right]\: ,
\end{split}
\label{eq:model perturbations Lphi Lphi}
\end{equation}
where $r_x=|x|$ and $r_y=|y|$ respectively. Let us focus on the following partial sum

\begin{equation}
 \sigma_D\; r_y^{-(D-2)/2}r_x^{-(D-2)/2}\sum_{l=0}^{\infty}\sum_{m=1}^{N(l,D)}\frac{D-2}{2l+D-2}\left[r_y^{l+(D-2)/2}r_x^{-l-(D-2)/2}\, Y^{lm}(\hat{y})Y_{lm}(\hat{x})\, \mathbf{b}_{lm} \mathbf{b}_{lm}^{\dagger}\right]
\label{eq:model perturbations Lphi Lphi subsum}
\end{equation}
This sum is of particular interest, because it is the only partial sum of eq.\eqref{eq:model perturbations Lphi Lphi} with infinitely many non-vanishing contributions after taking matrix elements $\bra{v_b}\, \cdot\, \ket{v_a}$. This is due to the fact that the successive application of creation and annihilation operator with the same indices on some vector $\ket{v_a}\in V$ gives, according to the definitions \eqref{eq:model free field ladder operators1} and \eqref{eq:model free field ladder operators2}, just a prefactor of $(a_{lm}+1)$, where $a_{lm}$ is the "occupation number" of the respective mode. It is easy to convince oneself that all other contributions to $\mathcal{Y}(\varphi,x)\mathcal{Y}(\varphi,y)$ have finite matrix elements. First, consider the partial sum of eq.\eqref{eq:model perturbations Lphi Lphi} with two annihilation operators. Again, according to eq.\eqref{eq:model free field ladder operators2}, the action of an annihilation operator $\mathbf{b}_{lm}$ on a vector $\ket{v_a}$ gives a prefactor of $(a_{lm})^{1/2}$ and reduces the index $a_{lm}$ by 1. Recall that in the definition of our basis $V$ we demanded the multi-indices $a$ of the basis elements $\ket{v_a}$ to contain only a finite number of non-zero entries. Hence, the aforementioned prefactor resulting from the application of an annihilation operator makes sure that a sum of the type $\sum_{l=0}^{\infty}\mathbf{b}_{lm}\ket{v_a}$ is always finite. So if we sum over the product of two annihilation operators, both of these sums will be finite. Concerning the case of two creations operators, we can deduce from orthogonality of the basis elements that $\bra{v_b}\mathbf{b}_{lm}^{\dagger}\ket{v_a}=0$ for $a+e_{lm}\neq b$, so the sum $\sum_{l=0}^{\infty}\bra{v_b}\mathbf{b}_{lm}^{\dagger}\ket{v_a}$ has at most one non-zero summand and is thus finite. Finally, let us come to the partial sums with a pair of one annihilation and one creation operator. If the former stands to the right of the latter, we can simply use the same argumentation as in the case of two annihilation operators and find that also here no infinite sums appear. This only leaves the case where the creation operator acts before the annihilator. However, if the indices on these two operators do not coincide, it follows from the commutation relations \eqref{eq:model free field ladder operators commutation relations} that we may simply exchange their order, which leads us to the case that we just argued to be finite. Hence, the only possibility for infinities to appear in an arbitrary matrix element of eq.\eqref{eq:model perturbations Lphi Lphi} is the matrix element $\bra{v_a}\,\cdot\, \ket{v_a}$ of the partial sum in eq.\eqref{eq:model perturbations Lphi Lphi subsum}.

We can further simplify the analysis of this infinite sum by exploiting the commutation relation of the ladder operators also in this case, which suggests that in addition to the term where the order of the two operators is switched we pick up a term with the identity operator replacing the pair of ladder operators, i.e.

\begin{equation}
 \mathbf{b}_{lm}\mathbf{b}_{lm}^{\dagger}=\mathbf{b}_{lm}^{\dagger}\mathbf{b}_{lm}+\, id
\end{equation}
 Then, the first term is finite again, since now the annihilation operator acts first, and the remaining infinite sum has the form of eq.\eqref{eq:model perturbations Lphi Lphi subsum} with the ladder operators replaced by the identity. We now want to find a closed form expression for this sum. Using the addition theorem for the $D$-dimensional spherical harmonics (see eq.\eqref{eq:app spherical symm arbitrary addition theorem}),

\begin{equation}
 \sum_m Y_{lm}(\hat{x})Y^{lm}(\hat{y})=\frac{N(l,D)}{\sigma_D}P_l(D,\hat{x}\cdot \hat{y})\,
\label{eq:model perturbations addition theorem}
\end{equation}
where $P_l(D,\hat{x})$ is the $D$-dimensional Legendre polynomial and $N(l,D)$ and $\sigma_D$ are defined as in section \ref{subsec:free field}, we can perform the sum over $m$ and arrive at the formula

\begin{equation}
 r_x^{-(D-2)}\sum_{l=0}^{\infty}\left(\atop{l+D-3}{l}\right)\, \left(\frac{r_y}{r_x}\right)^l\, P_l(D,\hat{x}\cdot \hat{y})
\label{eq:model perturbations Lphi Lphi subsum 2}
\end{equation}

 In view of eq.\eqref{eq:model perturbations Lphi^2 matrix}, we are interested in the limit $y\to x$ in this expression, so for convenience we may chose $x$ and $y$ collinear, i.e. $\hat{x}=\hat{y}$. As the Legendre polynomials are normalized, $P_l(D,1)=1$, the equation further simplifies

\begin{equation}
 r_x^{-(D-2)}\sum_{l=0}^{\infty}\left(\atop{l+D-3}{l}\right)\, \left(\frac{r_y}{r_x}\right)^l= r_x^{-(D-2)}\sum_{l=0}^{\infty} C_l^{\left(\frac{D-2}{2}\right)}(1)\, \left(\frac{r_y}{r_x}\right)^l\, .
\label{eq:model perturbations Lphi Lphi subsum 3}
\end{equation}
Here $C^{\nu}_l(x)$ are the Gegenbauer polynomials (see e.g. \cite{Erdelyi1953} or \cite{Abramowitz1965}) and should not be confused with an OPE coefficient. It is useful to write the equation in this form, because the generating function of the Gegenbauer polynomials is well known:

\begin{equation}
 \sum_{l=0}^{\infty} C_l^{\nu}(z)\, h^n=(1-2hz+h^2)^{-\nu}\qquad \text{for } |h|<|z\pm (z^2-1)^{1/2}|
\label{eq:model perturbations gegenbauer generating function}
\end{equation}
Since we know that $r_x>r_y$ in our case, we can apply this identity to eq.\eqref{eq:model perturbations Lphi Lphi subsum 3} with the result

\begin{equation}
 r_x^{-(D-2)}\, \left(1-\frac{r_y}{r_x}\right)^{-(D-2)}=\left(\frac{1}{r_x-r_y}\right)^{D-2}\, ,
\label{eq:model perturbations Lphi Lphi subsum 4}
\end{equation}
so this partial sum of eq.\eqref{eq:model perturbations Lphi Lphi} is in fact divergent in the limit $y\to x$. Hence, there should either be another divergent part of this sum with opposite sign, or the "counterterm" has to cancel the divergence. Our formula for $\mathcal{Y}(\varphi,x)$, eq.\eqref{eq:model free field Lvarphi 2}, yields for the counterterm 

\begin{equation}
\C{\mathds{1}}{\varphi \varphi}(x-y)=|x-y|^{-(D-2)}\, ,
 \label{eq:model perturbations Lphi Lphi counterterm}
\end{equation}
which is indeed equal to eq.\eqref{eq:model perturbations Lphi Lphi subsum 4} in the limit $y\to x$. As there are no additional counterterms, and as the rest of eq.\eqref{eq:model perturbations Lphi Lphi} is finite, the intrinsic renormalization procedure has indeed worked out. Note that we can reduce the calculational effort drastically if we just drop the counterterm together with the partial sum considered above from eq.\eqref{eq:model perturbations Lphi^2 matrix}. This can be achieved in a very elegant manner, namely by \emph{normal ordering}, i.e. by simply rearranging the order of the ladder operators in eq.\eqref{eq:model perturbations Lphi Lphi} in such a way that the annihilation operators always act first. If these operators are commuting, nothing changes in this trivial process. Hence the only case where this procedure actually manifests itself is if we put products of the form $\mathbf{b}_{lm}\mathbf{b}_{lm}^{\dagger}$ into normal order. Usually we would have obtained an additional term with the ladder operators replaced by the identity, if we were to exchange the order of this expression \emph{by hand}, due to the commutation relations. If we require normal ordering, this extra term is neglected. Now recall form eq.\eqref{eq:model perturbations Lphi Lphi subsum} and the subsequent discussion that it was precisely this extra term that was responsible for the divergence, which canceled with the counterterm. Hence, we may simply forget about this extra term and the counterterm altogether, which is expressed in the following formula:

\begin{equation}
 \mathcal{Y}(\varphi^2,x)=\lim_{y\to x}\left(:\mathcal{Y}(\varphi,x)\mathcal{Y}(\varphi,y):\right)=\, :\Big(\mathcal{Y}(\varphi,x)\Big)^2:
\label{eq:model perturbations Lphi^2 matrix normal order}
\end{equation}
As above, double dots denote normal ordering.

This procedure can again be iterated. Let us proceed with the analog of equation \eqref{eq:model perturbations Lphi^2 matrix} for the next left representative

\begin{equation}
 \mathcal{Y}(\varphi^3,x)=\lim_{y\to x}\Big[\mathcal{Y}(\varphi,x)\mathcal{Y}(\varphi^2,y)-\sum_{e\neq \varphi^3}^{|e|\leq |\varphi^3|}[\mathcal{Y}(\varphi,x-y)]^{e}_{\varphi}\mathcal{Y}(\ket{v_e},x)\Big]\, ,
\label{eq:model perturbations Lphi^3 matrix}
\end{equation}
which follows from eq.\ref{eq:model perturbations limit 2}. The sum over $e$ can be further specified by noting that the restrictions $|e|\leq |\varphi^3|$ and $e\neq \varphi^3$ constrain $e$ to be of the form $\mathds{1}$, $\varphi_{lm}$ with $l\leq (D-2)$ or $\varphi_{lm}\varphi_{l'm'}$ with $l+l'\leq (D-2)/2$. Eq.\eqref{eq:model perturbations Lphi^2 matrix normal order} suggests that $\mathcal{Y}(\varphi^2,x)$ acts by two ladder operators, so it follows that only the option $e=\varphi_{lm}$ gives a non-vanishing OPE coefficient $C^{e}_{\varphi^2\, \varphi}$, since in the other two cases no combination of the two ladder operators can transform $\varphi$ into $e$. Therefore, the above equation may be simplified to

\begin{equation}
 \mathcal{Y}(\varphi^3,x)=\lim_{y\to x}\Big[\mathcal{Y}(\varphi,x)\mathcal{Y}(\varphi^2,y)-\sum_{l=0}^{D-2}\, C^{\varphi_{lm}}_{\varphi^2\, \varphi}(x-y)\mathcal{Y}(\varphi_{lm},x)\Big]\, ,
\label{eq:model perturbations Lphi^3 matrix 2}
\end{equation}
Now let us see what happens is we again just prescribe normal ordering to the product of the left representatives $\mathcal{Y}(\varphi^2,y)=\, :[\mathcal{Y}(\varphi,y)]^2:$ and $\mathcal{Y}(\varphi,x)$. This time we encounter products of three ladder operators, two of which are already normal ordered. Normal ordering is trivial except for terms of the form  $:\mathbf{b}_{l'm'}^{\dagger}\mathbf{b}_{lm}:\mathbf{b}_{lm}^{\dagger}$ or $:\mathbf{b}_{l'm'}\mathbf{b}_{lm}:\mathbf{b}_{lm}^{\dagger}$. Ignoring the sum over the primed indices in these expressions, we can carry out the sums over $m$ and $l$ just as we did in the calculations leading to eq.\eqref{eq:model perturbations Lphi^2 matrix normal order}. Therefore, the partial sum containing all expressions of the form just mentioned can be simplified to

\begin{equation}
 2(r_x-r_y)^{-(D-2)}\mathcal{Y}(\varphi,y)\, ,
\label{eq:model perturbations Lphi^3 partial sum}
\end{equation}
where the term in brackets results from the sum over $l$ and $m$, $\mathcal{Y}(\varphi,y)$ represents the sum over $l'$ and $m'$ and the additional factor 2 comes into the equation, because the sums over expressions including $:\mathbf{b}_{l'm'}^{\dagger}\mathbf{b}_{lm}:\mathbf{b}_{lm}^{\dagger}$ or $:\mathbf{b}_{lm}\mathbf{b}_{l'm'}^{\dagger}:\mathbf{b}_{lm}^{\dagger}$ are the same due to normal ordering. Since $\mathcal{Y}(\varphi,x)$ is analytic around $y=x$, we may perform a Taylor expansion

\begin{equation}
  2(r_y-r_x)^{-(D-2)}\mathcal{Y}(\varphi,y)= 2(r_y-r_x)^{-(D-2)}\sum_{l}\frac{(y-x)^l}{l!}\partial^l\mathcal{Y}(\varphi,x)
\label{eq:model perturbations Lphi^3 partial sum taylor}
\end{equation}
 But this is exactly equal to the counterterms $C^{\varphi_{lm}}_{\varphi^2\, \varphi}(x-y)\mathcal{Y}(\varphi_{lm},x)$ in the limit $x\to y$, so we can drop all these terms, require normal ordering and write

\begin{equation}
 \mathcal{Y}(\varphi^3,x)=:\mathcal{Y}(\varphi,x) \mathcal{Y}(\varphi,x) \mathcal{Y}(\varphi,x):\quad .
\label{eq:model perturbations Lphi^3 result}
\end{equation}
The generalization of this result to arbitrary powers uses the same argumentation, and it is then a straightforward calculation to verify the formula for the general left representative $\mathcal{Y}(\ket{v_a},x)$ given in eq.\eqref{eq:model free field La}.

\section{Diagrammatic notation}\label{subsec:diagrammatic notation}

The algorithm presented in the previous section can be neatly visualized in terms of rooted trees, which should not be confused with the trees we used in chapter \ref{subsec:coherence theorem}. These trees will help us keeping track of all the infinite sums and subsequent inversions of the Laplace operator, which are the main steps in the scheme of the previous section. They might further be helpful to distinguish emerging patterns in our construction. In this section we introduce this diagrammatic notation, defining the explicit correspondence between parts of the graphs and the expressions appearing in the scheme of section \ref{subsec:perturbation via field equation}. Its usefulness will become clear in the following sections, where it will be widely applied.

Let $V$ be the Fock space defined in section \ref{subsec:free field} and let (see also appendix \ref{app:characteristic diff eq})

\begin{equation}
\mathbb{Y}(x)=\mathbb{C}\llbracket r,r^{-1},\log r \rrbracket\otimes \{Y_n(\hat{x};D)\}\otimes\End(V)\quad.
\end{equation}\index{symbols}{y@$\mathbb{Y}(x)$}
Then the following linear map defines our diagrammatic notation:

\begin{definition}[Diagrammatic notation]
{\ \\} Let $\mathbf{T}$ be  a rooted tree. To each leaf of this tree assign a label of the form $\pm lm$, where $l\in\mathbb{N}$ and $m\in\{1,\ldots,N(l,D)\}$ and $N(l,D)$ is defined as in section \ref{subsec:free field}. Let us call these labeled trees $\mathbf{T}^l$. Further, let $x\in\mathbb{R}^{D}$, $r=|x|$ and $\hat{x}=x/r$ as in the previous section.
Then we define the linear map $\mathcal{G}_x:\mathbf{T}^l\to\mathbb{Y}(x)$ by the following rules:\\

\begin{tikzpicture}
 \node at (5,0) [coordinate]{} child[level distance=.5cm]{[fill] circle (2pt) node[below]{\begin{tiny}$+lm$\end{tiny}}} child[fill=none]{edge from parent [draw=none]};
 \node at (14,-.5)  {$\sigma_D\left(\frac{D-2}{2l+D-2}\right)^{1/2}r^{l}\, Y^{lm}(D;\hat{x})\, \mathbf{b}_{lm}^{\dagger}$};
 \draw[->, very thick] (7.5,-.5) -- node[above]{$\mathcal{G}_x$} (9,-.5);
\end{tikzpicture}

\begin{tikzpicture}
 \node at (5,0) [coordinate]{} child[level distance=.5cm]{[fill] circle (2pt) node[below]{\begin{tiny}$-lm$\end{tiny}}} child[fill=none]{edge from parent [draw=none]};
 \node at (14,-.5)  {$\sigma_D\left(\frac{D-2}{2l+D-2}\right)^{1/2}r^{-l-(D-2)}\, Y_{lm}(D;\hat{x})\, \mathbf{b}_{lm}$};
 \draw[->, very thick] (7.5,-.5)-- node[above]{$\mathcal{G}_x$}(9,-.5);
\end{tikzpicture}

\begin{tikzpicture}
\tikzstyle{level 3}=[level distance=.8cm, sibling distance=.5cm]
\tikzstyle{level 4}=[level distance=.4cm, sibling distance=.25cm]

 \node[coordinate] at (5,0){} child[dotted,level distance=.5cm]{child[solid,level distance=.5cm]{[fill] circle (2pt) child{child[dotted]   child[fill=none]{edge from parent [draw=none]} child[fill=none]{edge from parent [draw=none]} child[fill=none]{edge from parent [draw=none]} child[fill=none]{edge from parent [draw=none]}}   child{ node[coordinate](a){} child[fill=none]{edge from parent [draw=none]} child[dotted]  child[fill=none]{edge from parent [draw=none]} child[fill=none]{edge from parent [draw=none]} child[fill=none]{edge from parent [draw=none]} }   child[fill=none]{edge from parent [draw=none]} child[fill=none]{edge from parent [draw=none]} child{node[coordinate](b){} child[fill=none]{edge from parent [draw=none]} child[fill=none]{edge from parent [draw=none]} child[fill=none]{edge from parent [draw=none]} child[fill=none]{edge from parent [draw=none]} child[dotted] } }};
 \draw[dotted, shorten <= 10pt, shorten >= 10pt,very thick] (a.north east) -- (b.north west);
 \node at (14,-1)  {$\square^{-1}(\text{''terms associated to ingoing edges``})$};
\draw[->, very thick] (7,-1)--node[above]{$\mathcal{G}_x$}(8.5,-1);
\end{tikzpicture}\\
Here by $\square^{-1}$ we mean inversion of the Laplace operator (see appendix \ref{app:characteristic diff eq}). The order of the ladder operators $\mathbf{b}$ is according to the corresponding labels on the trees, i.e. if the label $+lm$ is left of the label $-l'm'$, then the operator $\mathbf{b}_{lm}^{\dagger}$ is left of $\mathbf{b}_{l'm'}$.

\end{definition}
Let us explain these relations: By the first two assignments, the leaves of the tree are closely related to $\mathcal{Y}_0(\varphi,x)$, see eq.\eqref{eq:model free field Lvarphi 2}. In fact, they are either the part of $\mathcal{Y}_0(\varphi,x)$ containing the annihilation operator, or the part containing the creation operator. This is where one starts to calculate the expression corresponding to a tree. Whenever edges meet at a vertex, the outgoing (parent) edge is recursively determined by its children, i.e. the ingoing edges, by an \emph{inversion} of the Laplace operator on the expressions associated to these ingoing edges (see appendix \ref{app:characteristic diff eq} for more information on this inversion). It should be noted that the root is not a vertex (although incoming edges meet), so no differential equation must be solved there. 

As an example, let us write the left representative $\mathcal{Y}(\varphi^2,x)$ in this notation (compare eqs. \ref{eq:model perturbations Lphi^2 matrix normal order} and \ref{eq:model free field Lvarphi 2}):

\begin{equation}
\begin{tikzpicture}

\node at (0,0) {$ \mathcal{Y}(\varphi^2,x)=\sum\limits_{l,l'}\sum\limits_{m,m'}\, :\mathcal{G}_x\Big($};
\node[coordinate] at (3,.7) {} child[level distance=1cm]{[fill] circle (2pt) node[below]{\begin{tiny}$+lm$\end{tiny}}} child[level distance=1cm]{[fill] circle (2pt) node[below]{\begin{tiny}$+l'm'$\end{tiny}}};
\node[coordinate] at (5.5,.7) {} child[level distance=1cm]{[fill] circle (2pt) node[below]{\begin{tiny}$-lm$\end{tiny}}} child[level distance=1cm]{[fill] circle (2pt) node[below]{\begin{tiny}$-l'm'$\end{tiny}}};
\node[coordinate] at (8,.7) {} child[level distance=1cm]{[fill] circle (2pt) node[below]{\begin{tiny}$+lm$\end{tiny}}} child[level distance=1cm]{[fill] circle (2pt) node[below]{\begin{tiny}$-l'm'$\end{tiny}}};
\node[coordinate] at (10.5,.7) {} child[level distance=1cm]{[fill] circle (2pt) node[below]{\begin{tiny}$-lm$\end{tiny}}} child[level distance=1cm]{[fill] circle (2pt) node[below]{\begin{tiny}$+l'm'$\end{tiny}}};
\node at (4.25,0){$+$};
\node at (6.75,0){$+$};
\node at (9.25,0){$+$};
\node at (12,0){$\Big):$};
\end{tikzpicture}
\label{eq:model diagrams Lphi^2}
\end{equation}
Double dots again denote normal ordering, which in this case means that in the last tree the leaves are switched if the indices coincide, i.e. if $(l,m)=(l',m')$. In explicit calculations, such relations will come in handy, especially at higher perturbation orders. Some additional conventions will also be helpful in more complicated examples:

\begin{itemize}
\item from here on we use the following shorthand notation (analogously for more complicated trees):

\begin{equation}
\begin{tikzpicture}
\node at (1.5,0) {$ \mathcal{G}_x\Big($};
\node[coordinate] at (2.8,.7) {} child[level distance=1cm]{[fill] circle (2pt) node[below]{\begin{tiny}$+lm$\end{tiny}}} child[level distance=1cm]{[fill] circle (2pt) node[below]{\begin{tiny}$+l'm'$\end{tiny}}};
\node at (5.5,.7) {$x$} child[level distance=1cm]{[fill] circle (2pt) node[below]{\begin{tiny}$+lm$\end{tiny}}} child[level distance=1cm]{[fill] circle (2pt) node[below]{\begin{tiny}$+l'm'$\end{tiny}}};
\node at (4.25,0){$\Big)=$};
\end{tikzpicture}
\end{equation}

\item whenever we draw a tree with unlabeled leaves, it is understood that we sum over all possible labels for those leaves
\item whenever two leaves are connected (''from below``) by a line with an arrow, then the labels of these leaves are \emph{contracted}; more precisely we attach the label $-(lm)$ to the leaf that is pointed at by the arrow, attach the label $+(lm)$ to the leaf at the other end of the line and sum over $l$ and $m$

\begin{equation}
\begin{tikzpicture}
  \fill (0,0) circle (2pt) node(1) {};
 \fill (2,0) circle (2pt) node(2){};
\path [<-, draw, thick] (1) -- +(0,-.5) -| (2);
\path [-, draw] (0,0) --  (.2,.5);
\path [-, draw,dotted] (.2,.5) -- (.3,.75);
\path [-, draw] (2,0) --  (2,.5);
\path [-, draw,dotted] (2,.5) -- (2,.75);
\path [-, draw,dotted] (.75,0) -- (1.25,0);
\node at (4,0) {$=\quad \sum\limits_{l,m}$};
\fill (6,0) circle (2pt) node[below] {\begin{tiny}-$(lm)$\end{tiny}};
 \fill (8,0) circle (2pt) node[below]{\begin{tiny}$+(lm)$\end{tiny}};
\path [-, draw] (6,0) --  (6.2,.5);
\path [-, draw,dotted] (6.2,.5) -- (6.3,.75);
\path [-, draw] (8,0) --  (8,.5);
\path [-, draw,dotted] (8,.5) -- (8,.75);
\path [-, draw,dotted] (6.75,0) -- (7.25,0);
 \end{tikzpicture} 
\end{equation}

\item if two leaves are connected by a line from below and if this line is dashed, then we contract the labels of these leaves, but replace the corresponding ladder operators by the identity operator $id$ on $V$
\end{itemize}
These conventions substantially simplify the diagrammatic notation, e.g. eq.\eqref{eq:model diagrams Lphi^2} may now be expressed as

\begin{equation}
 \begin{tikzpicture}
  \node at (0,0) {$ \mathcal{Y}(\varphi^2,x)=\quad:$};
\node at (2,.5) {$x$} child[level distance=1cm]{[fill] circle (2pt) } child[level distance=1cm]{[fill] circle (2pt) };
\node at (3.5,0) {$ :$};
 \end{tikzpicture}
\end{equation}
More complicated examples can be found in the following sections.

\section{Low order computations: Next to leading order}\label{subsec:low orders}

From now on we focus on 3-dimensional scalar, massless quantum field theory with $\varphi^6$-interaction (see \eqref{eq:model perturbations equation of motion}). This allows for more explicit results and facilitates calculations. The choice to work on a 3-dimensional spacetime was made in order to keep representation theory simple, as the 3-dimensional rotation group SO$(3)$ is familiar and well understood. The exponent of the interaction term in the field equation was chosen in such a way that the coupling constant is dimensionless.

\subsection{Construction of \texorpdfstring{$\Lo{1}{\varphi}{x}$}{L(1)phi}} \label{subsubsec:L(1)phi}

Recall the results of the previous two sections: In sec. \ref{subsec:free field} we constructed the complete left representation, and hence the complete quantum field theory, for the non-interacting model obeying a field equation of the form \eqref{eq:model free field field equation}. Secondly, in sec. \ref{subsec:perturbation via field equation}, we presented an iterative scheme that allows for the perturbative construction of an interacting quantum field theory, with a non-homogeneous field equation of the form \eqref{eq:model perturbations equation of motion}, given that the corresponding free theory is known (more precisely, we actually only need the left representative $\mathcal{Y}_0(\varphi,x)$ as starting point of the iteration). Combining the information from these chapters, it is clear that we should be able to construct perturbations of the free theory in our setting. The first step beyond leading perturbation order is the calculation of the left representative $\mathcal{Y}_1(\varphi,x)$ according to our algorithm, which is the aim of this section.

Remember from section \ref{subsec:perturbation via field equation} that the field equation can be exploited in order to find relations between OPE coefficients of different perturbation orders, see eq.\eqref{eq:model perturbations relations}. The corresponding relation connecting first order coefficients to those of the free theory is of the form

\begin{equation}
 \square\Co{1}{b}{\varphi a}(x)=\Co{0}{b}{\varphi^k b}(x)\, .
\label{eq:model NLO Lphi connection general}
\end{equation}
As mentioned above, for calculational ease we will in the following consider the specific theory with $\varphi^6$-interaction, i.e. $k=5$ in the above formula, in 3 dimensions. In this context we have the equation

\begin{equation}
 \Delta\Co{1}{b}{\varphi a}(x)=\Co{0}{b}{\varphi^5 b}(x)\, 
\label{eq:model NLO Lphi connection specific}
\end{equation}
where $\Delta=\del^2_{x_1}+\del^2_{x_2}+\del^2_{x_3}$\index{symbols}{Delta@$\Delta$} is the familiar three dimensional Laplace operator. In terms of left representatives, this equation implies

\begin{equation} 
\Delta \mathcal{Y}_1(\varphi,x)=\mathcal{Y}_0(\varphi^5,x)\, .
\label{eq:model NLO Lphi connection specific L}
\end{equation}
Therefore, to reach the aim of this section, all we have to do is solve this differential equation. Let us first recall our results for the free theory from section \ref{subsec:free field}. In $D=3$ dimensions, the leading order of our theory is characterized by the equations

\begin{equation}
 \mathcal{Y}_0(\varphi,x)=\sum_{l=0}^{\infty}\sum_{m=-l}^{l}\left(\frac{4\pi}{2l+1}\right)^{1/2} \left[r^{l}\, Y^{lm}(\hat{x})\, \mathbf{b}_{lm}^{\dagger}+r^{-(l+1)}\, Y_{lm}(\hat{x})\, \mathbf{b}_{lm}\right]\: ,
\label{eq:model NLO Lphi Lvarphi0}
\end{equation}
which is the $D=3$ case of the general expression for $\mathcal{Y}_0(\varphi,x)$ in eq.\eqref{eq:model free field Lvarphi 2}, and
\begin{equation}
 \mathcal{Y}_0(\ket{v_a},x)=\: :\prod_{l,m}\frac{1}{(a_{lm}!)^{1/2}}[t_{lm}\, \del^l\mathcal{Y}_0(\varphi,x)]^{a_{lm}} :\, .
\label{eq:model NLO Lphi La0}
\end{equation}
An even more compact form of eq.\eqref{eq:model NLO Lphi Lvarphi0} can be obtained with the help of the \emph{modified} spherical harmonics defined as

\begin{equation}
 S_{lm}(\hat{x}):=\left(\frac{4\pi}{2l+1}\right)^{1/2}\, Y_{lm}(\hat{x})\, ,
\label{eq:model NLO Lphi modified spherical harmonics}
\end{equation}
see \cite{Brink1962} and eq.\eqref{eq:app spherical symm D=3 spher harm modified}, which gives
\begin{equation}
 \mathcal{Y}_0(\varphi,x)=\sum_{l=0}^{\infty}\sum_{m=-l}^{l} \left[r^{l}\, S^{lm}(\hat{x})\, \mathbf{b}_{lm}^{\dagger}+r^{-(l+1)}\, S_{lm}(\hat{x})\, \mathbf{b}_{lm}\right]\: .
\label{eq:model NLO Lphi Lvarphi0 kompakt}
\end{equation}
In view of eq.\eqref{eq:model NLO Lphi connection specific L}, the left representative $\mathcal{Y}_0(\varphi^5,x)$ is of special interest to us. According to the equations above, it takes the form

\begin{equation}
\begin{split}
 &\Lo{0}{\varphi^5}{x}=\sum_{l_1,\ldots,l_5}\sum_{m_1,\ldots, m_5}\Big( r^{l_1+\ldots+l_5}\cdot S^{l_1m_1}(\hat{x})\cdots S^{l_5m_5}(\hat{x})\cdot\mathbf{b}_{l_1m_1}^{\dagger}\cdots\mathbf{b}_{l_5m_5}^{\dagger}\\
&+5\cdot r^{l_1+\ldots+l_4-l_5-1}\cdot S^{l_1m_1}(\hat{x})\cdots S^{l_4m_4}(\hat{x}) S_{l_5m_5}(\hat{x})\cdot\mathbf{b}_{l_1m_1}^{\dagger}\cdots\mathbf{b}_{l_4m_4}^{\dagger}\mathbf{b}_{l_5m_5}\\
&+10\cdot r^{l_1+l_2+l_3-l_4-l_5-2}\cdot S^{l_1m_1}(\hat{x})S^{l_2m_2}(\hat{x}) S^{l_3m_3}(\hat{x})S_{l_4m_4}(\hat{x}) S_{l_5m_5}(\hat{x})\cdot\mathbf{b}_{l_1m_1}^{\dagger}\mathbf{b}_{l_2m_2}^{\dagger}\mathbf{b}_{l_3m_3}^{\dagger}\mathbf{b}_{l_4m_4}\mathbf{b}_{l_5m_5}\\
&+10\cdot r^{l_1+l_2-l_3-l_4-l_5-3}\cdot S^{l_1m_1}(\hat{x})S^{l_2m_2}(\hat{x}) S_{l_3m_3}(\hat{x})S_{l_4m_4}(\hat{x}) S_{l_5m_5}(\hat{x})\cdot\mathbf{b}_{l_1m_1}^{\dagger}\mathbf{b}_{l_2m_2}^{\dagger}\mathbf{b}_{l_3m_3}\mathbf{b}_{l_4m_4}\mathbf{b}_{l_5m_5}\\
&+5\cdot r^{l_1-l_2-\ldots-l_5-4}\cdot S^{l_1m_1}(\hat{x})S_{l_2m_2}(\hat{x})\cdots  S_{l_5m_5}(\hat{x})\cdot\mathbf{b}_{l_1m_1}^{\dagger}\mathbf{b}_{l_2m_2}\cdots\mathbf{b}_{l_5m_5}\\
&+r^{-l_1-\ldots-l_5-5}\cdot S_{l_1m_1}(\hat{x})\cdots S_{l_5m_5}(\hat{x})\cdot\mathbf{b}_{l_1m_1}\cdots\mathbf{b}_{l_5m_5}\Big)\, .
\end{split}
\label{eq:model NLO Lphi Lvarphi5 0}
\end{equation}
Here normal ordering has already been carried out, which explains the numerical prefactors in lines 2-5, as e.g. the sums over $:\mathbf{b}_{l_1m_1}^{\dagger}\cdots\mathbf{b}_{l_4m_4}^{\dagger}\mathbf{b}_{l_5m_5}:$ and $:\mathbf{b}_{l_1m_1}^{\dagger}\cdots\mathbf{b}_{l_4m_4}\mathbf{b}_{l_5m_5}^{\dagger}:$ are equivalent after normal ordering. In our graphical notation, this could simply be written as

\begin{equation}
\begin{tikzpicture}
 \node at (0,0) {$\Lo{0}{\varphi^5}{x}=\quad :$};
 \node at (4.3,.9) {$x$} child {[fill] circle (2pt) } child {[fill] circle (2pt) } child {[fill] circle (2pt) } child {[fill] circle (2pt) } child {[fill] circle (2pt) };
\node at (7.5,0) {$:$};
\end{tikzpicture}
\label{eq:model NLO Lphi Lvarphi5 0 baum}
\end{equation}
where we remind the reader that summation over all possible labels for the 5 unlabeled leaves is implicitly assumed.

Before solving the differential equation \eqref{eq:model NLO Lphi connection specific L}, we want to put $\Lo{0}{\varphi^5}{x}$ into a form better suited for this task. The main obstacle in trying to invert the Laplace operator on eq.\eqref{eq:model NLO Lphi Lvarphi5 0} is the product of the spherical harmonics. The differential equation is simplified decisively if these products are decomposed into their irreducible parts, i.e. if we "couple" the spherical harmonics using as intertwiners the familiar \emph{Clebsch-Gordan} coefficients (or, equivalently, \emph{Wigner $3j$-symbols}). For details on the representation theory of the 3-dimensional rotation group, see appendix \ref{app:3-dim symmetries} and references therein. The identity we are looking for is eq.\eqref{eq:app spherical symm D=3 spherical harmonics coupling completely general}, which was derived in the appendix, and we repeat it here for reference:

\begin{equation}
\begin{split}
 S^{l_1m_1}(\hat{x})\cdots S^{l_pm_p}(\hat{x})&S_{l_{p+1}m_{p+1}}(\hat{x})\cdots S_{l_qm_q}(\hat{x})=\\
 &\sum_J T[+(l_1m_1),\ldots +(l_pm_p),-(l_{p+1}m_{p+1}),\ldots. -(l_qm_q)]_{JM}\, S_{JM}(\hat{x})
\end{split}
\label{eq:app spherical symm D=3 spherical harmonics coupling completely general ref}
\end{equation}
Here $T$ is a tensor, which contains the consecutive application of the intertwiners as defined in eqs.\eqref{eq:app spherical symm D=3 spherical harmonics coupling tensor T} and \eqref{eq:app spherical symm D=3 spherical harmonics coupling tensor T general}. Before we apply this equation on the product of the spherical harmonics in eq.\eqref{eq:model NLO Lphi Lvarphi5 0}, let us first introduce some abbreviations:

\begin{definition}[Abbreviated indices]
Let $q,i,l_i\in\mathbb{N}$ and $m_i\in\{-l_i,\ldots,l_i\}$. Then we define the abbreviated notation

\begin{equation}
 \mathfrak{l}_i^q=\left\{\begin{array}{ll} +(l_im_i)\quad &\text{for }i> q \\  -(l_im_i) &\text{for }i\leq q   \end{array}\right.\quad.
\label{eq:model NLO Lphi C1phi tree index}
\end{equation} 
Further, let $\mathbf{b}_{+(lm)}:=\mathbf{b}_{lm}^{\dagger}$ and $\mathbf{b}_{-(lm)}:=\mathbf{b}_{lm}$, $\sum\limits_{\mathfrak{l}^q_i=a}^b=\sum\limits_{l_i=a}^b\sum\limits_{m_i=-l_i}^{l_i}$ and $e_{\pm(lm)}=\pm e_{lm}$ (recall that $e_{lm}$ is the multiindex with unit entry at ''position`` $(lm)$).
\end{definition}\index{symbols}{liq@$\mathfrak{l}^q_i$}
With this notation and representation theoretic machinery, we can put eq.\eqref{eq:model NLO Lphi Lvarphi5 0} into the convenient form

\begin{equation}
 \Lo{0}{\varphi^5}{x}=\sum_{q=0}^5\sum_{\mathfrak{l}^q_1,\ldots,\mathfrak{l}^q_5=0}^{\infty}\sum_J \left(\atop{5}{q}\right) \, r^{(-l_1-\ldots-l_q+l_{q+1}+\ldots+l_5-q)}\cdot T[\mathfrak{l}^q_1,\ldots,\mathfrak{l}^q_5]_{JM}S_{JM}(\hat{x})\, \mathbf{b}_{\mathfrak{l}^q_5}\cdots\mathbf{b}_{\mathfrak{l}^q_1}\quad ,
\label{eq:model NLO Lphi Lvarphi5 0 irrep}
\end{equation}
which is much better suited for our purpose, i.e. for solving the differential equation \eqref{eq:model NLO Lphi connection specific L}, since now we have expressed $\Lo{0}{\varphi^5}{x}$ as an element of the ring of functions $\mathbb{Y}(x)$ defined in eq.\eqref{eq:app diff eq ring Y}. Assuming that any left representative can be written as an element of this ring, a solution to the differential equation is simply found by application of the operator $\square^{-1}\in\End(\mathbb{Y}(x))$, which we call $\Delta^{-1}$\index{symbols}{Deltainv@$\Delta^{-1}$} in three spacetime dimensions. Recall, however, that this solution is not unique. For a discussion of these ambiguities see section \ref{subsec:ambiguities}.

Before we give the result for the left representative $\Lo{1}{\varphi}{x}$, let us first introduce some more notation for convenience:

\begin{definition}[Gradings of left representatives]{\ \\}
 The decomposition of an arbitrary left representative $\Lo{n}{\ket{v_a}}{x}$ with respect to its dimension is written as

\begin{equation}
\Lo{n}{\ket{v_a}}{x}=:\sum_{d=-\infty}^{\infty}\, \Lo{n}{\ket{v_a}}{x;d}\cdot r^d\, ,
\label{eq:model NLO Lphi grading dimension d}
\end{equation}
while the grading by irreducible representations of the rotation group SO$(3)$ is introduced as

\begin{equation}
\Lo{n}{\ket{v_a}}{x}=:\sum_{J=0}^{\infty}\sum_{M=-J}^J\, \Lo{n}{\ket{v_a}}{x}_{JM}\cdot S_{JM}(\hat{x})\, .
\label{eq:model NLO Lphi grading irrep J}
\end{equation}\index{symbols}{Yidj@$\Lo{n}{\ket{v_a}}{x;d}_J$}

\end{definition}
Then we finally obtain $\Lo{1}{\varphi}{x}$ by application of the operator $\Delta^{-1}$, i.e. eq.\eqref{eq:app diff eq inverse} in $D=3$ dimensions, to $\Lo{0}{\varphi^5}{x}$ in the form of eq.\eqref{eq:model NLO Lphi Lvarphi5 0 irrep}. Using the gradings introduced above, we find:

\begin{results}[The left representative $\Lo{1}{\varphi}{x}$]
\begin{equation}
 \Lo{1}{\varphi}{x}=
\sum_{d=-\infty}^{\infty}\sum_{J=0}^{\infty}\sum_{M=-J}^J\Lo{0}{\varphi^5}{x;d}_{JM}\cdot r^{d+2} S_{JM}(\hat{x})\cdot D(d+2,J,r)   \, ,
\label{eq:model NLO Lphi L1phi}
\end{equation}
\end{results}
where we defined (see also eq.\eqref{eq:app diff eq D} for $q=0$ and $D=3$)

\begin{equation}
 D(d,J,r):=\left\{\begin{array}{cl}
 \frac{1}{d(d+1)-J(J+1)}\quad & \text{for }\min\{|d|,|d+1|\}\neq J\\ & \\
 \frac{\log r}{2d+1} & \text{for }\min\{|d|,|d+1|\}= J
\end{array}
\right.
\label{eq:model NLO Lphi D}
\end{equation}\index{symbols}{D0@$D(d,J,r), D^{(0)}(d,J,r)$}
for the sake of brevity. The diagram corresponding to this equation is simply 

\begin{equation}
\begin{tikzpicture}
 \node at (0,0) {$\Lo{1}{\varphi}{x}=\quad :$};
 \node at (3.5,.9) {$x$} child[level distance=.5cm]{ [fill] circle (2pt) child[level distance=1cm, sibling distance=1cm] {[fill] circle (2pt) } child[level distance=1cm, sibling distance=1cm] {[fill] circle (2pt) } child[level distance=1cm, sibling distance=1cm] {[fill] circle (2pt) } child[level distance=1cm, sibling distance=1cm] {[fill] circle (2pt) } child[level distance=1cm, sibling distance=1cm] {[fill] circle (2pt) }};
\node at (6.7,0) {$:\quad =$};
\node at (9.5,.9) {$x$} child[level distance=.5cm]{ [fill] circle (2pt) child[level distance=1cm, sibling distance=1cm] {[fill] circle (2pt) } child[level distance=1cm, sibling distance=1cm] {[fill] circle (2pt) } child[level distance=1cm, sibling distance=1cm] {[fill] circle (2pt) } child[level distance=1cm, sibling distance=1cm] {[fill] circle (2pt) } child[level distance=1cm, sibling distance=1cm] {[fill] circle (2pt) }};
\end{tikzpicture}
\label{eq:model NLO Lphi Lvarphi 1 baum}
\end{equation}
The difference to the diagram for the left representative $\Lo{0}{\varphi^5}{x}$ in eq.\eqref{eq:model NLO Lphi Lvarphi5 0 baum} is the appearance of a vertex below the root, which according to the rules of def. \ref{def:Diagrammatic notation} stands for the inversion of the Laplace operator. In the second equality we dropped the double dots, as from here on we implicitly assume normal ordering whenever 5 leaves meet at a vertex. In summary, we have successfully determined the first NLO left representative. \\

We end this section with the computation of the explicit form of the first order OPE coefficient $\Co{1}{b}{\varphi a}$. We will proceed as follows: After introducing some additional notation to keep the equations at a reasonable length, we first give a general result for the OPE coefficients of the free theory. Then the differential equation \eqref{eq:model NLO Lphi connection specific} will be applied in order to find the desired first order coefficients.

\begin{definition}[Metric on $V$]
For two vectors $\ket{v_a},\ket{v_b}\in V$ we introduce the metric\footnotemark{} $g:V\otimes V\to \mathbb{N}$
\begin{equation}
 g(\ket{v_a},\ket{v_b})=g(a,b):=\sum_{l=0}^{\infty}\sum_{m=-l}^l |a_{lm}-b_{lm}|\quad ,
\label{eq:model NLO Lphi multiindex metric}
\end{equation}
where $|\cdot|$ denotes the usual absolute value of an integer. This notion measures how much $\ket{v_a}$ differs from $\ket{v_b}$, in the sense that at least $g(a,b)$ ladder operators are needed to transform one of the vectors into the other. Note that $g(a,b)$ is always finite, since we required the multi-indices labeling our basis elements to have only finitely many non-zero entries (see eq.\ref{eq:model free field basis elements}).
\end{definition}\index{symbols}{g(ab)@$g(a,b)$}\footnotetext{This notion is not directly related to and should not be confused with the canonical dimension $|a|$ defined in eq.\eqref{eq:model free field basis elements dimension}.}
As we will see below, OPE coefficients $\Co{1}{b}{\varphi^k\, a}$ with equal values of $g(a,b)$ exhibit structural similarities, so we will often classify OPE coefficients by this value. Further, the notion of \emph{multiset} will be frequently used (see appendix \ref{app:multisets}). In this context, we make the following definitions:

\begin{definition}[Multisets of indices]\index{symbols}{Iabn@$\mathcal{I},\mathcal{I}(n),\mathcal{I}^b_a(n)$}
Let $\mathfrak{l}^q_i$ be defined as in def. \ref{def:Abbreviated indices} and let $\ket{v_a},\ket{v_b}\in V$. We define the following three sets:

\begin{align}
 \mathcal{I}&=\Big\{ \mathfrak{A}=\Lbag \mathfrak{l}^q_1,\ldots,\mathfrak{l}^q_n \Rbag\, |\, n\in\mathbb{N}\, , q\in\{0,\ldots,n\} \Big\}\\
\mathcal{I}(n)&=\Big\{ \mathfrak{A}\in\mathcal{I}\, |\, \operatorname{card}(\mathfrak{A})=n \Big\}=\Big\{ \mathfrak{A}=\Lbag \mathfrak{l}^q_1,\ldots,\mathfrak{l}^q_n \Rbag\, | q\in\{0,\ldots,n\} \Big\}\\
\mathcal{I}^b_a(n)&=\Big\{ \mathfrak{A}\in\mathcal{I}(n)\, |\, \bra{v_b}:\prod_{\mathfrak{l}^q_i\in\mathfrak{A}}\mathbf{b}_{\mathfrak{l}^q_i}:\ket{v_a}\neq 0 \Big\}
\end{align}

\end{definition}
The set $\mathcal{I}$ may be interpreted as the set of all possible labels for any number of leaves of a tree (see section \ref{subsec:diagrammatic notation}). Then $\mathcal{I}(n)$ is the subset of $\mathcal{I}$ containing all labels for trees with $n$ leaves. Recall from section \ref{subsec:diagrammatic notation} that we associate ladder operators to the leaves of a tree. The set $\mathcal{I}^b_a(n)$ is the restriction of $\mathcal{I}(n)$ to the labels of those trees that transform the vector $\ket{v_a}$ into $\ket{v_b}$ (recall that our basis is orthogonal, thus $\braket{v_b}{v_a}=0$ for $a\neq b$) after normal ordering of the corresponding ladder operators.

\begin{proposition}\label{propos:labelings}
From the definitions above it follows that

\begin{equation}
 \operatorname{card}\Big(\mathcal{I}^b_a(n)\Big)=1 \quad\text{if }g(a,b)=n
\label{eq:model NLO Lphi labels propos1}
\end{equation}
and
\begin{equation}
\mathcal{I}^b_a(n)=\emptyset\quad \text{for }g(a,b)>n\text{ or }g(a,b)+n=\text{odd .}
\label{eq:model NLO Lphi labels propos2}
\end{equation}
\end{proposition}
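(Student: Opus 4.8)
\emph{Proof plan.} The strategy is to reduce the non-vanishing condition defining $\mathcal{I}^b_a(n)$ to an elementary statement about the multiplicities of a multiset, and then count. First I would observe that any $\mathfrak{A}=\Lbag\mathfrak{l}^q_1,\ldots,\mathfrak{l}^q_n\Rbag\in\mathcal{I}(n)$ is completely determined by the numbers $c_{lm}:=\#\{i:\mathfrak{l}^q_i=+(lm)\}$ and $d_{lm}:=\#\{i:\mathfrak{l}^q_i=-(lm)\}$, which satisfy $\sum_{l,m}(c_{lm}+d_{lm})=n$. Using the definitions \eqref{eq:model free field ladder operators1} and \eqref{eq:model free field ladder operators2} of the ladder operators together with the fact that normal ordering places every $\mathbf{b}^{\dagger}_{lm}$ to the left of every $\mathbf{b}_{lm}$ (and that the operators within each of the two groups commute), I would show that
\begin{equation}
:\prod_{\mathfrak{l}^q_i\in\mathfrak{A}}\mathbf{b}_{\mathfrak{l}^q_i}:\,\ket{v_a}=\kappa(\mathfrak{A},a)\,\ket{v_{a'}},\qquad a'_{lm}=a_{lm}+c_{lm}-d_{lm},
\end{equation}
where $\kappa(\mathfrak{A},a)$ is a product of square roots of occupation numbers that is non-zero if and only if $d_{lm}\le a_{lm}$ for all $(l,m)$ (otherwise some annihilation operator hits an empty mode). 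Invoking orthogonality of the basis $\{\ket{v_a}\}$, this identifies $\mathcal{I}^b_a(n)$ with the set of $\mathfrak{A}\in\mathcal{I}(n)$ whose multiplicities satisfy $c_{lm}-d_{lm}=b_{lm}-a_{lm}$ and $d_{lm}\le a_{lm}$ for every $(l,m)$.

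With this reformulation, \eqref{eq:model NLO Lphi labels propos2} follows from two mode-wise observations. Since $c_{lm}+d_{lm}\ge|c_{lm}-d_{lm}|=|a_{lm}-b_{lm}|$, summing over $(l,m)$ gives $n\ge g(a,b)$, so $\mathcal{I}^b_a(n)=\emptyset$ whenever $g(a,b)>n$. And since $c_{lm}+d_{lm}\equiv c_{lm}-d_{lm}=b_{lm}-a_{lm}\pmod 2$, summing and using $|x|\equiv x\pmod 2$ gives $n\equiv g(a,b)\pmod 2$, hence $\mathcal{I}^b_a(n)=\emptyset$ whenever $n+g(a,b)$ is odd.

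For \eqref{eq:model NLO Lphi labels propos1}, I would assume $g(a,b)=n$. Then the inequality $\sum_{l,m}(c_{lm}+d_{lm})\ge\sum_{l,m}|a_{lm}-b_{lm}|=n$ must be an equality term by term, forcing $c_{lm}+d_{lm}=|a_{lm}-b_{lm}|$ for every mode. Combined with $c_{lm}-d_{lm}=b_{lm}-a_{lm}$ this pins down $(c_{lm},d_{lm})$ uniquely — one of the two vanishes and the other equals $|a_{lm}-b_{lm}|$ — and the remaining constraint $d_{lm}\le a_{lm}$ is then automatic, since $d_{lm}$ is either $0$ or $a_{lm}-b_{lm}\le a_{lm}$. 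Hence exactly one family of multiplicities is admissible, i.e. exactly one multiset lies in $\mathcal{I}^b_a(n)$, so $\operatorname{card}(\mathcal{I}^b_a(n))=1$.

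The content is essentially bookkeeping, so I expect no serious obstacle; the one place to be careful is the first step, establishing that $:\prod_{\mathfrak{l}^q_i\in\mathfrak{A}}\mathbf{b}_{\mathfrak{l}^q_i}:\ket{v_a}$ is a single basis vector up to a scalar (rather than a sum) and identifying exactly when that scalar vanishes. It is also worth noting that a mode with both $c_{lm}>0$ and $d_{lm}>0$ can occur only when $n>g(a,b)$, which is precisely why uniqueness can break down — and $\operatorname{card}(\mathcal{I}^b_a(n))$ can exceed $1$ — once $n$ strictly exceeds $g(a,b)$; that regime underlies the computations in the sections that follow.
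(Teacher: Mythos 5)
Your proof is correct and follows essentially the same counting argument as the paper's, namely comparing, mode by mode, the number of ladder operators of each sign needed to carry $\ket{v_a}$ to $\ket{v_b}$; your mode-wise bookkeeping with $(c_{lm},d_{lm})$ just makes explicit what the paper states informally. The one genuine refinement is your verification that the unique candidate multiset actually satisfies $d_{lm}\le a_{lm}$ (so that $\mathcal{I}^b_a(n)$ is nonempty when $g(a,b)=n$), a point the paper's proof passes over silently.
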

\begin{proof}
 The first statement, eq.\eqref{eq:model NLO Lphi labels propos1} follows straightforwardly from the definition of $g(a,b)$: The condition $g(a,b)=n$ tells us that at least $n$ ladder operators are needed to transform $\ket{v_a}$ into $\ket{v_b}$. In other words, the multiindices $a$ and $b$ differ in $n$ entries. Therefore the multiset $\mathfrak{A}=\Lbag\mathfrak{l}^q_1,\ldots,\mathfrak{l}^q_n\Rbag\in\mathcal{I}^b_a(n)$ is uniquely fixed by the condition $\bra{v_b}:\prod_{\mathfrak{l}^q_i\in\mathfrak{A}}\mathbf{b}_{\mathfrak{l}^q_i}:\ket{v_a}\neq 0$, so $\mathcal{I}^b_a(n)$ contains only one element if $g(a,b)=n$.

Let us come to the second part of the proposition. For $g(a,b)>n$ the set $\mathcal{I}^b_a(n)$ is empty, because more than $n$ ladder operators are needed to tranform $\ket{v_a}$ into $\ket{v_b}$. Thus, the condition on the multisets $\mathfrak{A}$ in the definition of $\mathcal{I}^b_a(n)$ cannot be satisfied due to orthogonality of our basis. Hence

\begin{equation}
 \mathcal{I}^b_a(n)=\emptyset \text{ for }g(a,b)>n.
\end{equation}
Similarly, the set is empty for $g(a,b)+n=\text{odd}$, which can be seen as follows: If $g(a,b)$ is an even number, then we need an even number of ladder operators to transform $\ket{v_a}$ into $\ket{v_b}$ (the minimum number $g(a,b)$ plus pairs of \emph{contracted} operators, i.e. creation and annihilation operators with the same indices). Thus, $\mathcal{I}^b_a(n)$ vanishes if $n=\text{odd}$ in this case. Similarly, the set is empty for $n=\text{even}$ if $g(a,b)$ is odd, so

\begin{equation}
 \mathcal{I}^b_a(n)=\emptyset \text{ for }g(a,b)+n=\text{odd .}
\end{equation}

\end{proof}

\begin{definition}[Shorthand notation for dimension of OPE coefficients]
Given any three basis vectors $\ket{v_a},\ket{v_b},\ket{v_c}\in V$ , we define

\begin{equation}
 \mathbf{d}=|c|-|a|-|b|=-sd\,\Co{i}{c}{ab}
\end{equation}
(recall the definition of $|\cdot|$ for multiindices from eq.\eqref{eq:model free field basis elements dimension} and the definition of scaling degree from eq.\eqref{eq:framework axioms scaling degree}). Further, given a multiset $\mathfrak{A}=\Lbag\mathfrak{l}^q_1,\ldots,\mathfrak{l}^q_n\Rbag\in\mathcal{I}(n)$ and two basis vectors $\ket{v_a},\ket{v_b}\in V$ satisfying $b=a+\sum_{i=1}^n e_{\mathfrak{l}_i^q}$, let
\begin{equation}
 d_{\mathfrak{A}}:= \sum_{i=q+1}^{n}\left(l_i+\frac{1}{2}\right)-\sum_{j=1}^{q}\left(l_j+\frac{1}{2}\right)-\frac{1}{2}=-sd\, \Co{i}{b}{\varphi a}\quad.
\label{eq:model NLO Lphi2 additional notation d}
\end{equation}\index{symbols}{d@$\mathbf{d}$}\index{symbols}{da@$d_{\mathfrak{A}}$}
\end{definition}

\begin{definition}[Shorthand notation for coupling tensors]
\index{symbols}{TA@$T[\mathfrak{A}]_{J}$}
Let $\mathfrak{A}\in\mathcal{I}$. Then we define
\begin{equation}
 T[\mathfrak{A}=\Lbag \mathfrak{l}_1^q,\ldots,\mathfrak{l}_n^q\Rbag]_{J}:=T[\mathfrak{l}^q_{i_1},\ldots,\mathfrak{l}^q_{i_n}]_{JM}\qquad \text{with }l_{i_{j}}\geq l_{i_{j+1}} \, \forall \, j\in\{1,\ldots,n-1\}\quad.
\end{equation}
\end{definition}
Note that on the left hand side we suppressed the \emph{magnetic quantum number} $M$, which is uniquely encoded in the multiset $\mathfrak{A}$ as the sum $m_1+\ldots+m_n$, for the sake of brevity. The definition of $T[\mathfrak{A}]$ is motivated by the fact that in many expressions we are free to choose a coupling order for the spherical harmonics appearing in the construction. For such cases we introduce the convention that always the spherical harmonics of highest degree are coupled first. Then the coupling tensor is uniquely characterized by a multiset of indices instead of a (ordered) tuple. Note that there seems to be an ambiguity in this prescription, e.g. if we consider the multiset $\mathfrak{A}=\Lbag +(l_1m_1),+(l_2m_2),-(l_1m_1)  \Rbag$. In this case, it is not clear whether to couple $+(l_1m_1)$ before $-(l_1m_1)$ or the other way around. However, in such cases, i.e. whenever $l_i=l_j$ with $i\leq q < j$, we may apply the addition theorem of the spherical harmonics, eq.\eqref{eq:app spherical symm D=3 addition theorem}, and omit the indices $\mathfrak{l}^q_i$ and $\mathfrak{l}^q_j$ altogether.

\begin{equation}
 T[\Lbag\mathfrak{l}^q_1,\ldots,\mathfrak{l}^q_i,\ldots,\mathfrak{l}^q_j,\ldots,\mathfrak{l}^q_n\Rbag]_{J}=T[\Lbag\mathfrak{l}^q_1,\ldots,\widehat{\mathfrak{l}^q_i},\ldots,\widehat{\mathfrak{l}^q_j},\ldots,\mathfrak{l}^q_n\Rbag]_{J}\qquad \text{if }l_i=l_j\text{ and }i\leq q < j
\end{equation}
This solves the apparent ambiguity. Another property to be remarked is that \emph{zero entries} may be dropped from the coupling tensor, i.e.

\begin{equation}
 T[\Lbag\mathfrak{l}^q_1,\ldots,\mathfrak{l}^q_{i-1},\mathfrak{l}^q_i,\mathfrak{l}^q_{i+1},\ldots,\mathfrak{l}^q_n\Rbag]_{J}=T[\Lbag\mathfrak{l}^q_1,\ldots,\mathfrak{l}^q_{i-1},\mathfrak{l}^q_{i+1},\ldots,\mathfrak{l}^q_n\Rbag]_{J}\qquad \text{if }l_i=0
\end{equation}
This may be seen either from the property

\begin{equation}
 \braket{l0\, m0}{JM}=\delta_{Jl}\delta_{Mm}
\end{equation}
of the Clebsch-Gordan coefficients, or from the fact that the coupling of the spherical harmonic $S_{00}(\hat{x})=1$ is trivial. Is also follows that

\begin{equation}
 T[\Lbag\mathfrak{l}^q_1\Rbag]_J=\delta_{Jl_1}\quad.
\end{equation}

\begin{definition}[Symmetry factor]
\index{symbols}{sA@$s[\mathfrak{A}]$}Let $s:\mathcal{I}\to \mathbb{N}$ be the map defined by

\begin{equation}
 s[\mathfrak{A}\in\mathcal{I}]=\operatorname{card}\left(\Big\{(a_1,\ldots,a_n)\,|\, \Lbag a_1,\ldots,a_{n}\Rbag = \mathfrak{A}\Big\}\right)\quad,
\end{equation}
where $(a_1,a_2,\ldots,a_n)$ denotes (ordered) tuples. Thus, $s[\mathfrak{A}]$ is the number of different tuples constructed from the elements of $\mathfrak{A}$.
\end{definition}
Remarks: Equivalently, $s[\mathfrak{A}]$ may be interpreted as the number of different trees obtained from

\begin{equation}
 \begin{tikzpicture}
 \node[coordinate] at (0,0){} child[level distance=1cm, sibling distance=.5cm]{[fill] circle (1.5pt) node[below]{\begin{small}$a_1$\end{small}
}} child[level distance=1cm, sibling distance=1cm]{[fill] circle (1.5pt) node[below](a){\begin{small}$a_2$\end{small}
}} child[level distance=1cm, sibling distance=1.5cm]{[fill] circle (1.5pt) node[below](b){\begin{small}$a_n$\end{small}
}};
\draw[dotted] (a) -- (b);
\end{tikzpicture}
\end{equation}
by permutations of the leaves. An explicit formula for $s[\mathfrak{A}]$ is most conveniently obtained if the multiset $\mathfrak{A}$ is expressed in terms of a set $A$ and a function $f:A\to\mathbb{N}$ (see def. \ref{def:Multiset}). Then

\begin{equation}
 s[\mathfrak{A}=(A,f)\in\mathcal{I}(n)]=\frac{n!}{\prod_{a\in A}f(a)!}\quad.
\end{equation}

\begin{definition}[Prefactors from ladder operators]
\index{symbols}{faA@$f^b_a[\mathfrak{A}]$}Let $f:V\otimes V \otimes \mathcal{I}(n)\to \mathbb{R}$ be the map defined by the following property: Given two vectors $\ket{v_a},\ket{v_b}\in V$, a number $n\in\mathbb{N}$ and a multiset $\mathfrak{A}\in\mathcal{I}^b_a(n)$, let

\begin{equation}
 f^b_a[\mathfrak{A}]=\bra{v_b}:\prod_{\mathfrak{l}^q_i\in\mathfrak{A}}\mathbf{b}_{\mathfrak{l}^q_i}:\, \ket{v_a}\quad.
\label{eq:model NLO Lphi prefactor f}
\end{equation}
Explicit expressions for $f^b_a(\mathfrak{A})$ can be obtained using eqs.\eqref{eq:model free field ladder operators1} and \eqref{eq:model free field ladder operators2}.
\end{definition}
Now we are ready to express the zeroth order OPE coefficients in the compact form

\begin{results}[OPE coefficients $\Co{0}{b}{\varphi^k\, a}$]\label{res:model NLO Lphi C0phik}%
Using the notation introduced above, the zeroth order OPE coefficients take the compact form
\begin{equation}
\Co{0}{b}{\varphi^k\, a}(x)=\begin{cases}
                           0 & \text{for }g(a,b)>k\\
 0 & \text{for }g(a,b)+k=\text{odd}\\
 \sum\limits_{\mathfrak{A}\in\mathcal{I}^b_a(k)}\sum\limits_{J}f_a^b[\mathfrak{A}]\cdot \Delta_{0}[\mathfrak{A}]_{J}\cdot S_J(\hat{x})\cdot r^{\mathbf{d}} \quad & \text{otherwise}
                          \end{cases}
\label{eq:model NLO Lphi C0phik result}
\end{equation}
with\footnotemark{}
\begin{equation}
 \Delta_{0}[\mathfrak{A}]_{J}=s[\mathfrak{A}]\cdot T[\mathfrak{A}]_J
\label{eq:model NLO Lphi2 additional notation D0}
\end{equation}\index{symbols}{delta00@$\Delta_{0}[\mathfrak{A}]_{J}$}
\end{results}\footnotetext{This function $\Delta_0:\mathcal{I}\to\mathbb{N}$, and also the functions $\Delta_1,\ldots$ to be defined below, should not be confused with the Laplace operator $\Delta$.}%
Remark: Here we also suppressed the magnetic quantum number of the spherical harmonic $S_{JM}(\hat{x})$. As in the definition of the coupling tensor $T[\mathfrak{A}]_J$, the corresponding number $M$ can always be uniquely retrieved from the elements of the multiset $\mathfrak{A}$.
\begin{proof}
Using the graphical notation of section \ref{subsec:diagrammatic notation}, we may express the left representative $\Lo{0}{\varphi^k}{x}$ by a sum over labeled trees with $k$ leaves. If we take the matrix elements $\bra{v_b}\cdot\ket{v_a}$ of this expression, then all contributions to this sum which do not transform $\ket{v_a}$ into $\ket{v_b}$ vanish due to orthogonality of the basis. Thus, the labels $\mathfrak{l}^q_1,\ldots,\mathfrak{l}^q_k$ of the $k$ leaves have to satisfy $b=a+\sum_{i=1}^ke_{\mathfrak{l}^q_i}$. Further, recall that the operators corresponding to the leaves of the tree are normal ordered, so the order of the labelings does not matter. In other words, we obtain the same result for all permutations of the leaves. So instead of summing over all possible tress satisfying the mentioned property, we might as well restrict the sum to such trees that can not be transformed into one another by permutations of the leaves and multiply by a symmetry factor of the respective tree. These arguments imply the equation

\begin{equation}
 \begin{tikzpicture}
  \node at (.5,-.05){$\Co{0}{b}{\varphi^k a}(x)=\bra{v_b}\, :$};
\node at (3,.5) {$x$}   child[level distance=1cm, sibling distance=.5cm] {[fill] circle (1.5pt) } child[level distance=1cm,sibling distance=1cm] {[fill] circle (1.5pt) node(c){} } child[level distance=1cm,sibling distance=1cm] {[fill] circle (1.5pt) node(d){} };
\node at (5.6,-.2){$:\, \ket{v_a}=\sum\limits_{\mathfrak{A}\in\mathcal{I}^b_a(k)}\bra{v_b}\, :$};
\node at (8.3,.5) {$x$}  child[level distance=1cm,sibling distance=.5cm] {[fill] circle (1.5pt) node[below]{\begin{tiny}$\mathfrak{l}_1^q$\end{tiny}} } child[level distance=1cm,sibling distance=.5cm] {[fill] circle (1.5pt) node[below](a){\begin{tiny}$\mathfrak{l}_2^q$\end{tiny}} }  child[level distance=1cm,sibling distance=1cm] {[fill] circle (1.5pt) node[below](b){\begin{tiny}$\mathfrak{l}_k^q$\end{tiny}} };
\node at (10.6,0){$:\, \ket{v_a}\cdot s[\mathfrak{A}]$};
\draw[dotted] (a) -- (b);
\draw[dotted] (c) -- (d);
 \end{tikzpicture}\, ,
\label{eq:model NLO Lphi C1phi g(a,b)=5 diagram}
\end{equation}
since the elements of the multisets in $\mathcal{I}^b_a(k)$ satisfy the desired property, since the order of the entries in a multiset does not matter and since the factor $s[\mathfrak{A}]$ was defined to be the mentioned symmetry factor of the diagram. We found in proposition \ref{propos:labelings} that the set $\mathcal{I}^b_a(k)$ is empty for $g(a,b)>k$ or $g(a,b)+k=\text{odd}$, which implies $\Co{0}{b}{\varphi^k\, a}(x)=0$ in those cases just as we have claimed in the result.

Now it remains to translate these diagrams into explicit formulae using the rules of section \ref{subsec:diagrammatic notation}. To begin with, the action of the ladder operators associated to the leaves of the tree on the vector $\ket{v_a}$ gives a numerical prefactor, which by definition \ref{def:Prefactors from ladder operators} is denoted by $f^b_a[\mathfrak{A}]$. Further, for each leaf we obtain a spherical harmonic whose indices are determined by the label of the leaf. ``Coupling'' of these spherical harmonics, i.e. a decomposition into irreducible parts, yields a factor $\sum_JT[\mathfrak{A}]_JS_J(\hat{x})$. It remains to determine the power of $r$, which by definition is just $\mathbf{d}=|b|-|a|-k/2$, so we indeed arrive at eq.\eqref{eq:model NLO Lphi C0phik result}.
\end{proof}
With this result at hand, it is an easy task to find the

\begin{results}[OPE coefficients $\Co{1}{b}{\varphi\, a}$]\label{res:model NLO Lphi C1phi}%
Using the right inverse $\Delta^{-1}$ to solve the differential equation \eqref{eq:model NLO Lphi connection specific}, one obtains
\begin{equation}
\Co{1}{b}{\varphi\, a}(x)=\begin{cases}
                           0 & \text{for }g(a,b)>5\\
 0 & \text{for }g(a,b)=\text{even}\\
 \sum\limits_{\mathfrak{A}\in\mathcal{I}^b_a(5)}\sum\limits_{J}
f^b_a[\mathfrak{A}]\cdot
S_J(\hat{x})\cdot r^{\mathbf{d}} \D{1}{A}{J} 
\quad & \text{otherwise}
                          \end{cases}
\label{eq:model NLO Lphi C1phi result}
\end{equation}
with
\begin{equation}
 \D{1}{A}{J}=s[\mathfrak{A}]\cdot T[\mathfrak{A}]_JD[d_{\mathfrak{A}},J,r]
\label{eq:model NLO Lphi2 additional notation D1}
\end{equation}\index{symbols}{delta1p@$\D{1}{A}{J}$}%
and where $D(d,r,J)$ is defined as in eq.\eqref{eq:model NLO Lphi D}.
\end{results}

\begin{proof}
 The differential equation to be solved is

\begin{equation}
 \Delta\Co{1}{b}{\varphi a}(x)=\Co{0}{b}{\varphi^5 b}(x)\, .
\label{eq:model NLO Lphi connection specific2}
\end{equation}
Inserting the concrete form of the zeroth order coefficient, we arrive at the equation

\begin{equation}
 \Delta\Co{1}{b}{\varphi a}(x)=\begin{cases}
                           0 & \text{for }g(a,b)>5\\
 0 & \text{for }g(a,b)+5=\text{odd}\\
 \sum\limits_{\mathfrak{A}\in\mathcal{I}^b_a(5)}\sum\limits_{J}f^b_a[\mathfrak{A}]\cdot \Delta_{0}[\mathfrak{A}]_{J}\cdot S_J(\hat{x})\cdot r^{\mathbf{d}} \quad & \text{otherwise}
                          \end{cases}
\end{equation}
Application of the operator $\Delta^{-1}$ to the right side of the equation yields $\Co{1}{b}{\varphi a}(x)=0$ for $g(a,b)>5$ or $g(a,b)=\text{even}$ and otherwise

\begin{equation}
 \Co{1}{b}{\varphi a}(x)= \sum\limits_{\mathfrak{A}\in\mathcal{I}^b_a(5)}\sum\limits_{J} f^b_a[\mathfrak{A}]\Delta_{0}[\mathfrak{A}]_{J}\cdot S_J(\hat{x})\cdot r^{\mathbf{d}} D(\mathbf{d},J,r)\quad .
\end{equation}
Note that in the case at hand $\mathbf{d}=d_{\mathfrak{A}}$, so the result is confirmed.
\end{proof}
In summary, we have found explicit expressions for all OPE coefficients of the type $\Co{1}{b}{\varphi a}$. Some particularly simple coefficients are tabulated in appendix \ref{app:OPE table}. As an illustration of how to obtain these expressions from result \ref{res:model NLO Lphi C1phi}, we close this section with an example computation.

\begin{example}

We want to determine the OPE coefficient $\Co{1}{\varphi^{p+2}\del^{l'}\varphi}{\varphi\, \varphi^p}$ with $p,l'\neq 0$. As a first step, let us determine the set $\mathcal{I}^{\varphi^{p+2}\del^{l'}\varphi}_{\varphi^p}(5)$. The constraint $\bra{v_b}:\prod_{\mathfrak{l}^q_i\in\mathfrak{A}}\mathbf{b}_{\mathfrak{l}^q_i}:\ket{v_a}\neq 0$ in the definition of the set $\mathcal{I}^b_a$ leaves only one allowed multiset, namely

\begin{equation}
 \mathcal{I}^{\varphi^{p+2}\del^{l'}\varphi}_{\varphi^p}(5)=\Big\{\mathfrak{A}=\Lbag +(00),+(00),+(00),-(00),+(l'm') \Rbag\Big\}\quad.
\end{equation}
The numerical prefactor $f^b_a[\mathfrak{A}]$ is in this case

\begin{equation}
 f^{{\varphi^{p+2}\del^{l'}\varphi}}_{\varphi^p}[\mathfrak{A}=\Lbag +(00),+(00),+(00),-(00),+(l'm') \Rbag]=p
\end{equation}
because the creation operators do not yield any prefactor, while the action of the annihilation operator gives the factor $p$, i.e. $\mathbf{b}^{\dagger}_{00}\varphi^p=\varphi^{(p+1)}$ but $\mathbf{b}_{00}\varphi^p=p\varphi^{(p-1)}$. Now we come to $\D{0}{A}{J}$. First, the symmetry factor is

\begin{equation}
 s[\mathfrak{A}=\Lbag +(00),+(00),+(00),-(00),+(l'm') \Rbag]=5\cdot 4=20\quad.
\end{equation}
Recall that the product $T[\mathfrak{A}]_JS_J(\hat{x})$ resulted from the coupling of five spherical harmonics whose indices are the elements of $\mathfrak{A}$. Since $S_{00}(\hat{x})=1$, we only have one non-trivial spherical harmonic, $S^{l'm'}(\hat{x})$, in the case at hand. Thus

\begin{equation}
 T[\mathfrak{A}=\Lbag +(00),+(00),+(00),-(00),+(l'm') \Rbag]_JS_J(\hat{x})=S^{l'm'}(\hat{x})\quad.
\end{equation}
The value of $\mathbf{d}=-sd\,\Co{1}{\varphi^{p+2}\del^{l'}\varphi}{\varphi\, \varphi^p}$ of the coefficient is

\begin{equation}
 \mathbf{d}=|\varphi^{p+2}\del^{l'}\varphi|-|\varphi^p|-|\varphi|=l'+1\quad.
\end{equation}
Finally, we find
\begin{equation}
 D[d_{\mathfrak{A}}=\mathbf{d}=l'+1,J=l',r]=\frac{1}{2(l'+1)}\quad,
\end{equation}
which gives the total result (see also appendix \ref{app:OPE table})
\begin{equation}
 \Co{1}{\varphi^{p+2}\del^{l}\varphi}{\varphi\, \varphi^p}=10p\sum_{m}S^{lm}(\hat{x})r^{l+1}\cdot \frac{1}{l+1}\quad .
\end{equation}

\end{example}

\newpage
\subsection{Construction of \texorpdfstring{$\Lo{1}{\varphi^2}{x}$}{L(1)phi2}}\label{subsubsec:L1phi^2}

According to our algorithm of section \ref{subsec:perturbation via field equation}, it is now possible to construct all first order left representatives $\Lo{1}{\ket{v_a}}{x}$ (and thus the complete quantum field theory at first perturbation order) just from $\Lo{1}{\varphi}{x}$ by application of the associativity condition. In this section we take the first step into this direction by constructing the left representative $\Lo{1}{\varphi^2}{x}$. As we will see in the following, this task turns out to be comparably simple, as no "renormalization" is needed at this stage in our toy model. More serious calculational effort will be needed in the next section.

The relation between $\Lo{1}{\varphi}{x}$, which is known from the previous section, and the desired left representative $\Lo{1}{\varphi^2}{x}$, is

\begin{equation}
 \Lo{1}{\varphi^2}{x}=\lim_{y\to x}\left[\Lo{0}{\varphi}{y}\Lo{1}{\varphi}{x}+\Lo{1}{\varphi}{y}\Lo{0}{\varphi}{x}-\sum_{e}^{|e|\leq\varphi^2}\Co{1}{e}{\varphi\,, \varphi}(x-y)\Lo{0}{\ket{v_e}}{x}\right]\, ,
\label{eq:model NLO Lphi2 limit}
\end{equation}
which is just eq.\eqref{eq:model perturbations limit} at first order expressed in terms of left representatives. Let us take a closer look at the counterterms, i.e. the expressions subtracted on the right side. The fact that $\Co{1}{b}{\varphi\, a}$ vanishes for $g(a,b)=$even together with the restriction $|e|\leq|\varphi^2|$ suggests that only the terms including $\Co{1}{\mathds{1}}{\varphi\, \varphi}(x-y)$ or $ \Co{1}{\varphi^2}{\varphi\, \varphi}(x-y)$ survive the limit. Due to the specific form of $\Lo{1}{\varphi}{x}$ from the previous section, however, these matrix elements, or OPE coefficients, both vanish, see eq.\eqref{eq:model NLO Lphi C1phi result} (the set $\mathcal{I}^b_a(5)$ is empty in those cases) or table \ref{app:OPE table}. The absence of counterterms from eq.\eqref{eq:model NLO Lphi2 limit} implies that the remaining expressions are finite, so we may simply perform the limit and write

\begin{equation}
 \Lo{1}{\varphi^2}{x}=\Lo{0}{\varphi}{x}\Lo{1}{\varphi}{x}+\Lo{1}{\varphi}{x}\Lo{0}{\varphi}{x} .
\label{eq:model NLO Lphi2 limit 2}
\end{equation}
In our diagrammatic notation of section \ref{subsec:diagrammatic notation} this equation takes the form

\begin{equation}
\begin{tikzpicture}
\node at (1,0) {$ \Lo{1}{\varphi^2}{x}=$};
\node at (4,1) {$x$} child[level distance = 2cm,sibling distance=3cm]{[fill] circle (2pt) } child[level distance=1cm,sibling distance=1.5cm] {[fill] circle (2pt) child[sibling distance=.5cm] {[fill] circle (2pt) } child[sibling distance=.5cm] {[fill] circle (2pt) } child[sibling distance=.5cm] {[fill] circle (2pt) } child[sibling distance=.5cm] {[fill] circle (2pt) } child[sibling distance=.5cm] {[fill] circle (2pt) }};
\node at (9,1) {$x$}  child[level distance=1cm,sibling distance=1.5cm] {[fill] circle (2pt) child[sibling distance=.5cm] {[fill] circle (2pt) } child[sibling distance=.5cm] {[fill] circle (2pt) } child[sibling distance=.5cm] {[fill] circle (2pt) } child[sibling distance=.5cm] {[fill] circle (2pt) } child[sibling distance=.5cm] {[fill] circle (2pt) }}child[level distance = 2cm,sibling distance=3cm]{[fill] circle (2pt) };
\node at (6.5,0){$+$};
\end{tikzpicture}
\label{eq:model NLO Lphi2 diagram}
\end{equation}
where again summation over all combinations of labels for the leaves is understood. 

Let us investigate this expression more closely and try to understand why no counterterms are necessary. As could be seen in section \ref{subsec:Construction of L(0)a}, infinities occur whenever infinite sums over a product of the form $\mathbf{b}_{lm}\mathbf{b}^{\dagger}_{lm}$ are performed, i.e., diagrammatically speaking, whenever the indices of two leaves are \emph{contracted}, with the leave on the right corresponding to a creation operator. In the free theory computations we then made  use of the identity $\mathbf{b}_{lm}\mathbf{b}_{lm}^{\dagger}=\mathbf{b}_{lm}^{\dagger}\mathbf{b}_{lm}+id$, which follows from the commutation relation \eqref{eq:model free field ladder operators commutation relations}, to separate these expressions into a normal ordered (finite) part and a divergent part. We then saw that this divergent part precisely cancels with the counterterms, which suggests that the renormalization procedure is equivalent to normal ordering. Obviously, it would be convenient to carry over this approach to the first order calculations and to include also the sixth operator in the above diagrams into the normal ordering process. However, since no counterterms are present in eq.\eqref{eq:model NLO Lphi2 limit 2}, it is not clear what to make of the terms which are neglected if we demand normal ordering. As it turns out, the infinite parts of these extra terms cancel out, but there is a finite remainder, which we call $(R_1)_{\varphi^2}$. Introducing the

\begin{definition}[Additional grading of left representatives]{\ \\}\index{symbols}{Yidq@$\Lo{n}{\ket{v_a}}{x;d,q}$}%
By $\Lo{n}{\ket{v_a}}{x;d,q}$ we denote the contribution to $\Lo{n}{\ket{v_a}}{x;d}$ comprised of $q\in\mathbb{N}$ annihilation operators.
\end{definition}
we obtain

\begin{results}[Left representative $\Lo{1}{\varphi^2}{x}$ in normal ordered form]\label{res:model NLO Lphi2}
Using the left representative $\Lo{1}{\varphi}{x}$ given in the previous section, one obtains
\begin{equation}
\begin{split}
\Lo{1}{\varphi^2}{x}=&:\Lo{1}{\varphi}{x}\Lo{0}{\varphi}{x}:+:\Lo{0}{\varphi}{x}\Lo{1}{\varphi}{x}:+(R_{1})_{\varphi^2}(x)\\
 =&2:\Lo{1}{\varphi}{x}\Lo{0}{\varphi}{x}:+(R_{1})_{\varphi^2}(x)=2:\Lo{0}{\varphi}{x}\Lo{1}{\varphi}{x}:+(R_{1})_{\varphi^2}(x)
\end{split}
\label{eq:model NLO Lphi2 normal ordered}
\end{equation}
with $(R_{1})_{\varphi^2}(x)\in\mathbb{Y}(x)$ given by

\begin{equation}
\begin{split}
(R_{1})_{\varphi^2}(x):= 5\sum_{q=0}^4\sum_{d=-\infty}^{\infty}\sum_{j=0}^{\infty} r^{d+1}S_{jm}(\hat{x}) \Lo{0}{\varphi^4}{x;d,q}_{jm} (R_1)_{\varphi^2}(d,j,q,r)
\end{split}
\label{eq:model NLO Lphi2 remainder 1}
\end{equation}\index{symbols}{R12x@$(R_{1})_{\varphi^2}(x)$}
where

\begin{equation}
 (R_1)_{\varphi^2}(d,j,q\in\{0,4\},r)=0\quad ,
\label{eq:model NLO Lphi2 remainder vanish}
\end{equation}

\begin{equation}
\begin{split}
 (R_1)_{\varphi^2}(d,j,q\in\{1,3\},r):= \sum_{l=0}^{|d+2|-1}&\left\{\sum_{\atop{J=|l-j|}{\text{denom.}\neq 0}}^{l+j}  \frac{\braket{j l 0 0}{J 0}^2}{(l-|d+2|)(l-|d+2|+1)-J(J+1)}\right.\\+& \left.\operatorname{sign}(d+2)\, \begin{pmatrix}j & l & |d+2|-1-l\\ 0 & 0 & 0    \end{pmatrix}\log r\right\}
\end{split}
\label{eq:model NLO Lphi2 remainder 2}
\end{equation}
and

\begin{equation}
 (R_1)_{\varphi^2}(d,j,q=2,r):= R_{(q=2)}(d,j,r)+\operatorname{sign}(d+2)\log r\, \sum_{l=0}^{|d+2|-1} \begin{pmatrix}j & l & |d+2|-1-l\\ 0 & 0 & 0    \end{pmatrix}\quad .
\label{eq:model NLO Lphi2 remainder 3}
\end{equation}
Here $R_{(q=2)}(d,j,r)$ is a finite, real valued function defined below in eq.\eqref{eq:model NLO Lphi2 infinite sums formula result general}.
\end{results}
By $\operatorname{sign}(x)$ we denote the sign function

\begin{equation}
 \operatorname{sign}(x)=\left\{\begin{array}{lc}-1\quad &\text{if }x<0 \\
 0 &\text{if }x=0\\ 1 &\text{if }x>0 
\end{array}\right.\quad .
\label{eq:model NLO Lphi2 sign function}
\end{equation}\index{symbols}{sign@$\operatorname{sign}(x)$}

\begin{proof}

As mentioned above, we want to exploit the commutation relations of the ladder operators to bring the operators in eq.\eqref{eq:model NLO Lphi2 diagram} into normal order. This is trivial, unless we have to switch the order of a pair of the form $\mathbf{b}_{lm}\mathbf{b}_{lm}^{\dagger}$. Since the operators associated to the left representative $\Lo{1}{\varphi}{x}$ are already normal ordered (see section \ref{subsubsec:L(1)phi}), only the two diagrams

\begin{equation}
\begin{tikzpicture}
\node at (4,1) {$x$} child[level distance = 2cm,sibling distance=3cm]{[fill] circle (2pt) node(1){} } child[level distance=1cm,sibling distance=1.5cm] {[fill] circle (2pt) child[sibling distance=.5cm] {[fill] circle (2pt)node(2){}  } child[sibling distance=.5cm] {[fill] circle (2pt) } child[sibling distance=.5cm] {[fill] circle (2pt) } child[sibling distance=.5cm] {[fill] circle (2pt) } child[sibling distance=.5cm] {[fill] circle (2pt) }};
\node at (9,1) {$x$}  child[level distance=1cm,sibling distance=1.5cm] {[fill] circle (2pt) child[sibling distance=.5cm] {[fill] circle (2pt) } child[sibling distance=.5cm] {[fill] circle (2pt) } child[sibling distance=.5cm] {[fill] circle (2pt) } child[sibling distance=.5cm] {[fill] circle (2pt) } child[sibling distance=.5cm] {[fill] circle (2pt)node(3){}  }}child[level distance = 2cm,sibling distance=3cm]{[fill] circle (2pt)node(4){}  };
\node at (6.5,0){$+$};
\path [<-, draw, thick] (1) -- +(0,-.5) -| (2);
\path [<-, draw, thick] (3) -- +(0,-.5) -| (4);
\end{tikzpicture}
\label{eq:model NLO Lphi2 infinite sums diagram}
\end{equation}
contain expressions of this kind (recall from chapter \ref{subsec:diagrammatic notation} that the arrows denote contraction). Here we have to apply $\mathbf{b}_{lm}\mathbf{b}_{lm}^{\dagger}=\mathbf{b}_{lm}^{\dagger}\mathbf{b}_{lm}+id$ and obtain in addition to the normal ordered products a sum where the two contracted operators are replaced by the identity. In diagrams, this may be written as (recall that dashed lines denote replacement of the corresponding ladder operators with the identity)

\begin{equation}
 \begin{tikzpicture}
\node at (0,1) {$x$} child[level distance = 1cm,sibling distance=1.5cm]{[fill] circle (1.5pt) node(1){} } child[level distance=.5cm,sibling distance=.75cm] {[fill] circle (1.5pt) child[sibling distance=.25cm] {[fill] circle (1.5pt)node(2){}  } child[sibling distance=.25cm] {[fill] circle (1.5pt) } child[sibling distance=.25cm] {[fill] circle (1.5pt) } child[sibling distance=.25cm] {[fill] circle (1.5pt) } child[sibling distance=.25cm] {[fill] circle (1.5pt) }};
\node at (2.5,1) {$x$}  child[level distance=.5cm,sibling distance=.75cm] {[fill] circle (1.5pt) child[sibling distance=.25cm] {[fill] circle (1.5pt) } child[sibling distance=.25cm] {[fill] circle (1.5pt) } child[sibling distance=.25cm] {[fill] circle (1.5pt) } child[sibling distance=.25cm] {[fill] circle (1.5pt) } child[sibling distance=.25cm] {[fill] circle (1.5pt)node(3){}  }}child[level distance = 1cm,sibling distance=1.5cm]{[fill] circle (1.5pt)node(4){}  };
\node at (1.25,.5){$+$};
\node at (5,1) {$x$} child[level distance = 1cm,sibling distance=1.5cm]{[fill] circle (1.5pt) node(5){} } child[level distance=.5cm,sibling distance=.75cm] {[fill] circle (1.5pt) child[sibling distance=.25cm] {[fill] circle (1.5pt)node(6){}  } child[sibling distance=.25cm] {[fill] circle (1.5pt) } child[sibling distance=.25cm] {[fill] circle (1.5pt) } child[sibling distance=.25cm] {[fill] circle (1.5pt) } child[sibling distance=.25cm] {[fill] circle (1.5pt) }};
\node at (3.75,.5){$=\, :$};
\node at (7.7,1) {$x$} child[level distance=.5cm,sibling distance=.75cm] {[fill] circle (1.5pt) child[sibling distance=.25cm] {[fill] circle (1.5pt) } child[sibling distance=.25cm] {[fill] circle (1.5pt) } child[sibling distance=.25cm] {[fill] circle (1.5pt) } child[sibling distance=.25cm] {[fill] circle (1.5pt) } child[sibling distance=.25cm] {[fill] circle (1.5pt)node(3){}  }}child[level distance = 1cm,sibling distance=1.5cm]{[fill] circle (1.5pt)node(4){}  };
\node at (6.25,.5){$:\, +\, :$};
\node at (10.2,1) {$x$} child[level distance = 1cm,sibling distance=1.5cm]{[fill] circle (1.5pt) node(7){} } child[level distance=.5cm,sibling distance=.75cm] {[fill] circle (1.5pt) child[sibling distance=.25cm] {[fill] circle (1.5pt)node(8){}  } child[sibling distance=.25cm] {[fill] circle (1.5pt) } child[sibling distance=.25cm] {[fill] circle (1.5pt) } child[sibling distance=.25cm] {[fill] circle (1.5pt) } child[sibling distance=.25cm] {[fill] circle (1.5pt) }};
\node at (12.9,1) {$x$}  child[level distance=.5cm,sibling distance=.75cm] {[fill] circle (1.5pt) child[sibling distance=.25cm] {[fill] circle (1.5pt) } child[sibling distance=.25cm] {[fill] circle (1.5pt) } child[sibling distance=.25cm] {[fill] circle (1.5pt) } child[sibling distance=.25cm] {[fill] circle (1.5pt) } child[sibling distance=.25cm] {[fill] circle (1.5pt)node(9){}  }}child[level distance = 1cm,sibling distance=1.5cm]{[fill] circle (1.5pt)node(10){}  };
\node at (8.75,.5){$:\, +5$};
\path [<-, draw, thick, dashed] (7) -- +(0,-.5) -| (8);
\path [<-, draw, thick, dashed] (9) -- +(0,-.5) -| (10);
\node at (11.45,.5){$+5$};
\end{tikzpicture}
\end{equation}
This formula is exactly equivalent to the first line of eq.\eqref{eq:model NLO Lphi2 normal ordered} if we define $(R_1)_{\varphi^2}$ to be equal to the two diagrams including the dashed contractions. The second line of eq.\eqref{eq:model NLO Lphi2 normal ordered} simply follows from the fact that the order of ladder operators inside normal ordering signs does not matter, so the two normal ordered diagrams in the equation above are in fact equal. The equation above may also be written in terms of the left representatives as

\begin{equation}
\begin{split}
 \Lo{0}{\varphi}{x}\Lo{1}{\varphi}{x}+\Lo{1}{\varphi}{x}\Lo{0}{\varphi}{x}=&\, :\Lo{0}{\varphi}{x}\Lo{1}{\varphi}{x}:\, +\, :\Lo{1}{\varphi}{x}\Lo{0}{\varphi}{x}:\\
&+\,  \contraction{}{\Lo{0}{\varphi}{x}}{}{\Lo{1}{\varphi}{x}}\Lo{0}{\varphi}{x}\Lo{1}{\varphi}{x}+\contraction{}{\Lo{1}{\varphi}{x}}{}{\Lo{0}{\varphi}{x}}\Lo{1}{\varphi}{x}\Lo{0}{\varphi}{x}
\end{split}
\end{equation}
where the line connecting the left representatives denotes contraction:

\begin{definition}[Contraction of left representatives]\label{def:contractions}\index{symbols}{yizz@$\contraction{}{\Lo{0}{\varphi}{x}}{} {\Lo{n}{\ket{v_a}}{y}}\Lo{0}{\varphi}{x}\Lo{n}{\ket{v_a}}{y}$}%
The commutators

\begin{equation}
 \left[\mathcal{Y}^{-}_0(\varphi,x),\Lo{n}{\ket{v_a}}{y}\right]=\contraction{}{\Lo{0}{\varphi}{x}}{}{\Lo{n}{\ket{v_a}}{y}}\Lo{0}{\varphi}{x}\Lo{n}{\ket{v_a}}{y}
\end{equation}
and

\begin{equation}
 \left[\Lo{n}{\ket{v_a}}{y},\mathcal{Y}^{+}_0(\varphi,x)\right]=\contraction{}{\Lo{n}{\ket{v_a}}{y}}{}{\Lo{0}{\varphi}{x}}\Lo{n}{\ket{v_a}}{y}\Lo{0}{\varphi}{x}
\end{equation}
are called ``contractions'' of the left representatives $\Lo{0}{\varphi}{x}$ and $\Lo{n}{\ket{v_a}}{y}$, where $\mathcal{Y}^{\pm}_0(\varphi,x)$ is the part of $\Lo{0}{\varphi}{x}$ containing only creation ($+$) or only annihilation $(-)$ operators.
\end{definition}
It remains to determine $(R_1)_{\varphi^2}$. This rather involved computation can be found in appendix \ref{app:proofs}.

\end{proof}
To sum up the results of the above discussion, we have most importantly verified that in accordance with eq.\eqref{eq:model NLO Lphi2 limit 2} no subtraction of counterterms is needed in order to cure possible divergences in the calculation of $\Lo{1}{\varphi^2}{x}$. Further we have succeeded in writing $\Lo{1}{\varphi^2}{x}$ purely in terms of normal ordered expressions. As we will see in the following, this is particularly helpful in the calculation of matrix elements of this left representative, since no infinite sums have to be performed.\\

We conclude this section with explicit results for OPE coefficients obtained from the left representative $\Lo{1}{\varphi^2}{x}$ by taking the according matrix elements. 
Again it is helpful to introduce some additional notation in order to keep equations short.

\begin{definition}[Partitions of multisets]\label{def:partitions of multisets}\index{symbols}{Pij@$\mathcal{P}_{i,j}[\mathfrak{A}]$}%
By $\mathcal{P}_{i,j}(\mathfrak{A})$ we denote the set of partitions of any multiset $\mathfrak{A}$ of cardinality $i+j$ into two submultisets of cardinalty $i$ and $j$ respectively, whose sum is $\mathfrak{A}$, i.e.

\begin{equation}
 \mathcal{P}_{i,j}\left[\mathfrak{A}=\Lbag a_1,\ldots,a_{i+j}\Rbag \right]=\left\{  \mathfrak{P}_1=\Lbag a_{n_1},\ldots,a_{n_i}\Rbag, \mathfrak{P}_2=\Lbag a_{n_{i+1}},\ldots,a_{n_{i+j}}\Rbag \, \Big|\, \mathfrak{P}_1\uplus \mathfrak{P}_2=\mathfrak{A} \right\} 
\end{equation}
\end{definition}

\begin{results}[OPE coefficients $\Co{1}{b}{\varphi^2\, a}$]\label{res:model NLO Lphi C1phi2}%
The matrix elements of the left representative $\Lo{1}{\varphi^2}{x}$ given in result \ref{res:model NLO Lphi2} are
\begin{equation}
\Co{1}{b}{\varphi^2\, a}(x)=
                           0 \qquad \text{for }g(a,b)>6 \text{ or }g(a,b)=\text{odd}\text{ or }g(a,b)=0
\end{equation}
and otherwise
\begin{align}
\Co{1}{b}{\varphi^2\, a}(x)=& 2\sum\limits_{\mathfrak{A}\in\mathcal{I}^b_a(6)}\sum\limits_{\mathcal{P}_{5,1}[\mathfrak{A}]}\sum\limits_{J,J_1,J_2}
f^b_a[\mathfrak{A}]\, \D{1}{P_1}{J_1}\,\Delta_{0}[\mathfrak{P_2}]_{J_2}\,T[J_1,J_2]_{J}\, S_J(\hat{x})\, r^{\mathbf{d}}
\notag\\
+&\sum\limits_{\mathfrak{B}\in\mathcal{I}^b_a(4)}\sum\limits_{J}
f^b_a[\mathfrak{B}]\, \Ro{1}{B}{J}{p}{2}\cdot S_J(\hat{x})\cdot r^{\mathbf{d}}
\label{eq:model NLO Lphi C1phi2 result}
\end{align}
with

\begin{equation}
 \Lambda_1[\varphi^2,\mathfrak{A}=\Lbag \mathfrak{l}_1^q,\ldots,\mathfrak{l}_{4}^q\Rbag,r ]_{J}:=\,  s[\mathfrak{A}]\cdot T[\mathfrak{A}]_{J}\, (R_1)_{\varphi^2}[d_{\mathfrak{A}}-3/2,J,q,r] \quad .
\label{eq:model NLO Lphi2 additional notation R}
\end{equation}\index{symbols}{Lambda1p@$\Lambda_1[\varphi^2,\mathfrak{A}]_{J}$}

\end{results}

\begin{proof}

The first two statements, i.e. vanishing of the OPE coefficient for $g(a,b)>6$ and $g(a,b)=$odd, follow simply from proposition \ref{propos:labelings}.
For the remaining values of $g(a,b)$ we first note that

\begin{equation}
\begin{split}
 2\bra{v_b} :\Lo{1}{\varphi}{x}\Lo{0}{\varphi}{x}: \ket{v_a}=&2\bra{v_b} : \begin{tikzpicture}[baseline=0cm]
                                                                      \node at (0,.7) {$x$} child[level distance = 1cm,sibling distance=1.5cm]{[fill] circle (1.5pt) node(1){} } child[level distance=.5cm,sibling distance=.75cm] {[fill] circle (1.5pt) child[sibling distance=.25cm] {[fill] circle (1.5pt)node(2){}  } child[sibling distance=.25cm] {[fill] circle (1.5pt) } child[sibling distance=.25cm] {[fill] circle (1.5pt) } child[sibling distance=.25cm] {[fill] circle (1.5pt) } child[sibling distance=.25cm] {[fill] circle (1.5pt) }};
                                                                     \end{tikzpicture}: 
 \ket{v_a}\\=&2\sum_{\mathfrak{A}\in\mathcal{I}^b_a(6)}\sum_{\mathcal{P}_{5,1}[\mathfrak{A}]}\bra{v_b} : \begin{tikzpicture}[baseline=0cm]
                                                                      \node at (0,.7) {$x$} child[level distance = 1cm,sibling distance=1.5cm]{[fill] circle (1.5pt) node[below]{\begin{small} $\mathfrak{P}_2$ \end{small} } } child[level distance=.5cm,sibling distance=.75cm] {[fill] circle (1.5pt) child[sibling distance=.25cm] {[fill] circle (1.5pt)node(2){}  } child[sibling distance=.25cm] {[fill] circle (1.5pt) } child[sibling distance=.25cm] {[fill] circle (1.5pt) node[below]{\begin{small} $\mathfrak{P}_1$ \end{small} } } child[sibling distance=.25cm] {[fill] circle (1.5pt) } child[sibling distance=.25cm] {[fill] circle (1.5pt) }};
                                                                     \end{tikzpicture} :
 \ket{v_a}\cdot s[\mathfrak{P}_1]\, s[\mathfrak{P}_2]
\end{split}
\end{equation}
Here the second equality holds for the following reasons: For the matrix element not to vanish, the labels of the six leaves above have to be in $\mathcal{I}^b_a(6)$ by definition, which explains the first sum on the right side.
Now we have to attach five of those labels to leaves entering the vertex, and one to the leaf directly connected to the root. Therefore, we split the multiset $\mathfrak{A}$ into two submultiset: The multiset $\mathfrak{P}_1$ of cardinality $5$ and the multiset $\mathfrak{P}_2$ of cardinality $1$. The labels in $\mathfrak{P}_1$ are then attached to the five leaves entering the vertex in any order and we have to multiply by the symmetry factor $s[\mathfrak{P}_1]$ in order to account for all the diagrams that may be obtained by permutations of those five leaves. Analogously, we attach the one label in $\mathfrak{P}_2$ to the remaining leaf and multiply by $s[\mathfrak{P}_2]$, obtaining the right side of the equation above.

Now it remains to translate the diagram into a formula. Again the numerical factor obtained from the action of the six ladder operators on $\ket{v_a}$ is denoted by $f^b_a[\mathfrak{A}]$. Further, coupling of the five spherical harmonics associated to the leaves entering the vertex gives a factor of $\sum_{J_1}T[\mathfrak{P}_1]_{J_1}S_{J_1}(\hat{x})$. The remaining leaf contributes one additional spherical harmonic, which we write as $\sum_{J_2}T[\mathfrak{P}_2]_{J_2}S_{J_2}(\hat{x})$. Coupling these two contributions then yields the factor\linebreak $\sum_{J,J_1,J_2}T[\mathfrak{P}_1]_{J_1}T[\mathfrak{P}_2]_{J_2}T[J_1,J_2]_JS_{J}(\hat{x})$. The vertex in the diagram gives an additional factor $D[d_{\mathfrak{P}_1},J_1,r]$ and the overall scaling degree is again by definition $-\mathbf{d}$. In summary, we have found

\begin{equation}
\begin{split}
 &2\bra{v_b} :\Lo{1}{\varphi}{x}\Lo{0}{\varphi}{x}: \ket{v_a}=\\&2\sum_{\mathfrak{A}\in\mathcal{I}^b_a(6)}\sum_{\mathcal{P}_{5,1}[\mathfrak{A}]} f^b_a[\mathfrak{A}] s[\mathfrak{P}_1]\, s[\mathfrak{P}_2]\sum_{J,J_1,J_2}T[\mathfrak{P}_1]_{J_1}T[\mathfrak{P}_2]_{J_2}T[J_1,J_2]_JS_{J}(\hat{x})D[d_{\mathfrak{P}_1},J_1,r] r^{\mathbf{d}}
\end{split}
\end{equation}
Recalling the definitions of $\Delta_{0}[\mathfrak{A}]_{J}$ and $\D{1}{A}{J}$, we indeed verify the first line of eq.\eqref{eq:model NLO Lphi C1phi2 result}.

Now consider the contribution from the remainder term. Here we find

\begin{equation}
\begin{split}
 \bra{v_b}(R_1)_{\varphi^2}(x)\ket{v_a}=&\sum_J(R_1)_{\varphi^2}(\mathbf{d}-1,J,q,r)S_J(\hat{x})\,r^{\mathbf{d}}\cdot \bra{v_b}\Lo{0}{\varphi^4}{x;\mathbf{d}-1,q}_{J}\ket{v_a}\\
=&\sum_{\mathfrak{B}\in\mathcal{I}^b_a(4)}\sum_J(R_1)_{\varphi^2}(d_{\mathfrak{B}}-3/2,J,q,r)S_J(\hat{x})\,r^{\mathbf{d}}\cdot f^b_a[\mathfrak{B}]\, \Delta_{0}[\mathfrak{B}]_{J}
\end{split}
\end{equation}
where in the first equality eq.\eqref{eq:model NLO Lphi2 remainder 1} was used. The second equality follows from result \ref{res:model NLO Lphi C1phi} and from the dimensional analysis

\begin{equation}
\mathbf{d}-1=-sd\, \Co{1}{b}{\varphi^2\, a}-1=|b|-|a|-2
\end{equation}
and

\begin{equation}
d_{\mathfrak{A}}=-sd\, \Co{i}{b}{\varphi a}=|b|-|a|-1/2\quad \text{for }\mathfrak{A}\in\mathcal{I}^b_a(n)
\end{equation}
This confirms eq.\eqref{eq:model NLO Lphi C1phi2 result}.

To finish the proof we have to show that the OPE coefficient vanishes for $g(a,b)=0$, i.e. for $a=b$. We start the computation in the first line of eq.\eqref{eq:model NLO Lphi C1phi2 result} and want to determine the set of multisets $\mathcal{I}^a_a(6)$. The condition $a=a+\sum_{i=1}^6e_{\mathfrak{l}^q_i}$ restricts the multisets in $\mathcal{I}^a_a(6)$ to the form $\mathfrak{A}=\Lbag+(l_1m_1),-(l_1m_1),\ldots,+(l_3m_3),-(l_3m_3)\Rbag$. Hence, the submultisets are either of the form

\begin{equation}
\mathfrak{P}_1=\Lbag+(l_im_i),-(l_im_i),+(l_jm_j),-(l_jm_j),+(l_km_k)\Rbag \text{ and } \mathfrak{P}_2=\Lbag-(l_km_k)\Rbag  
\label{eq:model NLO Lphi2 Cphi2 decomp1}
\end{equation}
or
\begin{equation}
\mathfrak{P}_1=\Lbag+(l_im_i),-(l_im_i),+(l_jm_j),-(l_jm_j),-(l_km_k)\Rbag \text{ and } \mathfrak{P}_2=\Lbag+(l_km_k)\Rbag  
\label{eq:model NLO Lphi2 Cphi2 decomp2}
\end{equation}
with $i,j,k\in\{1,2,3\}$. For the first case, we obtain

\begin{equation}
\begin{split}
 \sum\limits_{J,J_1,J_2}
\D{1}{P_1}{J_1}\,\Delta_{0}[\mathfrak{P}_2]_{J_2}\,T[J_1,J_2]_{J}\, S_J(\hat{x})
=s[\mathfrak{P}_1]\frac{\log r}{2l_k+1}
\end{split}
\end{equation}
Here no spherical harmonics and coupling tensors appear, as we may simply apply the addition theorem three times. Further we used the fact that 

\begin{equation}
D[d_{\mathfrak{P}_1}=l_k,J_1=l_k,r]=\frac{\log r}{2l_k+1}\quad.
\end{equation}
Similarly, we find for the second type of decomposition, eq.\eqref{eq:model NLO Lphi2 Cphi2 decomp2}, that

\begin{equation}
\begin{split}
 \sum\limits_{J,J_1,J_2}
\D{1}{P_1}{J_1}\,\Delta_{0}[\mathfrak{P}_2]_{J_2}\,T[J_1,J_2]_{J}\, S_J(\hat{x})
=-s[\mathfrak{P}_1]\frac{\log r}{2l_k+1}\quad,
\end{split}
\end{equation}
where again we applied the addition theorem of the spherical harmonics and where we used

\begin{equation}
D[d_{\mathfrak{P}_1}=-l_k-1,J_1=l_k,r]=-\frac{\log r}{2l_k+1}\quad.
\end{equation}
Thus, the contributions from the two decompositions of any multiset $\mathfrak{A}\in\mathcal{I}^b_a(6)$ cancel each other, implying that the first line in eq.\eqref{eq:model NLO Lphi C1phi2 result} vanishes.

It remains to consider the contribution from the remainder term. Here the multisets $\mathfrak{B}\in\mathcal{I}^a_a(4)$ are restricted to be of the form $\mathfrak{B}=\Lbag+(l_1m_1),-(l_1m_1),+(l_2m_2),-(l_2m_2)\Rbag$, which implies $d_{\mathfrak{B}}=-1/2$ and $J=0$. However, it follows from eqs.\eqref{eq:model NLO Lphi2 remainder 3} and \eqref{eq:model NLO Lphi2 infinite sums formula result general} that

\begin{equation}
 (R_1)_{\varphi^2}(d_{\mathfrak{B}}-3/2=-2, J=0, r;q=2)=0
\end{equation}
which follows from inspection of the summation limits in those expression. Therefore, we also have $\Ro{1}{B}{J}{p}{2}=0$ for all $\mathfrak{B}\in\mathcal{I}^a_a(4)$, so the second line of eq.\eqref{eq:model NLO Lphi C1phi2 result} vanishes as well. Hence, we have found that $\Co{1}{a}{\varphi\, a}(x)=0$.

\end{proof}

Again we compute a specific example OPE coefficient in order to illustrate the application of result \ref{res:model NLO Lphi C1phi2}.

\begin{example} Consider the coefficient $\Co{1}{\varphi^p}{\varphi^2\, \varphi^{p+3}\del^{l}\varphi}$ for $l\neq 0$. We start with the determination of the multisets in $\mathcal{I}^{\varphi^p}_{\varphi^{p+3}\del^{l}\varphi}(6)$. Again this set consists of only one element

\begin{equation}
 \mathcal{I}^{\varphi^p}_{\varphi^{p+3}\del^{l}\varphi}(6)=\{\mathfrak{A}=\Lbag -(00),-(00),-(00),-(00),+(00),-(lm) \Rbag\}
\end{equation}
The numerical prefactor then becomes

\begin{equation}
 f^{\varphi^p}_{\varphi^{p+3}\del^{l}\varphi}[\mathfrak{A}]=(p+3)(p+2)(p+1)p=\frac{(p+3)!}{(p-1)!}\quad.
\end{equation}
Since all zero entries in the coupling tensors $T$ may be dropped, only the product $T[-(lm)]_JS_J(\hat{x})=S_{lm}(\hat{x})$ remains in the case at hand (equivalently, the coupling of $S_{00}\cdots S_{00}S_{lm}=S_{lm}$). The multiset $\mathfrak{A}$ may be split into two submultisets of cardinality $5$ and $1$ in three different ways:

\begin{align}
 &\mathfrak{P}_{1a}=\Lbag -(00),-(00),-(00),+(00),-(lm) \Rbag\qquad \mathfrak{P}_{2a}=\Lbag-(00)\Rbag\notag\\
 &\mathfrak{P}_{1b}=\Lbag -(00),-(00),-(00),-(00),-(lm) \Rbag\qquad \mathfrak{P}_{2b}=\Lbag+(00)\Rbag\notag\\
 &\mathfrak{P}_{1c}=\Lbag -(00),-(00),-(00),-(00),+(00)\Rbag\qquad \mathfrak{P}_{2c}=\Lbag-(lm)\Rbag
\end{align}
As the second multiset only contains one element, its symmetry factor $s[\mathfrak{P}_2]$ is always 1. For the other multisets we obtain

\begin{align}
 &s[\mathfrak{P}_{1a}]=20\\
 &s[\mathfrak{P}_{1b}]=s[\mathfrak{P}_{1c}]=5\quad.
\end{align}
It remains to determine the factors

\begin{align}
 &D[d_{\mathfrak{P}_{1a}}=-l-2,J=l,r]=\frac{1}{2(l+1)}\notag\\
 &D[d_{\mathfrak{P}_{1b}}=-l-3,J=l,r]=\frac{1}{4l+6)}\notag\\
 &D[d_{\mathfrak{P}_{1c}}=-2,J=0,r]=\frac{1}{2}\\
\end{align}
and the power of $r$

\begin{equation}
\mathbf{d}= |\varphi^p|-|\varphi^{p+3}\del^{l}\varphi|-|\varphi^2|=-l-3\quad.
\end{equation}
This finishes the discussion of the contribution from the first line of eq.\eqref{eq:model NLO Lphi C1phi2 result}. In the second line we find again only one allowed multiset

\begin{equation}
 \mathcal{I}^{\varphi^p}_{\varphi^{p+3}\del^{l}\varphi}(4)=\{\mathfrak{B}=\Lbag -(00),-(00),-(00),-(lm) \Rbag\}\quad.
\end{equation}
Note, however, that all indices in this multiset correspond to annihiliation operators (i.e. they have a negative sign), so we have to determine remainder terms of the form $(R_1)_{\varphi^2}(d_{\mathfrak{B}}-3/2,J=l,q=4)$, which vanish by eq.\eqref{eq:model NLO Lphi2 remainder vanish}.

Putting all the pieces together (recall also the factor $2$ in eq.\eqref{eq:model NLO Lphi C1phi2 result}), we arrive at the result

\begin{equation}
\Co{1}{\varphi^p}{\varphi^2\, \varphi^{p+3}\del^{l}\varphi}(x)=\frac{(p+3)!}{(p-1)!}\sum_{m}S_{lm}(\hat{x})r^{-l-3}\left(5+\frac{5}{2l+3}+\frac{20}{l+1}\right)
\end{equation}
in accordance with appendix \ref{app:OPE table}.

\end{example}

\subsection{Construction of \texorpdfstring{$\Lo{1}{\varphi^3}{x}$}{L(1)phi3}}\label{subsec:L1phi3}

In the previous section we were able to perform the first iteration step at next to leading order, i.e. we constructed $\Lo{1}{\varphi^2}{x}$ out of $\Lo{1}{\varphi}{x}$ and $\Lo{0}{\varphi}{x}$, without the need to subtract any counterterms, see eq.\eqref{eq:model NLO Lphi2 limit}. The first true example of our analog of renormalization will be encountered in the present section. The iteration step yielding $\Lo{1}{\varphi^3}{x}$ is expressed by the formula (see \eqref{eq:model perturbations limit 2})

\begin{equation}
\begin{split}
 \Lo{1}{\varphi^3}{x}=\lim_{y\to x}\Big[&\Lo{0}{\varphi}{x}\Lo{1}{\varphi^2}{y}+\Lo{1}{\varphi}{x}\Lo{0}{\varphi^2}{y}-\Co{1}{\varphi^3}{\varphi^2\, \varphi}(x-y)\Lo{0}{\varphi^3}{x}\\
 -& \Co{0}{\varphi}{\varphi^2\,\varphi}(x-y)\Lo{1}{\varphi}{x}- \Co{0}{\varphi_{1m}}{\varphi^2\,\varphi}(x-y)\Lo{1}{\varphi_{1m}}{x} \Big]\, .
\end{split}
\label{eq:model NLO Lphi3 limit}
\end{equation}
Here the expressions (assuming $x,y$ collinear and $|x|>|y|$)

\begin{align}
 &\Co{1}{\varphi^3}{\varphi^2\, \varphi}(x-y) =  20\log |x-y|=20\left[\log|x|+\log\left(1-\frac{|y|}{|x|}\right)\right]\label{eq:model NLO Lphi3 CT log} \\
&\Co{0}{\varphi}{\varphi^2\,\varphi}(x-y)  =  \frac{2}{|x-y|}
\label{eq:model NLO Lphi3 CT}
\end{align}
are divergent in the limit $y\to x$. Therefore in our analysis of the first two summands on the right side of eq.\eqref{eq:model NLO Lphi3 limit} we have to find terms that precisely cancel these infinities, and the remaining expressions should then be finite. This procedure might be interpreted as a sort of renormalization of the products $\Lo{0}{\varphi}{x}\Lo{1}{\varphi^2}{y}$ and $\Lo{1}{\varphi}{x}\Lo{0}{\varphi^2}{y}$. Note that the two counterterms have different divergent behavior: Polynomial and logarithmic. We have already encountered polynomial counterterms in our constructions at zeroth order, and we will see in the following that indeed the polynomial divergences of the present section are very closely related to those of the free theory computations. The logarithmic counterterm, however, is a completely new feature at first perturbation order, and it will turn out to be considerably more complicated to find the corresponding expressions in eq.\eqref{eq:model NLO Lphi3 limit} canceling this term. 

Our plan for this section is the following: As in previous calculations, it is the aim to write $\Lo{1}{\varphi^3}{x}$ as a normal ordered product of left representatives plus some finite remainder $(R_1)_{\varphi^3}$, simliar to eq.\eqref{eq:model NLO Lphi2 normal ordered}. Basically, the procedure is analogous to the $\Lo{1}{\varphi^2}{x}$ case: We exploit the commutation relations for the ladder operators, eq.\eqref{eq:model free field ladder operators commutation relations}, to bring all expressions into normal order, picking up additional terms whenever we have to exchange the order of a creation and annihilation operator with the same index. These additional terms include infinite sums, whose divergent behavior has to be analyzed. In contrast to the previous section, we expect these sums to be divergent in the limit $y\to x$. This divergence should be cured by the subtraction of the counterterms, see eq.\eqref{eq:model NLO Lphi3 CT}. Hence, the above mentioned remainder $(R_1)_{\varphi^3}$ is precisely the difference of the additional terms we pick up in the process of bringing all expressions into normal order, and the counterterms, where we take the limit $y\to x$. The section is again concluded by a discussion of the OPE coefficients obtained from $\Lo{1}{\varphi^3}{x}$ by taking matrix elements.\\

To begin with, we present

\begin{results}[Left representative $\Lo{1}{\varphi^3}{x}$ in normal ordered form]\label{res:model NLO Lphi3}
 
\begin{equation}
\Lo{1}{\varphi^3}{x}=\,3:\Lo{1}{\varphi}{x}\Lo{0}{\varphi^2}{x}:\, + 3:\Lo{0}{\varphi}{x}(R_1)_{\varphi^2}(x):+ (R_1)_{\varphi^3}(x)
\label{eq:model NLO Lphi3 remainder form}
\end{equation}
where

\begin{equation}
\begin{split}
&\begin{tikzpicture}
  \node at (4.6,0) {$x$} child[level distance = 1cm,sibling distance=1cm]{[fill] circle (1.5pt) node(5){} };
\node at (6,0) {$y$} child[level distance = 1cm,sibling distance=1.5cm]{[fill] circle (1.5pt) node(6){} } child[level distance=.5cm,sibling distance=1cm] {[fill] circle (1.5pt) child[sibling distance=.25cm] {[fill] circle (1.5pt) node(7){} } child[sibling distance=.25cm] {[fill] circle (1.5pt)  } child[sibling distance=.25cm] {[fill] circle (1.5pt)  } child[sibling distance=.25cm] {[fill] circle (1.5pt)  } child[sibling distance=.25cm] {[fill] circle (1.5pt) node(8){} }};
\node at (7.6,0) {$x$} child[level distance = 1cm,sibling distance=1cm]{[fill] circle (1.5pt) node(9){} }; \node at (9,0) {$y$} child[level distance=.5cm,sibling distance=.75cm] {[fill] circle (1.5pt) child[sibling distance=.25cm] {[fill] circle (1.5pt) node(10){} } child[sibling distance=.25cm] {[fill] circle (1.5pt)  } child[sibling distance=.25cm] {[fill] circle (1.5pt) } child[sibling distance=.25cm] {[fill] circle (1.5pt)  } child[sibling distance=.25cm] {[fill] circle (1.5pt) node(11){} }}child[level distance = 1cm,sibling distance=1cm]{[fill] circle (1.5pt) node(12){} };
\node at (7.35,-.5){$+$};
\node at (9.8,-.5){$+$};
\path [<-, draw,dashed] (5) -- +(0,-.5) -| (8);
\path [<-, draw,dashed] (6) -- +(0,-.3) -| (7);
\path [<-, draw,dashed] (9) -- +(0,-.5) -| (10);
\path [<-, draw,dashed] (11) -- +(0,-.5) -| (12);
\node at (10.6,0) {$x$}  child[level distance=.5cm,sibling distance=1cm] {[fill] circle (1.5pt) child[sibling distance=.25cm] {[fill] circle (1.5pt) node(13){} } child[sibling distance=.25cm] {[fill] circle (1.5pt)  } child[sibling distance=.25cm] {[fill] circle (1.5pt) } child[sibling distance=.25cm] {[fill] circle (1.5pt) } child[sibling distance=.25cm] {[fill] circle (1.5pt) node(14){} }};
\node at (12.2,0) {$y$} child[level distance = 1cm,sibling distance=1cm]{[fill] circle (1.5pt) node(15){} }  child[level distance = 1cm,sibling distance=1cm]{[fill] circle (1.5pt) node(16){} };
\path [<-, draw,dashed] (13) -- +(0,-.5) -| (16);
\path [<-, draw,dashed] (14) -- +(0,-.3) -| (15);
\node at (14.5,-.5){$-\log|x-y|\Lo{0}{\varphi^3}{x}\Big]$};
\node at (2.6,-.5){$(R_1)_{\varphi^3}(x)=20\lim\limits_{y\to x}\Big[$};
\end{tikzpicture}\\
&=\lim_{y\to x}\Big[ \contraction[.5ex]{\Lo{0}{\varphi}{x}}{\Lo{0}{\varphi}{y}}{}{\Lo{1}{\varphi}{y}}\contraction[1ex]{}{\Lo{0}{\varphi}{x}}{\Lo{0}{\varphi}{y}}{\Lo{1}{\varphi}{y}} \Lo{0}{\varphi}{x}\Lo{0}{\varphi}{y}\Lo{1}{\varphi}{y} + \contraction{}{\Lo{0}{\varphi}{x}}{}{\Lo{1}{\varphi}{y}}\contraction[.5ex]{\Lo{0}{\varphi}{x}}{\Lo{1}{\varphi}{y}}{}{\Lo{0}{\varphi}{y}}  \Lo{0}{\varphi}{x}\Lo{1}{\varphi}{y}\Lo{0}{\varphi}{y}+\contraction[.5ex]{}{\Lo{1}{\varphi}{x}}{}{\Lo{0}{\varphi}{y}} \contraction{}{\Lo{1}{\varphi}{x}}{\Lo{0}{\varphi}{y}}{\Lo{0}{\varphi}{y}}  \Lo{1}{\varphi}{x}\Lo{0}{\varphi}{y}\Lo{0}{\varphi}{y}\\
 &\qquad-20\log |x-y|\Lo{0}{\varphi^3}{x} \Big]
\end{split}
\label{eq:model NLO Lphi3 remainder form b}
\end{equation}\index{symbols}{R13@$(R_1)_{\varphi^3}(x)$}

\end{results}
\begin{proof}

Starting at eq.\eqref{eq:model NLO Lphi3 limit}, we can bring $\Lo{1}{\varphi^3}{x}$ into the desired form simply by applying the results of the previous chapters, and without the need to perform any new calculations. This can be seen from the following observations: First, we substitute eq.\eqref{eq:model NLO Lphi2 normal ordered} for the left representative $\Lo{1}{\varphi^2}{x}$ into eq.\eqref{eq:model NLO Lphi3 limit}, obtaining

\begin{equation}
\begin{split}
 \Lo{1}{\varphi^3}{x}=\lim_{y\to x}\Big[&\Lo{0}{\varphi}{x}\,2:\Lo{1}{\varphi}{y}\Lo{0}{\varphi}{y}:+\Lo{0}{\varphi}{x}(R_1)_{\varphi^2}(y)+\Lo{1}{\varphi}{x}\,:\Lo{0}{\varphi}{y}\Lo{0}{\varphi}{y}:\\
 -&\frac{2}{x-y}\Lo{1}{\varphi}{x} - 20\log |x-y|\Lo{0}{\varphi^3}{x}- \Co{0}{\varphi_{1m}}{\varphi^2\,\varphi}(x-y)\Lo{1}{\varphi_{1m}}{x} \Big]\, .
\end{split}
\label{eq:model NLO Lphi3 limit substitution}
\end{equation}
Now let us try to bring the operators in the first line of this equation into normal order. For the first term we find

\begin{equation}
\begin{split}
2\lim_{y\to x}\, \Big[\Lo{0}{\varphi}{x}\,:\Lo{1}{\varphi}{y}&\Lo{0}{\varphi}{y}:\Big]=2:\Lo{1}{\varphi}{x}\Lo{0}{\varphi^2}{x}:\\
+&2\lim_{y\to x}\Big[\contraction{}{\Lo{0}{\varphi}{x}}{}{\Lo{1}{\varphi}{y}} \Lo{0}{\varphi}{x}\Lo{1}{\varphi}{y}\Lo{0}{\varphi}{y}
 + \contraction{}{\Lo{0}{\varphi}{x}}{\Lo{1}{\varphi}{y}}{\Lo{0}{\varphi}{y}} \Lo{0}{\varphi}{x}\Lo{1}{\varphi}{y}\Lo{0}{\varphi}{y}\Big]
\end{split}
\label{eq:model NLO Lphi3 limit NO 1}
\end{equation}
where we have already performed the limit in the (finite) normal ordered term. Similarly, we may write the last term in the first line of eq.\eqref{eq:model NLO Lphi3 limit substitution} as

\begin{equation}
\begin{split}
\lim_{y\to x}\, \Big[\Lo{1}{\varphi}{x}\,:\Lo{0}{\varphi^2}{y}:\Big]\, =\, &:\Lo{1}{\varphi}{x}\Lo{0}{\varphi^2}{x}:\\ +&\lim_{y\to x}\Big[2\contraction{}{\Lo{1}{\varphi}{x}}{}{\Lo{0}{\varphi}{y}} \Lo{1}{\varphi}{x}\Lo{0}{\varphi}{y}\Lo{0}{\varphi}{y}
 +\contraction[.5ex]{}{\Lo{1}{\varphi}{x}}{}{\Lo{0}{\varphi}{y}} \contraction{}{\Lo{1}{\varphi}{x}}{\Lo{0}{\varphi}{y}}{\Lo{0}{\varphi}{y}}  \Lo{1}{\varphi}{x}\Lo{0}{\varphi}{y}\Lo{0}{\varphi}{y}\Big]
\end{split}
\label{eq:model NLO Lphi3 limit NO 2}
\end{equation}
Finally, remembering the definition of the remainder term $(R_1)_{\varphi^2}$, eq.\eqref{eq:model NLO Lphi2 infinite sums formula}, the last remaining term in the first line of eq.\eqref{eq:model NLO Lphi3 limit substitution} may be put into the form

\begin{equation}
\begin{split}
&\lim_{x\to y}\, \Big[\Lo{0}{\varphi}{x}(R_1)_{\varphi^2}(y)\Big]=\, :\Lo{0}{\varphi}{x}(R_1)_{\varphi^2}(x):+\lim_{y\to x}\contraction{}{\Lo{0}{\varphi}{x}}{}{(R_1)_{\varphi^2}} \Lo{0}{\varphi}{x}(R_1)_{\varphi^2}(y)\\
=&\, :\Lo{0}{\varphi}{x}(R_1)_{\varphi^2}(x):+\lim_{y\to x}\Big[ \contraction[.5ex]{\Lo{0}{\varphi}{x}}{\Lo{0}{\varphi}{y}}{}{\Lo{1}{\varphi}{y}}\contraction[1ex]{}{\Lo{0}{\varphi}{x}}{\Lo{0}{\varphi}{y}}{\Lo{1}{\varphi}{y}} \Lo{0}{\varphi}{x}\Lo{0}{\varphi}{y}\Lo{1}{\varphi}{y} + \contraction{}{\Lo{0}{\varphi}{x}}{}{\Lo{0}{\varphi}{y}}\contraction[.5ex]{\Lo{0}{\varphi}{x}}{\Lo{0}{\varphi}{y}}{}{\Lo{1}{\varphi}{y}}  \Lo{0}{\varphi}{x}\Lo{1}{\varphi}{y}\Lo{0}{\varphi}{y}\Big]
\end{split}
\label{eq:model NLO Lphi3 limit NO 3}
\end{equation}
Now recall from our construction of the free theory left representatives, in particular from the computation of $\Lo{0}{\varphi^2}{x}$ in section \ref{subsec:Construction of L(0)a}, that

\begin{equation}
 \lim_{y\to x}\contraction{}{\Lo{0}{\varphi}{x}}{}{\Lo{0}{\varphi}{y}} \Lo{0}{\varphi}{x}\Lo{0}{\varphi}{y}=\lim\limits_{y\to x}\,\frac{1}{x-y}\, .
\label{eq:model NLO Lphi3 subtree polynomial divergence}
\end{equation}
which implies for the last summand in eq.\eqref{eq:model NLO Lphi3 limit NO 1}

\begin{equation}
2\lim_{y\to x}\contraction{}{\Lo{0}{\varphi}{x}}{\Lo{1}{\varphi}{y}}{\Lo{0}{\varphi}{y}} \Lo{0}{\varphi}{x}\Lo{1}{\varphi}{y}\Lo{0}{\varphi}{y}=2\lim_{y\to x}\,\frac{1}{x-y}\, \Lo{1}{\varphi}{y}\, .
\end{equation}
A Taylor expansion of $\Lo{1}{\varphi}{y}$ around $x$ shows that this contribution cancels with the polynomial counterterms in the second line of eq.\eqref{eq:model NLO Lphi3 limit}. Furthermore, we can write the sum of the remaining expressions with one contraction as

\begin{equation}
2\lim_{y\to x}\Big[\contraction{}{\Lo{0}{\varphi}{x}}{}{\Lo{1}{\varphi}{y}} \Lo{0}{\varphi}{x}\Lo{1}{\varphi}{y}\Lo{0}{\varphi}{y}+\contraction{}{\Lo{1}{\varphi}{x}}{}{\Lo{0}{\varphi}{y}} \Lo{1}{\varphi}{x}\Lo{0}{\varphi}{y}\Lo{0}{\varphi}{y}\Big] =2:\Lo{0}{\varphi}{x}(R_1)_{\varphi^2}(x):
\end{equation}
where the definition of the remainder $(R_1)_{\varphi^2}$ has been used. Thus, if we plug eqs.\eqref{eq:model NLO Lphi3 limit NO 1}, \eqref{eq:model NLO Lphi3 limit NO 2} and \eqref{eq:model NLO Lphi3 limit NO 3} into eq.\eqref{eq:model NLO Lphi3 limit substitution} and further use the two identities above, we end up with result \ref{res:model NLO Lphi3} as claimed.

\end{proof}
The first two terms in eq.\eqref{eq:model NLO Lphi3 remainder form} are known, so it remains to find an explicit formula for the new remainder term $(R_1)_{\varphi^3}$. Unfortunately, we have not been able to determine this operator completely, but we have found the following partial results.

\begin{results}[The remainder operator $(R_1)_{\varphi^3}$]\label{res:model NLO Rphi3}\index{symbols}{R13@$(R_1)_{\varphi^3}(x)$}%
Let

\begin{equation}
 (R_1)_{\varphi^3}(x)=\sum_{q=0}^3\sum_{d=-\infty}^{\infty}\sum_{j=0}^{\infty} r^dS_{jm}(\hat{x})\Lo{0}{\varphi^3}{x;d,q}_{jm}(R_1)_{\varphi^3}(d,j,q;r)\quad.
\label{eq:model NLO Lphi3 remainder grading}
\end{equation}
Then

\begin{equation}
\begin{split}
 &(R_1)_{\varphi^3}(d,j,q=0;r)=20 \log r\left[\sum_{l,l'=0}^{l+l'\leq d}\sum_{J_1} \braket{jl\, 00}{J_1\, 0}^2 \begin{pmatrix}
                        J_1 & l' & d-l-l'\\ 0 & 0 & 0
                       \end{pmatrix}^2-1\right]\\
+&20\sum_{\atop{J_1,J_2}{M_1,M_2}}\left(\sum_{\atop{l,l'}{l+l'\leq d}}+2\sum_{l=0}^d\sum_{l'=d+1-l}^{\frac{J_1+d-l}{2}}\right)\frac{\braket{jl\, 00}{J_1\, 0}^2\braket{J_1 l'\, 00}{J_2\, 0}^2}{(l+l'-d-1)(l+l'-d)-J_2(J_2+1)}\Big|_{\text{denom.}\neq 0}\\
 &-\frac{10}{\pi}\sum_{k=0}^j\sum_{l'=0}^{(j-2k+d)/2}\frac{\Gamma(k+\frac{1}{2})\Gamma(j-k+\frac{1}{2})\Gamma(\frac{d+j-2k+1}{2}-l')\Gamma(\frac{d+j-2k}{2}+1)}{\Gamma(k+1)\Gamma(j-k+1)\Gamma(\frac{d+j-2k}{2}-l'+1)\Gamma(\frac{d+j-2k+3}{2})}\\
 &\times L_5\left[\atop{1,d+j-2k+\frac{5}{2},d-k+\frac{3}{2},d+j-k+2,\frac{d+j-2k-l'}{2}+1,\frac{d+j-2k+l'+3}{2}}{d+j-2k+\frac{3}{2},d-k+2,d+j-k+\frac{5}{2},\frac{d+j-2k-l'+3}{2},\frac{d+j-2k+l'}{2}+2}\right]\, ,
\end{split}
\label{eq:model NLO Lphi3 remainder q=0}
\end{equation}
and

\begin{equation}
\begin{split}
 &(R_1)_{\varphi^3}(-d-3,j,q=3;r)=-20 \log r\left[\sum_{l,l'=0}^{l+l'\leq d}\sum_{J_1} \braket{jl\, 00}{J_1\, 0}^2 \begin{pmatrix}
                        J_1 & l' & d-l-l'\\ 0 & 0 & 0
                       \end{pmatrix}^2+1\right]\\
+&20\sum_{\atop{J_1,J_2}{M_1,M_2}}\left(\sum_{\atop{l,l'}{l+l'\leq d}}+2\sum_{l=0}^d\sum_{l'=d+1-l}^{\frac{J_1+d-l}{2}}\right)\frac{\braket{jl\, 00}{J_1\, 0}^2\braket{J_1 l'\, 00}{J_2\, 0}^2}{(l+l'-d-1)(l+l'-d)-J_2(J_2+1)}\Big|_{\text{denom.}\neq 0}\\ &-\frac{10}{\pi}\sum_{k=0}^j\sum_{l'=0}^{(j-2k+d)/2}\frac{\Gamma(k+\frac{1}{2})\Gamma(j-k+\frac{1}{2})\Gamma(\frac{d+j-2k+1}{2}-l')\Gamma(\frac{d+j-2k}{2}+1)}{\Gamma(k+1)\Gamma(j-k+1)\Gamma(\frac{d+j-2k}{2}-l'+1)\Gamma(\frac{d+j-2k+3}{2})}\\
 &\times L_5\left[\atop{1,d+j-2k+\frac{5}{2},d-k+\frac{3}{2},d+j-k+2,\frac{d+j-2k-l'}{2}+1,\frac{d+j-2k+l'+3}{2}}{d+j-2k+\frac{3}{2},d-k+2,d+j-k+\frac{5}{2},\frac{d+j-2k-l'+3}{2},\frac{d+j-2k+l'}{2}+2}\right]\, ,
\end{split}
\label{eq:model NLO Lphi3 remainder q=5}
\end{equation}
where $L_5$ is the non divergent part of the hypergeometric series $_6F_5$, see eq.\eqref{eq:app hypergeos L(p)}.
\end{results}
The derivation of this result can be found in appendix \ref{app:proofs}. Remark: For $d=j=0$ we obtain the simple results

\begin{align}
 &(R_1)_{\varphi^3}(d=0,j=0,q=0,r)=0\\
&(R_1)_{\varphi^3}(d=-3,j=0,q=3,r)=-40\log r
\end{align}
This may be seen as follows: The second line in eqs.\eqref{eq:model NLO Lphi3 remainder q=0} and \eqref{eq:model NLO Lphi3 remainder q=5} does not give any contribution in the case at hand, since for $l=l'=0$ the denominator in the summand would vanish. Further, for $L_5$, i.e. the finite part of the hypergemetric series, we find (for $d=j=0$)

\begin{equation}
 \lim_{\epsilon\to 0}\pFq{6}{5}{1,\frac{5}{2},\frac{3}{2},2,1,\frac{3}{2}}{\frac{3}{2},2,\frac{5}{2},\frac{3}{2},2}{1-\epsilon}=\lim_{\epsilon\to 0}\pFq{2}{1}{1,1}{2}{1-\epsilon}=\lim_{\epsilon\to 0}\frac{\log \epsilon}{\epsilon-1}
\end{equation}
Since there is no finite contribution to this hypergeometric series, we have $L_5=0$ in the case at hand. Concerning the logarithmic contribution, we derive the results given above using $\braket{0000}{00}=1$.

It remains to determine the operators $(R_1)_{\varphi^3}(x;q=1)$ and $(R_1)_{\varphi^3}(x;q=2)$. Unfortunately, in this case we have neither been able to find a closed form expression, nor have we been able to verify the cancellation of infinities. The additional difficulty here is due to the lack of relations like $d\geq j$ or $-d-3>j$, which were a result of the fact that the three ladder operators in $(R_1)_{\varphi^3}(x,q=0)$ and $(R_1)_{\varphi^3}(x,q=3)$ were either all creators, or all annihilators. Thus, we can not use the simplifications discussed in appendix \ref{app:characteristic sum}.

The end of this section is again devoted to matrix elements of the left representative $\Lo{1}{\varphi^3}{x}$, i.e. to OPE coefficients of the form $\Co{1}{b}{\varphi^3 a}$.

\begin{results}[OPE coefficients $\Co{1}{b}{\varphi^3\, a}$]\label{res:model NLO Lphi C1phi3}%
The matrix elements of the left representative $\Lo{1}{\varphi^3}{x}$ given in result \ref{res:model NLO Lphi3} are
\begin{equation}
\Co{1}{b}{\varphi^3\, a}(x)=
                           0 \qquad \text{for }g(a,b)>7 \text{ or }g(a,b)=\text{even}
\label{eq:model NLO Lphi C1phi3 result vanishing}
\end{equation}
and
\begin{align}
\Co{1}{b}{\varphi^3\, a}(x)=& 3\sum\limits_{\mathfrak{A}\in\mathcal{I}^b_a(7)}\sum\limits_{\mathcal{P}_{5,2}[\mathfrak{A}]}\sum\limits_{J,J_1,J_2}
f^b_a[\mathfrak{A}]\, \D{1}{P_1}{J_1}\,\Delta_{0}[\mathfrak{P}_2]_{J_2}\,T[J_1,J_2]_{J}\, S_J(\hat{x})\, r^{\mathbf{d}}
\notag\\ +&3\sum\limits_{\mathfrak{B}\in\mathcal{I}^b_a(5)}\sum\limits_{\mathcal{P}_{4,1}[\mathfrak{B}]}\sum\limits_{J,J_1,J_2}
f^b_a[\mathfrak{B}]\, \Ro{1}{P_1}{J_1}{p}{2}\,\Delta_{0}[\mathfrak{P}_2]_{J_2}\,T[J_1,J_2]_{J}\, S_J(\hat{x})\cdot r^{\mathbf{d}}
\notag\\
+&\sum\limits_{\mathfrak{C}\in\mathcal{I}^b_a(3)}\sum\limits_{J}
f^b_a[\mathfrak{C}]\, \Ro{1}{C}{J}{p}{3}\cdot S_J(\hat{x})\cdot r^{\mathbf{d}}
\label{eq:model NLO Lphi C1phi3 result}
\end{align}
for $g(a,b)>1$, with

\begin{equation}
 \Lambda_1[\varphi^3,\mathfrak{A}=\Lbag \mathfrak{l}_1^q,\mathfrak{l}_2^q,\mathfrak{l}_{3}^q\Rbag]_{J}:=\,  s[\mathfrak{A}]\cdot T[\mathfrak{A}]_{J}\, (R_1)_{\varphi^3}[d_{\mathfrak{A}}-1,J,q,r] \quad .
\label{eq:model NLO Lphi3 additional notation R}
\end{equation}\index{symbols}{Lambda1p3@$\Lambda_1[\varphi^3,\mathfrak{A}]_{J}$}
\end{results}
Remark: The family of coefficients with $g(a,b)=1$ is not considered here, since all these coefficients necessarily involve contributions from remainder terms $(R_1)_{\varphi^3}(x,q=1)$ or $(R_1)_{\varphi^3}(x,q=2)$, which are unknown as mentioned above.

\begin{proof}

The argumentation here is basically the same as in the previous section. Eq.\eqref{eq:model NLO Lphi C1phi3 result vanishing} is just a consequence of proposition \ref{propos:labelings}.
Eq.\eqref{eq:model NLO Lphi C1phi3 result} may be derived analogously to eq.\eqref{eq:model NLO Lphi C1phi2 result} of the previous section, so we will discuss it only briefly. In the first line the only difference is that the submultiset $\mathfrak{P}_2$ now has cardinality 2. This is due to the fact that in the case at hand we have one additional power of $\Lo{0}{\varphi}{x}$, so we sum over diagrams of the form

\begin{equation}
 \begin{tikzpicture}
   \node at (0,-.25) {$x$}  child[level distance = 1.5cm,sibling distance=1.5cm]{[fill] circle (1.5pt) node[below]{\begin{tiny}$\mathfrak{l}_1^q$\end{tiny}} } child[level distance = 1.5cm,sibling distance=2.5cm]{[fill] circle (1.5pt) node[below]{\begin{tiny}$\mathfrak{l}_2^q$\end{tiny}} } child[level distance=.75cm,sibling distance=2cm] {[fill] circle (1.5pt) child[sibling distance=.5cm] {[fill] circle (1.5pt) node[below]{\begin{tiny}$\mathfrak{l}_3^q$\end{tiny}} } child[sibling distance=.5cm] {[fill] circle (1.5pt) node[below]{\begin{tiny}$\mathfrak{l}_4^q$\end{tiny}} } child[sibling distance=.5cm] {[fill] circle (1.5pt) node[below]{\begin{tiny}$\mathfrak{l}_5^q$\end{tiny}} } child[sibling distance=.5cm] {[fill] circle (1.5pt) node[below]{\begin{tiny}$\mathfrak{l}_6^q$\end{tiny}} } child[sibling distance=.5cm] {[fill] circle (1.5pt) node[below]{\begin{tiny}$\mathfrak{l}_7^q$\end{tiny}} }};
 \end{tikzpicture}
\end{equation}
The additional leaf directly attached to the root is the cause of the additional element in $\mathfrak{P}_2$. The second line of the result accounts for the contribution from the product $3:(R_1)_{\varphi^2}(x)\Lo{0}{\varphi}{x}:$. In total this product consists of 5 ladder operators, so we have to sum over all multisets $\mathfrak{B}\in\mathcal{I}^b_a(5)$. Then these multisets are split into a submultiset of cardinality 4, which is associated to the contribution $\Lambda_1$ from the remainder operator, and a submultiset of cardinality 1, which is the argument of the contribution $\Delta_0$ from the zeroth order term. Coupling all spherical harmonics, we obtain the second line of eq.\eqref{eq:model NLO Lphi C1phi3 result}. The third line of that equation follows if we insert eq.\eqref{eq:model NLO Lphi3 remainder grading} for the remainder operator and use result \ref{res:model NLO Lphi C0phik} in order to determine the matrix elements of this expression.

\end{proof}

\subsection{Construction of \texorpdfstring{$\Lo{1}{\varphi^4}{x}$}{L(1)phi4}}

Next in line is the left representative $\Lo{1}{\varphi^4}{x}$, which can be determined from known expressions by the formula (see \eqref{eq:model perturbations limit k-1})

\begin{equation}
\begin{split}
 \Lo{1}{\varphi^4}{x}=\lim_{y\to x}\Big[&\Lo{0}{\varphi}{x}\Lo{1}{\varphi^3}{y}+\Lo{1}{\varphi}{x}\Lo{0}{\varphi^3}{y}-\Co{1}{\varphi^4}{\varphi^3\, \varphi}(x-y)\Lo{0}{\varphi^4}{x}\\
 -&\Co{1}{\varphi^2}{\varphi^3\, \varphi}(x-y)\Lo{0}{\varphi^2}{x}-\Co{1}{\varphi\varphi_{1m}}{\varphi^3\, \varphi}(x-y)\Lo{0}{\varphi\varphi_{1m}}{x}\\ -&\Co{0}{\varphi^2}{\varphi^3\, \varphi}(x-y)\Lo{1}{\varphi^2}{x}- \Co{0}{\varphi\varphi_{1m}}{\varphi^3\, \varphi}(x-y)\Lo{1}{\varphi\varphi_{1m}}{x} \Big]\, .
\end{split}
\label{eq:model NLO Lphi4 limit}
\end{equation}
The counterterms here take the values (see appendix \ref{app:OPE table})

\begin{align}
 \Co{1}{\varphi^4}{\varphi^3\, \varphi}(x-y)&=60\log |x-y| \label{eq:model NLO Lphi4 CT logarithmic} \\
 \Co{1}{\varphi^2}{\varphi^3\, \varphi}(x-y)&= [(R_1)_{\varphi^3}(x-y)]^{\varphi^2}_{\varphi}\label{eq:model NLO Lphi4 CT unknown}\\
\Co{1}{\varphi\varphi_{1m}}{\varphi^3\, \varphi}(x-y)&=[(R_1)_{\varphi^3}(x-y)]^{\varphi\varphi_{1m}}_{\varphi}\label{eq:model NLO Lphi4 CT unknown2} \\
 \Co{0}{\varphi^2}{\varphi^3\, \varphi}(x-y)&=\frac{3}{|x-y|}
\label{eq:model NLO Lphi4 CT polynomial}
\end{align}
Recall that we have not found results for the OPE coefficients $\Co{1}{b}{\varphi^3\, a}$ with $g(a,b)=1$, so the two coefficients in eqs.\eqref{eq:model NLO Lphi4 CT unknown} and \eqref{eq:model NLO Lphi4 CT unknown2} are unknown. Clearly this is a problem if one wants to determine $\Lo{1}{\varphi^4}{x}$ completely. However, in this section our modest aim is to determine only the contribution acting by more than two ladder operators. As we shall see below, this can be achieved without knowledge of the mentioned counterterms.

By now the general procedure should be familiar: We exploit the commutation relations of the ladder operators (and possibly results of the previous sections as shortcuts) to bring the desired left representative into the form

\begin{results}[Left representative $\Lo{1}{\varphi^4}{x}$ in normal ordered form]\label{res:model NLO Lphi4}
 
\begin{equation}
\begin{split}
 \Lo{1}{\varphi^4}{x}=&4:\Lo{1}{\varphi}{x}\Lo{0}{\varphi^3}{x}:+6:(R_1)_{\varphi^2}(x)\Lo{0}{\varphi^2}{x}:\\
+&4:(R_1)_{\varphi^3}(x)\Lo{0}{\varphi}{x}:+(R_1)_{\varphi^4}(x)
\end{split}
\label{eq:model NLO Lphi4 NO form}
\end{equation}
where

\begin{equation}
 \begin{split}
 (R_1)_{\varphi^4}(x)=\lim_{y\to x}\Big[&\contraction[.5ex]{}{\Lo{1}{\varphi}{x}}{}{\Lo{0}{\varphi}{y}} \contraction[1ex]{}{\Lo{1}{\varphi}{x}}{\Lo{0}{\varphi}{y}}{\Lo{0}{\varphi}{y}} \contraction[1.5ex]{}{\Lo{1}{\varphi}{x}}{\Lo{0}{\varphi}{y}\Lo{0}{\varphi}{y}}{\Lo{0}{\varphi}{y}} \Lo{1}{\varphi}{x}\Lo{0}{\varphi}{y}\Lo{0}{\varphi}{y}\Lo{0}{\varphi}{y}+ 
\contraction{}{\Lo{0}{\varphi}{x}}{}{(R_1)_{\varphi^3}}\Lo{0}{\varphi}{x}(R_1)_{\varphi^3}(y)\\
-& \Co{1}{\varphi^2}{\varphi^3\, \varphi}(x-y)\Lo{0}{\varphi^2}{x}-\Co{1}{\varphi\varphi_{1m}}{\varphi^3\, \varphi}(x-y)\Lo{0}{\varphi\varphi_{1m}}{x} \Big]
 \end{split}
\label{eq:model NLO R1phi4}
\end{equation}\index{symbols}{R14@$(R_1)_{\varphi^4}(x)$}

\end{results}
Remark: According to the scheme outlined in section \ref{subsec:perturbation via field equation} all matrix elements of $(R_1)_{\varphi^4}(x)$ should be finite. It should be noted that we have not been able to check this convergence in this thesis.

\begin{proof}

The shortest way to this result is again to first substitute $\Lo{1}{\varphi^3}{x}$ in the form of eq.\eqref{eq:model NLO Lphi3 remainder form} into eq.\eqref{eq:model NLO Lphi4 limit}.

\begin{equation}
\begin{split}
 &\Lo{1}{\varphi^4}{x}=\lim_{y\to x}\Big[3\Lo{0}{\varphi}{x}:\Lo{0}{\varphi^2}{y}\Lo{1}{\varphi}{y}:+3\Lo{0}{\varphi}{x}:\Lo{0}{\varphi}{y}(R_1)_{\varphi^2}(y):\\+&\Lo{0}{\varphi}{x}(R_1)_{\varphi^3}(y)+\Lo{1}{\varphi}{x}\Lo{0}{\varphi^3}{y}-\Co{1}{\varphi^4}{\varphi^3\, \varphi}(x-y) \Lo{0}{\varphi^4}{x}
 -\Co{1}{\varphi^2}{\varphi^3\, \varphi}(x-y) \Lo{0}{\varphi^2}{x}\\-&\Co{1}{\varphi\varphi_{1m}}{\varphi^3\, \varphi}(x-y)\Lo{0}{\varphi\varphi_{1m}}{x} -\Co{0}{\varphi^2}{\varphi^3\, \varphi}(x-y)\Lo{1}{\varphi^2}{x}- \Co{0}{\varphi\varphi_{1m}}{\varphi^3\, \varphi}(x-y)\Lo{1}{\varphi\varphi_{1m}}{x} \Big]\, .
\end{split}
\label{eq:model NLO Lphi4 subst}
\end{equation}
Now let us bring all expressions with a positive sign in this equation into normal order:

\begin{equation}
\begin{split}
3\lim_{y\to x}\, \Big[\Lo{0}{\varphi}{x}\,&:\Lo{1}{\varphi}{y}\Lo{0}{\varphi^2}{y}:\Big]\, =3:\Lo{1}{\varphi}{x}\Lo{0}{\varphi^3}{x}:\\
+&3\lim_{y\to x}\Big[ :\contraction{}{\Lo{0}{\varphi}{x}}{}{\Lo{1}{\varphi}{y}} \Lo{0}{\varphi}{x}\Lo{1}{\varphi}{y}\Lo{0}{\varphi^2}{y}:+ 2:\contraction{}{\Lo{0}{\varphi}{x}}{\Lo{1}{\varphi}{y}}{\Lo{0}{\varphi}{y}}\Lo{0}{\varphi}{x} \Lo{1}{\varphi}{y}\Lo{0}{\varphi}{y}\Lo{0}{\varphi}{y}:\Big]
\end{split}
\label{eq:model NLO Lphi4 limit NO 1}
\end{equation}

\begin{equation}
\begin{split}
\lim_{y\to x}\, \Big[\Lo{1}{\varphi}{x}&\Lo{0}{\varphi^3}{y}\Big]=\, :\Lo{1}{\varphi}{x}\Lo{0}{\varphi^3}{x}: + \lim_{y\to x}\Big[3:\contraction{}{\Lo{1}{\varphi}{x}}{}{\Lo{0}{\varphi}{y}}   \Lo{1}{\varphi}{x}\Lo{0}{\varphi}{y}\Lo{0}{\varphi^2}{y}:\\
+&3 :\contraction[.5ex]{}{\Lo{1}{\varphi}{x}}{}{\Lo{0}{\varphi}{y}} \contraction[1ex]{}{\Lo{1}{\varphi}{x}}{\Lo{0}{\varphi}{y}}{\Lo{0}{\varphi}{y}}   \Lo{1}{\varphi}{x}\Lo{0}{\varphi}{y}\Lo{0}{\varphi}{y}\Lo{0}{\varphi}{y}:+ \contraction[.5ex]{}{\Lo{1}{\varphi}{x}}{}{\Lo{0}{\varphi}{y}} \contraction[1ex]{}{\Lo{1}{\varphi}{x}}{\Lo{0}{\varphi}{y}}{\Lo{0}{\varphi}{y}} \contraction[1.5ex]{}{\Lo{1}{\varphi}{x}}{\Lo{0}{\varphi}{y}\Lo{0}{\varphi}{y}}{\Lo{0}{\varphi}{y}}  \Lo{1}{\varphi}{x}\Lo{0}{\varphi}{y}\Lo{0}{\varphi}{y}\Lo{0}{\varphi}{y}\Big]
\end{split}
\label{eq:model NLO Lphi4 limit NO 2}
\end{equation}

\begin{equation}
\begin{split}
3\lim\limits_{x\to y}\, \Big[\Lo{0}{\varphi}{x}:\Lo{0}{\varphi}{y}&(R_1)_{\varphi^2}(y):\Big]=3 :\Lo{0}{\varphi^2}{x}(R_1)_{\varphi^2}(x):\\ +&3\lim_{y\to x}\Big[\contraction{}{\Lo{0}{\varphi}{x}}{}{\Lo{0}{\varphi}{y}}   \Lo{0}{\varphi}{x}\Lo{0}{\varphi}{y}(R_1)_{\varphi^2}(y)
+: \contraction{}{\Lo{0}{\varphi}{x}}{\Lo{0}{\varphi}{y}}{(R_1)_{\varphi^2}} \Lo{0}{\varphi}{x}\Lo{0}{\varphi}{y}(R_1)_{\varphi^2}(y) :\Big]
\end{split}
\label{eq:model NLO Lphi4 limit NO 3}
\end{equation}

\begin{equation}
\begin{split}
&\lim_{y\to x}\, \Big[\Lo{0}{\varphi}{x}(R_1)_{\varphi^3}(y)\Big]=\, :\Lo{0}{\varphi}{x}(R_1)_{\varphi^3}(x): + \lim_{y\to x} \contraction{}{\Lo{0}{\varphi}{x}}{}{(R_1)_{\varphi^3}} \Lo{0}{\varphi}{x}(R_1)_{\varphi^3}(y)
\end{split}
\label{eq:model NLO Lphi4 limit NO 4}
\end{equation}
Many of the contracted operator products in these equations have already been analysed in the previous chapters. To begin with, recall from \eqref{eq:model NLO Lphi3 subtree polynomial divergence}

\begin{equation}
 \lim_{y\to x} \contraction{}{\Lo{0}{\varphi}{x}}{}{\Lo{0}{\varphi}{y}} \Lo{0}{\varphi}{x}\Lo{0}{\varphi}{y} =\lim_{y\to x}\,\frac{1}{x-y}\, .
\label{eq:model NLO Lphi4 subtree polynomial divergence}
\end{equation}
which appears in eq.\eqref{eq:model NLO Lphi4 limit NO 1} and eq.\eqref{eq:model NLO Lphi4 limit NO 3}. With the definitions of $(R_1)_{\varphi^2}$ and $\Lo{1}{\varphi^2}{x}$, see eqs.\eqref{eq:model NLO Lphi2 infinite sums formula} and \eqref{eq:model NLO Lphi2 normal ordered}, it follows that

\begin{equation}
\begin{split}
3&\lim\limits_{y\to x}\Big[2:\contraction{}{\Lo{0}{\varphi}{x}}{\Lo{1}{\varphi}{y}}{\Lo{0}{\varphi}{y}}\Lo{0}{\varphi}{x} \Lo{1}{\varphi}{y}\Lo{0}{\varphi}{y}\Lo{0}{\varphi}{y}:+ \contraction{}{\Lo{0}{\varphi}{x}}{}{\Lo{0}{\varphi}{y}}   \Lo{0}{\varphi}{x}\Lo{0}{\varphi}{y}(R_1)_{\varphi^2}(y) \Big]\\
 &=\lim_{y\to x}\Big(6:\Lo{1}{\varphi}{y}\Lo{0}{\varphi}{y}:+3(R_1)_{\varphi^2}(y)\Big)\frac{1}{x-y}=\lim_{y\to x}\frac{3}{x-y}\Lo{1}{\varphi^2}{y}\, ,
\end{split}
\end{equation}
which cancels with the polynomial counterterms in the last line of eq.\eqref{eq:model NLO Lphi4 limit} after a Taylor expansion in $y$ around $x$. Further we find for the sum of the products with one contraction in eqs.\eqref{eq:model NLO Lphi4 limit NO 1} and \eqref{eq:model NLO Lphi4 limit NO 2}

\begin{equation}
3\lim_{y\to x}\Big[:\contraction{}{\Lo{0}{\varphi}{x}}{}{\Lo{1}{\varphi}{y}} \Lo{0}{\varphi}{x}\Lo{1}{\varphi}{y}\Lo{0}{\varphi^2}{y}:+ :\contraction{}{\Lo{1}{\varphi}{x}}{}{\Lo{0}{\varphi}{y}}   \Lo{1}{\varphi}{x}\Lo{0}{\varphi}{y}\Lo{0}{\varphi^2}{y}:  \Big]=3:\Lo{0}{\varphi^2}{x}(R_1)_{\varphi^2}(x):
\end{equation}
where we again applied the results of section \ref{subsubsec:L1phi^2}. Remembering the definition of $(R_1)_{\varphi^3}$, \eqref{eq:model NLO Lphi3 remainder form b},  we find for the following expressions from eqs.\eqref{eq:model NLO Lphi4 limit NO 2} and \eqref{eq:model NLO Lphi4 limit NO 3}

\begin{equation}
\begin{split}
 3&\lim_{y\to x}\Big[:\contraction[.5ex]{}{\Lo{1}{\varphi}{x}}{}{\Lo{0}{\varphi}{y}} \contraction[1ex]{}{\Lo{1}{\varphi}{x}}{\Lo{0}{\varphi}{y}}{\Lo{0}{\varphi}{y}}   \Lo{1}{\varphi}{x}\Lo{0}{\varphi}{y}\Lo{0}{\varphi}{y}\Lo{0}{\varphi}{y}:+ : \contraction{}{\Lo{0}{\varphi}{x}}{\Lo{0}{\varphi}{y}}{(R_1)_{\varphi^2}} \Lo{0}{\varphi}{x}\Lo{0}{\varphi}{y}(R_1)_{\varphi^2}(y) : \Big]\\
&=3:\Lo{0}{\varphi}{x}(R_1)_{\varphi^3}(x):+60\Lo{0}{\varphi^4}{x}\lim_{y\to x}\log |x-y|\quad .
\end{split}
\end{equation}
Here the divergence cancels with the logarithmic counterterm \eqref{eq:model NLO Lphi4 CT logarithmic}. We have now dealt with all divergent expressions in eqs.\eqref{eq:model NLO Lphi4 limit NO 1}-\eqref{eq:model NLO Lphi4 limit NO 4}, except for the products including three contractions in eq.\eqref{eq:model NLO Lphi4 limit NO 2} and the contraction with the remainder term $(R_1)_{\varphi^3}$ in eq.\eqref{eq:model NLO Lphi4 limit NO 4} (which is essentially also a threefold contraction, since $(R_1)_{\varphi^3}$ itself includes two contractions). Further, the only remaining counterterms are the ones in the second line of eq.\eqref{eq:model NLO Lphi4 limit}. Thus, we have found

\begin{equation}
 \begin{split}
 (R_1)_{\varphi^4}(x)=\lim_{y\to x}\Big[&\contraction[.5ex]{}{\Lo{1}{\varphi}{x}}{}{\Lo{0}{\varphi}{y}} \contraction[1ex]{}{\Lo{1}{\varphi}{x}}{\Lo{0}{\varphi}{y}}{\Lo{0}{\varphi}{y}} \contraction[1.5ex]{}{\Lo{1}{\varphi}{x}}{\Lo{0}{\varphi}{y}\Lo{0}{\varphi}{y}}{\Lo{0}{\varphi}{y}} \Lo{1}{\varphi}{x}\Lo{0}{\varphi}{y}\Lo{0}{\varphi}{y}\Lo{0}{\varphi}{y}+ 
\contraction{}{\Lo{0}{\varphi}{x}}{}{(R_1)_{\varphi^3}}\Lo{0}{\varphi}{x}(R_1)_{\varphi^3}(y)\\
-& \Co{1}{\varphi^2}{\varphi^3\, \varphi}(x-y)\Lo{0}{\varphi^2}{x}-\Co{1}{\varphi\varphi_{1m}}{\varphi^3\, \varphi}(x-y)\Lo{0}{\varphi\varphi_{1m}}{x} \Big]
 \end{split}
\end{equation}
and we may verify eq.\eqref{eq:model NLO Lphi4 NO form} by insertion of the above results into eq.\eqref{eq:model NLO Lphi4 subst}.
\end{proof}
Although we do not determine the concrete form of the new remainder term $(R_1)_{\varphi^4}$ in this thesis, results for a wide class of OPE coefficients $\Co{1}{b}{\varphi^4\, a}$ can be obtained. Namely

\begin{results}[OPE coefficients $\Co{1}{b}{\varphi^4\, a}$]\label{res:model NLO Lphi C1phi4}%
The matrix elements of the left representative $\Lo{1}{\varphi^4}{x}$ presented in result \ref{res:model NLO Lphi4} are
\begin{equation}
\Co{1}{b}{\varphi^4\, a}(x)=
                           0 \qquad \text{for }g(a,b)>8 \text{ or }g(a,b)=\text{odd}
\end{equation}
and
\begin{align}
\Co{1}{b}{\varphi^4\, a}(x)=& 4\sum\limits_{\mathfrak{A}\in\mathcal{I}^b_a(8)}\sum\limits_{\mathcal{P}_{5,3}[\mathfrak{A}]}\sum\limits_{J,J_1,J_2}
f^b_a[\mathfrak{A}]\, \D{1}{P_1}{J_1}\,\Delta_{0}[\mathfrak{P}_2]_{J_2}\,T[J_1,J_2]_{J}\, S_J(\hat{x})\, r^{\mathbf{d}}
\notag\\ +&6\sum\limits_{\mathfrak{B}\in\mathcal{I}^b_a(6)}\sum\limits_{\mathcal{P}_{4,2}[\mathfrak{B}]}\sum\limits_{J,J_1,J_2}
f^b_a[\mathfrak{B}]\, \Ro{1}{P_1}{J_1}{p}{2}\,\Delta_{0}[\mathfrak{P}_2]_{J_2}\,T[J_1,J_2]_{J}\, S_J(\hat{x})\cdot r^{\mathbf{d}}
\notag\\
+&4\sum\limits_{\mathfrak{C}\in\mathcal{I}^b_a(4)}\sum\limits_{\mathcal{P}_{3,1}[\mathfrak{C}]}\sum\limits_{J,J_1,J_2}
f^b_a[\mathfrak{C}]\, \Ro{1}{P_1}{J_1}{p}{3}\,\Delta_{0}[\mathfrak{P}_2]_{J_2}\,T[J_1,J_2]_{J}\, S_J(\hat{x})\cdot r^{\mathbf{d}}
\label{eq:model NLO Lphi C1phi4 result}
\end{align}
for $g(a,b)>2$.
\end{results}

Remark: The remaining classes of coefficients, namely $\Co{1}{b}{\varphi^4\, b}$ with $g(a,b)=2$ and $g(a,b)=0$, include contributions from $(R_1)_{\varphi^4}$ and are thus not treated here. Otherwise the derivation of the above result is analogous to previous sections, so we do not bother to give a ``proof'' here.

\subsection{Construction of \texorpdfstring{$\Lo{1}{\varphi^5}{x}$}{L(1)phi5}}

The last left representative at first perturbation order to be discussed in this thesis is $\Lo{1}{\varphi^5}{x}$. This is the final step in our algorithm before we can proceed to second order by the field equation. The general outline of this section is similar to the previous ones. As always, our starting point is the expression of $\Lo{1}{\varphi^5}{x}$ in terms of known left representatives,

\begin{equation}
\begin{split}
 \Lo{1}{\varphi^5}{x}=\lim_{y\to x}\Big[&\Lo{0}{\varphi}{x}\Lo{1}{\varphi^4}{y}+\Lo{1}{\varphi}{x}\Lo{0}{\varphi^4}{y}-\Co{1}{\varphi^5}{\varphi^4\, \varphi}(x-y)\Lo{0}{\varphi^5}{x}\\
-&\Co{0}{\varphi^3}{\varphi^4\, \varphi}(x-y)\Lo{1}{\varphi^3}{x}-\Co{0}{\varphi^2\varphi_{1m}}{\varphi^4\, \varphi}(x-y)\Lo{1}{(\varphi^2\varphi_{1m})}{x}\\
-&\Co{1}{\varphi^3}{\varphi^4\, \varphi}(x-y)\Lo{0}{\varphi^3}{x}-\Co{1}{\varphi^2\varphi_{1m}}{\varphi^4\, \varphi}(x-y)\Lo{0}{(\varphi^2\varphi_{1m})}{x}\\
-&\Co{1}{\varphi}{\varphi^4\, \varphi}(x-y)\Lo{0}{\varphi}{x}\Big]
\end{split}
\label{eq:model NLO Lphi5 limit}
\end{equation}
which is a consequence of our consistency condition. It should be remarked that in the last line of this equation no counerterms of the form $\Co{1}{\varphi_{lm}}{\varphi^4\, \varphi}$ with $l\neq 0$ appear, because this coefficient is zero according to the results of the previous section (since here $g(a,b)=$odd). The counterterms in the above equation take the values

\begin{eqnarray}
 \Co{1}{\varphi^5}{\varphi^4\, \varphi}(x-y)&=&120\log |x-y| \label{eq:model NLO Lphi5 CT logarithmic} \\
 \Co{0}{\varphi^3}{\varphi^4\, \varphi}(x-y)&=&\frac{4}{|x-y|}\label{eq:model NLO Lphi5 CT polynomial} \\
 \Co{1}{\varphi^3}{\varphi^4\, \varphi}(x-y)&=&4\Co{1}{\varphi^2}{\varphi^3\, \varphi}(x-y)+[(R_1)_{\varphi^4}(x-y)]^{\varphi^3}_{\varphi} \label{eq:model NLO Lphi5 CT mixed1}\\
\Co{1}{\varphi^2\varphi_{1m}}{\varphi^4\, \varphi}(x-y)&=& 4\left(\Co{1}{\varphi^2}{\varphi^3\, \varphi}(x-y)\cdot |x-y|S^{1m}(\hat{x})+\Co{1}{\varphi\varphi_{1m}}{\varphi^3\, \varphi}(x-y)\right)\notag\\
&&+4\Co{1}{\varphi^2\varphi_{1m}}{\varphi^3\, \mathds{1}}(x-y)\cdot\frac{1}{|x-y|}+[(R_1)_{\varphi^4}(x-y)]^{\varphi^2\varphi_{1m}}_{\varphi}\label{eq:model NLO Lphi5 CT mixed1 der}\\
 \Co{1}{\varphi}{\varphi^4\, \varphi}(x-y)&=& [(R_1)_{\varphi^4}]_{\varphi}^{\varphi} \label{eq:model NLO Lphi5 CT mixed2}
\end{eqnarray}
The above relation between the coefficients $\Co{1}{\varphi^3}{\varphi^4\, \varphi}$ and $\Co{1}{\varphi^2}{\varphi^3\, \varphi}$ may be derived as follows: The sets $\mathcal{I}^{\varphi^2}_{\varphi}(7)$ and $\mathcal{I}^{\varphi^2}_{\varphi}(5)$ are empty, so the first two lines in eq.\eqref{eq:model NLO Lphi C1phi3 result} do not give any contribution to $\Co{1}{\varphi^2}{\varphi^3\, \varphi}$. Thus, the only contribution comes from the matrix element of the remainder term $(R_1)_{\varphi^3}$, i.e.

\begin{equation}
 \Co{1}{\varphi^2}{\varphi^3\, \varphi}(x)=[(R_1)_{\varphi^3}(x)]^{\varphi^2}_{\varphi}\quad.
\label{eq:model NLO Lphi5 relation counterterms}
\end{equation}
Now let us come to the coefficient $\Co{1}{\varphi^3}{\varphi^4\, \varphi}$. Here the first two lines of eq.\eqref{eq:model NLO Lphi C1phi4 result} vanish, because the sets $\mathcal{I}^{\varphi^3}_{\varphi}(8)$ and $\mathcal{I}^{\varphi^3}_{\varphi}(6)$ are empty as well. Therefore, the coefficient at hand contains the following contributions:

\begin{equation}
\begin{split}
 \Co{1}{\varphi^3}{\varphi^4\, \varphi}(x)=&\bra{\varphi^3}4:(R_1)_{\varphi^3}(x)\Lo{0}{\varphi}{x}:\ket{\varphi}+\bra{\varphi^3}(R_1)_{\varphi^4}(x)\ket{\varphi}\\
 =&4[(R_1)_{\varphi^3}(x)]_{\mathds{1}}^{\varphi^3}\cdot\Co{0}{\mathds{1}}{\varphi\, \varphi}(x)+4[(R_1)_{\varphi^3}(x)]_{\varphi}^{\varphi^2}\cdot\Co{0}{\varphi^3}{\varphi\, \varphi^2}(x)+[(R_1)_{\varphi^4}(x)]_{\varphi}^{\varphi^3}\\
 =&4[(R_1)_{\varphi^3}(x)]_{\varphi}^{\varphi^2}+[(R_1)_{\varphi^4}(x)]_{\varphi}^{\varphi^3}=4\Co{1}{\varphi^2}{\varphi^3\, \varphi}(x)+[(R_1)_{\varphi^4}(x)]_{\varphi}^{\varphi^3}
\end{split}
\end{equation}
In the second line we decomposed the left representative $\Lo{0}{\varphi}{x}$ into a creation and an annihilation part. In addition, the results $\Co{0}{\varphi^3}{\varphi\, \varphi^2}(x)=1$ and $[(R_1)_{\varphi^3}(x)]_{\mathds{1}}^{\varphi^3}=0$ from the previous sections were applied. The final equality then follows from eq.\eqref{eq:model NLO Lphi5 relation counterterms} and confirms eq.\eqref{eq:model NLO Lphi5 CT mixed1}. Eq.\eqref{eq:model NLO Lphi5 CT mixed1 der} was derived in a similar manner.

The strategy is	 again to write $\Lo{1}{\varphi^5}{x}$ as a sum of some known normal ordered (and thus finite) expressions and an additional remainder term $(R_1)_{\varphi^5}$. The computation of this remainder term is the main effort that goes into the construction of the desired left representative.

\begin{results}[Left representative $\Lo{1}{\varphi^5}{x}$ in normal ordered form]\label{res:model NLO Lphi5}
\begin{equation}
\begin{split}
 \Lo{1}{\varphi^5}{x}=&5:\Lo{1}{\varphi}{x}\Lo{0}{\varphi^4}{x}:+10:(R_1)_{\varphi^2}(x)\Lo{0}{\varphi^3}{x}:+10:(R_1)_{\varphi^3}(x)\Lo{0}{\varphi^2}{x}:\\+ &5:(R_1)_{\varphi^4}(x)\Lo{0}{\varphi}{x}:+(R_1)_{\varphi^5}(x)
\end{split}
\label{eq:model NLO Lphi5 NO form}
\end{equation}
where

\begin{equation}
 \begin{split}
 &(R_1)_{\varphi^5}(x)=\\
&\lim_{y\to x}\Bigg[\contraction[.5ex]{}{\Lo{1}{\varphi}{x}}{}{\Lo{0}{\varphi}{y}} \contraction[1ex]{}{\Lo{1}{\varphi}{x}}{\Lo{0}{\varphi}{y}}{\Lo{0}{\varphi}{y}} \contraction[1.5ex]{}{\Lo{1}{\varphi}{x}}{\Lo{0}{\varphi}{y}\Lo{0}{\varphi}{y}}{\Lo{0}{\varphi}{y}} 
\contraction[2ex]{}{\Lo{1}{\varphi}{x}}{\Lo{0}{\varphi}{y}\Lo{0}{\varphi}{y}\Lo{0}{\varphi}{y}}{\Lo{0}{\varphi}{y}}
  \Lo{1}{\varphi}{x}\Lo{0}{\varphi}{y}\Lo{0}{\varphi}{y}\Lo{0}{\varphi}{y}\Lo{0}{\varphi}{y}+
\contraction{}{\Lo{0}{\varphi}{x}}{}{(R_1)_{\varphi^4}}\Lo{0}{\varphi}{x}(R_1)_{\varphi^4}(y)
-\Co{1}{\varphi}{\varphi^4\, \varphi}(x-y)\Lo{0}{\varphi}{x}\Bigg]
 \end{split}
\label{eq:model NLO R1phi5}
\end{equation}\index{symbols}{R15@$(R_1)_{\varphi^5}(x)$}

\end{results}
Remark: Again, we have not been able to verify convergence of this limit explicitly. However, our renormalization procedure implies

\begin{results}[Constraints on remainder terms]\label{res:model NLO Lphi5 remainder terms}
For the consistency condition \eqref{eq:model perturbations limit k-1} to hold, it is necessary that
 \begin{equation}
 [(R_1)_{\varphi^4}(x)]_{\varphi}^{\varphi^3}=0
\end{equation}
and
\begin{equation}
 [(R_1)_{\varphi^4}(x)]_{\varphi}^{\varphi^2\varphi_{1m}}=-160\cdot S^{1m}(\hat{x})\cdot(\log r+c)
\end{equation}
with $c\in\mathbb{C}$.
\end{results}

\begin{proof}[of results \ref{res:model NLO Lphi5} and \ref{res:model NLO Lphi5 remainder terms}]

Insertion of $\Lo{1}{\varphi^4}{x}$ in the form \eqref{eq:model NLO Lphi4 NO form} into our equation for $\Lo{1}{\varphi^5}{x}$ yields

\begin{equation}
\begin{split}
 &\Lo{1}{\varphi^5}{x}=\lim_{y\to x}\Big[4\Lo{0}{\varphi}{x}:\Lo{1}{\varphi}{y}\Lo{0}{\varphi^3}{y}:+6\Lo{0}{\varphi}{x}:(R_1)_{\varphi^2}(y)\Lo{0}{\varphi^2}{y}:\\+&4\Lo{0}{\varphi}{x}:(R_1)_{\varphi^3}(y)\Lo{0}{\varphi}{y}:+\Lo{0}{\varphi}{x}(R_1)_{\varphi^4}(y)+\Lo{1}{\varphi}{x}\Lo{0}{\varphi^4}{y}-\text{ ``counterterms'' }\Big]
\end{split}
\label{eq:model NLO Lphi5 subst}
\end{equation}
The next step is to bring these expressions into normal order and to keep track of the additional terms generated in the process. To be begin with, we pick up the additional expressions

\begin{equation}
\lim_{y\to x}\Big[4: \contraction{}{\Lo{0}{\varphi}{x}}{}{\Lo{1}{\varphi}{y}} \Lo{0}{\varphi}{x}\Lo{1}{\varphi}{y}\Lo{0}{\varphi^3}{y}:\, + 4: \contraction{}{\Lo{1}{\varphi}{x}}{}{\Lo{0}{\varphi}{y}} \Lo{1}{\varphi}{x}\Lo{0}{\varphi}{y}\Lo{0}{\varphi^3}{y}:  \Big] =4:\Lo{0}{\varphi^3}{x}(R_1)_{\varphi^2}(x):
\label{eq:model NLO Lphi5 normal ordering 1}
\end{equation}
from normal ordering of the first and the last product in eq.\eqref{eq:model NLO Lphi5 subst}. Further, the contractions

\begin{equation}
\begin{split}
&4\lim_{y\to x}\Big[\contraction{}{\Lo{0}{\varphi}{x}}{}{\Lo{0}{\varphi}{y}} \Lo{0}{\varphi}{x}\Lo{0}{\varphi}{y}\, \Big(3:\Lo{1}{\varphi}{y}\Lo{0}{\varphi^2}{y}:+3:(R_1)_{\varphi^2}(y)\Lo{0}{\varphi}{y}:+(R_1)_{\varphi^3}(y)\Big)\Big] \\
& =4\lim_{y\to x}\frac{1}{|x-y|}\Big(3:\Lo{1}{\varphi}{y}\Lo{0}{\varphi^2}{y}:+3:(R_1)_{\varphi^2}(y)\Lo{0}{\varphi}{y}:+(R_1)_{\varphi^3}(y)\Big)=\lim_{y\to x}\frac{4}{|x-y|}\Lo{1}{\varphi^3}{y}\, ,
\end{split}
\end{equation}
result from normal ordering of the first three summands in eq.\eqref{eq:model NLO Lphi5 subst}. The result cancels with the polynomial counterterms in the second line of \eqref{eq:model NLO Lphi5 limit}. The remaining expressions with two contractions are

\begin{equation}
\begin{split}
6&\lim_{y\to x}\Big[ :\contraction{}{\Lo{0}{\varphi}{x}}{}{(R_1)_{\varphi^2}}\Lo{0}{\varphi}{x}(R_1)_{\varphi^2}(y)\Lo{0}{\varphi^2}{y}:\, +\, :\contraction[.5ex]{}{\Lo{1}{\varphi}{x}}{}{\Lo{0}{\varphi}{y}}\contraction{}{\Lo{1}{\varphi}{x}}{\Lo{0}{\varphi}{y}}{\Lo{0}{\varphi}{y}} \Lo{1}{\varphi}{x}\Lo{0}{\varphi}{y}\Lo{0}{\varphi}{y}\Lo{0}{\varphi^2}{y}  :  \Big]\\
&\quad=6:(R_1)_{\varphi^3}(x)\Lo{0}{\varphi^2}{x}:+120\Lo{0}{\varphi^5}{x}\lim_{y\to x}\log|x-y|
\end{split}
\end{equation}
This divergence cancels with the logarithmic counterterm in eq.\eqref{eq:model NLO Lphi5 CT logarithmic}. Next consider the products with three contractions

\begin{equation}
\begin{split}
 &4\lim_{y\to x}\Big[ : \contraction{}{\Lo{0}{\varphi}{x}}{}{(R_1)_{\varphi^3}} \Lo{0}{\varphi}{x}(R_1)_{\varphi^3}(y)\Lo{0}{\varphi}{y} :\,+\, :\contraction[.5ex]{}{\Lo{1}{\varphi}{x}}{}{\Lo{0}{\varphi}{y}}\contraction{}{\Lo{1}{\varphi}{x}}{\Lo{0}{\varphi}{y}}{\Lo{0}{\varphi}{y}} \contraction[1.5ex]{}{\Lo{1}{\varphi}{x}}{\Lo{0}{\varphi}{y}\Lo{0}{\varphi}{y}}{\Lo{0}{\varphi}{y}} \Lo{1}{\varphi}{x}\Lo{0}{\varphi}{y}\Lo{0}{\varphi}{y}\Lo{0}{\varphi}{y}\Lo{0}{\varphi}{y}  : \Big]\\
=&4:(R_1)_{\varphi^4}(x)\Lo{0}{\varphi}{x}:\\+&4\lim_{y\to x}\Big[ :\Lo{0}{\varphi^2}{x}\Lo{0}{\varphi}{y}:\Co{1}{\varphi^2}{\varphi^3\, \varphi}(x-y)+\Co{1}{\varphi\varphi_{1m}}{\varphi^3\, \varphi}(x-y):\Lo{0}{\varphi\varphi_{1m}}{x}\Lo{0}{\varphi}{y}:  \Big] \quad ,
\end{split}
\label{eq:model NLO Lphi5 normal ordering 4}
\end{equation}
which follows from the definition of $(R_1)_{\varphi^4}$, see eq.\eqref{eq:model NLO R1phi4}. Here we encounter the problem that neither the OPE coefficients in the expression above, nor the counterterms in eqs.\eqref{eq:model NLO Lphi5 CT mixed1} and \eqref{eq:model NLO Lphi5 CT mixed1 der} are explicitly known. Thus it seems difficult to verify the cancellation of infinite terms in the limit above. This is not really a problem, however, since $\Lo{1}{\varphi^5}{x}$ \emph{is} finite by its very construction (see section \ref{subsec:perturbation via field equation}). Thus, we may change our point of view and from now on \emph{assume} that the counterterms render the left representative finite, instead of trying to show this with the help of results from previous sections. This yields the following constraints:

\begin{align}
 &\lim_{y\to x}\left[4\Co{1}{\varphi^2}{\varphi^3\, \varphi}(x-y)-\Co{1}{\varphi^3}{\varphi^4\, \varphi}(x-y)\right]=\text{finite}\label{eq:model NLO Lphi5 counterterms condition1}
\\
&\lim_{y\to x}\Big[4\left(\Co{1}{\varphi^2}{\varphi^3\, \varphi}(x-y)\cdot |x-y|S^{1m}(\hat{x})+\Co{1}{\varphi\varphi_{1m}}{\varphi^3\, \varphi}(x-y)\right)-\Co{1}{\varphi^2\varphi_{1m}}{\varphi^4\, \varphi}(x-y)\Big]=\text{finite}\label{eq:model NLO Lphi5 counterterms condition2}
\end{align}
These conditions were derived as follows: We performed a Taylor expansion of the operators $\Lo{0}{\varphi}{y}$ in eq.\eqref{eq:model NLO Lphi5 normal ordering 4} around the point $x$ and neglected all terms with positive scaling dimension in $|x-y|$, since these terms vanish in the limit. Eq.\eqref{eq:model NLO Lphi5 counterterms condition1} then follows if we demand that the resulting expressions proportional to $\Lo{0}{\varphi^3}{x}$ are rendered finite by the corresponding counterterm, eq.\eqref{eq:model NLO Lphi5 CT mixed1}. Similarly, eq.\eqref{eq:model NLO Lphi5 counterterms condition2} collects all terms multiplying the left representative $\Lo{0}{\varphi^2\varphi_{1m}}{x}$ and requires that subtraction of the corresponding counterterm, eq.\eqref{eq:model NLO Lphi5 CT mixed1 der}, yields a finite result.

Substitution of eq.\eqref{eq:model NLO Lphi5 CT mixed1} into the first condition above yields

\begin{equation}
 \lim_{y\to x}\left[-[(R_1)_{\varphi^4}(x-y)]_{\varphi}^{\varphi^3}\right]=\text{finite}\quad.
\end{equation}
Since the OPE coefficient $\Co{1}{\varphi^3}{\varphi^4\, \varphi}$ has scaling degree $1$, this is also true for the contribution from the remainder term, $[(R_1)_{\varphi^4}(x-y)]_{\varphi}^{\varphi^3}$. Thus, we conclude that $[(R_1)_{\varphi^4}(x-y)]_{\varphi}^{\varphi^3}$ is of the form $(c_1\log|x-y|+c_2)/|x-y|$, where $c_1,c_2\in\mathbb{C}$ are constants. This fact together with the condition above uniquely determines $[(R_1)_{\varphi^4}]_{\varphi}^{\varphi^3}$.

\begin{equation}
 [(R_1)_{\varphi^4}(x)]_{\varphi}^{\varphi^3}=0
\end{equation}
This confirms the first half of result \ref{res:model NLO Lphi5 remainder terms}. Now let us come to eq.\eqref{eq:model NLO Lphi5 counterterms condition2}. Here substitution of eq.\eqref{eq:model NLO Lphi5 CT mixed1 der} leads to

\begin{equation}
 \lim_{y\to x}\left[-[(R_1)_{\varphi^4}(x-y)]_{\varphi}^{\varphi^2\varphi_{1m}}-4[(R_1)_{\varphi^3}(x-y)]_{\mathds{1}}^{\varphi^2\varphi_{1m}}\frac{1}{|x-y|}\right]=\text{finite}\quad.
\end{equation}
Dimensional analysis of the expressions in brackets suggests that all summands are at most logarithmically divergent. Therefore, the above constraint is not as strong as the first constraint, in the sense that it will not allow for a unique determination of $[(R_1)_{\varphi^4}]_{\varphi}^{\varphi^2\varphi_{1m}}$. The term $[(R_1)_{\varphi^3}(x-y)]_{\mathds{1}}^{\varphi^2\varphi_{1m}}$ may be determined with the help of result \ref{res:model NLO Rphi3}. As mentioned above, we are only interested in the logarithmic contribution to this matrix element. We find

\begin{equation}
 [(R_1)_{\varphi^3}(x)]_{\mathds{1}}^{\varphi^2\varphi_{1m}}=40\cdot |x|\, S^{1m}(\hat{x})\cdot\log r+\text{polynomial contribution}
\end{equation}
which yields upon insertion into the constraint above

\begin{equation}
 [(R_1)_{\varphi^4}(x)]_{\varphi}^{\varphi^2\varphi_{1m}}=-160\cdot S^{1m}(\hat{x})\cdot(\log r+c)
\end{equation}
where $c\in\mathbb{C}$ is some constant. Thus, we have confirmed result \ref{res:model NLO Lphi5 remainder terms}.

It remains to analyze the genuinely new contributions containing four contractions and the remaining counterterm

\begin{equation}
\begin{split}
(R_1)_{\varphi^5}(x)=\lim_{y\to x}\Big[\contraction{}{\Lo{0}{\varphi}{x}}{}{(R_1)_{\varphi^4}} \Lo{0}{\varphi}{x}(R_1)_{\varphi^4}(y)+ \contraction[.5ex]{}{\Lo{1}{\varphi}{x}}{}{\Lo{0}{\varphi}{y}} \contraction[1ex]{}{\Lo{1}{\varphi}{x}}{\Lo{0}{\varphi}{y}}{\Lo{0}{\varphi}{y}} \contraction[1.5ex]{}{\Lo{1}{\varphi}{x}}{\Lo{0}{\varphi}{y}\Lo{0}{\varphi}{y}}{\Lo{0}{\varphi}{y}} 
\contraction[2ex]{}{\Lo{1}{\varphi}{x}}{\Lo{0}{\varphi}{y}\Lo{0}{\varphi}{y}\Lo{0}{\varphi}{y}}{\Lo{0}{\varphi}{y}}
  \Lo{1}{\varphi}{x}\Lo{0}{\varphi}{y}\Lo{0}{\varphi}{y}\Lo{0}{\varphi}{y}\Lo{0}{\varphi}{y}-\Co{1}{\varphi}{\varphi^4\, \varphi}(x-y)\Lo{0}{\varphi}{x} \Big]
\end{split}
\end{equation}
Restoring all the normal ordered products obtained in eqs.\eqref{eq:model NLO Lphi5 normal ordering 1}-\eqref{eq:model NLO Lphi5 normal ordering 4}, one verifies eq.\eqref{eq:model NLO Lphi5 NO form}.
\end{proof}

The OPE coefficients $\Co{1}{b}{\varphi^5\, a}$ with $g(a,b)>3$ can be determined without any knowledge of $(R_1)_{\varphi^4}$ and $(R_1)_{\varphi^5}$, so we will focus on these cases. 

\begin{results}[OPE coefficients $\Co{1}{b}{\varphi^5\, a}$]\label{res:model NLO Lphi C1phi5}%
\begin{equation}
\Co{1}{b}{\varphi^5\, a}(x)=
                           0 \qquad \text{for }g(a,b)>9 \text{ or }g(a,b)=\text{even}
\end{equation}
and
\begin{align}
\Co{1}{b}{\varphi^5\, a}(x)=& 5\sum\limits_{\mathfrak{A}\in\mathcal{I}^b_a(9)}\sum\limits_{\mathcal{P}_{5,4}[\mathfrak{A}]}\sum\limits_{J,J_1,J_2}
f_a^b[\mathfrak{A}]\, \D{1}{P_1}{J_1}\,\Delta_{0}[\mathfrak{P}_2]_{J_2}\,T[J_1,J_2]_{J}\, S_J(\hat{x})\, r^{\mathbf{d}}
\notag\\ +&10\sum\limits_{\mathfrak{B}\in\mathcal{I}^b_a(7)}\sum\limits_{\mathcal{P}_{4,3}[\mathfrak{B}]}\sum\limits_{J,J_1,J_2}
f_a^b[\mathfrak{B}]\, \Ro{1}{P_1}{J_1}{p}{2}\,\Delta_{0}[\mathfrak{P}_2]_{J_2}\,T[J_1,J_2]_{J}\, S_J(\hat{x})\cdot r^{\mathbf{d}}
\notag\\
+&10\sum\limits_{\mathfrak{C}\in\mathcal{I}^b_a(5)}\sum\limits_{\mathcal{P}_{3,2}[\mathfrak{C}]}\sum\limits_{J,J_1,J_2}
f_a^b[\mathfrak{C}]\, \Ro{1}{P_1}{J_1}{p}{3}\,\Delta_{0}[\mathfrak{P}_2]_{J_2}\,T[J_1,J_2]_{J}\, S_J(\hat{x})\cdot r^{\mathbf{d}}
\label{eq:model NLO Lphi C1phi5 result}
\end{align}
for $g(a,b)>3$.
\end{results}
The result may be derived from the form of the left representative $\Lo{1}{\varphi^5}{x}$ in analog to the previous sections.

\subsection{Construction of \texorpdfstring{$\Lo{2}{\varphi}{x}$}{L(2)phi}}

According to the algorithm outlined in section \ref{subsec:perturbation via field equation} it is possible to construct the second order left representatives $\Lo{2}{\varphi}{x}$, or equivalently the OPE coefficients $\Co{2}{b}{\varphi\, a}$, from the knowledge of the first order left representatives $\Lo{1}{\varphi^5}{x}$. In the previous chapters we have presented the iteration up to this point, so we are finally ready to exploit the field equation and proceed to second perturbation order. This process will be carried out in the present chapter.

The central equation of this chapter is

\begin{equation}
\begin{split}
 \Delta \Lo{2}{\varphi}{x}=\Lo{1}{\varphi^5}{x}=&5:\Lo{1}{\varphi}{x}\Lo{0}{\varphi^4}{x}:+10:(R_1)_{\varphi^2}(x)\Lo{0}{\varphi^3}{x}:\\ +&10:(R_1)_{\varphi^3}(x)\Lo{0}{\varphi^2}{x}:+5:(R_1)_{\varphi^4}(x)\Lo{0}{\varphi}{x}:+(R_1)_{\varphi^5}(x)
\end{split}
\label{eq:model NNLO field equation}
\end{equation}
which follows from eq.\eqref{eq:model perturbations relations}. Since we do not know the concrete form of the operators $(R_1)_{\varphi^4}$ and $(R_1)_{\varphi^5}$, we will only be able to analyze the contributions from the first three terms on the right side of the above equation. As we have seen in the previous chapter, this still allows for the computation of a large class of OPE coefficients $\Co{2}{b}{\varphi\, a}$, namely those with $g(a,b)>3$. With the help of the equation above and the definition (here we implicitly assume that at $n$-th perturbation order logarithms up to the power $n$ may appear, which will be proven in section \ref{subsec:higher order})

\begin{definition}[Gradings by powers of the logarithm]\index{symbols}{Yidjp@$\Lo{n}{\ket{v_a}}{x;d}_J^p$}%
The gradings of the vertex operators $\Lo{n}{\ket{v_a}}{x}$ and the remainder terms $(R_n)_{\varphi^k}(x)$ by scaling dimension $d$, ``spin'' $J$ and powers of logarithms $p$ are denoted by $\Lo{n}{\ket{v_a}}{x;d}_J^p$ and $(R_n)_{\varphi^k}(x,d)_J^p$, i.e.

\begin{equation}
 \Lo{n}{\ket{v_a}}{x}=\sum_{d=-\infty}^{\infty}\sum_{J=0}^{\infty}\sum_{p=0}^n \Lo{n}{\ket{v_a}}{x;d}_J^p\cdot r^d (\log r)^p S_J(\hat{x})
\end{equation}
and

\begin{equation}
 (R_n)_{\varphi^k}(x)=\sum_{d=-\infty}^{\infty}\sum_{J=0}^{\infty}\sum_{p=0}^n (R_n)_{\varphi^k}(x,d)_J^p\cdot r^d (\log r)^p S_J(\hat{x})\quad .
\end{equation}
Furthermore, let

\begin{equation}
\Do{1}{A}{J}{p}:=\frac{1}{p!}\frac{d^p}{(d\log r)^p}\D{1}{A}{J}\Big|_{\log r=0}
\label{eq:model NNLO Lphi Delta 1p}
\end{equation}
and

\begin{equation}
\Rop{1}{A}{J}{p}{k}:=\frac{1}{p!}\frac{d^p}{(d\log r)^p}\Ro{1}{A}{J}{p}{k}\Big|_{\log r=0}\quad.
\label{eq:model NNLO Lphi Lambda 1p}
\end{equation}
\end{definition}
we obtain the following partial result

\begin{results}[Left representative $\Lo{2}{\varphi}{x}$]\label{res:model NNLO Lphi}%
Using the operator $\Delta^{-1}$ to solve the differential equation \eqref{eq:model NNLO field equation}, we find
 \begin{equation}
\begin{split}
 \Lo{2}{\varphi}{x}=&5:\Delta^{-1}\left[\Lo{1}{\varphi}{x}\Lo{0}{\varphi^4}{x}\right]:\, +10:\Delta^{-1}\left[(R_1)_{\varphi^2}(x)\Lo{0}{\varphi^3}{x}\right]:\\
+&10:\Delta^{-1}\left[(R_1)_{\varphi^3}(x)\Lo{0}{\varphi^2}{x}\right]:\, +\text{ terms including }\leq\text{3 ladder operators}
\end{split}
\label{eq:model NNLO Lphi NO}
\end{equation}
\end{results}
where the terms on the right side are concretely

\begin{align}
 \Delta^{-1}\left[\Lo{1}{\varphi}{x}\Lo{0}{\varphi^4}{x}\right]=\sum_{d_1,d_2=-\infty}^{\infty}\sum_{J,J_1,J_2}\sum_{p=0}^1\mathcal{Y}_1(\varphi,x;d_1)_{J_1}^{p}\cdot\mathcal{Y}_0(\varphi^4,x;d_2)_{J_2}&\notag \\ T[J_1,J_2]_J\, S_J(\hat{x})\, r^{d_1+d_2+2}(\log r)^p\cdot D^{(p)}[d_1+d_2+2,J,r]&
\label{eq:model NNLO Lphi explicit NO}\\
\Delta^{-1}\left[(R_1)_{\varphi^2}(x)\Lo{0}{\varphi^3}{x}\right]=\sum_{d_1,d_2=-\infty}^{\infty}\sum_{J,J_1,J_2}\sum_{p=0}^1(R_1)_{\varphi^2}(x;d_1)_{J_1}^{p}\mathcal{Y}_0(\varphi^3,x;d_2)_{J_2}&\notag \\ T[J_1,J_2]_J\, S_J(\hat{x})\, r^{d_1+d_2+2}(\log r)^p\cdot D^{(p)}[d_1+d_2+2,J,r]&
\label{eq:model NNLO Lphi explicit remainder 1a}\\
\Delta^{-1}\left[(R_1)_{\varphi^3}(x)\Lo{0}{\varphi^2}{x}\right]=\sum_{d_1,d_2=-\infty}^{\infty}\sum_{J,J_1,J_2}\sum_{p=0}^1(R_1)_{\varphi^3}(x;d_1)_{J_1}^{p}\mathcal{Y}_0(\varphi^2,x;d_2)_{J_2}&\notag \\ T[J_1,J_2]_J\, S_J(\hat{x})\, r^{d_1+d_2+2}(\log r)^p\cdot D^{(p)}[d_1+d_2+2,J,r]&
\label{eq:model NNLO Lphi explicit remainder 2}
\end{align}
with $D^{(q)}(d,J,r)$ defined as in eq.\eqref{eq:app diff eq D} (the special cases $q=0$ and $q=1$ can be found in eqs.\eqref{eq:app diff eq D0} and \eqref{eq:app diff eq D1}).

\begin{proof}
 We simply have to use the gradings introduced above, the coupling rules of the spherical harmonics as discussed in appendix \ref{app:3-dim symmetries} and the solution to the resulting differential equation from appendix \ref{app:characteristic diff eq}.
\end{proof}

As in the previous chapter, this knowledge allows for the computation of the OPE coefficients $\Co{2}{b}{\varphi\, a}$ with $g(a,b)=9$ and $g(a,b)=7$ in full generality, and for $g(a,b)=5$ in the cases where $(R_1)_{\varphi^3}$ is known. 

\begin{results}[OPE coefficients $\Co{2}{b}{\varphi\, a}$]\label{res:model NLO Lphi C2phi}%
The matrix elements of the left representative $\Lo{2}{\varphi}{x}$ given in result \ref{res:model NNLO Lphi} are
\begin{equation}
\Co{2}{b}{\varphi\, a}(x)=
                           0 \qquad \text{for }g(a,b)>9 \text{ or }g(a,b)=\text{even}
\label{eq:model NNLO Lphi C2phi result vanishing}
\end{equation}
and
\begin{align}
&\Co{2}{b}{\varphi\, a}(x)=\Bigg[ 5\sum\limits_{\mathfrak{A}\in\mathcal{I}^b_a(9)}\sum\limits_{\mathcal{P}_{5,4}[\mathfrak{A}]}\sum\limits_{J,J_1,J_2}\sum\limits_{p=0}^1f_a^b[\mathfrak{A}]\, \Do{1}{P_1}{J_1}{p}\,\Delta_{0}[\mathfrak{P}_2]_{J_2}\,T[J_1,J_2]_{J}\, S_J(\hat{x})\, r^{\mathbf{d}} (\log r)^p\notag\\ +&10\sum\limits_{\mathfrak{B}\in\mathcal{I}^b_a(7)}\sum\limits_{\mathcal{P}_{4,3}[\mathfrak{B}]}\sum\limits_{J,J_1,J_2}\sum\limits_{p=0}^1f_a^b[\mathfrak{B}]\, \Rop{1}{P_1}{J_1}{p}{2}\,\Delta_{0}[\mathfrak{P}_2]_{J_2}\,T[J_1,J_2]_{J}\, S_J(\hat{x})\cdot r^{\mathbf{d}} (\log r)^p\notag\\
+&10\sum\limits_{\mathfrak{C}\in\mathcal{I}^b_a(5)}\sum\limits_{\mathcal{P}_{3,2}[\mathfrak{C}]}\sum\limits_{J,J_1,J_2}\sum\limits_{p=0}^1f_a^b[\mathfrak{C}]\, \Rop{1}{P_1}{J_1}{p}{3}\,\Delta_{0}[\mathfrak{P}_2]_{J_2}\,T[J_1,J_2]_{J}\, S_J(\hat{x})\cdot r^{\mathbf{d}} (\log r)^p\Bigg]\cdot D^{(p)}[{\mathbf{d}},J,r]
\label{eq:model NLO Lphi C2phi result}
\end{align}
for $g(a,b)>3$.
\end{results}

\begin{proof}
All we have to do is invert the Laplace operator on the OPE coefficients $\Co{1}{b}{\varphi^5\, a}$, see result \ref{res:model NLO Lphi C1phi5}. This effectively means that we have to multiply logarithmic expressions by $D^{(1)}$ and polynomial expressions by $D^{(0)}$. This procedure yields eq.\eqref{eq:model NLO Lphi C2phi result}.
\end{proof}

\section[Some higher order results] {Some higher order results\footnote{As above, we use the operator $\Delta^{-1}$ to solve the field equation in this section. For other solutions the results of this section might not hold.}}\label{subsec:higher order}

One aim of this thesis is to recognize patterns in our iterative scheme and in this way to extrapolate our knowledge of low perturbation orders to gain some insight into higher orders. The present section, which is dedicated to precisely this topic, is structured as follows: First we extend our results for the simplest class of OPE coefficients to arbitrary order in perturbation theory (still in the 3-dimensional model considered in the previous sections). Then the general structure of more complicated higher order coefficients is discussed.

We begin our discussion of results for arbitrary orders with the analysis of vanishing OPE coefficients. We would first like to show

\begin{proposition}[Maximum number of ladder operators in left representatives]\label{propos:max ladder operators}
The left representative $\Lo{n}{\varphi^k}{x}$ contains products of no more than $4n+k$ ladder operators.
\end{proposition}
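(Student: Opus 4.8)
The statement is a bound on the ``particle number'' of the interacting left representatives, so I would prove it by induction over the perturbation order $n$, using the algorithm of section \ref{subsec:perturbation via field equation} together with the field equation as the two generating moves. The base case $n=0$ is exactly eq.\eqref{eq:model free field La}: $\Lo{0}{\varphi^k}{x}$ is a normal-ordered product of $k$ copies of $\Lo{0}{\varphi}{x}$, each of which carries one ladder operator, so $\Lo{0}{\varphi^k}{x}$ acts by exactly $k \leq 4\cdot 0 + k$ ladder operators. For the inductive step I would track two operations separately.

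First, the field equation step, eq.\eqref{eq:model perturbations relations}, produces $\Lo{n}{\varphi}{x}$ from $\Lo{n-1}{\varphi^5}{x}$ by applying $\Delta^{-1}$. Since $\square^{-1}$ (respectively $\Delta^{-1}$ in $D=3$) acts purely on the $r$- and $Y_{JM}$-dependence and leaves the $\End(V)$-part untouched (see appendix \ref{app:characteristic diff eq}, and the discussion around prop.\ref{prop:our solution}), this step does not change the number of ladder operators. Hence $\Lo{n}{\varphi}{x}$ acts by at most the same number of ladder operators as $\Lo{n-1}{\varphi^5}{x}$, which by induction is at most $4(n-1)+5 = 4n+1 = 4n + k$ for $k=1$. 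Second, the associativity/renormalization step, eq.\eqref{eq:model perturbations limit k-1}, builds $\Lo{n}{\varphi^k}{x}$ out of products $\Lo{j}{\varphi}{x}\,\Lo{n-j}{\varphi^{k-1}}{x}$ (and lower counterterms $\Lo{n-j}{\ket{v_e}}{x}$ with $|e|\leq|\varphi^k|$) in the coincidence limit. The key observation here is that multiplying two vertex operators and normal-ordering the result can only \emph{decrease} or keep constant the maximal particle number: each contraction $\mathbf{b}_{lm}\mathbf{b}_{lm}^\dagger = \mathbf{b}_{lm}^\dagger\mathbf{b}_{lm} + id$ either leaves the operator count unchanged (the first term) or removes two ladder operators (the $id$ term), and it is precisely the latter ``remainder'' pieces $(R_j)_{\varphi^m}$ that enter the formula. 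Therefore the maximal number of ladder operators in $\Lo{j}{\varphi}{x}\,\Lo{n-j}{\varphi^{k-1}}{x}$ is at most $\big(4j+1\big) + \big(4(n-j)+(k-1)\big) = 4n + k$; the same bound holds for the counterterm contributions, since $\Lo{n-j}{\ket{v_e}}{x}$ with $|e|\leq|\varphi^k|$ acts by at most $k \leq 4n+k$ ladder operators when $n-j \geq 1$ contributes, and more carefully by the inductive bound for the relevant $\varphi^m$ with $m<k$. Summing over $0\leq j\leq n$ and $k-1 \to k$ preserves the bound, which closes the induction.

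The main obstacle — really the only subtle point — is making the second step precise: one has to argue carefully that in eq.\eqref{eq:model perturbations limit k-1} \emph{every} term appearing (the products $\Lo{j}{\varphi}{x}\Lo{n-j}{\varphi^{k-1}}{x}$, all counterterms $\Co{j}{e}{\varphi^{k-1}\varphi}\Lo{n-j}{\ket{v_e}}{x}$, and the implicit remainder operators $(R_n)_{\varphi^k}$ absorbed on bringing everything to normal order) obeys the bound $4n+k$, and that the coincidence limit $x_1\to x_2$ does not reintroduce extra ladder operators. For the products this follows from the contraction bookkeeping above; for the counterterms with $e$ restricted to $|e|\leq|\varphi^k|$ one uses that such $\ket{v_e}$ has particle number at most $k$ (since $|\varphi|=(D-2)/2$, in $D=3$ a monomial of dimension $\leq k/2$ involves at most $k$ factors, with the worst case $e=\varphi^k$), combined with $n-j\geq 1 \Rightarrow 4(n-j)\geq 4$; for the remainder operators one invokes the recursive structure $(R_n)_{\varphi^k} = \text{products of lower } (R_j)_{\varphi^m} \text{ and } \Lo{i}{\varphi}{x}$ visible in results \ref{res:model NLO Lphi2}, \ref{res:model NLO Lphi3}, \ref{res:model NLO Lphi4}, \ref{res:model NLO Lphi5}, together with the fact that each contraction there again only removes ladder operators. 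A clean way to organize all of this is to prove the slightly stronger statement ``$\Lo{n}{\varphi^k}{x}$ and $(R_n)_{\varphi^k}(x)$ both act by at most $4n+k$ ladder operators'' by simultaneous induction on $n$, and within fixed $n$ by the natural induction on $k$ dictated by the iteration scheme; the powers-of-$4$ arithmetic then just reflects that one field-equation step ($k\to 5$, i.e.\ $+4$ in the worst case followed by $5\to 1$) plus the associativity steps never do worse than additive.
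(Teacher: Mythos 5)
Your proposal is correct and follows essentially the same route as the paper: induction on the perturbation order $n$ via the field equation step ($\Delta^{-1}$ does not change the operator count, giving $4(n-1)+5=4n+1$ for $\Lo{n}{\varphi}{x}$), followed by induction on $k$ via the consistency condition with the same arithmetic $4i+1+4(n-i)+k-1=4n+k$ for the products and the observation that the counterterms $\Lo{n-i}{\ket{v_c}}{x}$ with $i>0$ and $\ket{v_c}$ of particle number at most $k$ obey a strictly smaller bound. The extra bookkeeping you propose for the remainder operators is not needed in the paper's version, since bounding the operator count of the products before normal ordering already suffices (contractions only remove operators), but it does no harm.
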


\begin{proof}
 For the left representatives of the free theory this follows simply from eq.\eqref{eq:model free field La}. Now suppose we know the claim holds at order $n-1$. Then we know that the left representative $\Lo{n-1}{\varphi^5}{x}$ is related to $\Lo{n}{\varphi}{x}$ by the field equation \eqref{eq:model perturbations OPE field equation}. Hence $\Lo{n}{\varphi}{x}$ contains at most $4(n-1)+5=4n+1$ ladder operators just as we claimed. In order to construct the other $n$-th order left representatives we use the consistency condition. Let us now assume the claim holds for $\Lo{n}{\varphi^{k-1}}{x}$. Then the consistency condition yields

\begin{equation}
\begin{split}
 \Lo{n}{\varphi^k}{x}=\lim_{y\to x}&\left[\sum_{i=0}^{n}\Lo{i}{\varphi}{x}\Lo{n-i}{\varphi^{k-1}}{y}-\sum_{i=1}^{n} \Co{i}{c}{\varphi^{k-1}\varphi}(x-y)\Lo{n-i}{\ket{v_c}}{x}\right. \\-&\left.\Co{0}{\varphi^{k-2}}{\varphi^{k-1}\varphi}(x-y) \Lo{n}{\varphi^{k-2}}{x}-\Co{0}{(\varphi^{k-3}\varphi_{1m})}{\varphi^{k-1}\varphi}(x-y) \Lo{n}{\varphi^{k-3}\varphi_{1m}}{x}\right]
\end{split}
\label{model:higher orders consistency condition}
\end{equation}
Since the left representatives up to $\Lo{n}{\varphi^{k-1}}{x}$ fulfill the proposition by assumption, it is easy to check that the product of the left representatives on the right side of this equation contains at most $4i+1+4(n-i)+k-1=4n+k$ ladder operators. Further, the OPE coefficient $\Co{i}{c}{\varphi^{k-1}\varphi}$ vanishes in the limit $y\to x$ if $\ket{v_c}$ contains higher powers than $\varphi^{k}$. Thus, the left representative $\Lo{n-i}{\ket{v_c}}{x}$ multiplying this coefficient contains at most $4(n-i)+k$ ladder operators, where $i> 0$. It remains to discuss the left representatives in the second line, which also fulfill the desired property by assumption. Therefore, our claim also holds for $\Lo{n}{\varphi^k}{x}$ and by iteration of this procedure for arbitrary left representatives.

\end{proof}
In a similar manner, it can be shown that

\begin{proposition}\label{propos:evenodd ladder operators}%
The left representative $\Lo{n}{\varphi^k}{x}$ contains only products of an even number of ladder operators if $k$ is even, and an odd number of ladder operators if $k$ is odd.
\end{proposition}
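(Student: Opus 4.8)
The plan is to prove this by induction on the perturbation order $n$, running in parallel with (indeed, piggybacking on) the induction already used for Proposition \ref{propos:max ladder operators}. The base case $n=0$ is immediate from eq.\eqref{eq:model free field La}: the free left representative $\Lo{0}{\varphi^k}{x}$ is the normal ordered product $:\prod [F(l)t_{lm}\del^l\mathcal{Y}_0(\varphi,x)]^{a_{lm}}:$ of $k$ copies of $\mathcal{Y}_0(\varphi,x)$, and each copy contributes exactly one ladder operator (either a $\mathbf{b}^\dagger$ or a $\mathbf{b}$); hence every monomial in $\Lo{0}{\varphi^k}{x}$ has exactly $k$ ladder operators, which has the same parity as $k$.

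For the inductive step I would assume the parity claim for all left representatives of order $\leq n-1$ and for $\Lo{n}{\varphi^{k-1}}{x}$, $\Lo{n}{\varphi^{k-2}}{x}$, $\Lo{n}{\varphi^{k-3}\varphi_{1m}}{x}$, and then examine the two mechanisms by which higher representatives are built. The first is the field equation, eq.\eqref{eq:model perturbations relations}: $\Delta\Lo{n}{\varphi}{x}=\Lo{n-1}{\varphi^5}{x}$. Since $\Delta^{-1}$ acts term by term on the ring $\mathbb{Y}(x)$ without touching the $\End(V)$-factor (it only shifts powers of $r$ and introduces logarithms, see appendix \ref{app:characteristic diff eq}), the number — and hence the parity — of ladder operators in each monomial of $\Lo{n}{\varphi}{x}$ equals that of the corresponding monomial of $\Lo{n-1}{\varphi^5}{x}$, which by the inductive hypothesis is odd (as $5$ is odd, the parity of $k=5$). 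So $\Lo{n}{\varphi}{x}$ contains only odd products, consistent with $k=1$ being odd. The second mechanism is the consistency condition, eq.\eqref{model:higher orders consistency condition}, which I would analyze summand by summand exactly as in the proof of Proposition \ref{propos:max ladder operators}. A product $\Lo{i}{\varphi}{x}\Lo{n-i}{\varphi^{k-1}}{y}$, after bringing it to normal order via the commutation relations \eqref{eq:model free field ladder operators commutation relations}, produces monomials in which either all ladder operators survive (parity $1 + (k-1) = k$ by the inductive hypothesis, since $\mathcal{Y}_i(\varphi,x)$ has odd parity and $\Lo{n-i}{\varphi^{k-1}}{y}$ has the parity of $k-1$) or pairs $\mathbf{b}_{lm}\mathbf{b}^\dagger_{lm}$ get replaced by $id$, which removes two ladder operators at a time and therefore preserves parity. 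The counterterm $\Co{i}{c}{\varphi^{k-1}\varphi}(x-y)\Lo{n-i}{\ket{v_c}}{x}$ contributes (in the limit) only when $\ket{v_c}$ is a monomial in $\varphi$ of power $\leq k$ and of the right parity; the relevant scaling-degree/parity bookkeeping forces $\ket{v_c}$ to have the parity of $k$, so $\Lo{n-i}{\ket{v_c}}{x}$ has the parity of $k$ as well by the inductive hypothesis. The two remaining terms, involving $\Lo{n}{\varphi^{k-2}}{x}$ and $\Lo{n}{\varphi^{k-3}\varphi_{1m}}{x}$, have parities $k-2$ and $k-3+1=k-2$ respectively by the inductive hypothesis, both congruent to $k$ mod $2$.

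Collecting these observations, every term on the right-hand side of eq.\eqref{model:higher orders consistency condition} has the parity of $k$, and since normal ordering and the coincidence limit do not alter the number of ladder operators in any surviving monomial, $\Lo{n}{\varphi^k}{x}$ consists only of products of a number of ladder operators with the parity of $k$. By induction on $k$ (starting from $\Lo{n}{\varphi}{x}$ just handled) and then on $n$, the claim follows for all left representatives. I expect the one genuinely delicate point to be the parity bookkeeping for the counterterm $\Co{i}{c}{\varphi^{k-1}\varphi}$: one must argue that the only $\ket{v_c}$ surviving the limit are monomials $\varphi^j$ (or $\varphi^j\varphi_{1m}$) whose "ladder parity" matches $k$ — this uses orthogonality of the Fock basis together with the fact that the free left representative $\Lo{0}{\varphi}{x}$ changes the particle number by $\pm 1$, so $g(a,b)$ and the ladder-operator count always have the same parity (cf. proposition \ref{propos:labelings}, eq.\eqref{eq:model NLO Lphi labels propos2}). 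Everything else is a routine parity count that parallels the already-proven Proposition \ref{propos:max ladder operators}, so I would present it compactly rather than in full detail.
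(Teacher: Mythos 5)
Your proposal is correct and follows essentially the same route as the paper's own proof: induction on the perturbation order with the base case from eq.\eqref{eq:model free field La}, parity preservation under $\Delta^{-1}$ for the field-equation step, and a parity analysis of the products and counterterms in the consistency condition \eqref{model:higher orders consistency condition}, where the key point in both arguments is that $\Co{i}{c}{\varphi^{k-1}\varphi}$ survives only for $\ket{v_c}$ of the form $\varphi_{l_1m_1}\cdots\varphi_{l_jm_j}$ with $j\equiv k \pmod 2$. Your explicit remark that normal ordering removes ladder operators in pairs (and hence preserves parity) makes a step explicit that the paper leaves implicit, but it is the same argument.
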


\begin{proof}
 Again there is nothing to show at zeroth perturbation order due to eq.\eqref{eq:model free field La}. Also, assuming the claim holds at order $n-1$, it holds for $\Lo{n}{\varphi}{x}$ if we use $\Delta^{-1}$ to solve the field equation. Thus, it remains to check whether the consistency condition respects our proposition. Assuming the left representatives up to $\Lo{n}{\varphi^{k-1}}{x}$ satisfy the proposition, we can deduce that both factors in the product of the left representatives on the right side of eq.\eqref{model:higher orders consistency condition} satisfy our claim, so the product as a whole does so as well. It remains to investigate the counterterms. Let for the moment $k=$even. Then, since $\Lo{i}{\varphi^{k-1}}{x}$ acts by an odd number of ladder operators for $i\leq n$, one can show (see result \ref{result:vanishing general OPE coeff}) that the OPE coefficient $\Co{i}{c}{\varphi^{k-1}\varphi}$ vanishes if $g(\ket{v_c},\varphi)=even$, i.e. if $\ket{v_c}$ is constructed from $\varphi$ by an even number of ladder operators. In other words, for the coefficient not to vanish, $\ket{v_c}$ has to be of the form $\varphi_{l_1m_1}\cdots\varphi_{l_jm_j}$ where $j$ is an even number. Therefore, the left representative $\Lo{n-i}{\ket{v_c}}{x}$, which multiplies this OPE coefficient, is obtained from $\Lo{n-i}{\varphi^j}{x}$ by taking the appropriate derivatives, see eq.\eqref{eq:model free field La}. For $j=$even this left representative also acts by an even number of ladder operators due to our assumption, so the proposition does indeed hold. On the other hand, if $k=$odd, one can follow the same argumentation to show that $\Lo{n-i}{\ket{v_c}}{x}$ acts by an odd number of ladder operators. As the left representatives in the second line of eq.\eqref{model:higher orders consistency condition} fulfill the desired property by assumption, the iteration is complete.
\end{proof}
As a simple conclusion from these propositions, we find

\begin{results}[Vanishing OPE coefficients]\label{result:vanishing general OPE coeff}%
At arbitrary perturbation order $n\in\mathbb{N}$ and for any exponent $k\in\mathbb{N}$ the equation

\begin{equation}
 \Co{n}{b}{\varphi^k\, a}(x)=0\qquad \text{ for }g(a,b)>4n+k \text{ or }g(a,b)+k=\text{odd}
\end{equation}
holds.
\end{results}

\begin{proof}
 By proposition \ref{propos:max ladder operators} the left representative $\Lo{n}{\varphi^k}{x}$ acts by at most $4n+k$ ladder operators. Further, $g(a,b)>4n+k$ means that more than $4n+k$ ladder operators are needed to transform $\ket{v_a}$ into $\ket{v_b}$. Thus, the matrix element $\bra{v_b}\Lo{n}{\varphi^k}{x}\ket{v_a}=\Co{n}{b}{\varphi^k\, a}(x)$ vanishes, due to orthonormality of our basis. 

Now we come to the second part of the result. Assume $g(a,b)=$even for the moment, i.e. we need an even number of ladder operators to transform $\ket{v_a}$ into $\ket{v_b}$. Then only the part of $\Lo{n}{\varphi^k}{x}$ that acts by an even number of ladder operators contributes to the coefficient $\Co{n}{b}{\varphi^k\, a}$. Proposition \ref{propos:evenodd ladder operators} tells us that for $k=$odd, this left representative does not contain any contribution of this kind, so for $g(a,b)+k=$odd the OPE coefficient under consideration vanishes. If on the other hand $g(a,b)=$odd, we find by the same arguments that the coefficient vanishes for $k=$even, which finishes the proof.
\end{proof}
This result implies that only the coefficients $\Co{n}{b}{\varphi^k\, a}$ with $g(a,b)=4n+k-2i$ where $i\in\mathbb{N}$ are non-zero.
The difficulty in the computation of these remaining coefficients depends strongly on the value of $g(a,b)$, as we have also seen in the constructions of the previous sections. This is due to the fact that for $g(a,b)=4n+k-2i$ we have to contract $i$ pairs of ladder operators, which essentially means that we have to solve an $i$-fold infinite sum. Thus, it is natural to first consider the coefficients $\Co{n}{b}{\varphi^k\, a}$ with $g(a,b)=4n+k$, since here no infinite sums appear. In this simple case it is possible to give a closed form expression with the help of the following generalizations of our notation:

\begin{definition}[Generalized partitions]\index{symbols}{Pijgen@$\mathcal{P}_{(a_1,\ldots,a_n)}[\mathfrak{A}]$}%
Let $\mathcal{P}_{(a_1,\ldots,a_n)}\left[\mathfrak{A}\right]$ be the set of partitions of any multiset $\mathfrak{A}$ of cardinality $a_1+\ldots+a_n$ into $n$ submultisets of cardinality $a_1, a_2,\ldots\text{ and }a_n$ respectively, whose sum is $\mathfrak{A}$, i.e.
 \begin{equation}
\begin{split}
&\mathcal{P}_{(a_1,\ldots,a_n)}\left[\mathfrak{A}\right]=\left\{\mathfrak{P}_1,\ldots,\mathfrak{P}_n\Big| \operatorname{card}\mathfrak{P}_i=a_i \, \forall i\in \{1,\ldots,n\}\text{ and }  \mathfrak{P}_1\uplus \cdots\uplus \mathfrak{P}_n=\mathfrak{A} \right\}
\end{split}
\end{equation}
\end{definition}

\begin{definition}[Notation for higher orders]
For $n> 0$ we define recursively
 \begin{equation}
\begin{split}
&\Delta_n[\mathfrak{A}=\Lbag \mathfrak{l}^q_1,\ldots,\mathfrak{l}^q_{4n+1}\Rbag,r]_J:=\sum_{n_1,\ldots, n_5}^{n_1+\ldots+n_5=n-1}\sum_{\mathcal{P}_{(4n_1+1,\ldots,4n_5+1)}[\mathfrak{A}]}\sum_{p_1=0}^{n_1}\cdots\sum_{p_5=0}^{n_5}\sum_{J_1,\ldots,J_5}\\
&\times (\log r)^{p_1+\ldots+p_5}  D^{\left(p_1+\ldots+p_5\right)}\left[d_{\mathfrak{A}},J,r\right]\cdot \Do{n_1}{P_1}{J_1}{p_1}\cdots \Do{n_5}{P_5}{J_5}{p_5}\, T[J_1,\ldots,J_5]_{J}
\end{split}
\label{model:higher orders definition Dn}
\end{equation}\index{symbols}{Deltan@$\Delta_n[\mathfrak{A},r]_J$}
where $D^{(n)}[d_{\mathfrak{A}},J,r]$ is defined as in eq.\eqref{eq:app diff eq D} and with

\begin{equation}
 \Do{n}{A}{J}{p}:=\frac{1}{p!}\frac{\text{d}^p}{(\text{d}\log r)^p}\, \Delta_n[\mathfrak{A},r]_J\Big|_{\log r=0}\quad .
\label{model:higher orders definition Dn grading}
\end{equation}\index{symbols}{Deltanp@$\Do{n}{A}{J}{p}$}
As in previous sections, let

\begin{equation}
\Delta_0[\mathfrak{A}]_J=\Delta_0^0[\mathfrak{A}]_J= T[\mathfrak{A}]_J\, s[\mathfrak{A}]\quad .
\end{equation}
\end{definition}
Remark: This definition of $\Delta_1^p$ is consistent with the formula given in eq.\eqref{eq:model NNLO Lphi Delta 1p}. This can be seen as follows: Note that for $n=1$ the parameters $n_1,\ldots,n_5$ are all restricted to be equal to zero. Thus, eq.\eqref{model:higher orders definition Dn} takes the form

\begin{equation}
 \Delta_1[\mathfrak{A}=\Lbag\mathfrak{l}^q_1,\ldots,\mathfrak{l}^q_5\Rbag,r]_J=\sum_{\mathcal{P}_{(1,1,1,1,1)}[\mathfrak{A}]} D^{(0)}[d_{\mathfrak{A}},J,r]\, T[\mathfrak{A}]_{J}
\end{equation}
where we also used the fact that $\Delta_0[\mathfrak{A}=\Lbag\mathfrak{l}^q_i\Rbag]_{J}=\delta_{l_i,J}$ according to the definition above. The sum over partitions of $\mathfrak{A}$ into submultisets of cardinality 1 is equivalent to a sum over permutations of the elements of $\mathfrak{A}$. The right side of the above equation is invariant under such permutations (the submultisets $\mathfrak{P}_i$ do not appear) so we may replace this sum by a symmetry factor, which by definition is just $s[\mathfrak{A}]$. Therefore, eq.\eqref{eq:model NNLO Lphi Delta 1p} is equivalent to the definition above.

\begin{results}[The simplest class of non-vanishing OPE coefficients]\label{result:simple general OPE coeff}
 For $g(a,b)=4n+k$, $b=a+\sum_{i=1}^{4n+k}e_{\mathfrak{l}_i^q}$ and $\mathfrak{A}=\Lbag\mathfrak{l}_i^q,\ldots,\mathfrak{l}_{4n+k}^q\Rbag$ the equation

\begin{equation}
 \begin{split}
  \Co{n}{b}{\varphi^k\, a}(x)=&f_a^b[\mathfrak{A}] r^{\mathbf{d}}\sum_{n_1, \ldots, n_k=0}^{n_1+\ldots+n_k=n}\, \sum_{\mathcal{P}_{(4n_1+1,\ldots,4n_k+1)}\left[\mathfrak{A}\right]}\sum_{J}\sum_{J_1,\ldots,J_5}\\ &\times 	S_{JM}(\hat{x})T[J_1,\ldots,J_k]_{J}
  \cdot \Delta_{n_1}[\mathfrak{P}_1,r]_{J_1}\cdots \Delta_{n_k}[\mathfrak{P}_k,r]_{J_k}
 \end{split}
\label{model:higher orders CT coeff}
\end{equation}
holds.
\end{results}
Before we give the proof of this result, let us first present the following

\begin{lemma}[Maximum number of ladder operators in counterterms]\label{lemma:max operators in CT}%
The counterterms appearing in the construction of an arbitrary left representative $\Lo{n}{\varphi^k}{x}$ act by less than $4n+k$ ladder operators.
\end{lemma}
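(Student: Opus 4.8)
The plan is to prove Lemma~\ref{lemma:max operators in CT} by the same kind of inductive argument on the perturbation order $n$ (and, within a fixed order, on the power $k$) that was used for Propositions~\ref{propos:max ladder operators} and \ref{propos:evenodd ladder operators}. First I would identify exactly which objects play the role of ``counterterms'' in the construction of $\Lo{n}{\varphi^k}{x}$: these are, according to eq.\eqref{model:higher orders consistency condition}, the OPE coefficients $\Co{i}{c}{\varphi^{k-1}\varphi}(x-y)$ for $i\geq 1$, together with the lower-power coefficients $\Co{0}{\varphi^{k-2}}{\varphi^{k-1}\varphi}$ and $\Co{0}{\varphi^{k-3}\varphi_{1m}}{\varphi^{k-1}\varphi}$; more precisely, the counterterm is such a coefficient multiplied by a left representative $\Lo{n-i}{\ket{v_c}}{x}$ (resp.\ $\Lo{n}{\varphi^{k-2}}{x}$, $\Lo{n}{\varphi^{k-3}\varphi_{1m}}{x}$). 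The ladder-operator count of a counterterm is therefore the ladder-operator count of the accompanying left representative, so the lemma reduces to bounding $\Lo{n-i}{\ket{v_c}}{x}$ (etc.) for those $\ket{v_c}$ that actually survive the limit $y\to x$.

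The key step is the observation, already used in the proof of Proposition~\ref{propos:max ladder operators}, that $\Co{i}{c}{\varphi^{k-1}\varphi}$ vanishes in the limit $y\to x$ unless $\ket{v_c}$ contains no higher power of $\varphi$ than $\varphi^k$; indeed the scaling axiom, axiom~\ref{ax:Scaling}, forces $\ket{v_c}$ to have small enough dimension. Hence for a surviving term $\ket{v_c}$ is built from at most $k$ factors $\varphi_{l_im_i}$, so $\Lo{n-i}{\ket{v_c}}{x}$ is obtained from $\Lo{n-i}{\varphi^j}{x}$ with $j\leq k$ by applying derivations (eq.\eqref{eq:model free field La}), which do not change the number of ladder operators. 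By Proposition~\ref{propos:max ladder operators} this is at most $4(n-i)+j\leq 4(n-i)+k$, and since $i\geq 1$ this is strictly less than $4n+k$. The two remaining counterterms in the second line of eq.\eqref{model:higher orders consistency condition} involve $\Lo{n}{\varphi^{k-2}}{x}$ and $\Lo{n}{\varphi^{k-3}\varphi_{1m}}{x}$, which by Proposition~\ref{propos:max ladder operators} act by at most $4n+k-2$ and $4n+k-2$ ladder operators respectively (the $\varphi_{1m}$ factor again only contributes derivations, not new operators), both strictly less than $4n+k$. This establishes the claim for $\Lo{n}{\varphi^k}{x}$ assuming it holds for all left representatives with strictly smaller perturbation order, and for $\Lo{n}{\varphi^{k-1}}{x}$, so the double induction on $(n,k)$ closes, with the base case $n=0$ being vacuous since no counterterms occur in the free theory construction of eq.\eqref{eq:model free field La}.

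The main obstacle I anticipate is making the phrase ``the counterterms appearing in the construction'' precise and exhaustive: one must be sure that eq.\eqref{model:higher orders consistency condition} really lists all counterterm contributions at every stage, including those hidden inside remainder terms $(R_i)_{\varphi^k}$, and that a counterterm's ladder-operator count is genuinely inherited from its companion left representative rather than acquiring extra operators from the $\mathbb{C}$-number coefficient $\Co{i}{c}{\varphi^{k-1}\varphi}(x-y)$ (which carries none). A secondary subtlety is the treatment of the $i=0$ term $\sum_{i=0}^n\Lo{i}{\varphi}{x}\Lo{n-i}{\varphi^{k-1}}{y}$, which is \emph{not} a counterterm but the main term; one should note that it is precisely the surviving-$\ket{v_c}$ analysis that separates the $i=0$ piece (yielding the normal-ordered product with its full $4n+k$ operators) from the genuine counterterms $i\geq 1$. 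Once the bookkeeping of which objects are counterterms is pinned down, the inequalities themselves are immediate from Proposition~\ref{propos:max ladder operators}.
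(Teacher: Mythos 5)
Your proof is correct and follows essentially the same route as the paper: the paper's own proof simply refers back to the argument already made inside the proof of Proposition \ref{propos:max ladder operators}, namely that the first-line counterterms of eq.\eqref{model:higher orders consistency condition} carry at most $4(n-i)+k$ ladder operators with $i>0$ (via the scaling constraint on the surviving $\ket{v_c}$), while the second-line counterterms are bounded trivially. Your version just spells out the induction and bookkeeping more explicitly than the paper does.
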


\begin{proof}
We have already shown this in the proof of proposition \ref{propos:max ladder operators}. There we have argued that the counterterms in the first line of eq.\eqref{model:higher orders consistency condition} contain at most products of $4(n-i)+k$ ladder operators with $i>0$. For the counterterms in the second line of eq.\eqref{model:higher orders consistency condition} the lemma holds trivially.
\end{proof}
Now we are ready for the

\begin{proof}[of result \ref{result:simple general OPE coeff}]
Let $g(a,b)=4n+k$. Then it follows from lemma \ref{lemma:max operators in CT} that we may write\footnote{Here we do not have to require normal ordering, since any contribution containing a product of contracted ladder operators vanishes due to proposition \ref{propos:max ladder operators}.}

\begin{equation}
 \bra{v_b}\Lo{n}{\varphi^k}{x}\ket{v_a}=\bra{v_b}\sum_{i=0}^n \Lo{n-i}{\varphi}{x}\Lo{i}{\varphi^{k-1}}{x}\ket{v_a}\quad,
\end{equation}
since the matrix elements of all the counterterms in eq.\eqref{model:higher orders consistency condition} vanish, which also allows us to perform the limit $y\to x$. Repetition of this procedure yields

\begin{equation}
 \bra{v_b}\Lo{n}{\varphi^k}{x}\ket{v_a}=\bra{v_b}\sum_{i=0}^n\sum_{j=0}^i \Lo{n-i}{\varphi}{x}\Lo{i-j}{\varphi}{x}\Lo{j}{\varphi^{k-2}}{x}\ket{v_a}\quad.
\end{equation}
This process can be further iterated to obtain the factorized form

\begin{equation}
 \bra{v_b}\Lo{n}{\varphi^k}{x}\ket{v_a}=\bra{v_b}\sum_{\atop{n_1,\ldots,n_k=0}{n_1+\ldots+n_k=n}}^n \Lo{n_1}{\varphi}{x}\cdots \Lo{n_k}{\varphi}{x}\ket{v_a}
\label{model:higher orders CT factorization}
\end{equation}
Hence, we can reduce the problem to finding an expression for

\begin{equation}
 \bra{v_b}\Lo{n}{\varphi}{x}\ket{v_a}=\Co{n}{b}{\varphi\, a}(x)
\end{equation}
with $g(a,b)=4n+1$. Recall that we may use the field equation in order to determine this coefficient from

\begin{equation}
 \Co{n}{b}{\varphi\, a}(x)=\Delta^{-1} \Co{n-1}{b}{\varphi^5\, a}(x)=\sum_{n_1,\ldots,n_5}^{n_1+\ldots+n_5=n-1}\Delta^{-1} \bra{v_b}\Lo{n_1}{\varphi}{x}\cdots\Lo{n_5}{\varphi}{x}\ket{v_a}\quad .
\label{model:higher orders CT DGL}
\end{equation}
In the second step we again used the factorization property \eqref{model:higher orders CT factorization}. With the help of this relation we can establish an iteration: We start at $n=1$ with the formula

\begin{equation}
 \Co{1}{b}{\varphi\, a}(x)=\Delta^{-1} \Co{0}{b}{\varphi^5\, a}(x)=\Delta^{-1} \bra{v_b}\Lo{0}{\varphi}{x}\cdots\Lo{0}{\varphi}{x}\ket{v_a}
\end{equation}
with $g(a,b)=5$, which is familiar from section \ref{subsubsec:L(1)phi}. There we have found the result

\begin{equation}
 \Co{1}{b}{\varphi\, a}(x)=f^b_a[\mathfrak{A}]\sum_J \Delta_1[\mathfrak{A},r]_J\, S_J(\hat{x})\, r^{\mathbf{d}}
\label{model:higher orders CT iteration start old}
\end{equation}
with $\mathfrak{A}=\Lbag \mathfrak{l}^q_1,\ldots,\mathfrak{l}^q_5 \Rbag$ and $b=a+\sum_{i=1}^5e_{\mathfrak{l}^q_i}$ (recall that for $g(a,b)=n$ the set $\mathcal{I}^b_a(n)$ consists of only one element). This is in accordance with eq.\eqref{model:higher orders CT coeff}.

Now suppose eq.\eqref{model:higher orders CT coeff} holds for all OPE coefficients up to $\Co{n-1}{b}{\varphi\, a}$. Then the right side of eq.\eqref{model:higher orders CT DGL} can be written as

\begin{equation}
\begin{split}
 \Delta^{-1} \Co{n-1}{b}{\varphi^5\, a}(x)= &\Delta^{-1}\Big( f^b_a[\mathfrak{A}] r^{d_{\mathfrak{A}}-2}\sum_{n_1,\ldots,n_5}^{n_1+\ldots+n_5=n-1}\, \sum_{\mathcal{P}_{(4n_1+1,\ldots,4n_5+1)}\left[\mathfrak{A}\right]}\sum_{J}\sum_{J_1,\ldots,J_5} \sum_{p_1=0}^{n_1}\cdots\sum_{p_5=0}^{n_5}
  \\ &\quad\times (\log r)^{p_1+\ldots+p_5}\Do{n_1}{P_1}{J_1}{p_1}\cdots \Do{n_5}{P_5}{J_5}{p_5}\, S_{J}(\hat{x})T[J_{1},\ldots,J_{5}]_{J}  \Big)\\
=& f^b_a[\mathfrak{A}] r^{d_{\mathfrak{A}}}\sum_{n_1,\ldots,n_5}^{n_1+\ldots+n_5=n-1}\, \sum_{\mathcal{P}_{(4n_1+1,\ldots,4n_5+1)}\left[\mathfrak{A}\right]}\sum_{J} \sum_{J_1,\ldots,J_5} \sum_{p_1=0}^{n_1}\cdots\sum_{p_5=0}^{n_5} (\log r)^{p_1+\ldots+p_5}
  \\ &\quad\times D^{(p_1+\ldots+p_5)}[d_{\mathfrak{A}},J,r]\, \Do{n_1}{P_1}{J_1}{p_1}\cdots \Do{n_5}{P_5}{J_5}{p_5}\, S_{J}(\hat{x})T[J_{1},\ldots,J_{5}]_{J}\\
=&f^b_a[\mathfrak{A}] r^{d_{\mathfrak{A}}}\sum_{J} S_{J}(\hat{x}) \Delta_n[\mathfrak{A},r]_J= \Co{n}{b}{\varphi\, a}(x)
\end{split}
\end{equation}
where $g(a,b)=4n+1$ and $\mathfrak{A}=\Lbag\mathfrak{l}^q_1,\ldots,\mathfrak{l}^q_{4n+1}\Rbag$ with $b=\sum_{i=1}^{4n+1}e_{\mathfrak{l}^q_i}$. In the second step we used the definition of $D^{(n)}$, eq.\eqref{eq:app diff eq D}, in order to solve the differential equation, and in the last line we used the definition of $\Delta_n$, see eq.\eqref{model:higher orders definition Dn}. Therefore, eq.\eqref{model:higher orders CT coeff} holds for all coefficients of the form $\Co{n}{b}{\varphi\, a}$ with $g(a,b)=4n+1$, and hence for all coefficients $\Co{n}{b}{\varphi^k\, a}$ with $g(a,b)=4n+k$ due to the factorization property, eq.\eqref{model:higher orders CT factorization}.

\end{proof}

As mentioned above, the construction of OPE coefficients $\Co{n}{b}{\varphi^k\, a}$ becomes increasingly difficult for decreasing values of $g(a,b)$, so it will be considerably more complicated to extend the above result so smaller values of $g(a,b)$. Thus, instead of trying to determine the concrete form of these coefficients, we will spend the rest of this section discussing some general properties of arbitrary OPE coefficients, which follow from the patterns observed in our low order computations.

\subsection*{Powers of logarithms}

Here we want to prove the familiar claim

\begin{proposition} {\ \\}
 At $n$-th order in perturbation theory, OPE coefficients $\Co{n}{c}{ab}(x)$ contain at most the $n$-th power of $\log r$.
\end{proposition}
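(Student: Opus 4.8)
The plan is to prove the claim by induction on the perturbation order $n$, running in parallel over all left representatives $\Lo{n}{\varphi^k}{x}$ (and, more generally, $\Lo{n}{\ket{v_a}}{x}$), since these are the objects through which the algorithm of section \ref{subsec:perturbation via field equation} proceeds and from which the OPE coefficients $\Co{n}{c}{ab}(x)$ are obtained as matrix elements. The induction hypothesis is that, for all $m < n$ and all $k$, the left representative $\Lo{m}{\varphi^k}{x}$ --- viewed as an element of $\mathbb{Y}(x)=\mathbb{C}\llbracket r,r^{-1},\log r\rrbracket\otimes\{Y_n(\hat{x};D)\}\otimes\End(V)$ --- is a polynomial in $\log r$ of degree at most $m$. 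Taking matrix elements then immediately gives the bound on $\Co{m}{c}{ab}(x)$.

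The inductive step splits into the two computational ingredients of the algorithm. First, the field equation: $\Lo{n}{\varphi}{x}=\Delta^{-1}\Lo{n-1}{\varphi^5}{x}$. By the induction hypothesis $\Lo{n-1}{\varphi^5}{x}$ contains powers of $\log r$ up to $n-1$, and I would invoke the explicit form of the inverse Laplacian (appendix \ref{app:characteristic diff eq}, i.e. the operator whose action is encoded in $D^{(p)}(d,J,r)$): acting on a term $r^d(\log r)^p S_J(\hat{x})$ it produces a term of the form $r^{d+2}(\log r)^p$ times a constant in the non-resonant case, and $r^{d+2}(\log r)^{p+1}$ times a constant in the resonant case $\min\{|d+2|,|d+3|\}=J$. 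Hence $\Delta^{-1}$ raises the logarithm degree by at most one, so $\Lo{n}{\varphi}{x}$ has degree at most $n$ in $\log r$. From $\Lo{n}{\varphi}{x}$ one obtains all $\Lo{n}{\ket{v_a}}{x}$ by taking spatial derivatives and linear combinations (eq.\eqref{eq:model free field La} and its generalization), which never raise the degree in $\log r$; this handles the starting point of the second, consistency-condition, recursion.

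Second, the consistency condition, eq.\eqref{model:higher orders consistency condition}, which expresses $\Lo{n}{\varphi^k}{x}$ in terms of products $\Lo{i}{\varphi}{x}\Lo{n-i}{\varphi^{k-1}}{y}$ and lower-order left representatives multiplied by OPE coefficients $\Co{i}{c}{\varphi^{k-1}\varphi}(x-y)$, all in the limit $y\to x$. Here I would argue in three parts: (i) a product $\Lo{i}{\varphi}{x}\Lo{n-i}{\varphi^{k-1}}{x}$ of two normal-ordered factors of logarithmic degrees $\le i$ and $\le n-i$ has degree $\le n$ in $\log r$ --- but one must check that the \emph{contraction} terms generated when reordering ladder operators (the remainder operators $(R_1)_{\varphi^k}$ etc.) do not secretly raise the degree; (ii) the counterterms: each $\Co{i}{c}{\varphi^{k-1}\varphi}(x-y)$ has logarithmic degree $\le i$ and the left representative it multiplies has degree $\le n-i$, so after Taylor expansion in $x-y$ and passing to the limit the product still has degree $\le n$; (iii) the whole point of the subtraction is that the divergences cancel, leaving a \emph{finite} remainder of the same (or lower) logarithmic degree. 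The main obstacle I anticipate is exactly point (i): controlling the logarithmic degree of the contraction/remainder terms, because these arise from infinite sums over ladder-operator modes (the sums appearing in $(R_1)_{\varphi^2}$, $(R_1)_{\varphi^3}$, $\dots$) whose closed evaluation --- as the low-order sections show --- can produce an extra $\log r$ relative to the naive count; one must verify that a single contraction contributes the analogue of ``$\Delta^{-1}$ of a lower-order expression'', hence at most one extra power of $\log r$, and that the bookkeeping of which factor carries which power still yields the bound $n$. Once this is established for $\Lo{n}{\varphi^k}{x}$ for all $k$ by the inner induction on $k$, and hence for all $\Lo{n}{\ket{v_a}}{x}$, taking matrix elements $\bra{v_c}\Lo{n}{\varphi^k}{x}\ket{v_a}$ (and using eq.\eqref{eq:model perturbations consistency} to recover general $\Co{n}{c}{ab}$) completes the induction and proves the proposition.
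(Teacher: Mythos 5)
Your overall strategy --- induction on the perturbation order, with the two steps of the algorithm (inversion of the Laplacian and the consistency condition) treated separately --- is the same as the paper's, and your treatment of the field-equation step ($\Delta^{-1}$ raises the degree in $\log r$ by at most one, per the resonant case of $D^{(p)}(d,J,r)$) matches the paper exactly. The problem is that the step you yourself flag as ``the main obstacle I anticipate'' --- controlling the powers of $\log r$ generated by the infinite sums over intermediate fields in the consistency condition --- is precisely the substantive content of the paper's proof, and you leave it unproven. Moreover, the resolution you sketch (``a single contraction contributes the analogue of $\Delta^{-1}$ of a lower-order expression, hence at most one extra power of $\log r$'') is not the argument the paper gives and is not obviously sufficient: it is unclear how this bookkeeping would prevent several contractions from each contributing an extra logarithm and overshooting the bound.

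The paper closes this gap differently. Using the grading of the coefficients by dimension, it writes the infinite sum over $c$ in the schematic form
\begin{equation*}
\sum_c \left(\frac{|y|}{|x|}\right)^{|c|}\cdot \frac{|x|^{|b|-1/2}}{|y|^{|a|+1/2}}\,(\log|x|)^q(\log|y|)^p,\qquad p+q\leq n+1,
\end{equation*}
and observes that any \emph{new} logarithm produced by summing this series must have argument $1-|y|/|x|$, hence diverges as $y\to x$ and must be cancelled by a counterterm. If the sum diverges like $\bigl(\log(1-|y|/|x|)\bigr)^r$, the matching counterterm is proportional to $(\log|x|)^{p+q}(\log|x-y|)^{r}$; since every counterterm is a product of two OPE coefficients of orders $i$ and $n+1-i$, each satisfying the inductive bound, the total power satisfies $p+q+r\leq n+1$, and the finite remainder $(\log|x|)^{p+q+r}$ left after cancellation obeys the claim. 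Without this (or an equivalent) argument tying the power of the divergent logarithm to the structure of the counterterms, your induction does not close.
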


\begin{proof}
 We prove this statement iteratively. At zeroth-order it is obviously true, as can be seen from our explicit construction of the general left representative $\Lo{0}{\ket{v_a}}{x}$ of the free theory. Matrix elements of this normal ordered operator contain only finite sums of polynomial terms, and hence no logarithms. Now suppose the proposition is true at order $n$. Then according to our algorithm we proceed to order $n+1$ by inverting the Laplace operator on $\Co{n}{b}{\varphi^5\, a}$. By assumption, this coefficient contains no higher powers than $(\log r)^n$. Now according to eq.\eqref{eq:app diff eq D}, inversion of the Laplace operator on such an expression can increase the power of $\log r$ at most by one, which implies that our claim also holds for $\Co{n+1}{b}{\varphi\, a}$. The next step in our scheme is to determine $\Co{n+1}{b}{\varphi^2\, a}$ using the consistency condition

\begin{equation}
 \Co{n+1}{b}{\varphi^2\, a}(x)=\lim_{y\to x}\left[\sum_{i=0}^{n+1}\Co{i}{c}{\varphi\, a}(y)\Co{n+1-i}{b}{\varphi\, c}(x)-\text{ counterterms }\right]
\end{equation}
All expressions in this formula are known, i.e. only coefficients up to $\Co{n+1}{b}{\varphi\, a}$ appear. Thus, we know that each summand on its own fulfills our claim, so the only possible source for an additional power of the logarithm is the infinite sum over $c$. Despite our lack of knowledge of the explicit form of the coefficients in this sum, dimensional analysis\footnote{Recall that the OPE coefficients obtained with $\Delta^{-1}$ are graded by dimension, i.e. $sd\, \Co{i}{c}{ab}=|a|+|b|-|c|$.} allows us to put it into the form

\begin{equation}
 \Co{i}{c}{\varphi\, a}(y)\Co{n+1-i}{b}{\varphi\, c}(x)\propto\sum_c \left(\frac{|y|}{|x|}\right)^{|c|}\cdot \frac{|x|^{|b|-1/2}}{|y|^{|a|+1/2}} (\log |x|)^q(\log |y|)^p \qquad, p+q\leq n+1
\end{equation}
From this estimate we can see that if the infinite sum over $c$ is to produce additional powers of logarithms, the argument of this logarithm will clearly be $1-|y|/|x|$. However, in the limit $y\to x$ this expression is divergent and thus has to be cured by subtraction of an appropriate counterterm. Let us suppose the sum over $c$ diverges as $(\log 1-|y|/|x|)^r$, then the counterterm has to be proportional to $(\log |x|)^{p+q}(\log|x-y|)^r |x|^{|b|-|a|-1}$. After cancellation of the infinite parts, we are left with a finite contribution of the form $(\log |x|)^{p+q+r} |x|^{|b|-|a|-1}$. Now recall that every counterterm is a product of two OPE coefficients of order $i$ and $n+1-i$ respectively, which both satisfy our proposition. In other words, the combined power of logarithms in this product may not exceed $n+1$. Therefore, for our counterterm of the form $(\log |x|)^{p+q}(\log|x-y|)^r |x|^{|b|-|a|-1}$ we find the condition $p+q+r\leq n+1$, and it follows that also the finite result will fulfill our proposition.

This argumentation can be straightforwardly generalized to show that all coefficients $\Co{n+1}{b}{\varphi^k\, a}$, and thus also the general coefficient $\Co{n+1}{c}{a\, b}$, fulfill our proposition, which completes the iteration.
\end{proof}

\section{Comparison to customary method}\label{subsec:comparison to alternative}

In the standard approach to quantum field theory OPE coefficients are determined via certain renormalized Feynman integrals \cite{Collins1984}. In this section an exemplary computation of this type for a first order coefficient is presented. It will be shown that our method, i.e. the scheme outlined above, does indeed yield an equivalent result.

We want to determine the three-point coefficient $\Co{1}{\varphi^3}{\varphi,\varphi,\varphi}(x_1,x_2,x_3)$ again in our three dimensional toy model with $\varphi^6$ interaction. In the usual approach this means that we have to perform the integrals

\begin{equation}
\begin{split}
 	\Co{1}{\varphi^3}{\varphi,\varphi,\varphi}(x_1,x_2,x_3)&=\left[\quad	 \parbox{10mm}{\vspace{.7cm}
 \begin{fmffile}{gluon3}
	        \begin{fmfgraph*}(8,15)
\fmfpen{thick}
	            \fmftop{i1,i2,i3}
	            \fmfbottom{o1,o2,o3}
	            \fmf{plain}{i1,v1,o1}
	        \fmf{plain}{i2,v1,o2}
\fmf{plain}{i3,v1,o3}
\fmfdot{v1}
\fmfdotn{i}{3}
\fmflabel{$x_1$}{i1}
\fmflabel{$x_2$}{i2}
\fmflabel{$x_3$}{i3}
\fmflabel{$y$}{v1}
\end{fmfgraph*}
\end{fmffile}} - \parbox{10mm}{\vspace{.7cm}
 \begin{fmffile}{gluon4}
	        \begin{fmfgraph*}(8,15)
\fmfpen{thick}
	            \fmftop{i1}
	            \fmfbottom{o1,o2,o3}
	            \fmf{plain,left=.7}{i1,v1}
	        \fmf{plain}{i1,v1}
\fmf{plain,right=.7}{i1,v1}
\fmf{plain,tension=1.3}{v1,o1}
\fmf{plain,tension=1.3}{v1,o2}
\fmf{plain,tension=1.3}{v1,o3}
\fmfdot{v1}
\fmfdot{i1}
\fmflabel{$x_3$}{i1}
\fmfv{label=$y$,label.angle=0}{v1}
\end{fmfgraph*}
\end{fmffile}}\quad \right]_{\text{UV-renormalized}} \\
&=-\frac{120}{24\pi} \int_{\mathbb{R}^3} \Big(G_F(x_1,y)G_F(x_2,y)G_F(x_3,y)-G_F^3(x_3,y)\Big)_{\text{UV-ren.}}\, d^3y\\
	&=\frac{5}{\pi}  \int_{\mathbb{R}^3} \left(\frac{1}{\sqrt{(y-x_3)^2}^3}-\frac{1}{\sqrt{(y-x_1)^2(y-x_{2})^2(y-x_{3})^2}}\right)_{\text{UV-ren.}} \, d^3y\\
	&=\frac{5}{\pi}  \int_{\mathbb{R}^3} \left(\frac{1}{\sqrt{y^2}^3}-\frac{1}{\sqrt{y^2(y-x_{13})^2(y-x_{23})^2}}\right)_{\text{UV-ren.}} \, d^3y\quad ,
\end{split}
\label{model:comparison old method1}
\end{equation}
with $x_{ij}:=x_i-x_j$ and where

\begin{equation}
 G_F(x,y)=\frac{1}{|x-y|}
\label{model:comparison old method propagator}
\end{equation}
is the propagator in our theory. Here we used the Feynman rules corresponding to the Lagrangian

\begin{equation}
\mathcal{L}(\varphi,\del_{\mu}\varphi)=-\frac{1}{4\pi}\int \left(\del_\mu\varphi(y)\del^\mu\varphi(y)+\frac{\lambda}{6}\varphi^{6}(y)\right)\, d^3y\quad,
\end{equation}\index{symbols}{Lagrangian@$\mathcal{L}$}%
see eqs.\eqref{eq:model free field lagrangian} and \eqref{eq:model perturbations interaction lagrangian}. In the last step of eq.\eqref{model:comparison old method1} we simply shifted the integration variable $y\to y+x_3$. Power counting suggests that the integrals are logarithmically infrared-divergent. Therefore, we introduce a \emph{cutoff} as regularization and treat the integrals separately. In the end, as the cutoff is removed, we will obtain a finite result for eq.\eqref{model:comparison old method1}.

Let us start with the first integral in eq.\eqref{model:comparison old method1} and assume without loss of generality $r_{13}\leq r_{23}$, where $r_{ij}=|x_{ij}|$. Then we can solve the integral using the \emph{Gegenbauer polynomial technique} \cite{chetyrkin1980nae}\cite{Smirnov2004}. Let $r_y=|y|$, $d\Omega=\sin\Theta\, d\Theta\, d\phi$ and $\Lambda\in\mathbb{R}$. Then we find

\begin{equation}
\begin{split}
 &\int \frac{1}{\sqrt{y^2(y-x_{13})^2(y-x_{23})^2}} \, d^3y\, \Big |_{r_y<\Lambda} =\\ &\int\limits_0^{r_{13}}dr_y\int d\Omega \, \left(
\frac{r_y^2}{r_y\cdot r_{13}\cdot r_{23}}\cdot \frac{1}{\sqrt{1+\frac{r_y^2}{r_{13}^2}-2\frac{r_y}{r_{13}}\, \hat{y}\cdot\hat{x}_{13}}}\cdot \frac{1}{\sqrt{1+\frac{r_y^2}{r_{23}^2}-2\frac{r_y}{r_{23}}\, \hat{y}\cdot\hat{x}_{23}}}\right)\\
+&\int\limits_{r_{13}}^{r_{23}}dr_y\int d\Omega \, \left(
\frac{r_y^2}{r_y^2\cdot r_{23}}\cdot \frac{1}{\sqrt{1+\frac{r_{13}^2}{r_y^2}-2\frac{r_{13}}{r_y}\, \hat{y}\cdot\hat{x}_{13}}}\cdot \frac{1}{\sqrt{1+\frac{r_y^2}{r_{23}^2}-2\frac{r_y}{r_{23}}\, \hat{y}\cdot\hat{x}_{23}}}\right)\\
+&\int\limits_{r_{23}}^{\Lambda}dr_y\int d\Omega \, \left(
\frac{r_y^2}{r_y^3}\cdot \frac{1}{\sqrt{1+\frac{r_{13}^2}{r_y^2}-2\frac{r_{13}}{r_y}\, \hat{y}\cdot\hat{x}_{13}}}\cdot \frac{1}{\sqrt{1+\frac{r_{23}^2}{r_y^2}-2\frac{r_{23}}{r_y}\, \hat{y}\cdot\hat{x}_{23}}}\right)
\end{split}
\label{model:comparison old method cutoff}
\end{equation}
Here we split the radial integration into three parts and introduced the cutoff parameter $\Lambda$. The original integral is restored in the limit $\Lambda\to\infty$. The square root expressions under the integrals can now be recognized as generating functions of the Legendre polynomial (see eq.\eqref{eq:app spherical symm D=3 legendreP generating function}). Hence,

\begin{equation}
 \begin{split}
  &\int \frac{1}{\sqrt{y^2(y-x_{13})^2(y-x_{23})^2}} \, d^3y\, \Big |_{r_y<\Lambda} =\\ =&\int\limits_0^{r_{13}}dr_y\int d\Omega \, \left[
\frac{r_y}{r_{13}\cdot r_{23}}\cdot \sum_{n=0}^{\infty}P_n(\hat{y}\cdot\hat{x}_{13})\left(\frac{r}{r_{13}}\right)^{n}\cdot \sum_{m=0}^{\infty}P_m(\hat{y}\cdot\hat{x}_{23})\left(\frac{r}{r_{23}}\right)^{m}\right]\\
+&\int\limits_{r_{13}}^{r_{23}}dr_y\int d\Omega \, \left[
\frac{1}{\sqrt{r_{23}^2}}\cdot \sum_{n=0}^{\infty}P_n(\hat{y}\cdot\hat{x}_{13})\left(\frac{r_{13}}{r_y}\right)^{n}\cdot \sum_{m=0}^{\infty}P_m(\hat{y}\cdot\hat{x}_{23})\left(\frac{r_y}{r_{23}}\right)^{m}\right]\\
+&\int\limits_{r_{23}}^{\Lambda} dr_y\int d\Omega \, \left[
\frac{1}{r_y}\cdot \sum_{n=0}^{\infty}P_n(\hat{y}\cdot\hat{x}_{13})\left(\frac{r_{13}}{r_y}\right)^{n}\cdot \sum_{m=0}^{\infty}P_m(\hat{y}\cdot\hat{x}_{23})\left(\frac{r_{23}}{r_y}\right)^{m}\right]
 \end{split}
\label{model:comparison old method generating fct}
\end{equation}
Now the angular integration can be performed conveniently with the help of the orthogonality relation of the Legendre polynomials

\begin{equation}
 \int d\hat{y}\, P_n(\hat{y} \cdot \hat{x}_1)P_m(\hat{y}\cdot \hat{x}_2)=\delta_{n,m}\frac{1}{2n+1} P_n(\hat{x}_1\cdot \hat{x}_2)\, ,
\label{model:comparison old method orthogonality}
\end{equation}
which yields

\begin{equation}
 \begin{split}
  &\int \frac{1}{\sqrt{y^2(y-x_{13})^2(y-x_{23})^2}} \, d^3y\, \Big |_{r_y<\Lambda}=4\pi \int\limits_0^{r_{13}}dr_y \, 
\frac{r_y}{r_{13}\cdot r_{23}}\cdot \sum_{n=0}^{\infty}\frac{P_n(\hat{x}_{13}\cdot \hat{x}_{23})}{2n+1}\left(\frac{r_y^2}{r_{13}r_{23}}\right)^{n}\\
+&4\pi \int\limits_{r_{13}}^{r_{23}}dr_y \, 
\frac{1}{r_{23}}\cdot \sum_{n=0}^{\infty}\frac{P_n(\hat{x}_{13}\cdot \hat{x}_{23})}{2n+1}\left(\frac{r_{13}}{r_{23}}\right)^{n}+4\pi \int\limits_{r_{23}}^{\Lambda} dr_y \, 
\frac{1}{r_y}\cdot \sum_{n=0}^{\infty}\frac{P_n(\hat{x}_{13}\cdot \hat{x}_{23})}{2n+1}\left(\frac{r_{13}r_{23}}{r_y^2}\right)^{n}
 \end{split}
\label{model:comparison old method angular int}
\end{equation}
Finally, we are ready to perform the radial integration, which is trivial in our present form of the integral.

\begin{equation}
 \begin{split}
  \int &\frac{1}{\sqrt{y^2(y-x_{13})^2(y-x_{23})^2}} \, d^3y\, \Big |_{r_y<\Lambda} =\\
&4\pi \left[\sum_{n=0}^{\infty}P_n(\hat{x}_{13}\cdot \hat{x}_{23})\frac{1}{2n+1}\cdot\frac{1}{2n+2}\left(\frac{r_{13}}{r_{23}}\right)^{n+1}-0\right]\\
+&4\pi \left[\sum_{n=0}^{\infty}P_n(\hat{x}_{13}\cdot \hat{x}_{23})\frac{1}{2n+1}\left(\frac{r_{13}}{r_{23}}\right)^{n}-\sum_{n=0}^{\infty}P_n(\hat{x}_{13}\cdot \hat{x}_{23})\frac{1}{2n+1}\left(\frac{r_{13}}{r_{23}}\right)^{n+1}\right]\\
+&4\pi\left[\log\Lambda-\log r_{23}\right]\\
+&4\pi \left[-\sum_{n=1}^{\infty}P_n(\hat{x}_{13}\cdot \hat{x}_{23})\frac{1}{2n+1}\cdot\frac{1}{2n}\left(\frac{r_{13}r_{23}}{\Lambda^2}\right)^{n}+\sum_{n=1}^{\infty}P_n(\hat{x}_{13}\cdot \hat{x}_{23})\frac{1}{2n+1}\cdot\frac{1}{2n}\left(\frac{r_{13}}{r_{23}}\right)^{n}\right]
\end{split}
\label{model:comparison old method radial int}
\end{equation}
Now consider the other integral in eq.\eqref{model:comparison old method1}. In addition to the infrared-divergence, which we will again control using a cutoff, this integral is also ultraviolet-divergent. This divergence may be cured using \emph{differential renormalization} \cite{SmirnovZav'yalov1993, FreedmanJohnsonLatorre1992, LatorreManuelVilasis-Cardona1994}, which works as follows: We may replace the integrand using the identity

\begin{equation}
 \frac{1}{r_y^3}=-\Delta \frac{\log (\mu r_y)}{r_y}\quad ,
\label{model:comparison old method diff renorm}
\end{equation}
with some renormalization parameter $\mu\in\mathbb{C}$, which holds for $r\neq 0$. Thus, we obtain for the integral under consideration

\begin{equation}
\begin{split}
 \int\frac{1}{\sqrt{y^2}^3}\, dy^3\, \Big |_{r_y<\Lambda} &= -\int\Delta\left(\frac{\log (\mu r_y)}{r_y}\right)\, dy^3\, \Big |_{r_y<\Lambda}\\
 &= -\int d\Omega\, r_y^2\del_{r_y} \left(\frac{\log (\mu r_y)}{r_y}\right)\, \Big |_{r_y=\Lambda}=4\pi \left(\log(\mu\Lambda)-1\right)\, ,
\end{split}
\label{model:comparison old method diff renorm result}
\end{equation}
where Gauss'-theorem was applied in the second step. Subtraction of this result from eq.\eqref{model:comparison old method radial int} shows that the logarithmic divergences cancel out. Hence we may safely remove the cutoff, i.e. take the limit $\Lambda\to\infty$, and arrive at the result

\begin{equation}
\begin{split}
 \Co{1}{\varphi^3}{\varphi,\varphi,\varphi}(x_1,x_2,x_3)&=20 \left(\sum_{n=0}^{\infty}P_n(c)s^{n+1}\left(\frac{1}{2(n+1)}\right)-\sum_{n=1}^{\infty}P_n(c)s^n\cdot \frac{1}{2n}+\log r_{23}+\log\mu-1\right)\\
 &=10 \left(\sum_{n=0}^{\infty}\frac{1}{n+1}s^{n+1}\left(P_{n}(c)-P_{n+1}(c)\right)+\log (\mu^2r_{23}^2)-2\right)
\end{split}
\label{model:comparison old method OPE result}
\end{equation}
where the abbreviations

\begin{equation}
 s:=\frac{r_{13}}{r_{23}}\qquad,\qquad c:=\hat{x}_{13}\cdot \hat{x}_{23}
\label{model:comparison old method abbreviations}
\end{equation}
were used. Now let us compute the same coefficient in our framework. First, the coherence theorem, thm. \ref{thm:coherence theorem}, states that the desired three-point coefficient can be uniquely determined just from the knowledge of the two-point coefficients. This can be seen by application of the factorization axiom (assuming $r_{13}\leq r_{23}$, i.e. $s\leq 1$)

\begin{equation}
 \Co{1}{\varphi^3}{\varphi,\varphi,\varphi}(x_1,x_2,x_3)=\sum_c \Co{1}{v_c}{\varphi\varphi}(x_1,x_3)\Co{0}{\varphi^3}{\varphi\, v_c}(x_2,x_3)+\sum_c \Co{0}{v_c}{\varphi\varphi}(x_1,x_3)\Co{1}{\varphi^3}{\varphi\, v_c}(x_2,x_3)
\label{model:comparison new factorization}
\end{equation}
where the sums go over all basis elements $\ket{v_c}\in V$. The coefficients on the right side have been determined in section \ref{subsubsec:L(1)phi}, see results \ref{res:model NLO Lphi C0phik} and \ref{res:model NLO Lphi C1phi}. With the help of these results we can reduce the sums to the form

\begin{equation}
\begin{split}
 \Co{1}{\varphi^3}{\varphi,\varphi,\varphi}(x_1,x_2,x_3)=&\sum_{n=0}^{\infty} \Co{1}{(\del^n\varphi)\varphi^3}{\varphi\varphi}(x_1,x_3)\Co{0}{\varphi^3}{\varphi,(\del^n\varphi)\varphi^3}(x_2,x_3)\\+&\sum_{n=0}^{\infty} \Co{0}{(\del^n\varphi)\varphi}{\varphi\varphi}(x_1,x_3)\Co{1}{\varphi^3}{\varphi,(\del^n\varphi)\varphi}(x_2,x_3)
\end{split}
\label{model:comparison new factorization reduced}
\end{equation}
since the coefficients vanish for all other (linearly independent) choices of $v_c$. Explicitly, we find for the coefficients on the right side (using the mentioned results from section \ref{subsubsec:L(1)phi} or appendix \ref{app:OPE table})

\begin{eqnarray}
 \Co{0}{\varphi^3}{\varphi,(\del^n\varphi)\varphi^3}(x_1,x_2)=&r_{12}^{-n-1}\sum_m S_{nm}(\hat{x}_{12})\cdot (1+3\delta_{n,0})\\
 \Co{0}{(\del^n\varphi)\varphi}{\varphi\varphi}(x_1,x_2)=&r_{12}^{n} \sum_m S^{nm}(\hat{x}_{12})\\
 \Co{1}{(\del^n\varphi)\varphi^3}{\varphi\varphi}(x_1,x_2)=&\frac{r_{12}^{n+1}}{n+1}\sum_m S^{nm}(\hat{x}_{12})\cdot (10-\frac{15}{2}\delta_{n,0})\\
 \Co{1}{\varphi^3}{\varphi,(\del^n\varphi)\varphi}(x_1,x_1)=& \begin{cases}
\frac{-10}{n}r_{12}^{-n}\sum_{m}S_{nm}(\hat{x}_{12}) \quad &\text{for }n>0\\20\log r_{12} &\text{for }n=0
                                                              \end{cases}
\label{model:comparison new 2pt coeffs}
\end{eqnarray}
Thus we have by substitution into eq.\eqref{model:comparison new factorization reduced}

\begin{equation}
 \begin{split}
  \Co{1}{\varphi^3}{\varphi,\varphi,\varphi}(x_1,x_2,x_3)&=\sum_{n=0}^{\infty}\sum_{m=-n}^n \frac{10}{n+1} \left(\frac{r_{13}}{r_{23}}\right)^{n+1} S^{nm}(\hat{x}_{13})S_{nm}(\hat{x}_{23})\\ &-\sum_{n=1}^{\infty}\sum_{m=-n}^n \frac{10}{n} \left(\frac{r_{13}}{r_{23}}\right)^n S^{nm}(\hat{x}_{13})S_{nm}(\hat{x}_{23})+20\log r_{23}
\end{split}
\end{equation}
Finally, using the addition theorem of the spherical harmonics and the abbreviations $s$ and $c$ as introduced above, we arrive at the formula

\begin{equation}
 \Co{1}{\varphi^3}{\varphi,\varphi,\varphi}(x_1,x_2,x_3)=10\sum_{n=0}^{\infty}\frac{s^{n+1}}{n+1}\Big(P_n(c)-P_{n+1}(c)\Big)+10\log r_{23}^2
\end{equation}
Comparing this result to eq.\eqref{model:comparison old method OPE result} we find that the difference can be absorbed in the arbitrary choice of renormalization parameter\footnote{Recall from sec.\ref{subsec:ambiguities} that we may also introduce an analog to this parameter in our approach by a redefinition of $\Delta^{-1}$.} $\mu$. Therefore, the results obtained from the two methods are indeed equivalent.

\chapter{Conclusions and outlook}\label{sec:conclusions}

This thesis contains the first concrete results on the construction of an interacting, perturbative quantum field theory in the framework of \cite{Hollands2008} (see also chaper \ref{sec:OFT in terms of consistency}). Adopting a Fock-space representation and diagrammatic notation from Hollands and Olbermann \cite{Hollandsa} for the free theory, we have explicitly constructed all OPE coefficients of the form $\Co{1}{b}{\varphi\, a}$ and $\Co{1}{b}{\varphi^2\, a}$, as well as a large class of coefficents $\Co{1}{b}{\varphi^k\, a}$, $k>2$ and $\Co{2}{b}{\varphi\, a}$ in a model theory with $\varphi^6$-interaction on 3-dimensional Euclidean space. 

These results were obtained with the help of an iterative algorithm first proposed in \cite{Hollands2008} (see also \ref{subsec:perturbation via field equation}
), which contains an inherent analog of renormalization via subtraction of counterterms. It was found by Hollands and Olbermann that this procedure could be neatly replaced in the free theory by a normal ordering prescription on the ladder operators in the mentioned Fock-space. As one might expect, however, the process of renormalization in interacting theories is considerably more complicated, in analog to usual formulations of perturbative quantum field theory. The constructions of section \ref{sec:the model} constitute the first explicit example of this non-trivial process. 

We have found that one can again incorporate renormalization by bringing all ladder operators into normal order. However, in the interacting case there is a finite difference between this procedure and the subtraction of counterterms arising from the general algorithm (see e.g. eq.\eqref{eq:model perturbations limit k-1}). In order to compensate for this difference, we have to add additional \emph{remainder operators}, which in a sense contain all the non-trivial information on the remormalization procedure, and are the main computational obstacle to proceed the iterative scheme. In order to find explicit expressions for these remainder terms, one has to perform divergent multiple series, subtract divergent counterterms and extract the finite difference. In result \ref{res:model NLO Lphi2} we have defined the first operator of this kind, and in result \ref{res:model NLO Rphi3} the second one is partially given. The computational machinery applied to obtain these results consists of identities of special functions and of hypergeometric series. Most notably, the results of \cite{Din:1981ey} and \cite{szmytkowski2006} have been of great help in the analysis of the mentioned infinite sums. At the heart of these identities lies \emph{Dougall's formula}, see eq. \eqref{eq:app character sum dougall}, which may be generalized to arbitrary dimension and should thus be of importance in more general models \cite{Hollandsa}. These findings constitute the first insights into the structure of the infinite sums appearing in the framework.

Apart from these specific results on the OPE coefficients at first and second perturbation order, we have observed some general structures in the construction, which also apply to higher orders. These results are presented in section \ref{subsec:higher order}. We were able to give a result for a particularly simple class of OPE coefficients to arbitrary orders by extrapolating the knowledge gathered in first order computations. In addition, a general statement about the powers of logarithms appearing in arbitrary OPE coefficients was made. Finally, in section \ref{subsec:comparison to alternative} we have shown in an example that our approach does indeed yield the same results as standard ones.\\

Future research will be aimed at a better conceptual understanding of the underlying mathematics of the framework, e.g. the interpretation of the \emph{left representatives} as vertex operators \cite{Hollandsa}, but also at a better understanding of the explicit computational obstacles. In particular, one would like to find ways to treat the infinite sums inherent in the framework in a general way. The results of this thesis may provide a first step into this direction, and future work could extend the results to higher orders and also to more general models in terms of spacetime dimension and type of coupling. Furthermore, a generalization of the framework to arbitrary (globally hyperbolic) background manifolds would be of interest, since the OPE is expected to play a fundamental role in the formulation of quantum field theory on curved spacetimes \cite{HollandsWald}. It may also be fruitful to study the relation to standard renormalization theory more deeply, and possibly to incorporate renormalization group techniques.

\noappendicestocpagenum
\begin{appendices}
\updatechaptername
\numberwithin{equation}{chapter}

\chapter{Table of OPE coefficients}\label{app:OPE table}

In the following table explicit results for OPE coefficients of the form $\Co{n}{c}{ab}$ are summarized.

\begin{longtable}{|c|c|c|c||c|}
\hline
 $n$ & $a$ & $b$ & $c$ & $\Co{n}{c}{ab}(x)$\\ \hline \hline \endhead
$0$ & $\varphi^p$ & $\varphi^q$ & $\varphi^{p+q}$ & $1$\\ \hline
$0$ & $\varphi^{p+q}$ & $\varphi^q$ & $\varphi^p$ & $\frac{(p+q)!}{p!}r^{-q}$\\ \hline
$0$ & $\varphi^q$  & $\varphi^{p+q}$ & $\varphi^p$ & $\frac{(p+q)!}{p!}r^{-q}$\\ \hline
$0$ & $\varphi$  & $\mathds{1}$ & $\partial^l\varphi$ & $r^l\sum_m S_{lm}(\hat{x})$\\ \hline
$0$ & $\varphi$  & $\partial^l\varphi$ & $\mathds{1}$ & $r^{-l-1}\sum_m S^{lm}(\hat{x})$\\ \hline
$0$ & $\varphi$  & $\ket{v_a}$ & $\ket{v_a+e_{lm}}$ & $(a_{lm}+1)r^l\sum_m S_{lm}(\hat{x})$\\ \hline
$0$ & $\varphi$  & $\ket{v_a}$ & $\ket{v_a-e_{lm}}$ & $(a_{lm})r^{-l-1}\sum_m S^{lm}(\hat{x})$\\
\hline
$0$ & $\varphi^k$  & $\ket{v_a}$ & $\ket{v_a+\sum\limits_{i=1}^ke_{\mathfrak{l}_i^q}}$ & $s[\mathfrak{A}=\Lbag\mathfrak{l}^q_1,\ldots,\mathfrak{l}^q_k\Rbag]f_a[\mathfrak{A}] r^{\left(\sum\limits_{i=q+1}^kl_i-\sum\limits_{j=1}^ql_j-q\right)} S_{JM}(\hat{x})T[\mathfrak{A}]_{JM}$\\ \hline
$1$ & $\varphi$ & $\varphi^p$ & $\varphi^{p+5}$ & $\frac{r^2}{6}$\\ \hline
$1$ & $\varphi$ & $\varphi^p$ & $\varphi^{p+3}$ & $5p\frac{r}{2}$\\ \hline
$1$ & $\varphi$ & $\varphi^p$ & $\varphi^{p+1}$ & $10p(p-1)\log r$\\ \hline
$1$ & $\varphi$ & $\varphi^{p+1}$ & $\varphi^p$ & $-10\frac{(p+1)!}{(p-2)!}\frac{\log r}{r}$\\ \hline
$1$ & $\varphi$ & $\varphi^{p+3}$ & $\varphi^p$ & $5\frac{(p+3)!}{(p-1)!}\frac{1}{2r^2}$\\ \hline
$1$ & $\varphi$ & $\varphi^{p+5}$ & $\varphi^p$ & $\frac{(p+5)!}{p!} \frac{1}{6r^3}$\\ \hline
$1$ & $\varphi$ & $\varphi^{p}$ & $\varphi^{p+4}\del^l\varphi$ & $\frac{5 r^{l+2}}{4l+6}\sum_m S^{lm}(\hat{x})$\\ \hline
$1$ & $\varphi$ & $\varphi^{p}$ & $\varphi^{p+2}\del^l\varphi$ & $p\frac{10 r^{l+1}}{l+1}\sum_m S^{lm}(\hat{x})$\\ \hline
$1$ & $\varphi$ & $\varphi^{p}$ & $\varphi^{p}\del^l\varphi$ & $p(p-1)\frac{30 r^{l}\log r}{2l+1}\sum_m S^{lm}(\hat{x})$\\ \hline
$1$ & $\varphi$ & $\varphi^{p+2}$ & $\varphi^{p}\del^l\varphi$ & $-\frac{(p+2)!}{(p-1)!}\frac{10 r^{l-1}}{l}\sum_m S_{lm}(\hat{x})$\\ \hline
$1$ & $\varphi$ & $\varphi^{p+4}$ & $\varphi^{p}\del^l\varphi$ & $-\frac{(p+4)!}{p!}\frac{5 r^{l-2}}{4l-2}\sum_m S_{lm}(\hat{x})$\\ \hline
$1$ & $\varphi^2$ & $\varphi^{p}$ & $\varphi^{p+6}$ & $\frac{1}{3}\cdot r^2$\\ \hline
$1$ & $\varphi^2$ & $\varphi^{p}$ & $\varphi^{p+4}$ & $\frac{16}{3}p\cdot r$ \\ \hline
$1$ & $\varphi^2$ & $\varphi^{p}$ & $\varphi^{p+2}$ & $20p^2\log r +5 p(p-1)$ \\ \hline
$1$ & $\varphi^2$ & $\varphi^{p+2}$ & $\varphi^{p}$ & $(p+2)(p+1)\left[5p-20(p+1)\log r\right]r^{-2}$ \\ \hline
$1$ & $\varphi^2$ & $\varphi^{p+4}$ & $\varphi^{p}$ & $\frac{16}{3}p\frac{(p+4)!}{p!}\cdot r^{-3}$ \\ \hline
$1$ & $\varphi^2$ & $\varphi^{p+6}$ & $\varphi^{p}$ & $\frac{1}{3}\frac{(p+6)!}{p!}\cdot r^{-4}$\\ \hline
$1$ & $\varphi^2$ & $\varphi^{p}$ & $\varphi^{p+5}\del^l\varphi$ & $r^{l+2}S^{lm}(\hat{x})\left(\frac{1}{3}+\frac{5}{2l+3}\right)$\\ \hline
$1$ & $\varphi^2$ & $\varphi^{p}\del^l\varphi$ & $\varphi^{p+5}$ & $r^{1-l}S_{lm}(\hat{x})\left(\frac{1}{3}-\frac{5}{2l-1}\right)$\\ \hline
$1$ & $\varphi^2$ & $\varphi^{p}$ & $\varphi^{p+3}\del^l\varphi$ & $p \cdot r^{l+1}S^{lm}(\hat{x})\left(5+\frac{20}{l+1}+\frac{5}{2l+3}\right)$\\ \hline
\multirow{2}{*}{$1$} & \multirow{2}{*}{$\varphi^2$} & \multirow{2}{*}{$\varphi^{p}$} & \multirow{2}{*}{$\varphi^{p+1}\del^l\varphi$} & $20p(p-1)r^{l} S^{lm}(\hat{x})\left(\frac{3\log r}{2l+1}+\log r+\frac{1}{l+1}\right)$\\
& & & & $+60pr^lS^{lm}(\hat{x})\sum\limits_{l'=0}^l\sum\limits_{J=l-l'}^{l+l'}\braket{ll'\, 00}{J\, 0}^2D[l-l'-2,J,r]$ \\ \hline
\multirow{2}{*}{$1$} & \multirow{2}{*}{$\varphi^2$} & \multirow{2}{*}{$\varphi^{p+1}\del^l\varphi$} & \multirow{2}{*}{$\varphi^{p}$} & $-20(p+1)p(p-1)r^{-l-2} S_{lm}(\hat{x})\left(\frac{3\log r}{2l+1}+\log r-\frac{1}{l+1}\right)$\\
& & & & $+60(p+1)pr^{-l-2}S_{lm}(\hat{x})\sum\limits_{l'=0}^{l-2}\sum\limits_{J=l-l'}^{l+l'}\braket{ll'\, 00}{J\, 0}^2D[l-l'-4,J,r]$ \\ \hline
$1$ & $\varphi^2$ & $\varphi^{p+3}\del^l\varphi$ & $\varphi^{p}$ & $p\frac{(p+3)!}{p!} \cdot r^{-l-3}S_{lm}(\hat{x})\left(5+\frac{20}{l+1}+\frac{5}{2l+3}\right)$\\ \hline
$1$ & $\varphi^2$ & $\varphi^{p+5}\del^l\varphi$ & $\varphi^{p}$ & $\frac{(p+5)!}{p!} r^{-l-4}S_{lm}(\hat{x})\left(\frac{1}{3}+\frac{5}{2l+3}\right)$\\ \hline
 $1$ & $\varphi^3$ & $\varphi^p$ & $\varphi^{p+7}$ & $\frac{r^2}{2}$\\ \hline
$1$ & $\varphi^3$ & $\varphi^p$ & $\varphi^{p+5}$ & $\frac{17}{2}p\cdot r$\\ \hline
$1$ & $\varphi^3$ & $\varphi^p$ & $\varphi^{p+3}$ & $p(p-1)\frac{31}{2}+p(p+1) 30\log r$ \\ \hline
$1$ & $\varphi^3$ & $\varphi^{p+3}$ & $\varphi^p$ &  $\left(p(p-1)\frac{31}{2}-p(p+1) 30\log r-40\log r\right) \frac{(p+3)!}{p!}r^{-3}$\\ \hline
$1$ & $\varphi^3$ & $\varphi^{p+5}$ & $\varphi^p$ & $\frac{17}{2}p\frac{(p+5)!}{p!}\cdot r^{-4}$\\ \hline
$1$ & $\varphi^3$ & $\varphi^{p+7}$ & $\varphi^p$ & $\frac{(p+7)!}{p!} \frac{r^{-5}}{2}$\\ \hline
$1$ & $\varphi^4$ & $\varphi^p$ & $\varphi^{p+8}$ & $\frac{2r^2}{3}$\\ \hline
$1$ & $\varphi^4$ & $\varphi^p$ & $\varphi^{p+6}$ & $12p\cdot r$\\ \hline
$1$ & $\varphi^4$ & $\varphi^p$ & $\varphi^{p+4}$ & $32p(p-1)+40p(p+2)\log r$\\ \hline
$1$ & $\varphi^4$ & $\varphi^{p+4}$ & $\varphi^p$ & $\Big(32p(p-1)-40p(p+2)\log r-160\log r\Big)\frac{(p+4)!}{p!}r^{-4}$\\ \hline
$1$ & $\varphi^4$ & $\varphi^{p+6}$ & $\varphi^p$ & $12p\frac{(p+6)!}{p!}\cdot r^{-5}$\\ \hline
$1$ & $\varphi^4$ & $\varphi^{p+8}$ & $\varphi^p$ & $\frac{(p+8)!}{p!}\frac{2r^{-6}}{3}$\\ \hline
$1$ & $\varphi^5$ & $\varphi^p$ & $\varphi^{p+9}$ & $\frac{5r^2}{6}$\\ \hline
$1$ & $\varphi^5$ & $\varphi^p$ & $\varphi^{p+7}$ & $\frac{95}{6}p\cdot r$\\ \hline
$1$ & $\varphi^5$ & $\varphi^p$ & $\varphi^{p+5}$ & $55p(p-1)+50p(p+3)\log r$\\ \hline
$1$ & $\varphi^5$ & $\varphi^{p+5}$& $\varphi^p$  & $\Big(55p(p-1)-50p(p+3)\log r-640\log r\Big)\frac{(p+5)!}{p!}r^{-5}$\\ \hline
$1$ & $\varphi^5$ & $\varphi^{p+7}$ & $\varphi^p$ & $\frac{95}{6}p\frac{(p+7)!}{p!}\cdot r^{-6}$\\ \hline
$1$ & $\varphi^5$ & $\varphi^{p+9}$ & $\varphi^p$ & $\frac{(p+9)!}{p!}\frac{5r^{-7}}{6}$\\ \hline
$1$ & $\varphi^k$ & $\varphi^p$ & $\varphi^{k+p+4}$ & $\frac{k}{6}\, r^{2}$\\ \hline
$1$ & $\varphi^k$ & $\varphi^p$ & $\varphi^{k+p+2}$ & $\left(\frac{5}{2}+\frac{k-1}{6}\right)\, kr$\\ \hline
$2$ & $\varphi$ & $\varphi^p$ & $\varphi^{p+9}$ & $\frac{r^4}{24}$\\ \hline
$2$ & $\varphi$ & $\varphi^p$ & $\varphi^{p+7}$ & $\frac{95}{72}p\cdot r^3$\\ \hline
$2$ & $\varphi$ & $\varphi^p$ & $\varphi^{p+5}$ & $\left(\frac{55}{6}p(p-1)+10p(p+3)\log r\right)r^2$\\ \hline
$2$ & $\varphi$ & $\varphi^{p+5}$ & $\varphi^p$  & $\Big(\frac{55}{6}p(p-1)+10p(p+3)\log r+128\log r\Big)\frac{(p+5)!}{p!}r^{-3}$\\ \hline
$2$ & $\varphi$ & $\varphi^{p+7}$ & $\varphi^p$ & $\frac{95}{72}p\frac{(p+7)!}{p!}\cdot r^{-4}$\\ \hline
$2$ & $\varphi$ & $\varphi^{p+9}$ & $\varphi^p$ & $\frac{(p+9)!}{p!}\frac{r^{-5}}{24}$\\ \hline
\end{longtable}

\chapter{Multisets}\label{app:multisets}

A \emph{multiset} is a generalization of the notion \emph{set}, where finite occurrences of any element are allowed. Equivalently, one may view a multiset as an \emph{unordered tuple}. The concept was first used by Dedekind in 1888 and has since found applications in various fields of applied mathematics \cite{Calude2001, Blizard1989}. A formal definition is given by

\begin{definition}[Multiset]\label{app:multisets definition}{\ \\}\index{symbols}{1multisets@$\mathfrak{A},\mathfrak{B},\mathfrak{C},\ldots$}
 Let $D$ be a set. A multiset over $D$ is a pair $(D,f)$, where $f:\, D\to \mathbb{N}$ is a function.
\end{definition}
Remark: Any set $A$ is a multiset $(A,\chi_A)$, where $\chi_A$ is the characteristic function. Throughout this thesis, we denote multisets by capital fraktur letters $\mathfrak{A,B,C,\ldots}$.\\

One may also define a multiset by giving the list of its elements. In order to avoid confusion, we will use $\Lbag \cdot\Rbag$ as brackets. Then for example the multiset $\mathfrak{A}=(D=\{a,b,c\},f)$ with

\begin{equation}
 f(x\in D)=\begin{cases}
            1 \quad & \text{for }x=b \text{ or }x=c\\
2 & \text{for }x=a
           \end{cases}
\end{equation}
may equivalently be written as

\begin{equation}
 \mathfrak{A}=\Lbag a,b,c,a\Rbag=\Lbag a,a,b,c\Rbag=\Lbag a,b,a,c\Rbag=\ldots
\label{eq:app multiset example}
\end{equation}\index{symbols}{0multisets@$\Lbag\cdot\Rbag$}
Many properties of sets can be naturally generalized to multisets. Here we only need the notion of cardinality.

\begin{definition}[Cardinality of a multiset]\label{app:cardinality definition}{\ \\}
 Let $\mathfrak{A}=(A,f)$ be a multiset; its cardinality, denoted by $\operatorname{card}(\mathfrak{A})$, is defined as

\begin{equation}
 \operatorname{card}(\mathfrak{A})=\sum_{a\in A} f(a)
\label{eq:app multiset cardinality}
\end{equation}

\end{definition}
As an example, the cardinality of the multiset given in eq.\eqref{eq:app multiset example} is 4. Further, one can define the following operations between multisets:

\begin{definition}[Sum of multisets]\label{app:sum definition}{\ \\}
 Suppose that $\mathfrak{A}=(A,f)$ and $\mathfrak{B}=(A,g)$ are multisets. Their sum, written as $\mathfrak{A}\uplus\mathfrak{B}$, is the multiset $\mathfrak{C}=(A,h)$, where for all $a\in A$:
\begin{equation}
 h(a)=f(a)+g(a)
\label{eq:app multiset sum}
\end{equation}

\end{definition}
One can show that this sum operation has the following properties:

\begin{enumerate}
 \item Commutativity: $\mathfrak{A}\uplus\mathfrak{B}=\mathfrak{B}\uplus\mathfrak{A}$
 \item Associativity: $\left(\mathfrak{A}\uplus\mathfrak{B}\right)\uplus\mathfrak{C}=\mathfrak{A}\uplus\left(\mathfrak{B}\uplus\mathfrak{C}\right)$
 \item There exists a multiset, called \emph{null multiset} $\emptyset$, such that $\mathfrak{A}\uplus\emptyset=\mathfrak{A}$
\end{enumerate}
However, there is no inverse multiset, so this structure is not an Abelian group.

\begin{definition}[Union of multisets]\label{app:union definition}{\ \\}
 Let $\mathfrak{A}=(A,f)$ and $\mathfrak{B}=(A,g)$ be multisets. Their union, denoted $\mathfrak{A}\cup\mathfrak{B}$, is the multiset $\mathfrak{C}=(A,h)$, where for all $a\in A$:

\begin{equation}
 h(a)=\max\Big(f(a),g(a)\Big)
\label{eq:app multiset union}
\end{equation}

\end{definition}

\begin{definition}[Intersection of multisets]\label{app:intersection definition}{\ \\}
 Let $\mathfrak{A}=(A,f)$ and $\mathfrak{B}=(A,g)$ be multisets. Their intersection, denoted $\mathfrak{A}\cap\mathfrak{B}$, is the multiset $\mathfrak{C}=(A,h)$, where for all $a\in A$:

\begin{equation}
 h(a)=\min\Big(f(a),g(a)\Big)
\label{eq:app multiset intersection}
\end{equation}

\end{definition}
As in the case of ordinary sets, the notions of union and intersection are commutative, associative, idempotent and distributive. One can illustrate these operations on multisets with the following example. Let $\mathfrak{A}$ be the multiset given in \eqref{eq:app multiset example} and further let

\begin{equation}
 \mathfrak{B}=\Lbag a,c,d \Rbag\quad .
\end{equation}
Then the definitions above imply

\begin{align}
 \mathfrak{A}\uplus\mathfrak{B}&=\Lbag a,a,a,b,c,c,d \Rbag\\
 \mathfrak{A}\cup\mathfrak{B}&=\Lbag a,a,b,c,d \Rbag\\
 \mathfrak{A}\cap\mathfrak{B}&=\Lbag a,c \Rbag
\end{align}

\chapter{Hypergeometric series}\label{app:hypergeoemtric functions}

Due to their appearance in calculations in all fields of physics, hypergeometric series have received increasing attention over the last decades. In this chapter, after a brief introduction into the notation, definitions and basic results on hypergeometric series, we state the identities used in the computations of this thesis. For proofs and additional results on the topic we refer the reader to the literature \cite{Erdelyi1953, Buehring2001}.

The series

\begin{equation}
 _2F_1(a,b;c;z):=\sum_{n=0}^{\infty}\frac{(a)_n(b)_n}{(c)_n}\, \frac{z^n}{n!}\qquad ; a,b,c,z\in\mathbb{C}
\label{eq:app hypergeos gaussian definition}
\end{equation}
is called \emph{hypergeometric series} or also \emph{Gaussian hypergeometric series}, as it was introduced into analysis by Gauss in 1812. Here the \emph{Pochhammer symbol}

\begin{equation}
 (\lambda)_n:=\frac{\Gamma(\lambda+n)}{\Gamma(\lambda)}\quad ,
\label{eq:app hypergeos pochhammer symbol}
\end{equation}
where $\Gamma(z)$ is the Gamma function

\begin{equation}
 \Gamma(z)=\left\{\begin{array}{ll}
 \int_0^{\infty} t^{z-1}e^{-t}\, \text{d}t,\quad & \operatorname{Re}(z)>0\\
 \Gamma(z+1)/z, &\operatorname{Re}(z)<0;\, z\neq -1,-2,-3,\ldots
\end{array}
\right.\quad ,
\label{eq:app hypergeos gamma fct}
\end{equation}\index{symbols}{Gamma@$\Gamma(x)$}
was employed for convenience. We call $a,b$ and $c$ the parameters of the hypergeometric series and $z$ its argument. A natural generalization of eq.\eqref{eq:app hypergeos gaussian definition} is

\begin{equation}
 _pF_q(\alpha_1,\ldots,\alpha_p;\beta_1,\ldots,\beta_q;z)=\pFq{p}{q}{\alpha_1,\ldots,\alpha_p}{\beta_1,\ldots,\beta_q}{z}:= \sum_{n=0}^{\infty}\frac{(\alpha_1)_n\ldots(\alpha_p)_n}{(\beta_1)_n\ldots(\beta_q)_n}\, \frac{z^n}{n!}\qquad  ,
\label{eq:app hypergeos gernealized definition}
\end{equation}\index{symbols}{Fpq@$\pFq{p}{q}{\alpha_1,\ldots,\alpha_p}{\beta_1,\ldots,\beta_q}{z}$}
with $\alpha_i,\beta_j\in\mathbb{C}\forall i\in\{1,\ldots p\},j\in\{1,\ldots q\}$ and $p,q\in\mathbb{N}$, which is known as the \emph{generalized hypergeometric series}.

\section*{Convergence}

The generalized hypergeometric function, eq.\eqref{eq:app hypergeos gernealized definition}, converges for $|z|<\infty$ if $p\leq q$ and for $|z|<1$ if $p=q+1$. It diverges for all $z\neq 0$ if $p>q+1$. Furthermore, if we set

\begin{equation}
 \omega=\sum_{j=1}^q\beta_j-\sum_{j=1}^p\alpha_j\quad ,
\label{eq:app hypergeos omega balanced}
\end{equation}
then the series $_pF_{p+1}$ with $|z|=1$ is absolutely convergent if $\operatorname{Re}(\omega)>0$ and divergent if $\operatorname{Re}(\omega)\leq -1$. If $\omega\in\mathbb{Z}$, then it is customary to refer to the corresponding class of hypergeometric series as \emph{$\omega$-balanced hypergeometric series}. A one-balanced series is also called \emph{Saalschützian} series.

It is often necessary in the calculations of chapter \ref{subsec:low orders} to analyze the divergent behavior of zero-balanced hypergeometric series in the limit $z\to 1$. Therefore, the formula

\begin{equation}
 \frac{\Gamma(\alpha_1)\cdots\Gamma(\alpha_{p+1})}{\Gamma(\beta_1)\cdots\Gamma(\beta_p)}\, \pFq{p+1}{p}{\alpha_1,\ldots,\alpha_{p+1}}{\beta_1,\ldots,\beta_p}{z}=L(p)[1+O(1-z)]-\log(1-z)[1+O(1-z)]
\label{eq:app hypergeos zero balanced limit}
\end{equation}
will be of great value. Here we defined (see \cite{Buehring2001})

\begin{equation}
 L(p):=2\psi(1)-\psi(\alpha_1)-\psi(\alpha_2)+B(p)
\label{eq:app hypergeos L(p)}
\end{equation}\index{symbols}{Lp@$L(p)$}
and

\begin{equation}
\begin{split}
 &B(p):=\frac{\beta_1-\alpha_3}{\alpha_1\alpha_2}\left(\sum_{j=2}^p\beta_j-\sum_{j=3}^{p+1}a_j\right)\, \pFq{4}{3}{\beta_1-\alpha_3+1,\sum\limits_{j=2}^p\beta_j-\sum\limits_{j=3}^{p+1}a_j+1,1,1}{\alpha_1+1,\alpha_2+1,2}{1}\\
+&\sum_{k=3}^p\frac{(\beta_{k-1}-\alpha_{k+1})\left(\sum\limits_{j=k}^p\beta_j-\sum\limits_{j=k+1}^{p+1}a_j\right)\Gamma(\alpha_1)\cdots\Gamma(\alpha_k)}{\Gamma(\beta_1)\cdots\Gamma(\beta_{k-2})\Gamma\left(\sum\limits_{j=k-1}^p\beta_j-\sum\limits_{j=k+1}^{p+1}a_j+1\right)}\times\\
 &\sum_{l=0}^{\infty}\frac{(\beta_{k-1}-\alpha_{k+1}+1)_l\left(\sum\limits_{j=k}^p\beta_j-\sum\limits_{j=k+1}^{p+1}a_j+1\right)_l}{(l+1)!\left(\sum\limits_{j=k-1}^p\beta_j-\sum\limits_{j=k+1}^{p+1}a_j+1\right)_l}\, \pFq{k}{k-1}{\alpha_1,\ldots,\alpha_k}{\beta_1,\ldots,\beta_{k-2},\sum\limits_{j=k-1}^p\beta_j-\sum\limits_{j=k+1}^{p+1}\alpha_j+1+l}{1}
\end{split}
\label{eq:app hypergeos B(p)}
\end{equation}

\numberwithin{equation}{section}

\chapter{Spherical symmetries in Euclidean spaces}\label{app:spherical symmetries}

As probably the most intuitive kind of symmetry that is frequently present in various problems in physics, rotational symmetries have naturally been studied extensively, as documented in the vast amount of literature on the topic. By axiom \ref{ax:Euclidean invariance}, symmetries of this kind will also appear in our framework, and should thus be expected to play a prominent role in simplifications of explicit calculations. In the following two subsections, we want to recall some basic results from the analysis of such symmetries, first for the general case of arbitrary dimensional space and then for the special 3-dimensional case of relevance for the calculations of this thesis.

The first subsection will mainly be concerned with some special functions related to spherical symmetries, most notably the \emph{spherical harmonics} in $D$ dimensions \cite{Muller1998,Muller1966}. After stating the definitions and basic properties of these functions, we will derive some results needed in section \ref{subsec:free field}, in particular concerning the relationship between a basis of spherical harmonics and traceless, totally symmetric tensors. Then additional properties of the special $D=3$ dimensional case are recalled \cite{Brink1962, Varshalovich1988, Devanathan1999}. Emphasis will be shifted to group theoretical results, which were first introduced into physics in the context of quantum mechanics.

\section{Euclidean spaces of arbitrary dimension D}

In a setting where spherical symmetries are present, it is often desirable to express formulas in terms of functions, which are \emph{invariant} under these symmetries. Let $C(S^{D-1})$ be the (pre-) Hilbert space of continuous functions $S^{D-1}\to\mathbb{C}$, where $S^{D-1}$ is the $(D-1)$-dimensional unit sphere, with scalar product

\begin{equation}
 \langle f, g \rangle_{(D)}:=\int_{S^{D-1}}\, f\bar{g}\, \text{d}S^{D-1}\quad .
\end{equation}
Then the following three definitions describe convenient properties we would like these functions to employ.

\begin{definition}[Invariant spaces]
 {\ \\} Let $O(D)$ be the orthogonal group of degree $D$, i.e. the group of all real $D\times D$ matrices $A$ with $A^T=A^{-1}$. A linear space $\mathcal{I}\in C(S^{D-1})$ is called \textbf{invariant} or \textbf{stable} if for all $f\in\mathcal{I}$ and all $A\in O(D)$ we have $f(Ax)\in\mathcal{I}$.
\end{definition}

\begin{definition}[Irreducibility]
{\ \\} An invariant space $\mathcal{I}$ is called \textbf{reducible} if it can be split into two nontrivial invariant subspaces $\mathcal{I}_1$, $\mathcal{I}_2$ with $\mathcal{I}_1 \perp\mathcal{I}_2$. Otherwise, the space is called \textbf{irreducible}.
\end{definition}

\begin{definition}[Primitive spaces]
 {\ \\} A space $\mathcal{I}$ is called \textbf{primitive} if it is invariant and irreducible.
\end{definition}

In the following we turn to the explicit construction of a primitive space in arbitrary dimension $D$, namely the space of spherical harmonics. The latter are most straightforwardly introduced with the help of the linear space $\mathcal{H}_n(D)$ of \emph{homogeneous polynomials of degree $n$} in $D$-dimensions, which consists of elements of the form

\begin{equation}
 H_n(x_1,\ldots,x_D)=\sum_{|i_{(D)}|=n} a_{i_1,\ldots,i_D} x_1^{i_1}\cdots x_D^{i_D}
\label{eq:app spherical symm arbitrary homogeoneous polynomial}
\end{equation}
with $a_{i_1,\ldots,i_D}\in\mathbb{C}$, $i_j\in\{1,\ldots,n\}$ and $|i_{(D)}|=i_1+i_2+\cdots+i_D$. Furthermore, a homogeneous polynomial of degree $n\geq 2$ that satisfies $\square H_n=0$ is called \emph{homogeneous- or solid harmonic}. Now we are ready for the central definition of this section:

\begin{definition}[Spherical harmonics in $D$ dimensions]
\index{symbols}{Ynmq@$Y_{nm}(D,\hat{x})$}
 {\ \\} The restriction $Y_n(D;\hat{x})$ of the homogeneous harmonic $H_n(x)$ to $S^{D-1}$, i.e.

\begin{equation}
 H_n(r\hat{x})=r^nH_n(\hat{x})=:r^nY_n(D;\hat{x})\, ,
\label{eq:app spherical symm arbitrary spherical harmonic def}
\end{equation}
is called \textbf{spherical harmonic of order $n$ in $D$ dimensions}. The space of such functions is denoted by $\mathcal{Y}_n(D)$ with basis elements $Y_{nm}(D)$ labeled by an additional parameter $m\in \mathbb{Z}$.
\end{definition}
Alternatively one might give a more abstract, but also more easily extendible definition of spherical harmonics. Namely, we can define the space $\mathcal{Y}_n(D)$ as the space of eigenfunctions of the Beltrami operator $\Delta^*_{(D-1)}$ on $S^{D-1}$, defined via $\square_{(D)}=\del_r^2+\frac{D-1}{r}\del_r+\frac{1}{r^2}\Delta^*_{(D-1)}$, to the eigenvalue $-n(n+D-2)$.

In the following, we sum up some basic properties of spherical harmonics without giving any proofs, which may be looked up in the cited literature:

\begin{itemize}
 \item the space $\mathcal{Y}_n(D)$ has dimension $N(n,D)$, where
\begin{equation}
 N(n,D)=\left\{\begin{array}{clcl} &1  &\text{for }l=0 \\ 
 &\frac{(2l+D-2)(l+D-3)!}{(D-2)!l!}\: &\text{for }l>0 \end{array}\right.\, ,
\label{eq:app spherical symm arbitrary dimension N}
\end{equation}\index{symbols}{NlD@$N(l,D)$}
hence the parameter $m$ labeling the basis elements $Y_{nm}(D)$ goes from $1$ to $N(n,D)$

\item the spaces $\mathcal{Y}_n(D)$ for $n=0,1,\ldots$ are primitive with respect to $O(D)$

\item the spherical harmonics are \emph{complete} in $C(S^{D-1})$, i.e. the set of linear combinations of spherical harmonics are dense in $C(S^{D-1})$

\item from here on, we will denote by $Y_{nm}(D;\hat{x})$ an orthonormal basis of $\mathcal{Y}_n(D)$, i.e.

\begin{equation}
 \int_{S^{D-1}}d\Omega\,  Y_{nm}(D;x)\overline{Y_{n'm'}}(D;x)=\delta_{nn'}\delta_{mm'}\, ,
\label{eq:app spherical symm arbitrary spher harm orthonormal}
\end{equation}
where the bar denotes complex conjugation, and by 
\begin{equation}
S_{nm}(D;\hat{x})=\left(\frac{\sigma_D}{N(n,D)}\right)^{1/2}\, Y_{nm}(D;\hat{x})
\label{eq:app spherical symm arbitrary spher harm modified}
\end{equation}\index{symbols}{SnmD@$S_{nm}(D;\hat{x})$}
a reparametrization of this basis (which is not normalized anymore) with $\sigma_D=|S^{D-1}|$\index{symbols}{sigmaD@$\sigma_D$} being the surface area of the $D-1$-dimensional sphere

\item the following addition theorem holds for the basis elements defined above:

\begin{eqnarray}
 \sum_{m=1}^{N(n,D)} \overline{Y_{nm}}(D;\hat{x}_1)Y_{nm}(D;\hat{x}_2)&=&\frac{N(n,D)}{\sigma_D}\, P_n(D;\hat{x}_1\cdot\hat{x}_2)\label{eq:app spherical symm arbitrary addition theorem} \\
 \sum_{m=1}^{N(n,D)} \overline{S_{nm}}(D;\hat{x}_1)S_{nm}(D;\hat{x}_2)&=& P_n(D;\hat{x}_1\cdot\hat{x}_2)\, ,
\end{eqnarray}
where $P_n(D;x)$ is the Legendre polynomial in $D$ dimensions defined as
\end{itemize}

\begin{definition}[Legendre polynomial in $D$ dimensions]
\index{symbols}{PnD@$P_{n}(D;t)$}
 {\ \\} Let $L_n(D;x)$ be a homogeneous harmonic with the following two properties: 

\begin{itemize}
\item $L_n(D;x)$ is isotropically invariant with respect to the axis $(-\epsilon_D,\epsilon_D)$, i.e. $L_n(D;Ax)=L_n(D;x)$ for all $A\in$O$(D)$ satisfying $A\epsilon_D=\epsilon_D$ for a vector $\epsilon_D\in S^{D-1}$

\item $L_n(D;\epsilon_D)=1$. 
\end{itemize}
Then $L_n(D;x)$ is called the \textbf{Legendre harmonic} of degree $n$ in $D$ dimensions and the \textbf{Legendre polynomial} $P_n(D;t)$ with $t\in\mathbb{R}$ is defined via $L_n(D;\hat{x})=P_n(D;t)$, where polar coordinates $\hat{x}=t\epsilon_D+\sqrt{1-t^2}\, \hat{x}_{D-1}$ were used.
\end{definition}
Remarks:
\begin{itemize}

\item $P_n(D;t)$ is a polynomial of degree $n$ in $t$ with the properties $P_n(D;1)=1$ and $P_n(D;-t)=(-1)^n P_n(D;t)$

\item the generating function of the Legendre polynomials is

\begin{equation}
 \sum_{n=0}^{\infty} \left(\atop{n+D-3}{D-3}\right) x^n P_n(D;t)=\left(\frac{1}{1+x^2-2xt}\right)^{(D-2)/2}
\label{eq:app spherical symm arbitrary legendre pol generating function}
\end{equation}
where $D\geq 3$, $0\leq x < 1$ and $-1\leq t\leq 1$.

\item Legendre polynomials obey the orthogonality relation

\begin{equation}
 \int_{-1}^1 dx\, (1-x^2)^{(D-3)/2} P_l(D;x)P_{l'}(D;x)=\delta_{ll'}\frac{\sigma_D}{\sigma_{(D-1)}}\frac{1}{N(l,D)}
\label{eq:app spherical symm arbitrary legendre pol orthogonality}
\end{equation}

\end{itemize}
Of course, many additional properties and alternative definitions of spherical harmonics and Legendre polynomials can be found in the literature on the subject  \cite{Muller1998,Muller1966}. In the context of this thesis, however, the above general relations should be sufficient, and we now focus on some specific results that are needed in the calculations of our framework. First, in section \ref{sec:the model} we made use of the following isomorphism between spherical harmonics $Y_{nm}$ and totally symmetric, traceless, orthonormal tensors $t_{lm}$\index{symbols}{tlm@$t_{lm}$}

\begin{equation}
 (t_{lm})^{\mu_1\ldots\mu_l}=c_l\int_{S^{D-1}} d\Omega \: x^{\{\mu_1}\cdots x^{\mu_l\}} Y_{lm}(x)\: .
\label{eq:app spherical symm arbitrary isomorphism tensors-harmonics}
\end{equation}
Here we want to derive the normalization factor $c_l$. Orthonormality of the $t_{lm}$ means

\begin{equation}
 \overline{t_{lm}}t_{lm}=1\, .
\label{eq:app spherical symm arbitrary tensor orthonormal}
\end{equation}
As mentioned above, the space of spherical harmonics for given $l$ has dimension $N(l,D)$, which implies by the isomorphism, eq. \eqref{eq:app spherical symm arbitrary isomorphism tensors-harmonics}, that this is also true for the $t_{lm}$. With this information we can determine $c_l$ as follows:

\begin{equation}
 \begin{split}
  N(l,D)=&\sum_m \overline{t_{lm}}t_{lm} \\
 =&|c_l|^2 \sum_{m;\mu_1,\ldots,\mu_l}\int_{S^{D-1}}d\Omega_x d\Omega_y\, Y^{lm}(D;x)Y_{lm}(D;y)x_{\{\mu_1}\ldots x_{\mu_l\}}y^{\{\mu_1}\ldots y^{\mu_l\}}\\
 =&|c_l|^2|\sigma_D\sum_m \int_{S^{D-1}}d\Omega_x\, Y^{lm}(D;x)Y_{lm}(D;\hat{e})x_e^l\\
 =&|c_l|^2\sigma_{(D-1)}N(l,D) \int_{-1}^1 dx_e\, P_l(D;x_e)x_e^l (1-x_e^2)^{(D-3)/2}\\
 =&|c_l|^2\sigma_DN(l,D)\frac{\Gamma(D/2)\, l!}{\Gamma(l+D/2)\, 2^l}
 \end{split}
\label{eq:app spherical symm arbitrary derivation cl}
\end{equation}
The third equality holds because of rotational invariance, i.e. we simply performed the angular integral over $y$ by replacing $y\to\hat{e}$, where $\hat{e}$ is any unit vector, and multiplying with the surface area of the $D-1$-dimensional sphere. We also abbreviated $x\cdot \hat{e}$ by $x_e$. Then the addition theorem, eq.\eqref{eq:app spherical symm arbitrary addition theorem}, was used and the integral over the $D-2$-dimensional sphere was performed. Finally, by the orthogonality relation for the Legendre polynomials \eqref{eq:app spherical symm arbitrary legendre pol orthogonality} and the expansion (see \cite{Muller1998})

\begin{equation}
 P_l(D;x)=\frac{1}{N(l,D)}\frac{\Gamma(l+D/2)\, 2^l}{\Gamma(D/2)\, l!} x^l + O(x^{l-2})
\label{eq:app spherical symm arbitrary expansion legendreP}
\end{equation}
we obtain the last equality. This confirms our definition\index{symbols}{cl@$c_l$}

\begin{equation}
c_l=\left(\frac{2^l\, \Gamma(l+D/2)}{l!\,\Gamma(D/2)\sigma_D}\right)^{1/2}
\label{eq:app spherical symm arbitrary definition cl}
\end{equation}
of section \ref{subsec:free field}. In another calculation of that section we used the identity

\begin{equation}
 (-1)^{l}\del_{\mu_1}\cdots\del_{\mu_{l}} r^{2-D}=c_l^{-1}2^l\,\frac{\Gamma(l+(D-2)/2)}{\Gamma(D/2-1} (t^{lm})_{\mu_1\ldots\mu_{l}}Y_{lm}(\hat{x})r^{2-D-l}\, ,
\label{eq:app spherical symm arbitrary second identity}
\end{equation}
which we show to be valid now. First observe that

\begin{equation}
 \del_{\mu_i}=\frac{\del r}{\del x_{\mu_i}}\frac{\del}{\del r}=\frac{x_{\mu_i}}{r}\frac{\del}{\del r}\, .
\label{eq:app spherical symm arbitrary chain rule}
\end{equation}
Successive application of this relation yields

\begin{equation}
 (-1)^{l}\del_{\mu_1}\cdots\del_{\mu_{l}} r^{2-D}=(-1)^l \frac{x_{\mu_1}}{r}\cdots\frac{x_{\mu_l}}{r} (2-D)\cdot (-D)\cdots (4-2l-D) r^{2-D-l}\, .
\label{eq:app spherical symm arbitrary derivation second identity}
\end{equation}
From eq.\eqref{eq:app spherical symm arbitrary isomorphism tensors-harmonics} one can easily derive (as also noted in eq.\eqref{eq:model free field simplification 1})
\begin{equation}
 x_{\{\mu_1}\cdots x_{\mu_{l}\}}=c_l^{-1}(t^{lm})_{\mu_1\ldots\mu_l}r^l Y_{lm}(\hat{x})\, .
\label{eq:app spherical symm arbitrary simplification 1}
\end{equation}
Substitution of this formula into eq.\eqref{eq:app spherical symm arbitrary derivation second identity} and simple algebraic manipulation gives the proclaimed result.

\section{3-dimensional Euclidean space}\label{app:3-dim symmetries}

Having discussed the special functions appearing in the analysis of spherical symmetries in an Euclidean space of arbitrary dimension $D$ in the previous section, we from here on focus on the case $D=3$, which describes the space our toy model of section \ref{sec:the model} lives on. First, some of the above general results are repeated in the special $D=3$ version, before we come to additional group theoretic results familiar from the analysis of angular momenta in quantum mechanics.

Let us rewrite the central formulas of the previous section for $D=3$. First, the space $\mathcal{Y}_l(D)$ of spherical harmonics of degree $l$ has dimension $N(l,3)=2l+1$. The basis elements $Y_{lm}$ of this space are parametrized by $m\in\mathbb{Z}$, with $-l\leq m\leq l$. Further, $\mathcal{Y}_l(D)$ is an irreducible, $O(3)$-invariant, space of functions for any $l$, and the spherical harmonics are complete in $C(S^2)$. In the following, we omit the label $D$ and implicitly assume $D=3$, e.g. we write $Y_{lm}(\hat{x})$ instead of $Y_{lm}(3;\hat{x})$. The addition theorem then takes the form

\begin{eqnarray}
 \sum_{m=-l}^{l} \overline{Y_{lm}}(\hat{x}_1)Y_{lm}(\hat{x}_2)&=&\frac{2l+1}{4\pi}\, P_l(\hat{x}_1\cdot\hat{x}_2) \\
 \sum_{m=-l}^{l} \overline{S_{lm}}(\hat{x}_1)S_{lm}(\hat{x}_2)&=& P_l(\hat{x}_1\cdot\hat{x}_2)\, ,
\label{eq:app spherical symm D=3 addition theorem}
\end{eqnarray}
where

\begin{equation}
S_{lm}(\hat{x})=\left(\frac{4\pi}{2l+1}\right)^{1/2}\, Y_{lm}(\hat{x})\, .
\label{eq:app spherical symm D=3 spher harm modified}
\end{equation}
Here $P_l$ is the usual Legendre polynomial with generating function (see \cite{Erdelyi1953})

\begin{equation}
 \left(\frac{1}{1+x^2-2xt}\right)^{1/2}=\left\{\begin{array}{clcl} &\sum\limits_{l=0}^{\infty} x^l P_l(t)  &\text{for }|x|<\min|t\pm(t^2-1)^{1/2}| \\ 
 &\sum\limits_{l=0}^{\infty} x^{-l-1} P_l(t) &\text{for }|x|>\max|t\pm(t^2-1)^{1/2}| \end{array}\right.\, .
\label{eq:app spherical symm D=3 legendreP generating function}
\end{equation}

As mentioned in section \ref{subsec:perturbation via field equation}, we chose a toy model on 3-dimensional Euclidean space, because the representation theory of the corresponding symmetry group is comparably simple and familiar from the quantum mechanics of angular momentum \cite{Brink1962, Varshalovich1988, Devanathan1999}, where spherical harmonics are the eigenfunctions of the \emph{operator of orbital angular momentum}. In the remainder of this section, we will be concerned with the decomposition of products of spherical harmonics into irreducible parts. By this procedure, we can put our OPE coefficients into a convenient form that simplifies the differential equations \eqref{eq:model perturbations relations}. 

For this purpose, let us briefly recall the \emph{addition (or coupling) of angular momenta} from quantum mechanics. Given two systems with angular momentum quantum numbers $(j_1,m_1)$ and $(j_2,m_2)$ and corresponding eigenstates $\ket{j_1m_1}$ and $\ket{j_2m_2}$ of the angular momentum operators $\mathbf{J}_1^2$, $J_{1z}$ and $\mathbf{J}_2^2$, $J_{2z}$, there are different ways to express the combined system. On the one hand, one may use the direct product $\ket{j_1m_1}\otimes \ket{j_2m_2}=\ket{j_1j_2m_1m_2}$ of the constituent states, which is an eigenstate of all four operators $\mathbf{J}_1^2$, $J_{1z}$, $\mathbf{J}_2^2$ and $J_{1z}$. It is easy to see that this product state is reducible, despite the irreducibility of both $\ket{j_1m_1}$ and $\ket{j_2m_2}$. Alternatively, one may choose the system to be described by eigenstates $\ket{j_1j_2 JM}$ of the operators $\mathbf{J}_1^2$, $\mathbf{J}_2^2$, $\mathbf{J}^2=(\mathbf{J}_1+\mathbf{J}_2)^2$ and $J_{z}=J_{1z}+J_{2z}$. Contrary to the direct product case above, these states are also eigenstates of the \emph{total angular momentum operator} $\mathbf{J}^2$ and hence irreducible. It is an important fact for quantum mechanics that there exists a unitary transformation between the two mentioned sets of states describing the \emph{coupled} system, which has the form

\begin{equation}
\begin{array}{lcclc}
 \ket{j_1j_2JM}&=&\sum\limits_{m_1,m_2}&\ket{j_1j_2m_1m_2} &\braket{j_1j_2m_1m_2}{j_1j_2JM}\\
 \ket{j_1j_2m_1m_2}&=&\sum\limits_{J,M}&\ket{j_1j_2JM} &\braket{j_1j_2JM}{j_1j_2m_1m_2}
\end{array}
\label{eq:app spherical symm D=3 CG coefficients 1}
\end{equation}\index{symbols}{0clebschg@$\braket{j_1j_2m_1m_2}{j_1j_2JM}$}
where the expansion coefficients $\braket{j_1j_2m_1m_2}{j_1j_2JM}=\braket{j_1j_2JM}{j_1j_2m_1m_2}$ are called \emph{Clebsch-Gordan coefficients} (CG coefficients). For the sake of brevity, we will often use the notation $\braket{j_1j_2m_1m_2}{JM}$ instead. The \emph{Wigner 3j-symbol} defined by

\begin{equation}
 \braket{j_1j_2m_1m_2}{J-M}=(-1)^{j_1-j_2-M}(2J+1)^{1/2}\begin{pmatrix}
                                                                 j_1 & j_2 & J \\
								 m_1 & m_2 & M
                                                               \end{pmatrix}
\label{eq:app spherical symm D=3 3j symbol}
\end{equation}\index{symbols}{0clebschg2@$\begin{pmatrix}
                                                                 j_1 & j_2 & J \\
								 m_1 & m_2 & M
                                                               \end{pmatrix}$}
is also widely used because of its additional symmetry properties. Let us briefly recall some basic features of CG coefficients:

\begin{itemize}
 \item CG coefficients satisfy the orthogonality relations
\begin{eqnarray}
 \sum_{m_1m_2}\braket{j_1j_2JM}{j_1j_2m_1m_2}  \braket{j_1j_2m_1m_2}{j_1j_2KQ}&=&\delta_{JK}\delta_{MQ}\\
 \sum_{JM} \braket{j_1j_2m_1m_2}{j_1j_2JM}\braket{j_1j_2JM}{j_1j_2m_1'm_2'} &=&\delta_{m_1m_1'}\delta_{m_2m_2'}
\label{eq:app spherical symm D=3 CG orthogonality}
\end{eqnarray}

\item they vanish unless the triangle inequality

\begin{equation}
 |j_1-j_2|\leq J\leq j_1+j_2
\label{eq:app spherical symm D=3 CG triangle inequality}
\end{equation}
and the condition $m_1+m_2=M$ are fulfilled

\end{itemize}

Before we come to the desired relation transforming a product of spherical harmonics into irreducible parts, we study the \emph{rotation matrices} $D^J_{MM'}(\alpha,\beta,\gamma)$ defined by

\begin{equation}
 \psi_{JM}(x')=\sum_{M'} D^J_{MM'}(\alpha,\beta,\gamma) \psi_{JM'}(x)\,
\label{eq:app spherical symm D=3 rotation matrix}
\end{equation}
where $\psi_{JM}$ is the wavefunction of a quantum mechanical system with angular momentum quantum numbers $J$ and $M$, and $x'$ is obtained from $x$ by a rotation of the coordinate system by the Euler angles $(\alpha,\beta,\gamma)$. Thus, rotating the coordinate system on both sides of eq.\eqref{eq:app spherical symm D=3 CG coefficients 1} by $\omega=(\alpha,\beta,\gamma)$ and using orthogonality of the states, we arrive at the \emph{Clebsch-Gordan series}

\begin{equation}
 D^{j_1}_{m_1n_1}(\omega)D^{j_2}_{m_2n_2}(\omega)=\sum_{JMN}\braket{j_1j_2m_1m_2}{JM}\, D^J_{MN}(\omega)\, \braket{JN}{j_1j_2n_1n_2}
\label{eq:app spherical symm D=3 CG series 1}
\end{equation}
and equivalently
\begin{equation}
 D^J_{MN}(\omega)=\sum_{m_1,m_2,n_1,n_2}\braket{JM}{j_1j_2m_1m_2}\, D^{j_1}_{m_1n_1}(\omega)D^{j_2}_{m_2n_2}(\omega)\, \braket{j_1j_2n_1n_2}{JN}\, .
\label{eq:app spherical symm D=3 CG series 2}
\end{equation}
Now let us draw the connection to the coupling of spherical harmonics and consider the rotation

\begin{equation}
 \sum_{m}D^l_{m0}(\phi_1,\Theta_1,0)Y_{lm}(\phi_2,\Theta_2)=Y_{l0}(\Theta,0)=\left(\frac{2l+1}{4\pi}\right)^{1/2}P_l(\cos\Theta)\, ,
\label{eq:app spherical symm D=3 spherical harmonics rotation}
\end{equation}
where the second equality is a standard identity from the theory of special functions. Comparing this equation to the addition formula \eqref{eq:app spherical symm D=3 addition theorem}, we immediately obtain

\begin{equation}
 D^l_{m0}(\phi_1,\Theta_1,0)=\left(\frac{4\pi}{2l+1}\right)^{1/2}\overline{Y_{lm}}(\phi_1,\Theta_1)\, .
\label{eq:app spherical symm D=3 spherical harmonics rotation matrix}
\end{equation}
Substituting this connection between rotation matrices and spherical harmonics into eqs.\eqref{eq:app spherical symm D=3 CG series 1} and \eqref{eq:app spherical symm D=3 CG series 2} we finally arrive at the desired \emph{coupling rule} for spherical harmonics with the same argument.

\begin{equation}
\begin{array}{lcl}
 S_{l_1m_1}(\hat{x})S_{l_2m_2}(\hat{x})&=&\sum\limits_l \braket{l_1l_2m_1m_2}{lm}\braket{l0}{l_1l_200}\, S_{lm}(\hat{x})\\
 S_{lm}(\hat{x})\braket{l_1l_200}{l0} &=&\sum\limits_{m_1,m_2} \braket{lm}{l_1l_2m_1m_2}\, S_{l_1m_1}(\hat{x})S_{l_2m_2}(\hat{x})
\end{array}
\label{eq:app spherical symm D=3 spherical harmonics coupling}
\end{equation}
Here we used the unnormalized spherical harmonics for convenience. The CG coefficient with \emph{magnetic quantum numbers} equal to zero, i.e. the coefficient of the form $\braket{l_1l_200}{l0}$, is sometimes called \emph{parity coefficient} in the literature (see e.g.\cite{Devanathan1999}), because it vanishes unless the sum of its entries is an even number, i.e.

\begin{equation}
 \braket{l_1l_200}{l0}=0 \qquad \text{if }l_1+l_2+l=2n+1\text{ with }n\in\mathbb{N}\, .
\label{eq:app spherical symm D=3 CG parity}
\end{equation}
Further, this coefficient is related to the Legendre polynomials by

\begin{equation}
 \braket{l_1l_2\, 00}{J0}^2=\frac{2J+1}{2}\int^1_{-1}\text{d}y P_J(y)P_{l_1}(y)P_{l_2}(y)\, ,
\label{eq:app spherical symm D=3 CG <-> LegendreP}
\end{equation}
and explicitly takes the values

\begin{equation}
 \braket{l_1l_2\, 00}{J0}=(-1)^{\frac{l_1+l_2+3J}{2}}\left(\frac{2J+1}{2\pi}\frac{\Gamma(\frac{l_1+l_2-J+1}{2})\Gamma(\frac{l_1-l_2+J+1}{2})\Gamma(\frac{l_2-l_1+J+1}{2})\Gamma(\frac{l_1+l_2+J}{2}+1)}{\Gamma(\frac{l_1+l_2-J}{2}+1)\Gamma(\frac{l_1-l_2+J}{2}+1)\Gamma(\frac{l_2-l_1+J}{2}+1)\Gamma(\frac{l_1+l_2+J+3}{2})}\right)^{1/2}
\label{eq:app spherical symm D=3 CG <-> Gamma fct}
\end{equation}
Eqs.\eqref{eq:app spherical symm D=3 spherical harmonics coupling} are the central formulae of this section. It is obvious that by successive application of these equations, one may decompose products of any number of spherical harmonics.

\begin{equation}
\begin{split}
 S_{l_1m_1}(\hat{x})S_{l_2m_2}(\hat{x})\cdots S_{l_nm_n}(\hat{x})=\prod_{i=2}^n\sum_{l_{12\ldots i}} &\braket{l_{12\ldots (i-1)}l_im_{12\ldots (i-1)}m_i}{l_{12\ldots i}m_{12\ldots i}}\braket{l_{12\ldots i}0}{l_{12\ldots (i-1)}l_i00}\\
 &\times  S_{l_{12\ldots n}m_{12\ldots n}}(\hat{x})\\
=:\sum_{J} &T[-(l_1m_1),-(l_2m_2),\ldots, -(l_nm_n)]_{JM}\, S_{JM}(\hat{x})\, 
\end{split}
\label{eq:app spherical symm D=3 spherical harmonics coupling genral}
\end{equation}
where we defined the coupling tensor

\begin{equation}
\begin{split}
 T[-(l_1m_1),-(l_2m_2),\ldots, -(l_nm_n)]_{JM}= &\braket{l_{12\ldots (n-1)}l_nm_{12\ldots (n-1)}m_n}{JM}\\
\prod_{i=2}^{n-1}\sum_{l_{12\ldots i}}&\braket{l_{12\ldots (i-1)}l_im_{12\ldots (i-1)}m_i}{l_{12\ldots i}m_{12\ldots i}}\braket{l_{12\ldots i}0}{l_{12\ldots (i-1)}l_i00}\, .
\end{split}
\label{eq:app spherical symm D=3 spherical harmonics coupling tensor T}
\end{equation}
Products containing complex conjugate (or \emph{contragredient}) spherical harmonics may be treated analogously. We choose the so called \emph{Condon-Shortley phase convention}

\begin{equation}
 \overline{S_{lm}}(\hat{x})=(-1)^mS_{l(-m)}(\hat{x})=:S^{lm}(\hat{x})
\label{eq:app spherical symm D=3 spherical harmonics phase convention}
\end{equation}
and write

\begin{equation}
\begin{split}
 S^{l_1m_1}(\hat{x})\cdots S^{l_am_a}(\hat{x})&S_{l_{a+1}m_{a+1}}(\hat{x})\cdots S_{l_bm_b}(\hat{x})=\\
 &\sum_J T[+(l_1m_1),\ldots +(l_am_a),-(l_{a+1}m_{a+1}),\ldots. -(l_bm_b)]_{JM}\, S_{JM}(\hat{x})
\end{split}
\label{eq:app spherical symm D=3 spherical harmonics coupling completely general}
\end{equation}
with

\begin{equation}
\begin{split}
 &T[+(l_1m_1),\ldots +(l_am_a),-(l_{a+1}m_{a+1}),\ldots. -(l_bm_b)]_{JM}=\\
 &\qquad(-1)^{m_1+\ldots+m_a}T[+(l_1 (-m_1)),\ldots +(l_a (-m_a)),+(l_{a+1}m_{a+1}),\ldots. +(l_bm_b)]_{JM}
\end{split}
\label{eq:app spherical symm D=3 spherical harmonics coupling tensor T general}
\end{equation}
\index{symbols}{Tl@$T[\pm(l_1m_1),\ldots \pm(l_nm_n)]_{JM}$}

\numberwithin{equation}{chapter}

\chapter{The characteristic differential equation}\label{app:characteristic diff eq}

As we discussed in section \ref{subsec:perturbation via field equation}, our iterative scheme for the construction of OPE coefficients consists basically of two steps: Perform infinite sums of the form of eq.\eqref{eq:model perturbations limit k-1} to determine all coefficients at a given order and use the field equation, or more precisely eq.\eqref{eq:model perturbations relations}, to proceed to the next perturbation order. As mentioned in that section, most calculational effort goes into the former step, i.e. the infinite sums. In the present section we show how to perform the latter step, i.e. solving the differential equation

\begin{equation}
\square\Lo{i}{\varphi}{x}= \Lo{i-1}{\varphi^k}{x}
\label{eq:app diff eq main}
\end{equation}
for any $i,k\in\mathbb{N}$ and in arbitrary spacetime dimension $D$. We assume that any left representative takes values in the ring of functions

\begin{equation}
\mathbb{Y}(x)=\mathbb{C}\llbracket r,r^{-1},\log r \rrbracket\otimes \{Y_n(\hat{x};D)\}\otimes\End(V)\quad.
\label{eq:app diff eq ring Y}
\end{equation}\index{symbols}{y@$\mathbb{Y}(x)$}
Thus, an arbitrary element $\Lo{i}{\ket{v}}{x}\in\mathbb{Y}$ is of the form

\begin{equation}
\Lo{i}{\ket{v}}{x}\in\mathbb{Y}=\sum A_{i,d,q,J,M}(\ket{v})r^d(\log r)^q Y_{JM}(\hat{x},D)
\end{equation}
with $A_{i,d,q,J,M}\in\End(V)$. Note that this assumption is consistent with our free theory results. Now in order to find a solution to the differential equation \eqref{eq:app diff eq main} we define a right inverse to the Laplacian, i.e. an operator $\square^{-1}\in\End(\mathbb{Y})$ satisfying

\begin{equation}
\square\circ\square^{-1}=id\quad.
\end{equation}
A solution to the differential equation is then simply found by the application of this operator on $\Lo{i-1}{\varphi^k}{x}$. This can be seen from

\begin{equation}
\square\Lo{i}{\varphi}{x}= \Lo{i-1}{\varphi^k}{x}=\square\Big[\square^{-1}\Lo{i-1}{\varphi^k}{x}\Big]
\end{equation}
and therefore

\begin{equation}
\Lo{i}{\varphi}{x}=\square^{-1}\Lo{i-1}{\varphi^k}{x}
\end{equation}
is the desired solution. Our explicit choice for $\square^{-1}$ is

\begin{equation}
\square^{-1}\Big[r^d Y_{JM}(\hat{x};D)(\log r)^p \Big]=r^{d+2}Y_{JM}(\hat{x};D)(\log r)^p\cdot D^{p}(d+2,J,r)
\label{eq:app diff eq inverse}
\end{equation}\index{symbols}{Laplaceinverse@$\square^{-1}$}
with

\begin{equation}
D^{(p)}(d,J,r)=\left\{\begin{array}{clll} &\sum\limits_{i=0}^p(-1)^i\frac{p!\left(\log r\right)^{1-i}}{(p+1-i)!\left(2d+1\right)^{i+1}}  &\text{for }J=\min\{|d|,|d+D-2|\} \\
 &\sum\limits_{i=0}^{p}\sum\limits_{n=0}^i(-1)^i\frac{p!\left(\log r\right)^{-i}}{(p-i)!(d-J)^{n+1}(d+D-2+J)^{i-n+1}}\: &\text{otherwise } \end{array}\right.
\label{eq:app diff eq D}
\end{equation}\index{symbols}{D9@$D^{(p)}(d,J,r)$}
and

\begin{equation}
 \square^{-1}\, (0)=0\quad.
\end{equation}
Expressing the Laplace operator in spherically symmetric form, i.e. using the form of $\square$ given below def.\ref{def:Spherical harmonics in $D$ dimensions}, one can check by straightforward calculation that this definition does indeed yield a right inverse to the Laplacian. Note however that this choice is not unique, since we may add any $A\in\mathbb{Y}$ with $A\in\ker\square$ and would again obtain a right inverse. These functions $A$ are the harmonic polynomials in $x$ with values in $\End(V)$.\\

In the computations of section \ref{subsec:low orders} we will need the explicit form of $D^{(p)}(d,J,r)$ in $D=3$ dimensions and for $p\in\{0,1\}$, which are

\begin{equation}
 D^{(0)}(d,J,r)=:D(d,J,r)=\left\{\begin{array}{cl}
 \frac{1}{d(d+1)-J(J+1)}\quad & \text{for }\min\{|d|,|d+1|\}\neq J\\ & \\
 \frac{\log r}{2d+1} & \text{for }\min\{|d|,|d+1|\}= J
\end{array}
\right.
\label{eq:app diff eq D0}
\end{equation}\index{symbols}{D0@$D(d,J,r), D^{(0)}(d,J,r)$}
and

\begin{equation}
 D^{(1)}(d,J,r)=\left\{\begin{array}{clll} &\left(\frac{\log r}{2(2d+1)}-\frac{1}{(2d+1)^2}\right)  &\text{for }J=\min\{|d|,|d+1|\} \\
 &\left(\frac{1}{d(d+1)-J(J+1)}-\frac{(2d+1)(\log r)^{-1}}{[d(d+1)-J(J+1)]^2}\right)\: &\text{otherwise } \end{array}\right.\quad.
\label{eq:app diff eq D1}
\end{equation}\index{symbols}{D1@$D^{(1)}(d,J,r)$}

\chapter{The characteristic sum}\label{app:characteristic sum}

Due to the iterative nature of the construction described in section \ref{sec:the model}, one expects certain patterns to appear in the calculation of OPE coefficients. In this section we analyze one such expression, which characteristically shows up in first order calculations. Namely, as we saw in section \ref{subsec:low orders}, sums of the general form

\begin{equation}
 S(l_1,l_2;a):=\sum_{\atop{J=|l_1-l_2|}{J\neq a}}^{l_1+l_2} \frac{\braket{l_1 l_2\, 00}{J 0}^2}{a(a+1)-J(J+1)}
 \label{eq:app character sum definition}
\end{equation}
with $l_1,l_2,a\in \mathbb{N}$ are typically present at first perturbation order. The denominator is familiar from the solution of the differential equation relating the coefficients of the free theory to the first order coefficients, see eq.\eqref{eq:app diff eq D0} while the CG coefficient results from the coupling of angular momenta as discussed in appendix \ref{app:3-dim symmetries}. In the following we will first give some general simplifications of this sum, and afterwards distinguish different special cases of the parameters. This analysis is based on the results of \cite{Din:1981ey,Askey1982,Askey1986} and \cite{szmytkowski2006}.

First note that we may extend the summation limits arbitrarily, as the CG coefficients automatically vanish if the triangular inequality \eqref{eq:app spherical symm D=3 CG triangle inequality} is not satisfied. Thus we may write

\begin{equation}
  S(l_1,l_2;a):=\sum_{\atop{J=0}{J\neq a}}^{\infty} \frac{\braket{l_1 l_2\, 00}{J 0}^2}{a(a+1)-J(J+1)}\, .
 \label{eq:app character sum definition extended}
\end{equation}
Further, we may express the CG coefficients through an integral over Legendre polynomials by eq.\eqref{eq:app spherical symm D=3 CG <-> LegendreP}, which yields

\begin{equation}
 S(l_1,l_2;a)=\frac{1}{2}\sum_{\atop{J=0}{J\neq a}}^{\infty}\frac{2J+1}{a(a+1)-J(J+1)}\int^1_{-1}\text{d}y P_J(y)P_{l_1}(y)P_{l_2}(y)\, .
\label{eq:app character sum integral}
\end{equation}
In order to get rid of the sum over $J$ we would now like to apply \emph{Dougall's formula} (see \cite{Erdelyi1953})

\begin{equation}
 \sum_{k=0}^{\infty} \frac{2k+1}{\nu(\nu+1)-k(k+1)}P_k(y)=\frac{\pi}{\sin(\pi \nu)}P_{\nu}(-y)\hspace{2cm} (\nu\notin \mathbb{Z})\, ,
\label{eq:app character sum dougall}
\end{equation}
but as $a\in\mathbb{Z}$, this is clearly not possible at this stage. Hence, we first have to use the little trick of writing our sum as the limit

\begin{equation}
 S(l_1,l_2;a)=\frac{1}{2}\lim_{\nu\to a}\left[\int^1_{-1}\text{d}y \left(\frac{\partial}{\partial \nu}(\nu-a)\sum_{J=0}^{\infty}\frac{2J+1}{\nu(\nu+1)-J(J+1)}P_{\nu}(y)+\frac{1}{2a+1}P_a(y)\right)P_{l_1}(y)P_{l_2}(y)\right]\, .
\label{eq:app character sum limit}
\end{equation}
Here we may use eq.\eqref{eq:app character sum dougall}, and after carrying out some derivations and the limit we arrive at the convenient form

\begin{equation}
 S(l_1,l_2;a)=\frac{1}{2}\int^1_{-1}\text{d}y P_{l_1}(y)P_{l_2}(y) \left[(-1)^a\frac{\partial P_{\nu}(-y)}{\partial \nu}\Big |_{\nu=a}+\frac{1}{2a+1}P_a(y)\right]\, .
\label{eq:app character sum deriv degree}
\end{equation}
The derivative of the Legendre function with respect to its degree has been discussed in \cite{szmytkowski2006}, where the explicit form

\begin{equation}
 \pdd{P_{\nu}(x)}{\nu}\Big |_{\nu=n}=P_n(x)\log\frac{1+x}{2}+R_n(x)
\label{eq:app character sum deriv degree formula}
\end{equation}
was derived. Here $R_n$ is a polynomial defined as

\begin{equation}
 R_n(x):=2[\psi(2n+1)-\psi(n+1)]P_n(x)+2\sum_{k=0}^{n-1}(-1)^{n+k}\frac{2k+1}{(n-k)(n+k+1)}P_k(x)\, ,
\label{eq:app character sum deriv degree Rn}
\end{equation}
where $\psi$ is the digamma function \cite{Erdelyi1953, Abramowitz1965}

\begin{equation}
 \psi(n+1)=-\gamma+\sum_{k=1}^n\frac{1}{k}
\label{eq:app character sum digamma psi}
\end{equation}\index{symbols}{psi@$\psi(n)$}
with the \emph{Euler-Mascheroni constant} $\gamma$. Using this explicit form of the derivative in eq.\eqref{eq:app character sum deriv degree}, we obtain yet another expression for our sum.

\begin{equation}
\begin{split}
 S(l_1,l_2;a)=\frac{1}{2}\int_{-1}^1 P_{l_1}(y)P_{l_2}(y)&\Big(  2\sum_{k=|l_1-l_2|}^{a-1}\frac{2k+1}{(a(a+1)-k(k+1))}P_{k}(y) + \\ +& P_a(y)\left[\log \frac{1-y}{2}+\psi(2a+2)+\psi(2a+1)-2\psi(a+1)\right]\Big)\, \text{d}y
\end{split}
\label{eq:app character sum explicit general}
\end{equation}
This concludes our discussion of the general form of the sum $S$, and we will now use these results in order to further simplify the sum for special choices of the parameters $l_1,l_2$ and $a$.

\numberwithin{equation}{section}

\section{The cases \texorpdfstring{$a<|l_1-l_2|$}{a<l1-l2} and \texorpdfstring{$a>l_1+l_2$}{a>l1+l2}}\label{app:subsec:triangle}

Let us first consider the case $a<|l_1-l_2|$ in eq.\eqref{eq:app character sum explicit general}. First we note that there is no sum over $k$ in this case. Further, all expressions containing the integral $\int_{-1}^1P_{l_1}(y)P_{l_2}(y)P_{a}(y)dy$ vanish, as the triangle inequality \eqref{eq:app spherical symm D=3 CG triangle inequality} is not satisfied. Therefore, only the term containing the logarithm remains.

\begin{equation}
 S(l_1,l_2;a<|l_1-l_2|)=\frac{1}{2}\int_{-1}^1 P_{l_1}(y)P_{l_2}(y)P_a(y)\log(1-y)\, \text{d}y
\label{eq:app character sum smaller}
\end{equation}
Similarly, the integral $\int_{-1}^1P_{l_1}(y)P_{l_2}(y)P_{a}(y)dy$ also vanishes if $a>l_1+l_2$. In this case, the sum over $k$ in eq.\eqref{eq:app character sum explicit general} goes from $|l_1-l_2|$ to $l_1+l_2$. Recalling the original form of our sum $S$ from eq.\eqref{eq:app character sum definition}, we notice that the sum over $k$ is just $2\cdot S$ in the case at hand. Subtracting $2\cdot S$ on both sides of equation \eqref{eq:app character sum explicit general} and multiplying with $(-1)$, we obtain the simple form

\begin{equation}
 S(l_1,l_2;a>l_1+l_2)=-\frac{1}{2}\int_{-1}^1 P_{l_1}(y)P_{l_2}(y)P_a(y)\log(1-y)\, \text{d}y\, .
\label{eq:app character sum greater}
\end{equation}

\section{The case \texorpdfstring{$a+l_1+l_2=\text{odd}$}{a+l1+l2=odd}}\label{app:subsec: odd}

If the sum of the parameters $l_1,l_2$ and $a$ is an odd number, the sum $S$ simplifies drastically. This is also the case that has been studied most extensively in the literature (see \cite{Din:1981ey,Askey1982,Askey1986}). The simplification is most easily derived from the form \eqref{eq:app character sum deriv degree} of our sum. To begin with, note that the second summand, i.e. the term without the derivative, vanishes, due to the parity requirement of the CG coefficients \eqref{eq:app spherical symm D=3 CG parity} (alternatively, one may deduce this result from the fact that we integrate over a function of odd degree). Additionally, as we will show in the following, the derivative of the Legendre function in the remaining term may be written in a convenient form in the underlying case.

With the help of the identity (see e.g.\cite{Erdelyi1953} or \cite{Abramowitz1965})

\begin{equation}
 P_{\rho}(-x)=\cos \rho \pi P_{\rho}(x)-\frac{2}{\pi}\sin\rho\pi Q_{\rho}(x)\, ,
\label{eq:app character sum legendreF negative argument}
\end{equation}
where $Q_{\rho}$ is the Legendre function of the second kind, and the special case $P_n(-x)=(-1)^nP_n(x)$ for the Legendre polynomial, we may perform the following simple transformations

\begin{equation}
\begin{split}
 \int_{-1}^1 &\left(\frac{\partial}{\partial\nu}P_{\nu}(-x)\right)_{\nu=a} P_{l_1}(x)P_{l_2}(x) \text{d}x=\int_{-1}^1 \frac{\partial}{\partial\nu}\left(\cos \nu \pi P_{\nu}(x)-\frac{2}{\pi}\sin\nu\pi Q_{\nu}(x)\right)_{\nu=a} P_{l_1}(x)P_{l_2}(x) \text{d}x\\
=&\int_{-1}^1 \left[(-1)^a\left(\frac{\partial}{\partial\nu}P_{\nu}(x)\right)_{\nu=a}+2\cdot (-1)^{a+1}Q_a(x)\right] P_{l_1}(x)P_{l_2}(x) \text{d}x\\
=&\int_{-1}^1 \left[(-1)^{a+l_1+l_2}\left(\frac{\partial}{\partial\nu}P_{\nu}(x)\right)_{\nu=a}P_{l_1}(-x)P_{l_2}(-x)+2\cdot (-1)^{a+1}Q_a(x)P_{l_1}(x)P_{l_2}(x)\right] \text{d}x\\
=&-\int_{-1}^1 \left(\frac{\partial}{\partial\nu}P_{\nu}(-x)\right)_{\nu=a}P_{l_1}(x)P_{l_2}(x)\text{d}x+2\cdot \int_{-1}^1 Q_a(x)P_{l_1}(x)P_{l_2}(x) \text{d}x\, .
\end{split}
\label{eq:app character sum derivative -> legendreQ}
\end{equation}
In the last step we used the fact that $a+l_1+l_2$ is odd by assumption. Comparing both sides of this series of equations, we observe that in the underlying case we may simply replace

\begin{equation}
 (-1)^{a}\frac{\partial}{\partial \nu}P_{\nu}(-y)\Big |_{\nu=a}\to -Q_{a}(y)
\label{eq:app character sum derivative -> legendreQ replace}
\end{equation}
under the integral in eq.\eqref{eq:app character sum deriv degree}. In summary, we have just found the simple result

\begin{equation}
 S(l_1,l_2;a| l_1+l_2+a=\text{odd})=-\frac{1}{2}\int_{-1}^1 \text{d}y\, P_{l_1}(y)P_{l_2}(y)Q_a(y)\: .
\label{eq:app character sum odd}
\end{equation}
It was shown in \cite{Din:1981ey} that this integral vanishes for $|l_1-l_2|\leq a\leq l_1+l_2$, which means

\begin{equation}
 S(l_1,l_2;a| l_1+l_2+a=\text{odd}, |l_1-l_2|\leq a\leq l_1+l_2)=0\: .
\label{eq:app character sum odd vanish} 
\end{equation}
If the parameter $a$ does not lie within this range, the solutions

\begin{equation}
 \begin{split}
  -\int_{-1}^1 P_{l_1}(y)P_{l_2}(y)Q_{l_1+l_2+2j+1}(y)\, \text{d}y&=\int_{-1}^1 P_{l_1}(y)Q_{l_2}(y)P_{l_1+l_2+2j+1}(y)\, \text{d}y=\\
 =&\frac{\Gamma(j+\frac{1}{2})\Gamma(j+l_1+1)\Gamma(j+l_2+1)\Gamma(j+l_1+l_2+\frac{3}{2})}{2\Gamma(j+1)\Gamma(j+l_1+\frac{3}{2})\Gamma(j+l_2+\frac{3}{2})\Gamma(j+l_1+l_2+2)}
 \end{split}
\label{eq:app character sum odd outside}
\end{equation}
with $j\in\mathbb{N}$ are obtained (see \cite{Askey1982} and \cite{Askey1986}).

\chapter{Proofs}\label{app:proofs}
Here we gather some lengthy, more involved computations for the sake of readability of the main text.
\section{Proof of result 3.4.4}

Translating the corresponding diagrams into an explicit equation using the rules of section \ref{subsec:diagrammatic notation}, we arrive at the formula

\begin{equation}
\begin{split}
&\begin{tikzpicture}
\node at (0,1) {$x$} child[level distance = 1cm,sibling distance=1.5cm]{[fill] circle (1.5pt) node(1){} } child[level distance=.5cm,sibling distance=.75cm] {[fill] circle (1.5pt) child[sibling distance=.25cm] {[fill] circle (1.5pt)node(2){}  } child[sibling distance=.25cm] {[fill] circle (1.5pt) } child[sibling distance=.25cm] {[fill] circle (1.5pt) } child[sibling distance=.25cm] {[fill] circle (1.5pt) } child[sibling distance=.25cm] {[fill] circle (1.5pt) }};
\node at (3,1) {$x$}  child[level distance=.5cm,sibling distance=.75cm] {[fill] circle (1.5pt) child[sibling distance=.25cm] {[fill] circle (1.5pt) } child[sibling distance=.25cm] {[fill] circle (1.5pt) } child[sibling distance=.25cm] {[fill] circle (1.5pt) } child[sibling distance=.25cm] {[fill] circle (1.5pt) } child[sibling distance=.25cm] {[fill] circle (1.5pt)node(3){}  }}child[level distance = 1cm,sibling distance=1.5cm]{[fill] circle (1.5pt)node(4){}  };
\node at (1.5,.5){$+$};
\node at (-2.5,.5){$(R_1)_{\varphi^2}(x)=5\Big($};
\node at (7.5,.5){$\Big)=\contraction{}{\Lo{0}{\varphi}{x}}{}{\Lo{1}{\varphi}{x}}\Lo{0}{\varphi}{x}\Lo{1}{\varphi}{x}+\contraction{}{\Lo{1}{\varphi}{x}}{}{\Lo{0}{\varphi}{x}}\Lo{1}{\varphi}{x}\Lo{0}{\varphi}{x}$};
\path [<-, draw, thick, dashed] (1) -- +(0,-.5) -| (2);
\path [<-, draw, thick, dashed] (3) -- +(0,-.5) -| (4);
\end{tikzpicture}\\
 &=5\sum_{d=-\infty}^{\infty}\sum_{j=0}^{\infty} r^{d+1} S_{jm}(\hat{x})\Lo{0}{\varphi^4}{x;d}_{jm}\times\\
 &+\sum_{l=0}^{\infty}\left(\sum_{\atop{J=|l-j|}{J\neq\mathfrak{m}(l+d)}}^{l+j}  \frac{\braket{j l 0 0}{J 0}^2\, }{(l+d+2)(l+d+3)-J(J+1)}+\frac{\log r}{2(l+d)+5}\braket{jl\, 00}{\mathfrak{m}(l+d)\, 0}^2 \right.\\
&+\left.\sum_{\atop{J=|l-j|}{J\neq\mathfrak{m}(l-d-1)}}^{l+j}\frac{\braket{j l 0 0}{J 0}^2}{(d+1-l)(d+2-l)-J(J+1)} + \frac{\log r}{2(d-l)+3}\braket{jl\, 00}{\mathfrak{m}(d-l-1)\, 0}^2\right)
\label{eq:model NLO Lphi2 infinite sums formula}
\end{split}
\end{equation}
where the identity \eqref{eq:app spherical symm D=3 spherical harmonics coupling} was used to couple the spherical harmonics and with $\mathfrak{m}(d)$ defined as

\begin{equation}
 \mathfrak{m}(d):=\min(|d+2|,|d+3|)\quad .
\label{eq:model NLO Lphi2 m definition}
\end{equation}\index{symbols}{md@$\mathfrak{m}(d)$}%
We are especially interested in the sum over the contraction index $l$, since only this sum may contain infinite expressions after taking matrix elements. Therefore we omit the first line of the above equation in the following calculations and may easily restore it in the end. By a straightforward computation one can check that if we replace $d+2\to -d-2$, then the expression in brackets stays the same except for the sign in front of the logarithmic term in the last line, which then  becomes a minus.  Thus, it is sufficient to consider only the case $d+2\geq 0$, since the other values of $d$ can then simply be determined by the mentioned symmetry.

\subsubsection*{The logarithmic terms}{\ \\}

Let us first focus on the logarithmic terms in eq.\eqref{eq:model NLO Lphi2 infinite sums formula}. With the assumption $d+2\geq 0$ we can deduce

\begin{equation}
 \mathfrak{m}(l+d)=l+d+2
\label{eq:model NLO Lphi2 infinite logs m1}
\end{equation}

and

\begin{equation}
 \mathfrak{m}(d-l-1)=\left\{\begin{array}{ll} l-d-2\quad & \text{for }l\geq d+2\\
 d+1-l & \text{for }l< d+2
\end{array} \right.
\end{equation}
and perform the following manipulations

\begin{equation}
\begin{split}
 &\log r\left[\sum_{l=0}^{\infty}\frac{\braket{jl\, 00}{(l+d+2)\, 0}^2}{2(l+d)+5}+\sum_{l=0}^{d+1}\frac{\braket{jl\, 00}{(d+1-l)\, 0}^2}{2(d-l)+3}- \sum_{l=d+2}^{\infty}\frac{\braket{jl\, 00}{(l-d-2)\, 0}^2}{2(l-d)-3}\right]\\
=&\log r\, \left[\sum_{l=0}^{\infty}\begin{pmatrix}j & l & (l+d+2)\\ 0 & 0 & 0    \end{pmatrix}^2+\sum_{l=0}^{d+1} \begin{pmatrix}j & l & d+1-l\\ 0 & 0 & 0    \end{pmatrix}^2 - \sum_{l=d+2}^{\infty}\begin{pmatrix}j & l & (l-d-2)\\ 0 & 0 & 0    \end{pmatrix}^2\right]\\
=&\log r\, \sum_{l=0}^{d+1} \begin{pmatrix}j & l & d+1-l\\ 0 & 0 & 0    \end{pmatrix}^2\quad ,
\end{split}
\label{eq:model NLO Lphi2 infinite sums log result}
\end{equation}
which is clearly finite for given $d$ and $j$. Because of the mentioned antisymmetry around $d=-2$, the result for arbitrary $d$ is

\begin{equation}
\begin{split}
&\log r\sum_{l=0}^{\infty}\left(\frac{\braket{jl\, 00}{\mathfrak{m}(l+d)\, 0}^2}{2(l+d)+5}+\frac{\braket{jl\, 00}{\mathfrak{m}(d-l-1)\, 0}^2}{2(d-l)+3}\right)=\\
&\qquad \operatorname{sign}(d+2)\sum_{l=0}^{|d+2|-1} \begin{pmatrix}j & l & |d+2|-1-l\\ 0 & 0 & 0    \end{pmatrix}\log r\quad ,
\end{split}
\label{eq:model NLO Lphi2 infinite sums log result general}
\end{equation}
just as we claimed in eqs.\eqref{eq:model NLO Lphi2 remainder 2} and \eqref{eq:model NLO Lphi2 remainder 3}. Note that this expression is zero when $d+j=\text{even}$, since in this case the $3j$-symbol vanishes due to the parity condition.\\

The remaining expressions in eq.\eqref{eq:model NLO Lphi2 infinite sums formula} are of the typical form discussed in appendix \ref{app:characteristic sum}, where it was observed that sums of this type behave very differently for varying choices of parameters $l$, $j$ and $d$ (corresponding to $l_1,l_2$ and $a$ in eq.\eqref{eq:app character sum definition}). For that reason, we distinguish different cases in the analysis of the above formula: The case $d+2> j$, the case $d+2 \leq j$ and $d+j=\text{odd}$ and the case $d+2 \leq j$ and $d+j=\text{even}$ (still assuming $d+2\geq 0$). These cases are related to the number $q$ of annihilation operators in the left representative of eq.\eqref{eq:model NLO Lphi2 infinite sums formula}, as we will see in the following.

\subsubsection*{The case $q\neq 2$}{\ \\}

The fact that the result for $(R_1)_{\varphi^2}(q\neq 2)$ is much simpler than for $(R_1)_{\varphi^2}(q= 2)$ is related to the following lemma, which cannot be applied in the latter case.

\begin{lemma}\label{lem:model NLO Lphi2 infinite sums}{\ \\}
 \begin{equation}
\begin{split}
 \sum_{l=0}^{\infty} &\left(\sum_{\atop{J=|l-j|}{J\neq\mathfrak{m}(l+d)}}^{l+j} \frac{\braket{j l 0 0}{J 0}^2}{(l+d+2)(l+d+3)-J(J+1)}+\sum_{\atop{J=|l-j|}{J\neq\mathfrak{m}(l-d-1)}}^{l+j} \frac{\braket{j l 0 0}{J 0}^2}{(l-d-2)(l-d-1)-J(J+1)}\right)\\
 =&\begin{cases}\sum\limits_{l=0}^{|d+2|-1}\sum\limits_{\atop{J=|l-j|}{\text{denom.}\neq 0}}^{l+j}  \frac{\braket{j l 0 0}{J 0}^2}{(l-|d+2|)(l-|d+2|+1)-J(J+1)}\qquad & \text{if }|d+2|>j \text{ or } d+j=\text{odd}\\
 0 &  \text{if }|d+2|>j \text{ and } d+j=\text{even} \end{cases}
\end{split}
\label{eq:model NLO Lphi2 infinite sums formula result 1}
\end{equation}
\end{lemma}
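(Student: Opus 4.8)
The plan is to prove Lemma \ref{lem:model NLO Lphi2 infinite sums} by reducing both infinite sums to a single finite sum via the machinery of Appendix \ref{app:characteristic sum}, in particular the ``characteristic sum'' $S(l_1,l_2;a)$ of eq.\eqref{eq:app character sum definition}. The first step is purely notational: rewrite the left-hand side so that each inner sum over $J$ is recognized as an instance of $S(\cdot,\cdot;\cdot)$. Indeed, the first inner sum is $S(l,j;-(l+d+2))$ --- up to checking that $a(a+1)-J(J+1)$ is invariant under $a\mapsto -(a+1)$, which it is --- so it equals $S(l,j;l+d+1)$. The second inner sum is directly $S(l,j;l-d-2)$ after a similar manipulation of the denominator $(l-d-2)(l-d-1)-J(J+1)$, i.e. writing it as $a(a+1)-J(J+1)$ with $a=l-d-2$. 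Thus the left-hand side becomes $\sum_{l=0}^\infty\big[S(l,j;l+d+1)+S(l,j;l-d-2)\big]$ (with the understanding that terms where the denominator would vanish are simply absent, as in the definition).

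Next I would invoke the case analysis of Appendix \ref{app:characteristic sum}. The key structural input is that $S(l_1,l_2;a)$ collapses dramatically when $a$ lies outside the triangle $[\,|l_1-l_2|,\,l_1+l_2\,]$ (Section \ref{app:subsec:triangle}) or when $l_1+l_2+a$ is odd (Section \ref{app:subsec: odd}, eqs.\eqref{eq:app character sum odd}--\eqref{eq:app character sum odd vanish}). For the first sum $S(l,j;l+d+1)$: when $d+2\geq 0$ the parameter $a=l+d+1\geq l+1>|l-j|$ could still be inside the triangle, so I would use eq.\eqref{eq:app character sum greater}/\eqref{eq:app character sum smaller} to express it as a $\log(1-y)$-weighted Legendre integral, \emph{or} --- more cleanly --- use eq.\eqref{eq:app character sum odd} when $l+j+d$ is odd. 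The point of the lemma's two-case conclusion is exactly this parity split: when $d+j$ is even and $|d+2|>j$, each surviving $S$-term vanishes by the odd/triangle arguments, giving the ``$0$'' branch; otherwise a residual finite sum survives. I would therefore carry out the algebraic cancellation between $S(l,j;l+d+1)$ and $S(l,j;l-d-2)$ term by term in $l$, showing that the contributions with $l\geq |d+2|$ cancel in pairs (this mirrors the cancellation already seen in the logarithmic terms, eq.\eqref{eq:model NLO Lphi2 infinite sums log result}, where $\sum_{l=0}^\infty - \sum_{l=d+2}^\infty$ telescopes to $\sum_{l=0}^{d+1}$), leaving only $0\leq l\leq |d+2|-1$. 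For those remaining $l$, the second $S$-term has $a=l-d-2$ with $|l-d-2|=|d+2|-l$, and one reads off the denominator $(l-|d+2|)(l-|d+2|+1)-J(J+1)$ claimed on the right-hand side.

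The main obstacle will be controlling the \emph{infinite} tail rigorously: a priori neither $\sum_l S(l,j;l+d+1)$ nor $\sum_l S(l,j;l-d-2)$ need converge individually once one expands $S$ into its Legendre-integral form, so the cancellation must be organized carefully before passing any limits --- much as eq.\eqref{eq:app character sum limit} introduces an auxiliary $\nu\to a$ limit to make Dougall's formula \eqref{eq:app character sum dougall} applicable. Concretely, I would first truncate at $l\leq L$, establish the pairwise cancellation of the $l\in[|d+2|,L]$ block using the symmetry of the denominators under $d+2\mapsto -(d+2)$ together with $S(l,j;l+d+1)=-S(l,j;-(l+d+2))$ and the parity/triangle vanishing, then let $L\to\infty$, noting that the uncancelled ``boundary'' terms at $l\approx L$ vanish because the relevant $3j$-symbols/$S$-values decay (polynomially, by eq.\eqref{eq:app character sum odd outside}). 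The parity bookkeeping --- tracking when $\braket{jl\,00}{J0}$ vanishes via eq.\eqref{eq:app spherical symm D=3 CG parity} and when the odd-degree Legendre integral \eqref{eq:app character sum odd vanish} kills a term --- is the delicate part and is where the two cases in the statement genuinely diverge; everything else is bookkeeping that parallels the already-completed logarithmic computation in eqs.\eqref{eq:model NLO Lphi2 infinite sums log result}--\eqref{eq:model NLO Lphi2 infinite sums log result general}.
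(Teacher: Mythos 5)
Your proposal follows essentially the same route as the paper's own proof: both identify the inner sums over $J$ with the characteristic sum $S(l_1,l_2;a)$ of Appendix \ref{app:characteristic sum}, apply eqs.\eqref{eq:app character sum greater} and \eqref{eq:app character sum smaller} in the case $|d+2|>j$ and eq.\eqref{eq:app character sum odd} in the case $d+j=\text{odd}$ to cancel the infinite tails by the index shift $l\to l+d+2$, and then dispose of the residual finite sum in the overlap case ($|d+2|>j$ and $d+j=\text{even}$) via the parity identities \eqref{eq:app character sum odd vanish} and \eqref{eq:app character sum odd outside} together with the reflection $l\mapsto d+1-l$. One bookkeeping correction: the first inner sum corresponds to $a=l+d+2$ (equivalently $a=-(l+d+3)$), not $a=l+d+1$, and since $a(a+1)$ is invariant under $a\mapsto -(a+1)$ the relation used in your truncation step should read $S(l,j;l+d+2)=S(l,j;-(l+d+3))$ \emph{without} the minus sign.
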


\begin{proof}
 We assume $d+2>0$ for convenience (recall that the results for negative values of $d+2$ may be obtained from the symmetry around $d=-2$). Let us first consider the case $d+2> j$. Then also $l+d+2> l+j $, which according to eq.\eqref{eq:app character sum greater} yields the simplification

\begin{equation}
 \sum_{l=0}^{\infty}\sum_{J=|l-j|}^{l+j}  \frac{\braket{j l 0 0}{J 0}^2}{(l+d+2)(l+d+3)-J(J+1)}=-\frac{1}{2}\sum_{l=0}^{\infty}\int_{-1}^1P_{l}(y)P_j(y)P_{l+d+2}(y)\log(1-y)\, \text{d}y
\label{eq:model NLO Lphi2 infinite sums cancel 1}
\end{equation}
for the first sum we want to discuss. Note that here the denominator does not vanish for any value of $J$ due to our assumption for $d$. Now let us come to the second sum and first consider only the case $l\geq d+2$ neglecting the remaining finite sum. Then it is evident, that the inequality $l-d-2<|l-j|$ holds, so that we can apply eq.\eqref{eq:app character sum smaller} in order to obtain

\begin{equation}
\begin{split}
 \sum_{l=d+2}^{\infty}\sum_{J=|l-j|}^{l+j} \frac{\braket{j l 0 0}{J 0}^2}{(l-d-2)(l-d-1)-J(J+1)}=&\frac{1}{2}\sum_{l=d+2}^{\infty}\int_{-1}^1P_{l}(y)P_j(y)P_{l-d-2}(y)\log(1-y)\, \text{d}y\\
 =&\frac{1}{2}\sum_{l=0}^{\infty}\int_{-1}^1P_{l+d+2}(y)P_j(y)P_{l}(y)\log(1-y)\, \text{d}y\, .
\end{split}
\label{eq:model NLO Lphi2 infinite sums cancel 2}
\end{equation}
Comparing the two equations above, we see that these infinite sums cancel

\begin{equation}
\begin{split}
 \sum_{l=0}^{\infty} &\left(\sum_{J=|l-j|}^{l+j} \frac{\braket{j l 0 0}{J 0}^2}{(l+d+2)(l+d+3)-J(J+1)}+\sum_{\atop{J=|l-j|}{J\neq d+1-l}}^{l+j} \frac{\braket{j l 0 0}{J 0}^2}{(l-d-2)(l-d-1)-J(J+1)}\right)\\
 =&\sum_{l=0}^{d+1}\sum_{\atop{J=|l-j|}{J\neq d+1-l}}^{l+j}  \frac{\braket{j l 0 0}{J 0}^2}{(l-d-2)(l-d-1)-J(J+1)}\quad ,
\end{split}
\label{eq:model NLO Lphi2 infinite sums formula result 2}
\end{equation}
which together with the symmetry around $d=-2$ confirms the first statement of our lemma.

We proceed with the case $d+j=\text{odd}$, which is just the type of sum considered in section \ref{app:subsec: odd} of the appendix. Therefore we may now use eq.\eqref{eq:app character sum odd}, which leads to the following simplifications (again assuming $d+2>0$ for the moment):

\begin{equation}
 \sum_{l=0}^{\infty}\sum_{\atop{J=|l-j|}{J\neq l+d+2}}^{l+j}  \frac{\braket{j l 0 0}{J 0}^2}{(l+d+2)(l+d+3)-J(J+1)}=-\frac{1}{2}\sum_{l=0}^{\infty}\int_{-1}^1P_{l}(y)P_j(y)Q_{l+d+2}(y)\, \text{d}y
\label{eq:model NLO Lphi2 infinite sums cancel 1b}
\end{equation}

\begin{equation}
\begin{split}
 \sum_{l=d+2}^{\infty}\sum_{\atop{J=|l-j|}{J\neq l-d-2}}^{l+j} \frac{\braket{j l 0 0}{J 0}^2}{(l-d-2)(l-d-1)-J(J+1)}=&-\frac{1}{2}\sum_{l=d+2}^{\infty}\int_{-1}^1P_{l}(y)P_j(y)Q_{l-d-2}(y)\, \text{d}y\\
 =&-\frac{1}{2}\sum_{l=0}^{\infty}\int_{-1}^1P_{l+d+2}(y)P_j(y)Q_{l}(y)\, \text{d}y\, .
\end{split}
\label{eq:model NLO Lphi2 infinite sums cancel 2b}
\end{equation}
As mentioned in the appendix, these integrals vanish if the triangle inequality is satisfied by the parameters (see eq.\eqref{eq:app character sum odd vanish}), and they cancel each other if this inequality is not satisfied (see eq.\eqref{eq:app character sum odd outside}). Therefore, also in this case we observe that the infinite sums cancel and we again obtain the result \eqref{eq:model NLO Lphi2 infinite sums formula result 1} as claimed in the lemma.

It remains to show that the sum under investigation vanishes if $|d+2|>j$ and $d+j=$even. In view of the previous results, this suggests that we have to show

\begin{equation}
\sum\limits_{l=0}^{|d+2|-1}\sum\limits_{\atop{J=|l-j|}{\text{denom.}\neq 0}}^{l+j}  \frac{\braket{j l 0 0}{J 0}^2}{(l-|d+2|)(l-|d+2|+1)-J(J+1)}=0 
\end{equation}
if $|d+2|>j$ and $d+j=$even. Let us again assume $d+2>0$ for convenience. Since $d+j+1=$odd, we may apply eq.\eqref{eq:app character sum odd} obtaining

\begin{equation}
\sum\limits_{l=0}^{d+1}\sum\limits_{\atop{J=|l-j|}{\text{denom.}\neq 0}}^{l+j}  \frac{\braket{j l 0 0}{J 0}^2}{(d+2-l)(d+1-l)-J(J+1)}=\sum_{l=0}^{d+1}\int_{-1}^1\text{d}y\, P_j(y)P_l(y)Q_{d+1-l}(y)
\end{equation}
According to eq.\eqref{eq:app character sum odd vanish} this integral vanishes if the inequality

\begin{equation}
 |l-j|\leq d+1-l\leq l+j
\end{equation}
is satisfied. This implies that only two partial sums remain:

\begin{equation}
\begin{split}
 \sum_{l=0}^{d+1}\int_{-1}^1\text{d}y\, P_j(y)P_l(y)Q_{d+1-l}(y)=\left(\sum_{l=0}^{\frac{d-j}{2}}+\sum_{l=\frac{d+j}{2}+1}^{d+1}\right)\int_{-1}^1\text{d}y\, P_j(y)P_l(y)Q_{d+1-l}(y)&\\
=\sum_{l=0}^{\frac{d-j}{2}}\left(\int_{-1}^1\text{d}y\, P_j(y)P_l(y)Q_{d+1-l}(y)+\int_{-1}^1\text{d}y\, P_j(y)Q_l(y)P_{d+1-l}(y)\right)=0&
\end{split}
\end{equation}
The second line follows if we change the summation index $l\to d+1-l$, and in the last equality we used eq.\eqref{eq:app character sum odd outside}. Thus, the proof of the lemma is complete.

\end{proof}

We are now ready to prove our results for $(R_1)_{\varphi^2}(q\neq 2)$, eqs.\eqref{eq:model NLO Lphi2 remainder 2} and \eqref{eq:model NLO Lphi2 remainder vanish}. Let us start with the case $q\in\{0,4\}$, i.e. in \eqref{eq:model NLO Lphi2 infinite sums formula} we consider only the contribution including four creation or four annihilation operators, $\mathbf{b}_{\pm l_1}\cdots\mathbf{b}_{\pm l_4}$. The specific form of the left representative $\Lo{0}{\varphi^4}{x}=:\left(\Lo{0}{\varphi}{x}\right)^4:$ then suggests that the power of $r$ in these contributions is $d=l_1+\ldots+l_4$ for the contribution containing only creation operators, and $d=-l_1-\ldots-l_4-4$ for the part including four annihilation operators. Further, coupling of the spherical harmonics $S_{l_1m_1}\cdots S_{l_4m_4}$ restricts the possible values of the \emph{spin} $j$ in eq.\eqref{eq:model NLO Lphi2 infinite sums formula} to $j= l_1+\ldots+l_4-2k$ for $k\in\mathbb{N}$ (due to the parity condition and the triangle inequality satisfied by the intertwiners). Hence, for $q\in\{0,4\}$ we have $|d+2|>j$ and $d+j=\text{even}$, which implies that the logarithmic contribution vanishes (see discussion below eq.\eqref{eq:model NLO Lphi2 infinite sums log result general}) and that also the remainig sums vanish by lemma \ref{lem:model NLO Lphi2 infinite sums}. Thus, $(R_1)_{\varphi^2}(d,j,q\in\{0,4\})=0$.

 Analogously, we find for $q\in\{1,3\}$ that either $d=l_1+\ldots+l_3-l_4-1$ or $d=l_1-l_2-\ldots-l_4-3$, while $j$ takes the same values as before. Thus, we are adealing with the case $d+j=$odd, which means that we find eq.\eqref{eq:model NLO Lphi2 infinite sums log result general} for the logarithmic contribution and that we may apply the first case in lemma \ref{lem:model NLO Lphi2 infinite sums} for the remaining sums. The result is eq.\eqref{eq:model NLO Lphi2 remainder 2}.

\subsubsection*{The case $q=2$}{\ \\}

It remains to verify our result for $(R_1)_{\varphi^2}(q=2)$, which appears to be more complicated than the two previous ones. Here lemma \ref{lem:model NLO Lphi2 infinite sums} can not be applied, since $|d+2|>j$ does not hold in general. Therefore we have to analyze eq.\eqref{eq:model NLO Lphi2 infinite sums formula} in its original form, i.e. without any simplifications. Nevertheless, also in the case at hand we are able to show that infinities do indeed cancel. Consider the part of eq.\eqref{eq:model NLO Lphi2 infinite sums formula} with $l>j$. We can perform the following algebraic manipulations:

\begin{equation}
\begin{split}
 &\sum_{l=j}^{\infty} \left(\sum_{\atop{J=|l-j|}{J\neq l+d+2}}^{l+j} \frac{\braket{j l 0 0}{J 0}^2}{(l+d+2)(l+d+3)-J(J+1)}+\sum_{\atop{J=|l-j|}{J\neq l-d-2}}^{l+j}\frac{\braket{j l 0 0}{J 0}^2}{(l-d-2)(l-d-1)-J(J+1)}\right)\\
 =& \sum_{l=0}^{\infty}\sum_{\atop{J=l}{J\neq l+d+j+2}}^{l+2j}  \frac{\braket{j\,(l+j) 0 0}{J 0}^2}{(l+j+d+2)(l+j+d+3)-J(J+1)}\\
+&\sum_{l=0}^{\infty}\sum_{\atop{J=l}{J\neq l+j-d-2}}^{l+2j}\frac{\braket{j\, (l+j) 0 0}{J 0}^2}{(l+j-d-2)(l+j-d-1)-J(J+1)}\\
=& \sum_{l=0}^{\infty}\sum_{\atop{J=0}{J\neq (d+j+2)/2}}^{j} \frac{\braket{j\, (l+j) 0 0}{(2J+l) 0}^2}{2(2J+l)+1}\left(\frac{1}{j+d+2-2J}-\frac{1}{2l+j+d+2+2J+1}\right)\\
+&\sum_{l=0}^{\infty}\sum_{\atop{J=0}{J\neq (j-d-2)/2}}^{j} \frac{\braket{j\, (l+j) 0 0}{(2J+l) 0}^2}{2(2J+l)+1}\left(\frac{1}{j-d-2-2J}-\frac{1}{2l+j-d-2+2J+1}\right)
\end{split}
\label{eq:model NLO Lphi2 infinite sums d+j=even manipulation}
\end{equation}
Here we simply changed the summation limits of the sums over $l$ and $J$ and expanded partial fractions in the last step. In the last equation we also made use of the parity condition on the Clebsch-Gordan coefficient, which restricts the sum over $J$ to even values. We can now express the Clebsch-Gordan coefficient explicitly through gamma functions by eq.\eqref{eq:app spherical symm D=3 CG <-> Gamma fct} and write the resulting infinite sum as a hypergeometric series (see Appendix \ref{app:hypergeoemtric functions}) , which leads us to the rather messy formula

\begin{equation}
\begin{split}
&\sum_{l=j}^{\infty} \left(\sum_{\atop{J=|l-j|}{J\neq l+d+2}}^{l+j} \frac{\braket{j l 0 0}{J 0}^2}{(l+d+2)(l+d+3)-J(J+1)}+\sum_{\atop{J=|l-j|}{J\neq l-d-2}}^{l+j}\frac{\braket{j l 0 0}{J 0}^2}{(l-d-2)(l-d-1)-J(J+1)}\right)\\
 =&\sum_{\atop{J=0}{J\neq (d+j+2)/2}}^j \frac{\Gamma(J+\frac{1}{2})^2\Gamma(j-J+\frac{1}{2})\Gamma(J+j+1)}{\Gamma(J+1)^2\Gamma(j-J+1)\Gamma(J+j+\frac{3}{2})}
 \\ &\qquad\times\left(\frac{-\pFq{4}{3}{1,J+\frac{1}{2},J+j+1, J+\frac{j+d+3}{2}}{J+1, J+j+\frac{3}{2}, J+\frac{j+d+5}{2}}{1}}{2J+j+d+3} +
  \frac{\pFq{3}{2}{1,J+\frac{1}{2},J+j+1}{J+1, J+j+\frac{3}{2}}{1}}{j+d+2-2J}   \right)\\
+&\sum_{\atop{J=0}{J\neq (j-d-2)/2}}^j \frac{\Gamma(J+\frac{1}{2})^2\Gamma(j-J+\frac{1}{2})\Gamma(J+j+1)}{\Gamma(J+1)^2\Gamma(j-J+1)\Gamma(J+j+\frac{3}{2})}
 \\
 &\qquad \times\left( \frac{-\pFq{4}{3}{1,J+\frac{1}{2},J+j+1, J+\frac{j-d-1}{2}}{J+1, J+j+\frac{3}{2}, J+\frac{j-d+1}{2}}{1}}{2J+j-d-1} +\frac{\pFq{3}{2}{1,J+\frac{1}{2},J+j+1}{J+1, J+j+\frac{3}{2}}{1}}{j-d-2-2J}\right)
\end{split}
\label{eq:model NLO Lphi2 hypergeometrics}
\end{equation}
This form, despite its complicated appearance, allows us to analyze the divergent behavior of the infinite series by simply investigating the parameters of the hypergeometric functions involved. As mentioned in the appendix, hypergeometric series of the general form

\begin{equation}
 \pFq{p+1}{p}{\alpha_1,\ldots,\alpha_{p+1} }{\beta_1,\ldots,\beta_{p}}{1}
\end{equation}
converge for $\sum_j\beta_j-\sum_i\alpha_i=:k>0$. Thus, we immediately see that the hypergeometric functions $_4F_3$ in eq.\eqref{eq:model NLO Lphi2 hypergeometrics} are convergent, as they are \emph{1-balanced}, i.e. $k=1$. The series of the type $_3F_2$, on the other hand, are \emph{0-balanced}, and should thus not be expected to converge. The precise divergent behavior can be analyzed with the help of eq.\eqref{eq:app hypergeos zero balanced limit} from the appendix, which tells us that these hypergeometric series approach infinity as

\begin{equation}
\begin{split}
 \frac{\Gamma(\alpha_1)\Gamma(\alpha_2)\Gamma(\alpha_3)}{\Gamma(\beta_1)\Gamma(\beta_2)} &\pFq{3}{2}{\alpha_1, \alpha_2, \alpha_3}{\beta_1, \beta_2}{1-\varepsilon}= 2\psi(1)-\psi(\alpha_1)-\psi(\alpha_2)-\log\varepsilon\\
 +&\frac{(\beta_1-\alpha_3)(\beta_2-\alpha_3)}{\alpha_1\alpha_2} \pFq{4}{3}{1,1,\beta_1-\alpha_3+1,\beta_2-\alpha_3+1}{2,\alpha_1+1,\alpha_2+1}{1}+ O(\varepsilon)
\end{split}
\end{equation}
if the series on the left is zero balanced and if $\operatorname{Re}(\alpha_3)>0$. Applying this formula to eq.\eqref{eq:model NLO Lphi2 hypergeometrics} and extracting just the divergent part (i.e. the part proportional to the logarithm), we obtain

\begin{equation}
\begin{split}
 \lim_{\varepsilon\to 0}&\left(\sum_{\atop{J=0}{J\neq	(j+d+2)/2}}^j \frac{\Gamma(J+\frac{1}{2})\Gamma(j-J+\frac{1}{2})}{\Gamma(J+1)\Gamma(j-J+1)}\frac{\log\varepsilon}{j+d+2-2J}\right. \\ 
 &+\left.\sum_{\atop{J=0}{J\neq	(j-d-2)/2}}^j \frac{\Gamma(J+\frac{1}{2})\Gamma(j-J+\frac{1}{2})}{\Gamma(J+1)\Gamma(j-J+1)}\frac{\log\varepsilon}{j-d-2-2J}\right)=0
\end{split}
\end{equation}
Thus, we have indeed verified that all infinities cancel without the need of any exterior renormalization procedure. However, we do not obtain a result as simple as eq.\eqref{eq:model NLO Lphi2 infinite sums formula result 1} in this case, as the remaining hypergeometric functions of the type $_4F_3$ in eq.\eqref{eq:model NLO Lphi2 hypergeometrics} do not seem to cancel in general, leading to the complicated result

\begin{equation}
 \begin{split}
   &\sum_{l=0}^{\infty}  \left(\sum_{\atop{J=|l-j|}{J\neq j+d+2}}^{l+j}\frac{\braket{j l 0 0}{J 0}^2}{(l+d+2)(l+d+3)-J(J+1)}+\sum_{\atop{J=|l-j|}{J\neq\mathfrak{m}(d-l-1)}}^{l+j}\frac{\braket{j l 0 0}{J 0}^2}{(l-d-2)(l-d-1)-J(J+1)}\right)\\
 &=\sum_{l=0}^{j-1}  \left(\sum_{\atop{J=|l-j|}{J\neq j+d+2}}^{l+j}\frac{\braket{j l 0 0}{J 0}^2}{(l+d+2)(l+d+3)-J(J+1)}+\sum_{\atop{J=|l-j|}{J\neq\mathfrak{m}(d-l-1)}}^{l+j}\frac{\braket{j l 0 0}{J 0}^2}{(l-d-2)(l-d-1)-J(J+1)}\right)\\
&-\sum_{\atop{J=0}{J\neq(j+d+2)/2}}^j\frac{\Gamma(J+\frac{1}{2})^2\Gamma(j-J+\frac{1}{2})\Gamma(J+j+1)}{\Gamma(J+1)^2\Gamma(j-J+1)\Gamma(J+j+\frac{3}{2})}\frac{\pFq{4}{3}{1,J+\frac{1}{2},J+j+1, J+\frac{j+d+3}{2}}{J+1, J+j+\frac{3}{2}, J+\frac{j+d+5}{2}}{1}}{2J+j+d+3}\\
&-\sum_{\atop{J=0}{J\neq(j-d-2)/2}}^j\frac{\Gamma(J+\frac{1}{2})^2\Gamma(j-J+\frac{1}{2})\Gamma(J+j+1)}{\Gamma(J+1)^2\Gamma(j-J+1)\Gamma(J+j+\frac{3}{2})}\frac{\pFq{4}{3}{1,J+\frac{1}{2},J+j+1, J+\frac{j-d-1}{2}}{J+1, J+j+\frac{3}{2}, J+\frac{j-d+1}{2}}{1}}{2J+j-d-1}\\
&+\sum_{J=0}^j \left(\pFq{4}{3}{1,1,J+1,J+j+\frac{3}{2}}{2,J+\frac{3}{2}, J+j+2}{1}\, \frac{J(J+j+\frac{1}{2})}{(J+\frac{1}{2})(J+j+1)}+2\psi(1)-\psi(J+\frac{1}{2})-\psi(J+j+1)\right)\\
&\times \frac{\Gamma(J+\frac{1}{2})\Gamma(j-J+\frac{1}{2})}{\Gamma(J+1)\Gamma(j-J+1)}\, \left(\frac{1}{j+d+2-2J}\Big|_{J\neq\frac{j+d+2}{2}}+\frac{1}{j-d-2-2J}\Big|_{J\neq \frac{j-d-2}{2}}\right)\\
&=:R_{(q=2)}(d,j)
 \end{split}
\label{eq:model NLO Lphi2 infinite sums formula result general}
\end{equation}
This completes the proof of result \ref{res:model NLO Lphi2}.\par\hfill\qedsymbol\par
It should be noted that the above discussion, especially eq.\eqref{eq:model NLO Lphi2 infinite sums formula result general}, holds in general, i.e. in all three cases we distinguished, since we did not put any restrictions on $d$ or $j$. Thus, we could have shown finiteness for all choices of $j$ and $d$ in just this one step. However, the particularly simple result \eqref{eq:model NLO Lphi2 infinite sums formula result 1} does not seem to follow so easily from eq.\eqref{eq:model NLO Lphi2 infinite sums formula result general}. Furthermore, we wanted to emphasize the contrast between the calculational simplicity of the first two cases (due to the identities of sections \ref{app:subsec:triangle} and \ref{app:subsec: odd}) and the complicated expressions needed in the analysis of the last case.

\section{Proof of result 3.4.7}

Translating the three diagrams in eq.\eqref{eq:model NLO Lphi3 remainder form b} into an equation, we find

\begin{equation}
\begin{split}
&\begin{tikzpicture}
  \node at (4.6,0) {$x$} child[level distance = 1cm,sibling distance=1cm]{[fill] circle (1.5pt) node(5){} };
\node at (6,0) {$y$} child[level distance = 1cm,sibling distance=1.5cm]{[fill] circle (1.5pt) node(6){} } child[level distance=.5cm,sibling distance=1cm] {[fill] circle (1.5pt) child[sibling distance=.25cm] {[fill] circle (1.5pt) node(7){} } child[sibling distance=.25cm] {[fill] circle (1.5pt)  } child[sibling distance=.25cm] {[fill] circle (1.5pt)  } child[sibling distance=.25cm] {[fill] circle (1.5pt)  } child[sibling distance=.25cm] {[fill] circle (1.5pt) node(8){} }};
\node at (7.6,0) {$x$} child[level distance = 1cm,sibling distance=1cm]{[fill] circle (1.5pt) node(9){} }; \node at (9,0) {$y$} child[level distance=.5cm,sibling distance=.75cm] {[fill] circle (1.5pt) child[sibling distance=.25cm] {[fill] circle (1.5pt) node(10){} } child[sibling distance=.25cm] {[fill] circle (1.5pt)  } child[sibling distance=.25cm] {[fill] circle (1.5pt) } child[sibling distance=.25cm] {[fill] circle (1.5pt)  } child[sibling distance=.25cm] {[fill] circle (1.5pt) node(11){} }}child[level distance = 1cm,sibling distance=1cm]{[fill] circle (1.5pt) node(12){} };
\node at (7.35,-.5){$+$};
\node at (9.8,-.5){$+$};
\path [<-, draw,dashed] (5) -- +(0,-.5) -| (8);
\path [<-, draw,dashed] (6) -- +(0,-.3) -| (7);
\path [<-, draw,dashed] (9) -- +(0,-.5) -| (10);
\path [<-, draw,dashed] (11) -- +(0,-.5) -| (12);
\node at (10.6,0) {$x$}  child[level distance=.5cm,sibling distance=1cm] {[fill] circle (1.5pt) child[sibling distance=.25cm] {[fill] circle (1.5pt) node(13){} } child[sibling distance=.25cm] {[fill] circle (1.5pt)  } child[sibling distance=.25cm] {[fill] circle (1.5pt) } child[sibling distance=.25cm] {[fill] circle (1.5pt) } child[sibling distance=.25cm] {[fill] circle (1.5pt) node(14){} }};
\node at (12.2,0) {$y$} child[level distance = 1cm,sibling distance=1cm]{[fill] circle (1.5pt) node(15){} }  child[level distance = 1cm,sibling distance=1cm]{[fill] circle (1.5pt) node(16){} };
\path [<-, draw,dashed] (13) -- +(0,-.5) -| (16);
\path [<-, draw,dashed] (14) -- +(0,-.3) -| (15);
\node at (14.5,-.5){$-\log|x-y|\Lo{0}{\varphi^3}{x}\Big]=$};
\node at (2.6,-.5){$(R_1)_{\varphi^3}(x)=20\lim\limits_{y\to x}\Big[$};
\end{tikzpicture}\\
20\sum_{l,l'=0}^{\infty}&\sum_{\atop{J_1,J_2=0}{M_1,M_2}}^{\infty}\sum_{d=-\infty}^{\infty}\sum_j  (2J_1+1)(2J_2+1)\begin{pmatrix} j & l' & J_1 \\ 0& 0& 0 \end{pmatrix}^2 \begin{pmatrix} J_1 & l & J_2 \\ 0& 0& 0 \end{pmatrix}^2 \Lo{0}{\varphi^3}{x;d}_{jm}\, r^dS_{jm}(\hat{x})\\
 \times \Big(&\frac{\eta^{l'+1}}{(l+l'+d+2)(l+l'+d+3)-J_2(J_2+1)}\Big|_{J_2 \neq \mathfrak{m}(l+l'+d)}+ \frac{\eta^{l'+1}\log r_y}{2(l+l'+d)+5}\delta_{J_2,\mathfrak{m}(l+l'+d)}\\
 +&\frac{\eta^{l'+1}}{(l'-l+d+1)(l'-l+d+2)-J_2(J_2+1)}\Big|_{J_2 \neq \mathfrak{m}(d+l'-l-1)}+ \frac{\eta^{l'+1}\log r_y}{2(l'-l+d)+3}\delta_{J_2,\mathfrak{m}(d+l'-l-1)} \\
+&\frac{\eta^{l+l'+2}}{(l+l'-d-1)(l+l'-d)-J_2(J_2+1)}\Big|_{J_2 \neq \mathfrak{m}(d-l-l'-2)}- \frac{\eta^{l+l'+2}\log r_x}{2(l+l'-d)-1}\delta_{J_2,\mathfrak{m}(d-l-l'-2)}\Big)\\
&-20\log|x-y|\Lo{0}{\varphi^3}{x}
\end{split}
\label{eq:model NLO Lphi3 CT formula}
\end{equation}
with $\eta:=|y|/|x|$. Here we used the coupling rules for the spherical harmonics (eq.\eqref{eq:app spherical symm D=3 spherical harmonics coupling}) and inserted the specific form of the factor $D$ (eq.\eqref{eq:model NLO Lphi D}), which is obtained at the vertex in the diagrams. In the intermediate steps of the following calculation we omit the sums over $d$ and $j$ and the expression $\Lo{0}{\varphi^3}{x;d}_{jm}\, r^dS_{jm}(\hat{x})$, since this is independent of the contraction index $l$ and we are mainly interested in the sum over this index. We can easily restore the omitted expressions at the end of the calculation.

\subsubsection*{The logarithmic terms}

Let us start our inspection of eq.\eqref{eq:model NLO Lphi3 CT formula} with the terms containing logarithms. Replacing $d\to l'+d$ in eq.\eqref{eq:model NLO Lphi2 infinite sums log result}, we obtain just the first two logarithmic terms of the formula above (neglecting some prefactors independent of $l$). This suggests

\begin{equation}
\begin{split}
 &\sum_{l=0}^{\infty}\left(\braket{J_1l\, 00}{\mathfrak{m}(l+l'+d)\, 0}^2\frac{\eta^{l'+1}\log r_y}{2(l+l'+d)+5}+\braket{J_1l\, 00}{\mathfrak{m}(d+l'-l-1)\, 0}^2\frac{\eta^{l'+1}\log r_y}{2(l'-l+d)+3}\right)\\
 =&\operatorname{sign}(l'+d+2)\sum_{l=0}^{|l'+d+2|-1}\begin{pmatrix}
                        J_1 & l & |l'+d+2|-1-l\\ 0 & 0 & 0
                       \end{pmatrix}^2\eta^{l'+1}\log r_y
\end{split}
\label{eq:model NLO Lphi3 CT formula logs1}
\end{equation}
Let us assume $d+1\geq 0$ for the moment. Then the partial sum with $l+l'\geq d+1$ over the logarithmic term in the last line of eq.\eqref{eq:model NLO Lphi3 CT formula} becomes

\begin{equation}
\begin{split}
 -&\sum_{\atop{l,l'=0}{l+l'\geq d+1}}^{\infty}\sum_{J_1} \braket{jl\, 00}{J_1\, 0}^2 \braket{J_1l'\, 00}{(l+l'-d-1)\, 0}^2\frac{\eta^{l+l'+2}\log r_x}{2(l+l'-d)-1}\\
=-&\sum_{\atop{l,l'=0}{l'\geq d+1-l}}^{\infty}\sum_{J_1} \braket{jl\, 00}{J_1\, 0}^2 \begin{pmatrix}
                        J_1 & l' & l+l'-d-1\\ 0 & 0 & 0
                       \end{pmatrix}^2\, \eta^{l+l'+2}\log r_x\\
=-&\left(\sum_{l=0}^{d+1}\sum_{l'=d+1-l}^{\infty}+\sum_{l=d+2}^{\infty}\sum_{l'=0}^{\infty}\right)\sum_{J_1} \braket{jl\, 00}{J_1\, 0}^2 \begin{pmatrix}
                        J_1 & l' & l+l'-d-1\\ 0 & 0 & 0
                       \end{pmatrix}^2\, \eta^{l+l'+2}\log r_x
\end{split}
\end{equation}
Further, under this assumption equation \eqref{eq:model NLO Lphi3 CT formula logs1} takes the form

\begin{equation}
\begin{split}
 &\sum_{l'=0}^{\infty}\sum_{l=0}^{l'+d+1}\sum_{J_1} \braket{jl\, 00}{J_1\, 0}^2 \braket{J_1l'\, 00}{(l'+d+1-l)\, 0}^2\frac{\eta^{l'+1}\log r_y}{2(l'-l+d)+3}\\
=&\left(\sum_{l=0}^{d+1}\sum_{l'=0}^{\infty}+\sum_{l=d+2}^{\infty}\sum_{l'=l-(d+1)}^{\infty}\right)\sum_{J_1} \braket{jl\, 00}{J_1\, 0}^2 \begin{pmatrix}
                        J_1 & l' & l'+d+1-l\\ 0 & 0 & 0
                       \end{pmatrix}^2\, \eta^{l'+1}\log r_y
\end{split}
\end{equation}
Comparing the two equations above it is easy to verify that these sums cancel in the limit $\eta\to 1$ (i.e. $y\to x$), so that only the finite sum

\begin{equation}
 20 \sum_{l,l'=0}^{l+l'\leq d}\sum_{J_1} \braket{jl\, 00}{J_1\, 0}^2 \begin{pmatrix}
                        J_1 & l' & d-l-l'\\ 0 & 0 & 0
                       \end{pmatrix}^2\, \log r_x
\end{equation}
remains. Note that the sum of the three logarithmic series in eq.\eqref{eq:model NLO Lphi3 CT formula} is antisymmetric around $d=-3/2$. Hence, for $d<-1$ we may simply multiply our result by $\operatorname{sign}(d+3/2)$ and replace $d$ with $|d+3/2|-3/2$. Thus

\begin{equation}
 \begin{split}
  20&\sum_{d=-\infty}^{\infty}\sum_{j=0}\Lo{0}{\varphi^3}{x;d}_{jm}\, r^dS_{jm}(\hat{x})\sum_{l,l'=0}^{\infty}\sum_{J_1}\braket{j l'\, 00}{J_1\, 0}^2\, \log r_x\\
 \times&\left(\frac{\braket{J_1 l\, 00}{\mathfrak{m}(l+l'+d)\, 0}}{2(l+l'+d)+5}+\frac{\braket{J_1 l\, 00}{\mathfrak{m}(d+l'-l-1)\, 0}}{2(l'-l+d)+3}-\frac{\braket{J_1 l\, 00}{\mathfrak{m}(d-l-l'-2)\, 0}}{2(l+l'-d)-5}\right)\\
 =20&\sum_{d=-\infty}^{\infty}\sum_{j=0}\operatorname{sign}(d+3/2)\Lo{0}{\varphi^3}{x;d}_{jm} r^dS_{jm}(\hat{x})\\
 &\qquad\times\sum_{l,l'=0}^{l+l'\leq|d+\frac{3}{2}|-\frac{3}{2}}\sum_{J_1} \braket{jl\, 00}{J_1\, 0}^2 \begin{pmatrix}
                        J_1 & l' & |d+\frac{3}{2}|-\frac{3}{2}-l-l'\\ 0 & 0 & 0
                       \end{pmatrix}^2\, \log r_x
 \end{split}
\label{eq:model NLO Lphi3 CT formula logs result}
\end{equation}
which is in accordance with the logarithmic terms in eqs.\eqref{eq:model NLO Lphi3 remainder q=0} and \eqref{eq:model NLO Lphi3 remainder q=5}.

\subsubsection*{Vanishing partial sum}

Now consider the follwing partial sum of eq.\eqref{eq:model NLO Lphi3 CT formula}:

\begin{equation}
 \begin{split}
  &20\sum_{l,l'=0}^{\infty}\sum_{\atop{J_1,J_2}{M_1,M_2}}\frac{\eta^{l+1}\braket{jl\, 00}{J_1\, 0}^2\braket{J_1 l'\, 00}{J_2\, 0}^2}{(l+l'+d+2)(l+l'+d+3)-J_2(J_2+1)}\Big|_{\text{denom.}\neq 0} \\
+&20\sum_{l=0}^{\infty}\sum_{l'=l+d+2}^{\infty}\sum_{\atop{J_1,J_2}{M_1,M_2}}\frac{\eta^{l+1}\braket{jl\, 00}{J_1\, 0}^2\braket{J_1 l'\, 00}{J_2\, 0}^2}{(l'-l-d-2)(l'-l-d-1)-J_2(J_2+1)}\Big|_{\text{denom.}\neq 0}\, ,
 \end{split}
\end{equation}
 As in the calculation of the remainder $(R_1)_{\varphi^2}$ in the previous section, this expression again behaves very differently for varying values of $d$ and $j$. Hence, we again distinguish different values of the number of annihilation operators among the three operators constituting $\Lo{0}{\varphi^3}{x}$ in eq.\eqref{eq:model NLO Lphi3 CT formula}, where the notation $\Lo{0}{\varphi^3}{x;q}$ will be used in order to indicate this grading of the left representative.

Let us start with the case $q=0$, which implies $d\geq j$ and $d+j=\text{even}$. Thus, we may use eq.\eqref{eq:app character sum greater} from the appendix in order to simplify

\begin{equation}
\begin{split}
 &20\sum_{l,l'=0}^{\infty}\sum_{\atop{J_1,J_2}{M_1,M_2}}\frac{\braket{jl\, 00}{J_1\, 0}^2\braket{J_1 l'\, 00}{J_2\, 0}^2}{(l+l'+d+2)(l+l'+d+3)-J_2(J_2+1)}\Big|_{\text{denom.}\neq 0} \\
 =&-10\sum_{l,l'=0}^{\infty}\sum_{J=|j-l|}^{j+l}\braket{jl\, 00}{J\, 0}^2\int_{-1}^1 P_{J}(z)P_{l'}(z)P_{l+l'+d+2}(z)\log(1-z)\text{d}z
\end{split}
\end{equation}
and

\begin{equation}
\begin{split}
 &20\sum_{l=0}^{\infty}\sum_{l'=l+d+2}^{\infty}\sum_{\atop{J_1,J_2}{M_1,M_2}}\frac{\braket{jl\, 00}{J_1\, 0}^2\braket{J_1 l'\, 00}{J_2\, 0}^2}{(l'-l-d-2)(l'-l-d-1)-J_2(J_2+1)}\Big|_{\text{denom.}\neq 0} \\
 =&10\sum_{l=0}^{\infty}\sum_{l'=l+d+2}^{\infty}\sum_{J=|j-l|}^{j+l}\braket{jl\, 00}{J\, 0}^2\int_{-1}^1 P_{J}(z)P_{l'}(z)P_{l'-l-d-2}(z)\log(1-z)\text{d}z\\
=&10\sum_{l,l'=0}^{\infty}\sum_{J=|j-l|}^{j+l}\braket{jl\, 00}{J\, 0}^2\int_{-1}^1 P_{J}(z)P_{l'+l+d+2}(z)P_{l'}(z)\log(1-z)\text{d}z\, .
\end{split}
\end{equation}
Thus, we have found that the partial sum under consideration vanishes in the limit $y\to x$
\begin{equation}
 \begin{split}
  \lim_{y\to x}\Big[&20\sum_{l,l'=0}^{\infty}\sum_{\atop{J_1,J_2}{M_1,M_2}}\frac{\eta^{l+1}\braket{jl\, 00}{J_1\, 0}^2\braket{J_1 l'\, 00}{J_2\, 0}^2}{(l+l'+d+2)(l+l'+d+3)-J_2(J_2+1)}\Big|_{\text{denom.}\neq 0} \\
+&20\sum_{l=0}^{\infty}\sum_{l'=l+d+2}^{\infty}\sum_{\atop{J_1,J_2}{M_1,M_2}}\frac{\eta^{l+1}\braket{jl\, 00}{J_1\, 0}^2\braket{J_1 l'\, 00}{J_2\, 0}^2}{(l'-l-d-2)(l'-l-d-1)-J_2(J_2+1)}\Big|_{\text{denom.}\neq 0}\Big] =0	\, .
 \end{split}
\end{equation}
Note that this result is very similar in nature to the cancellation of eqs.\eqref{eq:model NLO Lphi2 infinite sums cancel 1} and \eqref{eq:model NLO Lphi2 infinite sums cancel 2} in the previous section. This similarity is no surprise, since the diagrams corresponding to this calculation of the previous section (see eq.\eqref{eq:model NLO Lphi2 infinite sums formula}) are subtrees of the first two graphs in eq.\eqref{eq:model NLO Lphi3 CT formula}.

\subsubsection*{Cancellation of infinities}

It remains to consider the partial sum

\begin{equation}
\begin{split}
 &20\sum_{l,l'}\sum_{\atop{J_1,J_2}{M_1,M_2}}\frac{\eta^{l+l'+2}\braket{jl\, 00}{J_1\, 0}^2\braket{J_1 l'\, 00}{J_2\, 0}^2}{(l+l'-d-1)(l+l'-d)-J_2(J_2+1)}\Big|_{\text{denom.}\neq 0}\\
 +&20\sum_{l=0}^{\infty}\sum_{l'=0}^{l+d+1}\sum_{\atop{J_1,J_2}{M_1,M_2}}\frac{\eta^{l+1}\braket{jl\, 00}{J_1\, 0}^2\braket{J_1 l'\, 00}{J_2\, 0}^2}{(l'-l-d-2)(l'-l-d-1)-J_2(J_2+1)}\Big|_{\text{denom.}\neq 0}\, .
\end{split}
\label{eq:model NLO Lphi3 cancellation sum}
\end{equation}
In the first sum the identity \eqref{eq:app character sum odd} from the appendix may be used for $l+l'\geq d+1$, since $J_1+l-d-1=2l+j-2k-d-1=\text{odd}$ (recall that $d-j=\text{even}$). Hence this sum simplifies to

\begin{equation}
\begin{split}
 &20\sum_{\atop{l,l'}{l+l'\geq d+1}}\sum_{\atop{J_1,J_2}{M_1,M_2}}\frac{\eta^{l+l'+2}\braket{jl\, 00}{J_1\, 0}^2\braket{J_1 l'\, 00}{J_2\, 0}^2}{(l+l'-d-1)(l+l'-d)-J_2(J_2+1)}\Big|_{\text{denom.}\neq 0} \\
 =-&10\sum_{\atop{l,l'}{l+l'\geq d+1}}\sum_{J_1,M_1}\eta^{l+l'+2} \braket{jl\, 00}{J_1\, 0}^2\int_{-1}^1 P_{J_1}(z)P_{l'}(z)Q_{l+l'-d-1}(z)\,\text{d}z
\end{split}
\label{eq:model NLO Lphi3 3rd sum}
\end{equation}
Now we consider the second sum. Here we may change the coupling order in the product of the Clebsch-Gordan coefficients, i.e. we exchange the role of $l$ and $l'$ in the intertwiners (we have this freedom since the coupling order was arbitrary). 
Application of \eqref{eq:app character sum odd} then yields

\begin{equation}
\begin{split}
 &20\sum_{\atop{l,l'}{l'-l\leq d+1}}\sum_{\atop{J_1,J_2}{M_1,M_2}}\frac{\eta^{l+1}\braket{jl'\, 00}{J_1\, 0}^2\braket{J_1 l\, 00}{J_2\, 0}^2}{(l-l'+d+1)(l-l'+d+2)-J_2(J_2+1)}\Big|_{\text{denom.}\neq 0} \\
 =-&10\sum_{\atop{l,l'}{l'-l\leq d+1}}\sum_{J_1,M_1}\eta^{l+1} \braket{jl'\, 00}{J_1\, 0}^2\int_{-1}^1 P_{J_1}(z)P_{l}(z)Q_{l-l'+d+1}(z)\,\text{d}z\\
=-&10\sum_{\atop{l,l'}{l+l'\geq d+1}}^{\infty}\sum_{J_1,M_1}\eta^{l+l'-d} \braket{jl'\, 00}{J_1\, 0}^2\int_{-1}^1 P_{J_1}(z)P_{l+l'-d-1}(z)Q_{l}(z)\,\text{d}z\\
=-&10\sum_{\atop{l,l'}{l+l'\geq d+1}}^{\infty}\sum_{J_1,M_1}\eta^{l+l'-d} \braket{jl\, 00}{J_1\, 0}^2\int_{-1}^1 P_{J_1}(z)P_{l+l'-d-1}(z)Q_{l'}(z)\,\text{d}z\\
\end{split}
\label{eq:model NLO Lphi3 2nd sum b}
\end{equation}
In the last step we simply exchanged the names of the summation indices. According to eq.\eqref{eq:app character sum odd vanish} from the appendix, the summands in eqs.\eqref{eq:model NLO Lphi3 3rd sum} and \eqref{eq:model NLO Lphi3 2nd sum b} vanish for the following constellation of parameters:

\begin{equation}
 |l+j-2k-l'|\leq l+l'-d-1\leq l+l'+j-2k
\end{equation}
where we wrote $J_1=l+j-2k$ with $k\in\{0,\ldots,\min(l,j)\}$. Therefore we are only interested in the cases where these inequalities are not satisfied. We first observe that the second inequality, $l+l'-d-1\leq l+l'+j-2k$ is equivalent to $j-2k+d+1\geq 0$, which always holds for $d\geq j$. The second condition, however, is not satisfied in the following two cases

\begin{equation}
 l+j-2k>l'\quad \text{and}\quad l'<\frac{j-2k+d+1}{2} \qquad \text{or}\qquad l<l'-(j-2k)\quad \text{and}\quad l<\frac{d+1-(j-2k)}{2}
\label{eq:model NLO Lphi3 3rd sum neq 0}
\end{equation}
Let us start with the latter condition, which implies

\begin{equation}
 J_1+l+l'-d-1<l'\, ,
\end{equation}
and allows us to use eq.\eqref{eq:app character sum odd outside} for the integrals in eqs.\eqref{eq:model NLO Lphi3 3rd sum} and \eqref{eq:model NLO Lphi3 2nd sum b}. Note that in the first equation, \eqref{eq:model NLO Lphi3 3rd sum}, the index $l'$ is attached to a Legendre polynomial of the first kind, $P_{l'}$, while in the other equation, \eqref{eq:model NLO Lphi3 2nd sum b}, it is attached a Legendre polynomial of the second kind, $Q_{l'}$. According to eq.\eqref{eq:app character sum odd outside} these integrals then differ by a minus sign, which tells us that eqs.\eqref{eq:model NLO Lphi3 3rd sum} and \eqref{eq:model NLO Lphi3 2nd sum b} cancel each other in the limit $\eta\to 1$ if the mentioned inequality is satisfied.

On the other hand, the second condition in eq.\eqref{eq:model NLO Lphi3 3rd sum neq 0} leads to the inequality
\begin{equation}
 l+2l'-d-1<J_1\, .
\end{equation}
Again we may apply eq.\eqref{eq:app character sum odd outside}, but this time the largest index, namely $J_1$, is attached to a Legendre polynomial of the first kind in both of the two equations. Since an exchange of $l_1$ and $l_2$ does not alter the result in eq.\eqref{eq:app character sum odd outside}, the two integrals are equal, so that instead of canceling each other, eqs.\eqref{eq:model NLO Lphi3 3rd sum} and \eqref{eq:model NLO Lphi3 2nd sum b} add up in the case at hand. Summing up the above discussion, we have found that (neglecting some prefactors $\eta^n$ with constant $n\in \mathbb{N}$, which are irrelevant in the limit $\eta\to 1$)

\begin{equation}
\begin{split}
 -10&\sum_{\atop{l,l'}{l+l'\geq d+1}}\sum_{J_1,M_1}\braket{jl\, 00}{J_1\, 0}^2\eta^{l+l'}\int_{-1}^1 P_{J_1}(z)\Big(P_{l'}(z)Q_{l+l'-d-1}(z)+P_{l+l'-d-1}(z)Q_{l'}(z)\Big)\,\text{d}z\\
 =-20&\left(\sum_{l=0}^{j}\sum_{k=0}^l\sum_{l'=d+1-l}^{(j-2k+d)/2}+\sum_{k=0}^j\sum_{l=j+1}^{d}\sum_{l'=d+1-l}^{(j-2k+d)/2}+\sum_{k=0}^j\sum_{l=d+1}^{\infty}\sum_{l'=0}^{(j-2k+d)/2}\right)\times\\
&\qquad\braket{jl\, 00}{l+j-2k\, 0}^2\eta^{l+l'}\int_{-1}^1 P_{l+j-2k}(z)P_{l'}(z)Q_{l+l'-d-1}(z)\,\text{d}z\\
\end{split}
\label{eq:model NLO Lphi3 3sums summed up}
\end{equation}
Clearly the first two triple sums in this expression are finite and it remains to search for infinities in the third one.

\begin{equation}
\begin{split}
 -20&\sum_{k=0}^j\sum_{l=d+1}^{\infty}\sum_{l'=0}^{(j-2k+d)/2}\braket{jl\, 00}{l+j-2k\, 0}^2\eta^{l+l'}\int_{-1}^1 P_{l+j-2k}(z)P_{l'}(z)Q_{l+l'-d-1}(z)\,\text{d}z\\
=-20&\sum_{k=0}^j\sum_{l=0}^{\infty}\sum_{l'=0}^{(j-2k+d)/2}\braket{j(l+d+1)\, 00}{(l+d+1+j-2k)\, 0}^2\\
 &\hspace{5cm}\times\int_{-1}^1 P_{l+d+1+j-2k}(z)P_{l'}(z)Q_{l+l'}(z)\,\text{d}z\eta^{l+l'}\\
=-\frac{10}{\pi}&\sum_{k=0}^j\sum_{l'=0}^{(j-2k+d)/2}\frac{\Gamma(k+\frac{1}{2})\Gamma(j-k+\frac{1}{2})\Gamma(\frac{d+j-2k+1}{2}-l')\Gamma(\frac{d+j-2k}{2}+1)}{\Gamma(k+1)\Gamma(j-k+1)\Gamma(\frac{d+j-2k}{2}-l'+1)\Gamma(\frac{d+j-2k+3}{2})}\, \eta^{l'}\\
\times &\pFq{6}{5}{1,d+j-2k+\frac{5}{2},d-k+\frac{3}{2},d+j-k+2,\frac{d+j-2k-l'}{2}+1,\frac{d+j-2k+l'+3}{2}}{d+j-2k+\frac{3}{2},d-k+2,d+j-k+\frac{5}{2},\frac{d+j-2k-l'+3}{2},\frac{d+j-2k+l'}{2}+2}{\eta}\\
\times &\frac{\Gamma(d+j-2k+\frac{5}{2})\Gamma(d-k+\frac{3}{2})\Gamma(d+j-k+2)\Gamma(\frac{d+j-2k-l'}{2}+1)\Gamma(\frac{d+j-2k+l'+3}{2})}{\Gamma(d+j-2k+\frac{3}{2})\Gamma(d-k+2)\Gamma(d+j-k+\frac{5}{2})\Gamma(\frac{d+j-2k-l'+3}{2})\Gamma(\frac{d+j-2k+l'}{2}+2)}
\end{split}
\label{eq:model NLO Lphi3 3sums result}
\end{equation}
Here we expressed the $3j$-coefficient and the integral through $\Gamma$-functions with the help of eqs.\eqref{eq:app spherical symm D=3 CG <-> Gamma fct} and \eqref{eq:app character sum odd outside}, and wrote the infinite sum over $l$ as a hypergeometric series. Taking a closer look at the parameters of this hypergeometric series, we find that it is zero-balanced, and therefore logarithmically divergent in the limit $\eta\to 1$, as expected. Eq.\eqref{eq:app hypergeos zero balanced limit} from the appendix allows for a more detailed characterization of this divergence. We find for the prefactor of the diverging logarithm

\begin{equation}
 \frac{10}{\pi}\sum_{k=0}^j\sum_{l'=0}^{(j-2k+d)/2}\frac{\Gamma(k+\frac{1}{2})\Gamma(j-k+\frac{1}{2})\Gamma(\frac{d+j-2k+1}{2}-l')\Gamma(\frac{d+j-2k}{2}+1)}{\Gamma(k+1)\Gamma(j-k+1)\Gamma(\frac{d+j-2k}{2}-l'+1)\Gamma(\frac{d+j-2k+3}{2})}\, \log(1-\eta)=20\log(1-\eta)\, .
\end{equation}
Finally, we have found the divergence that cancels with the logarithmic counterterm (see eq.\eqref{eq:model NLO Lphi3 CT log}). Since there is no further counterterm, and as we now have brought all expressions into normal order without any remaining divergences, we have verified that the renormalization procedure also works in the case at hand. 

\subsubsection*{Result}

Summing up all the results of the preceding discussion, we can now give the remainder for $q=0$ as

\begin{equation}
\begin{split}
 &(R_1)_{\varphi^3}(x;d,j,q=0)=20 \log r\left[\sum_{l,l'=0}^{l+l'\leq d}\sum_{J_1} \braket{jl\, 00}{J_1\, 0}^2 \begin{pmatrix}
                        J_1 & l' & d-l-l'\\ 0 & 0 & 0
                       \end{pmatrix}^2-1\right]\\
+&20\sum_{\atop{J_1,J_2}{M_1,M_2}}\left(\sum_{\atop{l,l'}{l+l'\leq d}}+2\sum_{l=0}^d\sum_{l'=d+1-l}^{\frac{J_1+d-l}{2}}\right)\frac{\braket{jl\, 00}{J_1\, 0}^2\braket{J_1 l'\, 00}{J_2\, 0}^2}{(l+l'-d-1)(l+l'-d)-J_2(J_2+1)}\Big|_{\text{denom.}\neq 0}\\
 &-\frac{10}{\pi}\sum_{k=0}^j\sum_{l'=0}^{(j-2k+d)/2}\frac{\Gamma(k+\frac{1}{2})\Gamma(j-k+\frac{1}{2})\Gamma(\frac{d+j-2k+1}{2}-l')\Gamma(\frac{d+j-2k}{2}+1)}{\Gamma(k+1)\Gamma(j-k+1)\Gamma(\frac{d+j-2k}{2}-l'+1)\Gamma(\frac{d+j-2k+3}{2})}\\
 &\times L_5\left[\atop{1,d+j-2k+\frac{5}{2},d-k+\frac{3}{2},d+j-k+2,\frac{d+j-2k-l'}{2}+1,\frac{d+j-2k+l'+3}{2}}{d+j-2k+\frac{3}{2},d-k+2,d+j-k+\frac{5}{2},\frac{d+j-2k-l'+3}{2},\frac{d+j-2k+l'}{2}+2}\right]\, ,
\end{split}
\label{eq:model NLO Lphi3 remainder q=0 b}
\end{equation}
where $L_5$ is the non-divergent part of the hypergeometric series $_6F_5$ in eq.\eqref{eq:model NLO Lphi3 3sums result}, see also eq.\eqref{eq:app hypergeos L(p)}. The first line of this result follows from eq.\eqref{eq:model NLO Lphi3 CT formula logs result}. Note however that there is an additional $-1$ in square brackets, which accounts for the finite contribution to the logarithmic counterterm \eqref{eq:model NLO Lphi3 CT log}. The second line contains two contributions: The first is due to the fact that in eq.\eqref{eq:model NLO Lphi3 3rd sum} we have only considered the partial sum with $l+l'\geq d+1$, so we have to recover the remaining partial sum. The other contribution comes from the finite sums in eq.\eqref{eq:model NLO Lphi3 3sums summed up}. Finally, the last two lines are the finite remainder of eq.\eqref{eq:model NLO Lphi3 3sums result} after the subtraction of the counterterm.

Fortunately, with this result we can find an expression for $(R_1)_{\varphi^3}(x;q=3)$ without any additional computational effort. For the parameters $d$ and $j$ the condition $q=3$ essentially means that now $d\leq -3$ and $|d+3|\geq j$. We have already mentioned that the partial sums including the logarithms in eq.\eqref{eq:model NLO Lphi3 CT formula} change the sign if we replace $d\to -d-3$. It is easy to see that the remaining terms in that equation are invariant under this transformation. Thus, we simply have to change the sign in the first line of eq.\eqref{eq:model NLO Lphi3 remainder q=0}, which leads to the result

\begin{equation}
\begin{split}
 &(R_1)_{\varphi^3}(x;-d-3,j,q=3)=-20 \log r\left[\sum_{l,l'=0}^{l+l'\leq d}\sum_{J_1} \braket{jl\, 00}{J_1\, 0}^2 \begin{pmatrix}
                        J_1 & l' & d-l-l'\\ 0 & 0 & 0
                       \end{pmatrix}^2+1\right]\\
+&20\sum_{\atop{J_1,J_2}{M_1,M_2}}\left(\sum_{\atop{l,l'}{l+l'\leq d}}+2\sum_{l=0}^d\sum_{l'=d+1-l}^{\frac{J_1+d-l}{2}}\right)\frac{\braket{jl\, 00}{J_1\, 0}^2\braket{J_1 l'\, 00}{J_2\, 0}^2}{(l+l'-d-1)(l+l'-d)-J_2(J_2+1)}\Big|_{\text{denom.}\neq 0}\\ &-\frac{10}{\pi}\sum_{k=0}^j\sum_{l'=0}^{(j-2k+d)/2}\frac{\Gamma(k+\frac{1}{2})\Gamma(j-k+\frac{1}{2})\Gamma(\frac{d+j-2k+1}{2}-l')\Gamma(\frac{d+j-2k}{2}+1)}{\Gamma(k+1)\Gamma(j-k+1)\Gamma(\frac{d+j-2k}{2}-l'+1)\Gamma(\frac{d+j-2k+3}{2})}\\
 &\times L_5\left[\atop{1,d+j-2k+\frac{5}{2},d-k+\frac{3}{2},d+j-k+2,\frac{d+j-2k-l'}{2}+1,\frac{d+j-2k+l'+3}{2}}{d+j-2k+\frac{3}{2},d-k+2,d+j-k+\frac{5}{2},\frac{d+j-2k-l'+3}{2},\frac{d+j-2k+l'}{2}+2}\right]\, ,
\end{split}
\label{eq:model NLO Lphi3 remainder q=5 b}
\end{equation}
and finishes the proof.\par\hfill\qedsymbol

\end{appendices}

\bibliographystyle{utphys}
\cleardoublepage
\addcontentsline{toc}{chapter}{Bibliography}
\bibliography{Diplom}

\printindex{symbols}{Symbols} \printindex{thms}{Axioms, Definitions, Theorems, etc}

\end{document}